\documentclass[reqno]{amsart}
\usepackage[dvipsnames]{xcolor}
\usepackage{amsmath}
\usepackage{graphicx, color}
\usepackage{chngcntr}
\usepackage{tikz}
\usetikzlibrary{backgrounds}
\usepackage{amsmath,amsfonts, amssymb}
\usepackage[normalem]{ulem}
\usepackage{wrapfig}
\usepackage{chngcntr}
\usepackage{paralist}
\usepackage{graphics} %% add this and next lines if pictures should be in esp format
  \usepackage{epsfig} %For pictures: screened artwork should be set up with an 85 or 100 line screen
 \usepackage[colorlinks=true]{hyperref}
 \usepackage{graphicx}  \usepackage{epstopdf}
 \usepackage{xcolor}
 
 \usepackage{tikz-3dplot}
 %%%%needed for pictures
 \usepackage{tikz}
 \usepackage{pgfplots}
 \usepackage{bm}
 \usepgfplotslibrary{fillbetween}
 \usetikzlibrary{backgrounds}
 \usetikzlibrary{patterns}
 \usetikzlibrary{scopes}
 \usetikzlibrary{shapes,arrows}
 \usetikzlibrary{plotmarks,intersections}
 \usetikzlibrary{calc}
 \usetikzlibrary{positioning}
 %%%%%
 
 \usepackage{float}
 
 \usepackage[toc,page]{appendix}

 \usepackage{multirow}
 \usepackage{array}
 
\usetikzlibrary{calc}
\usepackage{pgfplots}
%\pgfplotsset{compat=1.14}
 
 \usepackage{bbm}
\hypersetup{urlcolor=blue, citecolor=red}

 \textheight=8.2 true in
   \textwidth=5.0 true in
    \topmargin 30pt
    
     \setcounter{page}{1}

% The next 5 line will be entered by an editorial staff.

 % Please minimize the usage of "newtheorem", "newcommand", and use
 % equation numbers only situation when they provide essential convenience
 % Try to avoid defining your own macros

\newtheorem{theorem}{Theorem}[section]
\newtheorem{corollary}{Corollary}

\newtheorem{lemma}[theorem]{Lemma}
\newtheorem{proposition}{Proposition}

\newtheorem{definition}[theorem]{Definition}
\newtheorem{remark}{Remark}

\newtheorem{assum}[theorem]{Assumption}

\numberwithin{equation}{section}

\newcommand\drawline[1][black]{%
	\begin{tikzpicture}                                                           
	\draw[#1] (0pt,0pt) -- (15pt,0pt);                                       
	\end{tikzpicture}%
}      

%\usepackage{titlesec}
%\titleformat{\section}{\normalfont\fontfamily{phv}\fontsize{12}{17}\bfseries}{\thesection}{1em}{}
%\titleformat{\section}
%{\normalfont\fontsize{12}{17}\rmfamily\bfseries\scshape}
%{\thesection}
%{1em}
%{}
%\titleformat{\subsection}{\normalfont\fontfamily{phv}\fontsize{10}{17}\bfseries}{\thesubsection}{1em}{}
%\titleformat{\subsection}
%{\normalfont\fontsize{10}{17}\rmfamily\bfseries\scshape}
%{\thesubsection}
%{1em}
%{}

%% Place the running title of the paper with 40 letters or less in []
 %% and the full title of the paper in { }.
\title[]
      {On the Cauchy problem for   Boltzmann equation modeling a polyatomic gas }

% Place all authors' names in [ ] shown as running head;
% No more than 40 letters. Leave { } empty
% Please use `and' to connect the last two names if applicable
\author[Irene M. Gamba and Milana Pavi\'c-\v Coli\'{c}]{}

% Email address of each of all authors is required.
% You may list email addresses of all other authors, separately.
% \email{gamba@math.utexas.edu}
% \email{milana.pavic@dmi.uns.ac.rs}

% Put your short thanks below. For long thanks/acknowlegements,
% please go to the last acknowlegments section.
%\thanks{This work was partially funded by the  project headed by .}

\newcommand{\nocontentsline}[3]{}
\newcommand{\tocless}[2]{\bgroup\let\addcontentsline=\nocontentsline#1{#2}\egroup}

\newcommand{\co}{\color{BurntOrange}}

\newcommand{\pmk}{\mathfrak{m}_k}

\newcommand{\dgu}{d_{\gamma}^{{ub}}(r)}
\newcommand{\dgl}{d_{\gamma}^{{lb}}(r)}
\newcommand{\egu}{e_{\gamma}^{{ub}}(R)}
\newcommand{\egl}{e_{\gamma}^{{lb}}(R)}

\newcommand{\dgun}{d_{\gamma}^{{ub}}}

\newcommand{\cgaur}{c_{\gamma,\alpha}^{{ub}}}

\newcommand{\CgauR}{C_{\gamma,\alpha}^{{ub}}}
\newcommand{\cgalr}{c_{\gamma,\alpha}^{{lb}}}
\newcommand{\CgalR}{C_{\gamma,\alpha}^{{lb}}}
\newcommand{\clb}{c_{lb}}

\newcommand{\tf}{ \left| v - v_* \right|^\gamma + \left( \frac{I+I_*}{m} \right)^{\gamma/2} }
\newcommand{\tfu}{ \left| u \right|^\gamma + \left( \frac{I+I_*}{m} \right)^{\gamma/2}  }
\newcommand{\tB}{  \tilde{\mathcal{B}}}

\newcommand{\Cg}{C_{k_*}}

\newcommand{\bks}{ \bar{k}_*}
\newcommand{\ks}{ k_*}

\newcommand{\tk}{ 2k }

\newcommand{\hk}{  k}

\newcommand{\hg}{ \gamma/2}

\newcommand{\ho}{ 1}

\newcommand{\K}{K}

\newcommand{\pl}{\delta}

\newcommand{\Ml}{  \mathcal{M}_l} 
\newcommand{\Mu}{  \mathcal{M}_u}
\newcommand{\El}{ \mathcal{E}_l}
\newcommand{\Eu}{\mathcal{E}_u}
\newcommand{\D}{\Delta}

\newcommand{\Aks}{  { \bar{A}}_{k_*}} 

\newcommand{\B}{\bar{B}}

\setcounter{tocdepth}{2}

\begin{document}
\maketitle

% Enter the first author's name and address:

\centerline{\scshape Irene M. Gamba}
\medskip
{\footnotesize
	\centerline{Department of Mathematics and Oden Institute of Computational Engineering and Sciences}
	\centerline{University of Texas at Austin}
	\centerline{2515 Speedway Stop C1200
		Austin, Texas, 78712-1202, USA}

}

\medskip

\centerline{\scshape Milana Pavi\'c-\v Coli\'{c}}
\medskip
{\footnotesize
  \centerline{Department of Mathematics and Informatics}
  \centerline{Faculty of Sciences, University of Novi Sad}
 \centerline{Trg Dositeja Obradovi\'ca 4, 21000 Novi Sad, Serbia}
}

\bigskip

%\centerline{\today}

\bigskip

% The name of the associate editor will be entered by an editorial staff
% "Communicated by the associate editor name" is not needed for special issue.
% \centerline{(Communicated by )}

%The abstract of your paper
\begin{abstract}
In the present manuscript we consider the Boltzmann equation that models a polyatomic gas by introducing one additional continuous variable, referred to as microscopic internal energy. We establish existence and uniqueness theory in the space homogeneous setting for the full non-linear case, under an extended Grad assumption on transition probability rate, that comprises hard potentials  for both the relative speed and internal energy with the rate in the interval $(0,2]$, which  is multiplied by  an integrable angular part and integrable partition functions.   The Cauchy problem is resolved by means of an abstract ODE theory in Banach spaces, for an initial data with finite and strictly positive gas  mass and energy, finite momentum, and additionally finite $k_*$ polynomial moment, with $k_*$ depending on the rate of the transition probability and the structure of a polyatomic molecule or its internal degrees of freedom. Moreover, we prove that polynomially and exponentially weighted Banach space  norms associated to the solution are both generated and propagated  uniformly in time.
\end{abstract}

\tableofcontents

\section{Introduction}

In this manuscript we consider a single polyatomic gas. The more complex structure of a molecule that may have more than one atom causes new phenomena at the level of molecular collisions. In particular, besides the translational motion in the physical space as in the classical case of monatomic elastic collisions, there appear possibilities of molecular rotations and/or vibrations, referred to as internal degrees of freedom. Collisional kinetic theory captures this feature by introducing  the so-called microscopic internal energy of a molecule. Then, an elastic collision of polyatomic gases means conservation of the total -- kinetic plus internal -- energy of the two colliding molecules if  binary interactions are taken into account. \\

Within the kinetic theory there is no unique way how to approach  the microscopic internal energy,  which can be understood  as a measure of deviation from the classical case of a single monatomic gas. For example, in the semi-classical approach \cite{Cha-Cow, Gio, W-C, Della}, internal energy is assumed to take discrete values. The idea is to prescribe  one distribution function  to each energy level, resulting in a system of equations describing a gas. This model uses experimental data and is adequate for computational tasks. On the other hand,  there exist  continuous kinetic models that   take the  another path -- the idea is to introduce  one additional variable, a continuous microscopic internal energy, and to parametrize both molecular velocity and internal energy, using the so-called Borgnakke-Larsen procedure, which leads to  one single Boltzmann equation \cite{LD-Bourgat94, DesMonSalv, LD-Toulouse}. Among continuous models, there are subtle differences causing by the choice of a functional space as an environment where physical intuition is provided, which are reflected on the structure of the cross-section, as pointed out in \cite{MPC-Dj-S}. \\

Both semi-classical  and continuous   models abound with many formal results. For instance, the Champan-Enskog method was developed in  \cite{Alexeev} for the semi-classical models and recently in \cite{LD-Bisi} for the continuous model from \cite{DesMonSalv}. Many macroscopic models of extended thermodynamics are derived starting from the continuous kinetic model \cite{DesMonSalv}. In fact, the additional variable of internal energy fitted naturally into the two hierarchies of moment equations for a polyatomic gas, as first observed in \cite{MPC-Rugg-Sim, MPC-Sim}. The maximum entropy principle was the main tool to close system of  equations corresponding to six and fourteen moments \cite{Rugger-MEP, Rugg-Bisi-6, MPC-Madj-Sim, MPC-Madj-Sim-Parma}, and to numerically test those models \cite{Aoki}. An interested reader can consult \cite{Rugg-Poly} and references therein. In addition, the formal results spread to the mixture of polyatomic gases \cite{MPC, Bisi-multi-1,Bisi-multi-2}, that may be reactive as well \cite{Soares-Bisi, Soares-Salvarani}.\\ 

Despite the great interest of research community, so far there are  few rigorous  mathematical  results addressing  analytical properties  of polyatomic gas  kinetic model. A  basic question is what are the  differential cross section and transition probability rates for which a solution is admissible in a suitable functional space depending on the nature of the polyatomic Boltzmann flow operator defined by their  particle molecular velocity and exchange of internal energy laws, or whether the   initial value problem in such spaces  is solvable, if solutions are  well defined  globally in time, and what is their high energy tail behavior.\\

 In this manuscript we  first establish the existence and uniqueness theory for the solution of the space homogeneous Boltzmann equation  for a polyatomic gas continuous  kinetic model  introduced in \cite{LD-Bourgat94}. The underlying assumption on the transition probability rate is of extended Grad type, meaning that besides the positive power  of relative speed, we need the very same contribution of the internal energy. Moreover, the relative speed and internal energy are combined additively, and not multiplicatively as it was  used in literature so far. Surprisingly enough, such a model of transition function perfectly embeds into physical interpretation, providing the total agreement with the models of extended thermodynamics  and experimental data, as shown in \cite{MPC-Dj-S}.\\

The existence and uniqueness result is obtained by an application of the theory of ODEs in Banach spaces \cite{Martin}, that has been successfully used in many frameworks, such as mixture setting \cite{IG-P-C}, polymers kinetic problems \cite{AlonsoLods18},  quantum Boltzmann equation for bosons in very low temperature \cite{IG-Alonso-Tran-pre} and the weak wave turbulence models for stratified flows \cite{IG-Smith-Tran}. 
\\

   In the present manuscript   we propose to study the scalar  Boltzmann flow  for interacting  polyatomic gases interchanging pairs of pre and post molecular velocities  $v \in \mathbb{R}^3$ and internal energies $I\in [0,\infty)$ as one additional continuous variable. This collisional model   describes the statistical time evolution of  probability distribution density $f(t,v,I)$ in the Banach space  $L^1_k(\mathbb R^{3^+})$, of integrable functions in the upper half space $\mathbb R^{3+}:=\mathbb{R}^{3}\times [0,\infty)$  with the  Lebesgue weight type function $\langle v,I\rangle^k:=(1+ \frac{|v|^2}{2} + \frac{I}{m})^{k/2}$, 
	%as defined in \eqref{brackets},  
	for  molecular velocities $v\in \mathbb R^3$ and internal energies $I \in  [0,\infty)$.  Consequently,  $\|f\|_{L^1_k(\mathbb R^{3^+})}(t)$ are also referred as the $k-$Lebesgue moments associated to the solution of the Boltzmann flow.\\

   A keypoint  in this analysis consists in   showing that that the dissipation  built in the collision operator for polyatomic gasses is manifested   by the  decay of   $k$-polynomial moment  of the collisional form for a sufficiently large $k$, depending on  data with finite initial mass, energy and  a moment of order bigger or equal that $k_*$, with $\ks>1$, since, in our notation $k=1$ is  macroscopic mass plus the  total (kinetic + internal) energy.  For this collisional gas model, such property is warranted     by the control, from above and below, of transition rates associated to the velocity and internal energy interactions laws.  This proposed approach requires detailed  averaging over parameters  in a compact manifold that distributes scattering mechanisms as functions of the scattering angle over the sphere of influence of the interaction, as much as the parameters, or partition functions, that distribute total  energy to molecular variables. 

 In the classical case of a single monatomic gas, this result is known as sharp Povzner Lemma by angular averaging over the sphere, introduced for the first time in \cite{Bob97} for hard spheres in three dimensions and extended in \cite{GambaPanfVil09}   to variable hard potentials in velocity dimension bigger or equal than  three, and used  in many results for kinetic collisional binary transport in inelastic interactions theories, such as granular flow \cite{GambaBobPanf04}, and gas mixtures \cite{IG-P-C}.   For this classical single monatomic gas describing binary  elastic interactions of molecular velocities $v\in\mathbb R^{d}$, dissipation has   an immediate effect:   the evolution of the positive contribution  of the Boltzmann operator $k$-Lebesgue moments,  or equivalently classical $k$-moments, occurs  for any $k>1$,  where the  macroscopic gas mass plus energy corresponds  to $k=1$ that  is a conserved quantity, since the local mass and energies  associated  the gain and loss operator balance to zero. 
 	Another example of this behavior was recently shown by the authors in the gas mixture  setting \cite{IG-P-C} corresponding to a system of Boltzmann equations with disparate masses,  with their  corresponding  energy identity showing that the positive contribution of $k$-moments coming from  each  pairs of gain operators  associated to the system  yields an analog dissipative effect, albeit for $k>k_*$, with $k_*$ depending on the  mass species ratios when taken by pairwise interactions, and shown  $k_*$ to  grow with the disparateness of molecular masses.

This averaging on  compact manifolds ambient spaces, where  transition rate functions are defined, only involve  moments of the positive contribution to the Boltzmann flow dynamics, that translates into a property for the moments associated to the gain operator. Namely, such compact manifold averaging  property produces a dissipative mechanism for a large enough moment $k$, depending on  the potential rate of  transition probabilities and  on molecules internal modes related to the complexity of molecular  structure.

Once the dissipation  of the  collision operator gain part is shown, it remains to treat its loss term.   In fact, it is a crucial aspect of our analysis  to find a bound from below for the negative contribution of the polyatomic Boltzmann flow, which supports the coerciveness estimate in the natural Banach space $L^1_k(\mathbb R^{3^+})$ associated to polyatomic gas model.  This result may be understood as  the analogue to coerciveness property in the classical theory for diffusion type equations in continuum mechanics where the control of the energies or suitable Banach norms is done in Sobolev spaces, while in the framework of statistical mechanics the suitable norms are non-reflexive Banach spaces, as in this case $L^1_k(\mathbb R^{3^+})$.   The {\em coercive   factor }  ${A}_{\ks}$  which is the strictly positive constant  associated to be scalar Boltzmann flow of the  order $k_*$  at which the positive contribution of Lebesgue moments becomes submissive is identified by  the smallest positive constant   ${A}_{\ks}$ to be characterized   in Section~\ref{Sec: poly mom}, which depends on  the averaging manifold Lemma  associated to the  potential rate of the transition probabilities,  on internal modes of molecules and  on the initial data. One can  view ${A}_{\ks}$ as the analog to the role of coercive form associated to elliptic  and parabolic flows in continuum mechanics  modelling  where coerciveness is crucial for the existence and uniqueness theories in Sobolev spaces. The coerciveness estimate is based on a   functional inequality that controls  from below  the convolution of any function $f \in L_{1^+}^1$ with a potential function of rate $\gamma$  by the corresponding Lebesgue bracket of order $\gamma$, for any $\gamma\in [0,2]$. The proof is inspired  by the work \cite{Alonso-IG-BAMS} for the classical Boltzmann equation. In addition, the coercive estimates we present here are fundamental to obtain in $W^{m,2}$ Sobolev (Hilbert) spaces as  shown in \cite{IG-A-M}.\\  

The decay of the $k-$Lebesgue moment of the Boltzmann operator positive contribution with respect to $k>k_*$ and the coerciveness estimate for the negative contribution allows for  
a priori  $k^{th}$-moment estimates  that are sufficient to generate infinitely many Ordinary Differential Inequalities (ODIs) with a negative superlinear term  proportional to ${A}_{\ks}$.  That is a sufficient condition for the Boltzmann flow under consideration in suitable invariant region $\Omega$ of Banach space 
$ {L^1_k(\mathbb R^{3^+})}$   to be solvable globally in time.  
   
  We emphasize that conditions on initial data  excludes singular measures and the need  of entropy bounds. Yet the resulting construction of a unique solution in the space on polynomial moments yields entropy boundedness at any time, if initially so. \\
  
  Our analytical approach focus on studying  norms in the Banach space $L^1(\mathbb{R}^{3+})$ with $ \mathbb{R}^{3+}:= \mathbb{R}^{3}\times \mathbb{R}_+$ associated to the solution of the   time evolution problem for Boltzmann equation,   in the space probability density functions defined over  the classical metric space $\mathbb{R}^{3+}$, with both  polynomial and exponential weights, by following  
  the usual analytical path established for     the classical Boltzmann equation for the single elastic monatomic gas model.  This research   started with \cite{Des93, wennberg97}    in the case of  polynomial moments and with \cite{Bob97}   where the concept of exponential moments is presented, as much as in the techniques for moments summability,  leading to the understanding   high energy tail  behavior of with inverse Gaussian in velocity space, associated to the solution of the Boltzmann equation  for hard spheres (i.e. power exponent  $\gamma=1$) and constant angular part. These results were developed  in \cite{GambaBobPanf04} for inelastic interactions  and non-Gaussian weight moments, and later, in 	\cite{GambaPanfVil09} to collision kernels for  hard potentials (i.e. $\gamma\in (0,1]$) for any angular section	with $L^{1+}$-integrability. Further, generation of exponential moments of order $\gamma/2$ with bounded angular section were shown in \cite{Mouhot06}. New approach was taken in \cite{Gamba13} based on partial sums summability techniques, that extended the results to collision kernels for hard potentials  with $\gamma\in(0,2]$, for any angular section,	with just $L^{1}$-integrability. In particular, exponential moments of order $\gamma$  are shown to generate in a finite time, while Gaussian moments  propagate  if initially are finite, which holds	independently of $\gamma$. Moreover, all these result were generalized when the angular part is not integrable (in the angular non-cutoff regime)  \cite{GambaTask18, LuMouhot12, BobGamba17, PT, Alonso}.\\

The manuscript is organized as follows. First we introduce a kinetic model describing a polyatomic gas in Section  \ref{Sec: kin mod poly}, together with the notation and main definitions. Then in Section \ref{Sec: Suff}  we make precise sufficient properties for establishing existence and uniqueness theory, which comprises  assumption on the form of transition function as the additive form of relative speed and microscopic internal energy with a potential $\gamma \in (0,2]$ multiplied by some factors, together with its upper and lower bounds.  Then in Section  \ref{Coerciveness} we prove the Coerciveness Estimate for the loss operator, and in Section 
\ref{Sec: fund lemmas} we state and prove the  two fundamental lemmas, namely the Energy Identity Decomposition and the Polyatomic Compact Manifold Averaging Lemma. These estimates identify the ${\ks}$-moment  that will yield the coercive  constant ${A}_{\ks}$.    Section \ref{Sec: poly mom}  deals with the statements and proofs to a priori estimates for $k^{th}$-moments of any order $k\ge {\ks}$ and defines the explicit form of ${A}_{\ks}$.  
These results enable us to identify an invariant region in which the collision operator will satisfy all the properties needed for existence and uniqueness result proved  theory in Section \ref{Section Ex Uni proof} by means of solving an time evolution  ODE in a suitable invariant region $\Omega$ in the Banach  space  $L^1(\mathbb R^{3+})$. Then, in Section \ref{Section gen prop exp mom} we show the solution of the Boltzmann equation for polyatomic gases has a property of summability of moments that is expressed, both,  in  the generation and propagation of  exponential moments. In the propagation case, the unique  solution of the Boltzmann flow for polyatomic gases  initial data having an exponential moment of prescribed order  $2s$,  $s\in (0,1]$ and rate $\beta_0$, there exists a smaller rate $\beta$ such that the exponential moment of order $\beta$ and same rate $s$ is bounded uniformly in time (i.e. propagation holds), with such bound being three times the initial exponential moment. The rate $\beta$ satisfies a minimum of three conditions, one of them degenerates as the coercive constant $A_{\ks}$ and with a  the maximum of an estimate of a few finite number of $k-$Lebesgue bounds for the moments propagation estimates. 
The case of generation of exponential,  or summability of,  moments  holds for initial data  with  just $k_*-$Lebesgue moments. In this case, the order is $2s$, with $s\in (0, \gamma/2]$ and the rate  $\beta=\beta(t)$, for positive time,   also depends  on the  coercive factor $A_{\ks}$ and with a  the maximum of an estimate of a few finite number of $k-$Lebesgue bounds for the moments generation estimates as much as on the $k_*-$Lebesgue moments of the initial data.  The generation of the exponential moment bound is achieved in short time, and remains uniformly bounded in time by the initial's data $k_*-$Lebesgue moment, exhibiting an invariant region estimate. 
  Finally, the Appendix contains some technical results needed for the theory.

\section{Kinetic model for a polyatomic gas}\label{Sec: kin mod poly}

In this Section we will describe the Boltzmann equation for a polyatomic gas. We adapt the \emph{continuous} approach, which  introduces a \emph{single continuous} variable $I$ that we call \emph{microscopic internal energy}, supposed to capture all the phenomena related to a more complex structure of a polyatomic molecule. The main feature is the presence of  internal degrees of freedom that a  molecule undertakes on an interaction, or collision. Besides the usual translational motion, a polyatomic molecule may experience  rotations and/or vibrations, referred to as internal modes. 

At the microscopic level of collisions, such motions cause appearance of the microscopic internal energy, apart from the usual kinetic energy in the conservation of energy law, under the assumption of elastic collisions. On the other side, at the macroscopic level, internal modes reflect on the energy law as well. As in this manuscript, we restrict to polytropic gases (meaning that macroscopic internal energy is linear with respect to the temperature), the caloric equation of state reads
\begin{equation}\label{caloric}
e = D \,  \frac{k \, T}{2\, m},
\end{equation}
where $e$ is the internal energy, $k$ the Boltzmann constant, $m$ mass and $T$ temperature of the gas. The constant $D$ is related to the degrees of freedom.
 In the classical case for elastic interactions, only translation is taken into account, corresponding to $D$ taking the value of the space dimension and the kinetic collisional model of the classical Boltzmann equation.
In general  $D$ is determined by the sum of the total degrees of freedom,   translational as much as  rotational and vibrational  motion associated to the collision.  That means, in space dimension three, this constant takes at least  the value $D=3$, which is the classical case for monatomic gases modeled by the scalar Boltzmann equation, but for the polyatomic model  the constant $D$ must be larger than the dimension of the space of motion, $D>3$.

\subsection{Collision modelling}\label{Sec coll model}
 The starting point is to model a collision process between two interacting molecules. We suppose that a colliding pair of molecules have velocities and microscopic internal energies $(v', I')$, $(v'_*, I'_*)$ $\in  \mathbb{R}^{3+}:=\mathbb{R}^3\times [0,\infty)$ before the interaction, that became  $(v, I)$ and $(v_*, I_*)$, respectively, after such interaction. Under the assumption of elastic interactions,  these quantities are linked through the conservation laws of local momentum and total (kinetic + microscopic internal) molecular energy, namely,
\begin{equation}\label{micro CL}
\begin{split}
 v +  v_* &=  v' + v'_*,\\
\frac{m}{2} \left|v\right|^2 + I + \frac{m}{2} \left|v_*\right|^2 + I_* &= \frac{m}{2} \left|v'\right|^2 + I' + \frac{m}{2} \left|v'_*\right|^2  + I'_*.
\end{split}
\end{equation}
It is often more convenient to work in  the center of mass reference frame by the introduction of  center of mass $V$ and relative velocity $u$ as follows
\begin{equation}\label{cm-rv}
V:= \frac{v+v_*}{2}, \quad u:=v-v_*.
\end{equation}
 Then, the  total molecular energy law from \eqref{micro CL}  can be simply written by  
\begin{equation}\label{micro CL energy mass-rel vel}
\frac{m}{4} \left|u\right|^2 + I + I_* = \frac{m}{4} \left|u'\right|^2 + I' + I'_*=:E.
\end{equation}
since  clearly conservation of local momentum  implies conservation of center of mass velocity 
\begin{equation}\label{CL V}
V=V'.
\end{equation}	

 Collisional laws   express pre-collisional quantities $v', I', v'_*, I'_*$ in terms of the post-collisional ones. This is achieved via a parametrization of  local conservation equations \eqref{micro CL}, according to the Borgnakke-Larsen procedure. To this end, we focus on energy \eqref{micro CL energy mass-rel vel} and first  introduce  a parameter $R\in[0,1]$ that distributes local energy proportion of the  total energy $E$ into a pure kinetic part $RE$ and   a pure internal part of energy, proportional to $(1-R)E$,  according to 
\begin{equation*}
\frac{m}{4} \left|u'\right|^2 = R E, \quad  I' + I'_* =  (1-R) E.
\end{equation*}

In addition, we set a parameter $r\in [0,1]$ to distribute  the proportion  of total internal energy $(1-R) E$ to each interacting states corresponding to the incoming molecular internal energy pair   $I', I'_*$  as follows
 \begin{equation}\label{micro int energies}
I'= r (1-R) E, \qquad I'_* = (1-r) (1-R) E.
 \end{equation}
Finally, we introduce    the classical scattering direction associated to classical collisional elastic theory,   $ \sigma \in S^2$, in order to parametrize pre-collisional relative molecular velocity $u'$,
\begin{equation}\label{sigma-scatter}
u' = \left| u' \right| \sigma = 2 \sqrt{\frac{R E}{m}} \sigma.
\end{equation}
 We note that this relation holds for a classical monatomic single species model in the absence of internal energy modes for which $\left| u' \right| = \left| u \right|$.

This representation introduces  the fundamental  set of coordinates in center of mass  and the {\em pure} kinetic energy.
The last equation together with moment conservation law from \eqref{micro CL} yields expressions for velocities,
\begin{equation}\label{velocities}
v' = V + \sqrt{\frac{R E}{m}} \sigma, \quad v'_* = V - \sqrt{\frac{R E}{m}}\sigma.
\end{equation}

\begin{figure}
	\begin{center}
	\tdplotsetmaincoords{70}{110}
\begin{tikzpicture}[tdplot_main_coords,rotate around z=-40,scale=1.6]
		\begin{scope}[on background layer]
		\fill [fill=gray!10]  (4.5,0,0)  -- (0,5.5,0) -- (-4.5,0,0) --(0,-4.5,0) ;
		\end{scope}
			\begin{scope}
		\draw [->,gray!10] (0,0,0) -- (4.5,0,0) node [pos=0.75, above=-2,rotate=52] {\color{black}{velocity space}};
		\draw [-,gray!10] (0,0,0) -- (-4.5,0,0) node [right] {};
		\draw [->,gray!10] (0,0,0) -- (0,4.5,0) node [left] {};
		\draw [-,gray!10] (0,0,0) -- (0,-4.5,0) node [left] {};
		\draw [->,gray] (4.5,0,0) -- (4.5,0,4) node [pos=1, right,rotate=0] {\color{black}{internal energy space}};
		\node (vI) at (1.73205,1,2) [circle,scale=1.2, shade,ball color=white!70!gray] {};
			\node (vI1) at (1.6,0.9,2) [circle,scale=0.8, shade,ball color=white!70!gray] {};
				\node (vI2) at (1.8,0.9,2) [circle,scale=0.8, shade,ball color=white!70!gray] {};
		\node (vsIs) at (-1.73205,1,2) [circle,scale=1.2, shade,ball color=white!70!gray] {};
			\node (vsIs1) at (-1.7,0.9,2) [circle,scale=0.8, shade,ball color=white!70!gray] {};
				\node (vsIs2) at (-1.8,1.1,2.05) [circle,scale=0.8, shade,ball color=white!70!gray] {};
		\node (vpIp) at (1.67332, -0.67332, 0.84)  [circle,scale=1.2, shade,ball color=Purple] {};
				\node (vpIp11) at (1.6, -0.7, 0.7)  [circle,scale=0.8, shade,ball color=Purple] {};
					\node (vpIp12) at (1.8, -0.69, 0.8)  [circle,scale=0.8, shade,ball color=Purple] {};
		\node (vpsIps) at (-1.67332, 2.67332, 0.56) [circle,scale=1.2, shade,ball color=Purple] {};
				\node (vpsIps1) at (-1.7, 2.8, 0.65) [circle,scale=0.8, shade,ball color=Purple] {};
						\node (vpsIps2) at (-1.65, 2.82, 0.5) [circle,scale=0.8, shade,ball color=Purple] {};
		\node (vpIp2) at (0.591608, 0.408392, 3.78)  [circle,scale=1.2, shade,ball color=RoyalBlue] {};
			\node (vpIp21) at (0.6, 0.4, 3.92)  [circle,scale=0.8, shade,ball color=RoyalBlue] {};
				\node (vpIp22) at (0.7, 0.45, 3.8)  [circle,scale=0.8, shade,ball color=RoyalBlue] {};
		\node (vpsIps2) at (-0.591608, 1.59161, 2.52) [circle,scale=1.2, shade,ball color=RoyalBlue] {};
			\node (vpsIps21) at (-0.6, 1.6, 2.65) [circle,scale=0.8, shade,ball color=RoyalBlue] {};
				\node (vpsIps22) at (-0.5, 1.9, 2.59) [circle,scale=0.8, shade,ball color=RoyalBlue] {};
		\draw [->,line width=0.15ex] (1.73205,1,0) -- (-1.73205,1,0) node [pos=1,below] {$u$};
		\node (start) at (0,0,0)  [circle,scale=0.1, gray] {};
			\end{scope}	
		\begin{scope}[>=latex]
		\node[right=-0.3 of vsIs] {$(v_*,I_*)$};
		\node[left=-0.2 of vI] {$(v,I)$};
		\node[ right=-0.1 of vpsIps]  {$(v'_*,I'_*)$};	
		\node[ left=-0.1 of vpIp] {$(v',I')$};
		\node[ left=-0.1 of vpIp2] {$(v',I')$};
		\node[ above right=0.2 of vpsIps2] {$(v'_*,I'_*)$};	
		\end{scope}		
		\begin{scope}[>=latex]%, on background layer]
		\draw [-,dotted,line width=0.15ex] (1.73205,1,0) -- (vI) node [left] {};
		\draw [-,dotted,line width=0.15ex] (-1.73205,1,0) -- (vsIs) node [left] {};
		\draw [-,dashdotted,line width=0.15ex,Purple,line width=0.1ex] (1.67332, -0.67332, 0) -- (vpIp) node [left] {};
		\draw [-,dashdotted,line width=0.15ex,Purple] (-1.67332, 2.67332, 0) -- (vpsIps) node [left] {};
		\draw [->,color=Purple, line width=0.2ex]  (1.67332, -0.67332, 0) -- (-1.67332, 2.67332,0) node [pos=1,below ] {$u'$};
			\draw [->,color=RoyalBlue, line width=0.15ex] (0.591608, 0.408392,0) --(-0.591608, 1.59161,0) node [pos=0.9,below right] {\color{RoyalBlue}$u'$};
		\draw [-,dashed,line width=0.15ex,RoyalBlue] (0.591608, 0.408392,0) -- (vpIp2) node [left] {};
		\draw [-,dashed,line width=0.15ex,RoyalBlue] (-0.591608, 1.59161,0) -- (vpsIps2) node [left] {};
		\draw [-,dashed,line width=0.15ex,RoyalBlue] (vpIp2) -- (vpsIps2) node [left] {};
		\draw [-,dashdotted,line width=0.15ex,Purple] (vpIp) -- (vpsIps) node [left] {};
		\draw [-,dotted,line width=0.15ex] (vI) -- (vsIs) node [left] {};
		\draw [-,line width=0.15ex,white!70!gray] (0,1,0) --  (0., 1., 3.15)  node [left] {};
		\end{scope}
		\begin{scope}
	\node at (0., 1., 3.15)  [circle,scale=0.3, shade,ball color=RoyalBlue] {};
\node  at (0., 1., 0.7)  [circle,scale=0.3, shade,ball color=Purple] {};
\node  at (0., 1., 2)  [circle,scale=0.3, shade,ball color=white!70!gray] {};
\node (V) at (0., 1., 0)  [circle,scale=0.3, shade,ball color=white!70!gray] {};
	\node[ below=0.1 of V]  {$V$};	
		\end{scope}		
		\tdplotsetrotatedcoords{0}{0}{50}
			\begin{scope}[tdplot_rotated_coords] 
	\draw[dashdotted,color=Purple,line width=0.1ex] (1,0,0) coordinate (c) circle (2.36643);
	\draw[dotted] (1,0,0) coordinate (c) circle (1.73205);
	\draw[dashed,color=RoyalBlue] (1,0,0) coordinate (c) circle (0.83666);
	\end{scope}	
	\begin{scope}
		\node (vp old) at (1.22474, -0.224745,0)   [circle,scale=1, shade,ball color=green!60!black!50!white] {};
	\node (vps old) at (-1.22474, 2.22474,0)   [circle,scale=1, shade,ball color=green!60!black!50!white] {};
	\node[below left=-0.25cm of vp old] {$v'_m$};
	\node[below right=-0.25cm of vps old] {$v'_{*m}$};
	\end{scope}
		\end{tikzpicture}
	\end{center}
	\caption{Given the two coliding  polyatomic molecules with velocities and internal energies $(v,I)$ and $(v_*, I_*)$ rendered in  gray tone, we prescribe a scattering direction $\sigma$ and partition functions $r$ and $R$. Then states $(v',I')$ and $(v'_*,I'_*)$ can be calculated using the collision transformation \eqref{micro int energies} and \eqref{velocities}. The  violet dash-dotted state  (\drawline[Purple,ultra thick,dashdotted])  corresponds to the choice $R=0.8$ and the  blue dashed one  (\drawline[RoyalBlue,ultra thick,dashed]) to $R=0.1$. The  total microscopic energy $E$ weights a bigger proportion to the internal energy for smaller values of $R \in [0,1]$, while larger values of  $R \in [0,1]$ would give a bigger proportion to the  kinetic energy of molecules. One  can view that the limit $R=1$ shrinks the total energy $E$ to be just the kinetic energy, while the limit $R=0$ renders the total energy as the pure  internal energy. The green state corresponds to the collisions without internal energy, or classical monatomic elastic collision.}
	\label{picture}
\end{figure}
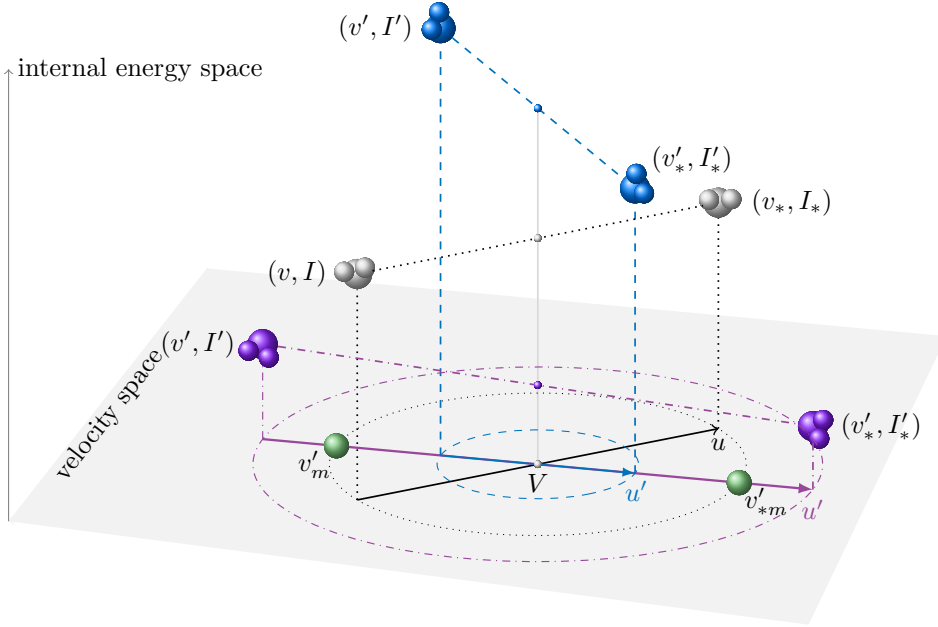

\begin{figure}
	\begin{center}
\begin{tikzpicture}[scale=2]
\node (v) at (1.73205,1) [circle,scale=1.2, shade,ball color=white!50!gray] {};
\node (v1) at (1.7,1.1) [circle,scale=0.8, shade,ball color=white!70!gray] {};
\node (v2) at (1.8,0.9) [circle,scale=0.8, shade,ball color=white!70!gray] {};
\node (vs) at (-1.73205,1) [circle,scale=1.2, shade,ball color=white!70!gray] {};
\node (vs1) at (-1.7,0.9) [circle,scale=0.8, shade,ball color=white!70!gray] {};
\node (vs2) at (-1.75,1.1) [circle,scale=0.8, shade,ball color=white!70!gray] {};
\node (V) at (0,1)  [circle,scale=0.5, shade,ball color=black] {};
\node (vp) at (0.591608, 0.408392)  [circle,scale=1.2, shade,ball color=RoyalBlue] {};
\node (vp1) at (0.56, 0.3)  [circle,scale=0.8, shade,ball color=RoyalBlue] {};
\node (vp2) at (0.65, 0.5)  [circle,scale=0.8, shade,ball color=RoyalBlue] {};
\node (vps) at (-0.591608, 1.59161) [circle,scale=1.2, shade,ball color=RoyalBlue] {};
\node (vps1) at (-0.63, 1.5) [circle,scale=0.8, shade,ball color=RoyalBlue] {};
\node (vps2) at (-0.46, 1.62) [circle,scale=0.8, shade,ball color=RoyalBlue] {};
\node (vpp) at (1.67332, -0.67332)  [circle,scale=1.2, shade,ball color=violet] {};
\node (vpp1) at (1.75, -0.75)  [circle,scale=0.8, shade,ball color=violet] {};
\node (vpp2) at (1.57, -0.7)  [circle,scale=0.8, shade,ball color=violet] {};
\node (vpps) at (-1.67332, 2.67332) [circle,scale=1.2, shade,ball color=violet] {};
\node (vpps1) at (-1.63, 2.57) [circle,scale=0.8, shade,ball color=violet] {};
\node (vpps2) at (-1.55, 2.7) [circle,scale=0.8, shade,ball color=violet] {};
\node (vp old) at (1.22474, -0.224745)   [circle,scale=1, shade,ball color=green!60!black!50!white] {};
\node (vps old) at (-1.22474, 2.22474)   [circle,scale=1, shade,ball color=green!60!black!50!white] {};
\begin{scope}[>=latex, on background layer]
\draw[dotted,line width=0.1ex] (V) circle (1.73205cm);
\draw[dashed,color=RoyalBlue,line width=0.1ex] (V) circle (0.83666cm);
\draw[dashdotted,color=violet,line width=0.1ex] (V) circle (2.36643cm);
\node[right=-0.05cm of v] {$(v,0)$};
\node[left=-0.05cm of vs] {$(v_*,0)$};
\node[left=0.05cm of vps] {\color{RoyalBlue}$(v'_*,0)$};
\node[right=0.05cm of vp] {\color{RoyalBlue}$(v',0)$};
\node[left=0.05cm of vpps] {\color{violet}$(v'_*,0)$};
\node[right=0.05cm of vpp] {\color{violet}$(v',0)$};
\node[left=-0.1cm of vp old] {$v'_m$};
\node[right=0.0cm of vps old] {$v'_{*m}$};
\node[below left=-0.1cm of V] {$V$};
\draw[->,violet,line width=0.1ex]  (vpps)--(vpp) node[violet,pos=0.95, above right=-0.1cm] {$u'$};	
\draw[->,line width=0.1ex,gray] (vs) -- (v) node[pos=0.9,sloped, black,above] {$u$};	
\draw[->,RoyalBlue,line width=0.1ex]  (vps)--(vp) node[RoyalBlue,pos=0.8, above right=-0.1cm] {$u'$};		
\end{scope}		
\end{tikzpicture}
	\end{center}
\caption{ A projection of the polyatomic graph of Figure \ref{picture} into  the plane corresponding to the coplanar direction of $\mathbb{R}^3$ containting the vector $u$  and $ \sigma = u'/\left|u'\right|$. Note that the scattering direction is an invariant of the polyatomic molecule collision law regardless of partition functions $r$, $R$, with  $|u'|$ is proportional to $R$.}
\label{picture 2}
\end{figure}
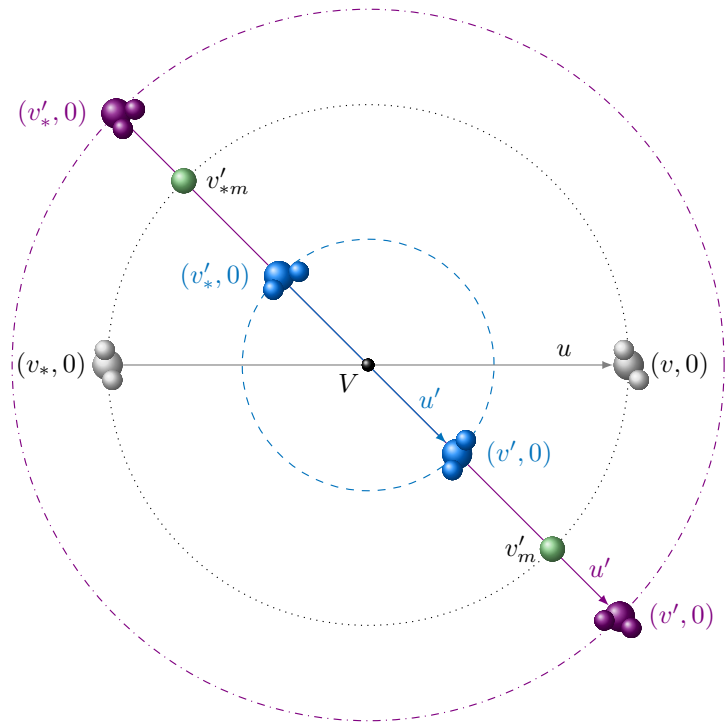

\subsection{The collision transformation}

The first step in modelling the collision operator is to  study transformation from post- to pre-collisional quantities. In particular, we need to compute Jacobian of this transformation, in order to ensure invariance of the measure appearing in the weak form of collision operator.

\begin{lemma}\label{Lemma coll mapping}
The Jacobian of transformation
\begin{equation}\label{coll mapping}
T: (v, v_*, I, I_*, r, R, \sigma) \mapsto (v', v'_*, I', I'_*, r', R', \sigma'), 
\end{equation}
where velocities $v'$ and $v'_*$ are defined in \eqref{velocities}, energies $I'$ and $I'_*$ in \eqref{micro int energies}, and 
\begin{equation}\label{coll mapping-2}
r'=\frac{I}{I+I_*} = \frac{I}{E-\frac{m}{4}\left| u \right|^2}, \quad R'=\frac{m \left|u\right|^2}{4 E}, \quad \sigma'=\frac{u}{\left|u\right|},
\end{equation}
is given by
\begin{equation}\label{coll mapping-3}
J_T = \frac{(1-R) R^{1/2}}{(1-R')R'^{1/2}} = \frac{(1-R) \left|u'\right|}{(1-R')\left|u\right|}.
\end{equation}
\end{lemma}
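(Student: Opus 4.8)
The plan is to reduce this $12$-dimensional Jacobian to a single $3\times 3$ determinant by changing to center-of-mass and ``radial $\times$ energy-fraction'' coordinates, in which $T$ becomes block diagonal up to permutations of the variables. First, since the map $(v,v_*)\mapsto(V,u)$ from \eqref{cm-rv} is linear with $|\det|=1$, and likewise $(v',v'_*)\mapsto(V',u')$, while $V'=V$ by \eqref{CL V}, it suffices to compute the Jacobian of the reduced map $(u,I,I_*,r,R,\sigma)\mapsto(u',I',I'_*,r',R',\sigma')$. On the source side I would introduce $|u|$, the direction $n:=u/|u|\in S^2$, the total internal energy $\varepsilon:=I+I_*$ and the fraction $s:=I/(I+I_*)$, so that $du=|u|^2\,d|u|\,dS(n)$ and $dI\,dI_*=\varepsilon\,d\varepsilon\,ds$, and do the same on the target with $n':=u'/|u'|$, $\varepsilon':=I'+I'_*$, $s':=I'/(I'+I'_*)$. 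Using \eqref{velocities}, \eqref{micro int energies} and \eqref{coll mapping-2} together with $E=\tfrac m4|u|^2+\varepsilon$, the reduced map takes the transparent form $n'=\sigma$, $\sigma'=n$, $s'=r$, $r'=s$, and $|u'|=2\sqrt{RE/m}$, $\varepsilon'=(1-R)E$, $R'=m|u|^2/(4E)$. Hence $(|u'|,\varepsilon',R')$ depends only on $(|u|,\varepsilon,R)$, the angular pair $(n,\sigma)$ is merely swapped into $(\sigma',n')$, and $(s,r)$ into $(r',s')$; so the Jacobian of the reduced map, computed in these coordinates, collapses to $\bigl|\det\,\partial(|u'|,\varepsilon',R')/\partial(|u|,\varepsilon,R)\bigr|$.

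The only real computation is this $3\times 3$ determinant, and I would make it trivial by using the kinetic energy $\kappa:=\tfrac m4|u|^2$ as radial variable (so $d\kappa=\tfrac m2|u|\,d|u|$, and similarly $\kappa':=\tfrac m4|u'|^2$). In these variables $\kappa'=RE$, $\varepsilon'=(1-R)E$, $R'=\kappa/E$ with $E=\kappa+\varepsilon$; since $\kappa'+\varepsilon'=E=\kappa+\varepsilon$ one may pass further to $(\kappa,E,R)$, in which the map is the identity in $E$ and $(\kappa,R)\mapsto(RE,\kappa/E)$ at fixed $E$, a map of determinant $-1$. Unwinding the two $\kappa\leftrightarrow|u|$ substitutions then gives $\bigl|\det\,\partial(|u'|,\varepsilon',R')/\partial(|u|,\varepsilon,R)\bigr|=|u|/|u'|$.

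Reassembling the weight factors produced in the first step,
\begin{equation*}
J_T=\frac{|u'|^2\,\varepsilon'}{|u|^2\,\varepsilon}\cdot\frac{|u|}{|u'|}=\frac{|u'|\,\varepsilon'}{|u|\,\varepsilon},
\end{equation*}
and since $\varepsilon'=(1-R)E$ while $\varepsilon=I+I_*=E-\tfrac m4|u|^2=(1-R')E$ by \eqref{coll mapping-2}, this equals $\dfrac{(1-R)\,|u'|}{(1-R')\,|u|}$; finally \eqref{sigma-scatter} and \eqref{coll mapping-2} give $|u'|=2\sqrt{RE/m}$ and $|u|=2\sqrt{R'E/m}$, so $|u'|/|u|=R^{1/2}/R'^{1/2}$, which yields \eqref{coll mapping-3}. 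The only genuine obstacle is bookkeeping: one must be scrupulous about the Jacobian factors produced by each coordinate change — the $S^2$ surface measures $dS(n),dS(n')$ and the internal-energy splitting $dI\,dI_*=\varepsilon\,d\varepsilon\,ds$ — and carry out the $3\times 3$ determinant without sign or algebra slips; choosing $\kappa$ as the radial variable is exactly what collapses that determinant to $-1$. (As a consistency check one may note that $T$ is an involution, so $J_T$ at $(v,\dots)$ times $J_T$ at $(v',\dots)$ must equal $1$, which the claimed formula manifestly satisfies.)
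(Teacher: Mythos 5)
Your proof is correct, and it takes the same broad strategy as the paper (reduce to center-of-mass coordinates, pass to polar coordinates in $u$, and compute a Jacobian of a chain of coordinate changes) but a genuinely different and cleaner route through the middle. The paper mechanically factors $T$ into nine elementary maps and computes a $5\times 5$ determinant by cofactor expansion in the variables $\bigl(|u|^2, I, E, r, ER\bigr)\mapsto\bigl(|u'|^2, I', I'_*, r', R'\bigr)$. You instead choose coordinates $(|u|,n,\varepsilon,s,r,R,\sigma)$, mirrored on the target side, under which the map is block-diagonal up to the two transpositions $n\leftrightarrow\sigma$ and $s\leftrightarrow r$; the whole determinant then localizes to the $3\times 3$ block in $(|u|,\varepsilon,R)$, and passing to $(\kappa,E,R)$ with $\kappa=\tfrac m4|u|^2$ exposes the conservation $E'=E$, so the $3\times 3$ collapses to the $2\times 2$ map $(\kappa,R)\mapsto(RE,\kappa/E)$ with determinant $-1$. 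This buys two things the paper doesn't exploit: (i) the angular and energy-fraction variables are visibly just swapped, so no argument about ``rotations'' or auxiliary cofactor rows is needed, and (ii) the only algebra is a $2\times 2$ determinant. The bookkeeping of the radial and splitting Jacobians ($|u|^2$, $\varepsilon$) that you carry along is exactly what the paper absorbs into its $T_2,T_3,T_5,T_7,T_8$ factors, and your final reassembly $J_T=\tfrac{|u'|\varepsilon'}{|u|\varepsilon}=\tfrac{(1-R)|u'|}{(1-R')|u|}$ matches \eqref{coll mapping-3}. Your involution consistency check is also valid and is a nice sanity test the paper omits.
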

The proof of this Lemma can be found in Appendix \ref{app Jacobian}.\\

For later purposes we also prove the following Lemma, which finds a function invariant with  respect to the collision process that contains the factor $I^\alpha I_*^\alpha$, crucial for polyatomic modelling. As we will see, $\alpha$ will be related to the degrees of freedom $D$ from macroscopic caloric equation of state \eqref{caloric}.	

We first introduce the following functions, referred by {\em partition functions} for the kinetic-internal energy split, and internal molecular energy split, respectively, given by 
	\begin{equation}\label{fun r R}
\varphi_\alpha(r) := (r(1-r))^{\alpha}, \qquad \psi_\alpha(R) :=  (1-R)^{2\alpha},
\end{equation}
which ensure the expected  invariance property for the conservative polyatomic gas model.

\begin{lemma}\label{Lemma fun r R}
Let functions $\varphi_\alpha(r)$ and $\psi_\alpha(R) $ be from \eqref{fun r R}. The following invariance holds
	\begin{equation*}
	I^\alpha I_*^\alpha \varphi_\alpha(r) \psi_\alpha(R)  = I'^\alpha I'^\alpha_* \varphi_\alpha(r'), \psi_\alpha(R'),
	\end{equation*}
for any power $\alpha\in \mathbb{R}$,	where the involved quantities are linked via the mapping \eqref{coll mapping}.
\end{lemma}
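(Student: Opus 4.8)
The plan is to prove the identity by direct substitution of the collision relations, exploiting the fact that the partition functions $\varphi_\alpha$ and $\psi_\alpha$ from \eqref{fun r R} are designed precisely so that the powers of the total energy $E$ cancel. First I would expand both sides: the left-hand side equals $(I\, I_*)^\alpha\, (r(1-r))^\alpha\, (1-R)^{2\alpha}$, while on the right-hand side I substitute $I' = r(1-R)E$ and $I'_* = (1-r)(1-R)E$ from \eqref{micro int energies}, which produces the common factor $(r(1-r))^\alpha(1-R)^{2\alpha}$ together with an extra factor $E^{2\alpha}$ and the leftover term $(r'(1-r'))^\alpha (1-R')^{2\alpha}$ from $\varphi_\alpha(r')\psi_\alpha(R')$. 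After cancelling the common factor, the claim reduces to the scalar identity
$$ I\, I_* = E^2\, r'(1-r')\, (1-R')^2 . $$

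Next I would verify this reduced identity using the definitions of $r'$ and $R'$ in \eqref{coll mapping-2}. Since $r' = I/(I+I_*)$ we have $1-r' = I_*/(I+I_*)$, hence $r'(1-r') = I I_*/(I+I_*)^2$. For the factor $1-R'$ I use the energy conservation law \eqref{micro CL energy mass-rel vel}, namely $E = \tfrac{m}{4}|u|^2 + I + I_*$, so that $1 - R' = 1 - \tfrac{m|u|^2}{4E} = \tfrac{E - \frac{m}{4}|u|^2}{E} = \tfrac{I+I_*}{E}$, and therefore $(1-R')^2 = (I+I_*)^2/E^2$. Multiplying these together gives $E^2 \cdot \tfrac{I I_*}{(I+I_*)^2} \cdot \tfrac{(I+I_*)^2}{E^2} = I I_*$, which is exactly the reduced identity, establishing the Lemma for any real $\alpha$.

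The computation is elementary; the only points requiring care are bookkeeping — keeping straight which of $I, I_*, r, R$ play the role of post-collisional arguments and which are the pre-collisional images $r', R'$ under the map $T$ of \eqref{coll mapping} — and the remark that for $\alpha<0$ the identity is to be read on the set where $I, I_*, r(1-r), 1-R$ are strictly positive, while for $\alpha>0$ both sides vanish simultaneously when any of these factors is zero, by the same computation. I do not expect a genuine obstacle here: the content of the Lemma is precisely that the exponents (namely $\alpha$ on $r(1-r)$, $2\alpha$ on $1-R$, and $\alpha$ on each internal energy) are the unique choice making the powers of $E$ telescope, which is exactly why these partition functions are the natural ones for the conservative polyatomic model.
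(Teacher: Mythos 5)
Your proof is correct and follows essentially the same route as the paper's: both reduce the claim to the algebraic identities $I'=r(1-R)E$, $I'_*=(1-r)(1-R)E$, $I=r'(1-R')E$, $I_*=(1-r')(1-R')E$ implied by the Borgnakke--Larsen parametrization, from which the powers of $E$ cancel. The paper simply multiplies those four relations together and raises to the power $\alpha$, whereas you substitute two of them and then verify the residual identity $I I_* = E^2 r'(1-r')(1-R')^2$ from the definitions of $r'$ and $R'$ --- a cosmetic rather than substantive difference.
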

\begin{proof}
	We first write
	\begin{equation*}
	r(1-R)=\frac{I'}{E}, \ I=r'(1-R')E, \ (1-r)(1-R) =\frac{I'_*}{E}, \ I_*=(1-r')(1-R')E,
	\end{equation*}	
	so that
	\begin{equation*}
	r(1-R) \, I \, (1-r)(1-R) \, I_*= I' \, r'(1-R') \,I'_* \, (1-r')(1-R').
	\end{equation*}
	To conclude the proof, it remains to raise this equation to the power $\alpha$.
\end{proof}

\subsection{The Boltzmann collision operator for binary polyatomic gases} 

In this manuscript, we follow the  definition of  collision operator   from \cite{LD-Bourgat94}. Then the natural working functional framework for the evolutions of probability densities is the   Banach space $ L^1(\mathbb R^{3+})$ in the variables $v$ and $I$. 

 This Boltzmann type collision operator, written in strong bilinear form,  is modeled by the non-local operator acting on a pair of probability density measures $(f,g)(v,I)$ defined as follows
\begin{multline}\label{collision operator}
Q(f,g)(v,I) = \int_{{\mathbb{R}^{3+}} \times \K} \left(f' \, g'_* \left(\frac{I \, I_* }{I' \, I'_*} \right)^{\alpha} - f g_* \right) 
\\ \times
\mathcal{B}  \, \varphi_\alpha(r) \, \psi_\alpha(R)  \, (1-R) \, R^{1/2} \, \mathrm{d}R  \, \mathrm{d}r\, \mathrm{d}\sigma \mathrm{d}I_*\, \mathrm{d}v_*,
\end{multline}
$\alpha>-1$,
with  functions $ \varphi_\alpha(r)$,  $\psi_\alpha(R)$ from \eqref{fun r R}. The region of integration is ${\mathbb{R}^{3+}} \times \K$, where  ${\mathbb{R}^{3+}}$ denotes the upper half $3$-dimensional space of unbounded regions of definition of molecular velocity $v$ and internal energy $I$, and $\K$ a compact manifold embedded in the four dimensional space, that is, 
\begin{equation}\label{collision operator2}
{\mathbb{R}^{3+}} := \mathbb{R}^3 \times [0,\infty),  \qquad \text{and }\qquad  \K:= [0,1]^2 \times S^2\, .
\end{equation}
We have used standard abbreviations $f':=f(v',I')$, $g'_*:=g(v'_*,I'_*)$, $f:=f(v,I)$, $g_*:=g(v_*,I_*)$. 

The transition probability rates are, in part,  quantified by  probability measures denoted by
\begin{equation}\label{cross-section}
\mathcal{B}:= \mathcal{B}(v,v_*,I,I_*,R,r,\sigma) \geq 0,
\end{equation}
 that are assumed to be invariant with respect to the following two changes of variables 
\begin{align}
(v,v_*,I,I_*,R,r,\sigma) & \leftrightarrow (v',v'_*,I',I'_*,R',r',\sigma'), \label{microreversibility changes prime}\\
(v,v_*,I,I_*,R,r,\sigma)&\leftrightarrow (v_*,v,I_*,I,R,1-r,-\sigma), \label{microreversibility changes star}
\end{align}
which secures microreversibility.   That means the transition function $\mathcal{B}$  is invariant for such exchange of state satisfying 
\begin{equation}\label{microreversibility changes}
\mathcal{B}(v,v_*,I,I_*,R,r,\sigma) = \mathcal{B}(v',v'_*,I',I'_*,R',r',\sigma')=  \mathcal{B}(v_*,v,I_*,I,R,1-r,-\sigma).
\end{equation}
%Parameter $\delta$ depends on degrees of freedom in molecule, and the values are given in Table \ref{delta table}.
%\begin{table}
%\begin{center}
%	\begin{tabular}{ |c|c| } 
%		\hline
%		molecule & value of $\delta$ \\ \hline
%		diatomic & 2  \\ 
%		three-atomic & 3 \\ 
%		\hline
%	\end{tabular}
%\end{center}
%\caption{Values of parameter  $\delta$ depending on number of atoms in a molecule} \label{delta table} 
%\end{table}

Besides these usual assumptions on the transition function $\mathcal{B}$, we will have some additional ones, as stated  in the Section \ref{Sec: Ass B} below.\\

In addition,  it is worthwhile to rewrite strong form \eqref{collision operator} in  the symmetric form,
\begin{multline}\label{collision operator-pull-out}
Q(f,g)(v,I) = \int_{{\mathbb{R}^{3+}} \times \K} \left( \frac{f' \, g'_*}{\left( I' \, I'_* \right)^\alpha} - \frac{f g _*}{ \left( I\, I_*\right)^\alpha} \right) 
\\ \times
\mathcal{B}  \, \varphi_\alpha(r) \, \psi_\alpha(R)  \, (1-R) \, R^{1/2} \,  I^\alpha I_*^\alpha \mathrm{d}R  \, \mathrm{d}r\, \mathrm{d}\sigma \mathrm{d}I_*\, \mathrm{d}v_*,
\end{multline}
obtained by just pulling out the factor $\left( I\, I_*\right)^\alpha$ from the gain term.\\

We explain a role of each term involved in the definition of collision operator  \eqref{collision operator} or equivalently \eqref{collision operator-pull-out}. First, renormalization of a  distribution function $f$ by the factor $I^\alpha$ will allow to obtain a correct macroscopic energy law from \eqref{caloric}, and we shall see below that there is a link between  $\alpha$ and degrees of freedom $D$ introduced in \eqref{caloric}. Because of the additional factor $\left( I\, I_*\right)^\alpha$ we need to incorporate functions $ \varphi_\alpha(r)$ and $\psi_\alpha(R)$ in order to have invariance property by virtue of the Lemma \ref{Lemma fun r R}. Finally, term $ (1-R) \, R^{1/2}$ is coming from the Jacobian of collision transformation computed in the Lemma \ref{Lemma coll mapping}, which  ensures good definition of the weak form.\\

In particular, when the transition probability measure  
$$\mathcal{B}(v,v_*,I,I_*,R,r,\sigma)\, \varphi_\alpha(r) \, \psi_\alpha(R)  \, (1-R) \, R^{1/2}  \, \mathrm{d}R  \, \mathrm{d}r\, \mathrm{d}\sigma$$ 
is a finite function of  the pre and post collisional states $(v,v_*,I,I_*)$,  the weak form \eqref{collision operator} or, equivalently, its symmetrized form   \eqref{collision operator-pull-out} can be split into the difference of two positive parts, referred as to the gain $Q^+$ and loss $Q^-$	collisional forms,  namely,
\begin{equation*}
Q(f,g)(v,I) =  Q^+(f,g)(v,I)  \ -  \ Q^-(f,g)(v,I),
\end{equation*}	
with
	\begin{align}
	Q^+(f,g)(v,I) &=  \int_{{\mathbb{R}^{3+}} \times \K}  \frac{f' \, g'_*}{\left( I' \, I'_* \right)^\alpha} \mathcal{B} \, \varphi_\alpha(r)  \psi_\alpha(R) \, (1-R) R^{1/2} \,  I^\alpha I_*^\alpha \mathrm{d}R  \, \mathrm{d}r \, \mathrm{d}\sigma \, \mathrm{d}I_* \, \mathrm{d}v_*, \nonumber \\
	Q^-(f,g)(v,I) &=  {f}(v,I)\,  \nu[g](v,I), \label{loss op}
 \end{align}
where the loss term  is local in $f(v,I)$,  proportional to the {\em collision frequency}, $\nu[g](v,I)$, defined by \\
\begin{equation}\label{collision frequency}
  \nu[g](v,I) := \int_{{\mathbb{R}^{3+}} \times \K} {g _*} \, \mathcal{B}\, \varphi_\alpha(r)  \psi_\alpha(R) \, (1-R) R^{1/2}  \, \mathrm{d}R \,  \mathrm{d}r \, \mathrm{d}\sigma \, \mathrm{d}I_* \, \mathrm{d}v_*.
\end{equation} 
	
\begin{remark}\label{  } It should be noted that the  transition probability rate form   \eqref{cross-section}     is a more general form than a  {\em  differential cross section},  which is the usual expression for classical elastic collisional theory given by just   $|u|\, b(\hat u\cdot\sigma)$ in three dimensions.
In this work, the form of $\mathcal{B}$ may  not only include  such  {\em  differential cross section} factor,  but also needs to include other factors in order to obtain an invariant measure that describes the transition states \eqref{microreversibility changes prime}  and  \eqref{microreversibility changes star} involving internal energies that characterized the  modelling of polyatomic gases. Because of this fact, we refer to $\mathcal{B}  \varphi_\alpha(r)  \psi_\alpha(R)  (1-R) R^{1/2}   I^{\alpha} I_*^{\alpha} $  as the  transition probability rate form. 

In addition, the roll of this factor in the Boltzmann collisional theory is crucial for the theory of existence and uniqueness, as much as decay rates to equilibrium.
\end{remark}

\subsection{Weak form of collision operator} 

We first describe an invariant measure which ensures well defined weak form of the collision operator, by means of the following Lemma.

\begin{lemma}\label{Lemma measure inv}
For any $\alpha > -1$, the following measure is invariant with respect to the changes \eqref{microreversibility changes prime}-\eqref{microreversibility changes star}
	\begin{align}\label{inv-measure}
	\mathrm{d}A=\mathcal{B}(v,v_*,I,I_*,R,r,\sigma) \, \varphi_\alpha(r)  \, \psi_\alpha(R)\,  (1-R) R^{1/2} I^{\alpha}  I_*^{\alpha} \,  \mathrm{d}R \, \mathrm{d}r \, \mathrm{d}\sigma \, \mathrm{d}I_* \,  \mathrm{d}v_* \, \mathrm{d}I  \, \mathrm{d}v.
	\end{align}
\end{lemma}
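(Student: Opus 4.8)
The plan is to build the invariant measure $\mathrm{d}A$ out of the two collision changes of variables one at a time, since the second change \eqref{microreversibility changes star} is an elementary relabelling and the first change \eqref{microreversibility changes prime} is exactly the transformation $T$ from \eqref{coll mapping} whose Jacobian was computed in Lemma \ref{Lemma coll mapping}. Concretely, I would first verify invariance under the star-exchange $(v,v_*,I,I_*,R,r,\sigma)\leftrightarrow(v_*,v,I_*,I,R,1-r,-\sigma)$: swapping $v\leftrightarrow v_*$ and $I\leftrightarrow I_*$ is an orientation-preserving involution with unit Jacobian, $\sigma\mapsto-\sigma$ preserves $\mathrm{d}\sigma$ on $S^2$, and $r\mapsto 1-r$ preserves $\mathrm{d}r$ on $[0,1]$; the weight $I^\alpha I_*^\alpha$ and $\varphi_\alpha(r)=(r(1-r))^\alpha$ are manifestly symmetric under these swaps, $\psi_\alpha(R)=(1-R)^{2\alpha}$ and $(1-R)R^{1/2}$ do not involve the swapped variables, and $\mathcal{B}$ is invariant by the assumed microreversibility \eqref{microreversibility changes}. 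Hence $\mathrm{d}A$ is invariant under \eqref{microreversibility changes star}.

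Next I would treat the primed change \eqref{microreversibility changes prime}. The key input is Lemma \ref{Lemma coll mapping}, which gives
\[
\mathrm{d}v'\,\mathrm{d}v'_*\,\mathrm{d}I'\,\mathrm{d}I'_*\,\mathrm{d}r'\,\mathrm{d}R'\,\mathrm{d}\sigma' \;=\; |J_T|\;\mathrm{d}v\,\mathrm{d}v_*\,\mathrm{d}I\,\mathrm{d}I_*\,\mathrm{d}r\,\mathrm{d}R\,\mathrm{d}\sigma,
\qquad
|J_T|=\frac{(1-R)R^{1/2}}{(1-R')R'^{1/2}} .
\]
Writing the candidate density in the primed variables, one has, using Lemma \ref{Lemma fun r R} in the form $I^\alpha I_*^\alpha\varphi_\alpha(r)\psi_\alpha(R)=I'^\alpha I'^\alpha_*\varphi_\alpha(r')\psi_\alpha(R')$ and the invariance of $\mathcal{B}$ under \eqref{microreversibility changes prime},
\begin{align*}
&\mathcal{B}\,\varphi_\alpha(r')\,\psi_\alpha(R')\,(1-R')R'^{1/2}\,I'^\alpha I'^\alpha_*\;\mathrm{d}R'\,\mathrm{d}r'\,\mathrm{d}\sigma'\,\mathrm{d}I'_*\,\mathrm{d}v'_*\,\mathrm{d}I'\,\mathrm{d}v'\\
&\quad=\mathcal{B}\,\varphi_\alpha(r)\,\psi_\alpha(R)\,(1-R')R'^{1/2}\,I^\alpha I_*^\alpha\;|J_T|\;\mathrm{d}R\,\mathrm{d}r\,\mathrm{d}\sigma\,\mathrm{d}I_*\,\mathrm{d}v_*\,\mathrm{d}I\,\mathrm{d}v\\
&\quad=\mathcal{B}\,\varphi_\alpha(r)\,\psi_\alpha(R)\,(1-R)R^{1/2}\,I^\alpha I_*^\alpha\;\mathrm{d}R\,\mathrm{d}r\,\mathrm{d}\sigma\,\mathrm{d}I_*\,\mathrm{d}v_*\,\mathrm{d}I\,\mathrm{d}v,
\end{align*}
where in the last step the factor $(1-R')R'^{1/2}$ cancels against the denominator of $|J_T|$, leaving exactly $(1-R)R^{1/2}$. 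This is precisely $\mathrm{d}A$, so $\mathrm{d}A$ is invariant under \eqref{microreversibility changes prime} as well.

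Since any composition of the two changes is handled by composing these two invariances, the proof is complete. The only place requiring care — and the step I would flag as the main obstacle — is bookkeeping the cancellation: one must be sure that the Jacobian factor $(1-R)R^{1/2}$ appearing explicitly in $\mathrm{d}A$ is the one that survives, that the $I^\alpha I_*^\alpha$ weight together with $\varphi_\alpha,\psi_\alpha$ is moved across correctly via Lemma \ref{Lemma fun r R} (which requires $\alpha\in\mathbb{R}$, in particular the assumed $\alpha>-1$), and that the integrability of $(1-R)R^{1/2}\varphi_\alpha(r)$ on $[0,1]^2$ — which holds exactly because $\alpha>-1$ — is what makes the measure $\sigma$-finite and the manipulations legitimate. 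No genuinely new estimate is needed beyond Lemmas \ref{Lemma coll mapping} and \ref{Lemma fun r R} and the assumed symmetries \eqref{microreversibility changes} of $\mathcal{B}$.
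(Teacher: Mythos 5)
Your proof is correct and follows essentially the same route the paper indicates: invariance under the star exchange by inspection of symmetries, and invariance under the primed exchange by combining the microreversibility of $\mathcal{B}$, Lemma \ref{Lemma fun r R}, and the Jacobian from Lemma \ref{Lemma coll mapping}, with the explicit factor $(1-R)R^{1/2}$ cancelling against the denominator of $J_T$. The paper leaves this as a one-line remark; you have simply spelled out the bookkeeping, and your calculation is accurate.
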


\begin{proof}
	The proof for \eqref{microreversibility changes prime} easily follows using invariant properties of the transition probability  rate associated to $\mathcal{B}$, Lemma \ref{Lemma fun r R} and Jacobian of the collision transformation from Lemma \ref{Lemma coll mapping}. 
	On the other hand, invariance \eqref{microreversibility changes star} clearly holds.
\end{proof}

\begin{lemma} For  any test function $\chi(v,I)$ that makes the following left hand side finite, the collision operator \eqref{collision operator} takes the following weak form 
\begin{equation}\label{weak form}
\begin{split}
\int_{\mathbb{R}^{3+}} &Q(f, g)(v,I)  \, \chi(v,I) \, \mathrm{d}I \, \mathrm{d}v 
\\
&= \frac{1}{2} \int_{({\mathbb{R}^{3+}})^2 \times K }   f g_* \left( \chi(v',I') + \chi(v'_*,I'_*) - \chi(v,I) - \chi(v_*,I_*)\right)
\\
& \qquad \qquad \qquad \qquad\times  \mathcal{B} \, \varphi_\alpha(r) \,\psi_\alpha(R)\, (1-R) R^{1/2} \, \mathrm{d}R \,  \mathrm{d}r \, \mathrm{d}\sigma \,  \mathrm{d}I_*\, \mathrm{d}v_* \,  \mathrm{d}I \, \mathrm{d}v\\
&= \frac{1}{2} \int_{({\mathbb{R}^{3+}})^2 \times K }  \frac{ f g_*}{(I I_*)^\alpha} \left( \chi(v',I') + \chi(v'_*,I'_*) - \chi(v,I) - \chi(v_*,I_*)\right) \mathrm{d}A,
\end{split}
\end{equation}
with the measure $\mathrm{d}A$ from \eqref{inv-measure}.
\end{lemma}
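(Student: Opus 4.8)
The plan is to derive the weak form by exploiting the invariance of the measure $\mathrm{d}A$ established in Lemma~\ref{Lemma measure inv}, combined with the two changes of variables \eqref{microreversibility changes prime} and \eqref{microreversibility changes star}. First I would start from the symmetrized strong form \eqref{collision operator-pull-out}, multiply by a test function $\chi(v,I)$, and integrate over $\mathbb{R}^{3+}$ in the variables $(v,I)$. Writing the result against the measure $\mathrm{d}A$, the integrand carries the factor $\bigl(\tfrac{f'g'_*}{(I'I'_*)^\alpha}-\tfrac{fg_*}{(II_*)^\alpha}\bigr)\chi(v,I)$, so that
\begin{equation*}
\int_{\mathbb{R}^{3+}} Q(f,g)\,\chi\,\mathrm{d}I\,\mathrm{d}v=\int_{(\mathbb{R}^{3+})^2\times K}\Bigl(\frac{f'g'_*}{(I'I'_*)^\alpha}-\frac{fg_*}{(II_*)^\alpha}\Bigr)\chi(v,I)\,\mathrm{d}A.
\end{equation*}

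Next I would treat the gain and loss contributions on the right-hand side separately. For the gain contribution, carrying the term $\tfrac{f'g'_*}{(I'I'_*)^\alpha}\chi(v,I)$, I apply the change of variables \eqref{microreversibility changes prime}, under which $\mathrm{d}A$ is invariant by Lemma~\ref{Lemma measure inv}; since the map is an involution (the pre-collisional formulas \eqref{coll mapping-2} invert the post-collisional ones), the primed and unprimed quantities swap roles, turning $\tfrac{f'g'_*}{(I'I'_*)^\alpha}\chi(v,I)$ into $\tfrac{fg_*}{(II_*)^\alpha}\chi(v',I')$. Hence the gain part equals $\int \tfrac{fg_*}{(II_*)^\alpha}\chi(v',I')\,\mathrm{d}A$, and combined with the loss part this gives $\int \tfrac{fg_*}{(II_*)^\alpha}\bigl(\chi(v',I')-\chi(v,I)\bigr)\,\mathrm{d}A$. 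Then I would symmetrize using the starred change \eqref{microreversibility changes star}: this swaps $(v,I)\leftrightarrow(v_*,I_*)$ and $(v',I')\leftrightarrow(v'_*,I'_*)$ (one checks from \eqref{micro int energies}, \eqref{velocities} that sending $r\mapsto 1-r$, $\sigma\mapsto-\sigma$ exchanges the primed pair), leaves $\mathrm{d}A$ invariant and leaves $fg_*/(II_*)^\alpha$ invariant, producing the companion integrand $\tfrac{fg_*}{(II_*)^\alpha}\bigl(\chi(v'_*,I'_*)-\chi(v_*,I_*)\bigr)$. Averaging the two representations yields the factor $\tfrac12\bigl(\chi(v',I')+\chi(v'_*,I'_*)-\chi(v,I)-\chi(v_*,I_*)\bigr)$, which is the claimed third line; unwinding $\mathrm{d}A$ back into its explicit form \eqref{inv-measure} and cancelling $I^\alpha I_*^\alpha$ against $(II_*)^{-\alpha}$ gives the middle line.

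The main obstacle, and the only point requiring genuine care, is the bookkeeping of the two involutions: one must verify precisely that \eqref{microreversibility changes prime} is an involution — i.e. applying the pre-collisional formulas \eqref{coll mapping-2} twice returns the original state, using $R'=m|u|^2/(4E)$ so that $1-R'$ and $R'^{1/2}$ behave correctly — and that \eqref{microreversibility changes star} indeed interchanges the two outgoing states rather than doing something subtler. I would also note that the symmetric form of the Jacobian in \eqref{coll mapping-3}, $J_T=(1-R)R^{1/2}/((1-R')R'^{1/2})$, is exactly what is needed for the weight $(1-R)R^{1/2}I^\alpha I_*^\alpha$ in $\mathrm{d}A$ to be invariant, so this step simply quotes Lemma~\ref{Lemma measure inv}. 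Finally I would remark that all manipulations are justified under the stated hypothesis that the left-hand side is finite, which (together with $\alpha>-1$ guaranteeing local integrability of the weights near $I=0$) lets Fubini's theorem rearrange the iterated integrals freely.
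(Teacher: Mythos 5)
Your argument follows the paper's route essentially verbatim: integrate \eqref{collision operator} against $\chi$, rewrite the integrand against the invariant measure $\mathrm{d}A$, push the prime change \eqref{microreversibility changes prime} through the gain term to obtain $\int f g_*(II_*)^{-\alpha}\bigl(\chi(v',I')-\chi(v,I)\bigr)\,\mathrm{d}A$, then apply \eqref{microreversibility changes star} to get the companion expression and average. One inaccuracy worth flagging: the star change does \emph{not} fix $f g_*$. Since $(v,I)\leftrightarrow(v_*,I_*)$ it sends $f g_*=f(v,I)g(v_*,I_*)$ to $f_*\,g=f(v_*,I_*)g(v,I)$, so the weak form with the $f g_*$ prefactor appearing in \emph{both} companion integrands is strictly valid only when $f=g$ (or after symmetrizing in $(f,g)$). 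The paper's own proof writes both intermediate identities with $fg_*$ and thus tacitly makes the same identification; this is harmless in practice since the lemma is only invoked for symmetric combinations, but your explicit claim that $fg_*/(II_*)^{\alpha}$ is invariant under \eqref{microreversibility changes star} is false as stated and should be replaced by the correct transformation $fg_*\mapsto f_*\,g$. Your verifications that the map $T$ of Lemma~\ref{Lemma coll mapping} is an involution and that \eqref{microreversibility changes star} exchanges $(v',I')\leftrightarrow(v'_*,I'_*)$ are both correct and are precisely the two structural facts the argument hinges on.
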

\begin{proof} We integrate the collision operator \eqref{collision operator} against a suitable test function  $\chi(v,I)$ with respect to $v$ and $I$ variables and then perform changes of variables \eqref{microreversibility changes prime} and \eqref{microreversibility changes star}. Using invariance properties of the measure $\mathrm{d}A$  \eqref{inv-measure} stated  in Lemma \ref{Lemma measure inv}, we obtain
	\begin{multline*}
	\int_{\mathbb{R}^{3+}} 
	Q(f,g)(v,I)  
	\chi(v,I)
\mathrm{d}I	\, \mathrm{d} v 
\\	= 	\int_{({\mathbb{R}^{3+}})^2 \times K} \frac{ f g_* }{I^\alpha \, I_*^\alpha} \left( \chi(v',I') -  \chi(v,I)\right)  \mathrm{d}A
\\	= 	\int_{({\mathbb{R}^{3+}})^2 \times K} \frac{ f g_* }{I^\alpha \, I_*^\alpha} \left( \chi(v'_*,I'_*) -  \chi(v_*,I_*)\right)  \mathrm{d}A,
	\end{multline*}
which yields desired estimate \eqref{weak form}.
\end{proof}

\subsection{The Boltzmann equation} 

In order to describe a polyatomic gas, list of arguments of a distribution function is extended by \emph{microscopic internal energy } $I$, i.e. we take
\begin{equation*}
f:= f(t, v, I).
\end{equation*}
The evolution of $f$ is governed by the Boltzmann equation
\begin{equation}\label{BE}
\partial_t f = Q(f,f)(v,I), 
\end{equation}
where  the  collision operator is written in \eqref{collision operator} or equivalently \eqref{collision operator-pull-out} and  \eqref{collision frequency},  namely,
\begin{equation*}
Q(f,f)(v,I), = Q^+(f,f)(v,I) - Q^-(f,f)(v,I) = Q^+(f,f)(v,I) - f \,\nu[f](v,I).
\end{equation*}

\subsection{${\mathcal H}$-theorem}
 A natural dissipative quantity that is minimized at the statistical equilibrium is usually given by the concept of entropy in the associated evolution of the probability density function, solutions to the associated Boltzmann equation for binary polyatomic gases. In this case, as in the classical case of  elastic monatomic gases, such quantity is  is given by the  entropy  functional, written in the space space homogeneous setting,  
 \begin{equation}\label{entropy}
{\mathcal H}(f)(t) :=   \int_{\mathbb{R}^{3+}} f(t,v,I) \log(f(t,v,I) I^{-\alpha})\, \mathrm{d}I \mathrm{d}v\, .
\end{equation}

Then, by means of the weak formulation associated to the equation \eqref{BE} defined above, the  evolution of the entropy \eqref{entropy} is obtained when multiplying both sides of equation \eqref{BE} by  $\log(f(t,v,I) I^{-\alpha})$  and integrating with respect to  the pair $v$ and $I$.   This results in the    entropy
production functional  ${\mathcal D}(f)(t)$  associated to the Boltzmann collision operator for binary polyatomic gases, that is
\begin{equation}
{\mathcal D}(f)(t) := \int_{\mathbb{R}^{3+}} Q(f,f)(t,v,I) \log(f(t,v,I,) I^{-\alpha}) \, \mathrm{d}I \mathrm{d}v\, .
\end{equation}
The following theorem focuses on the properties of this entropy dissipation functional.
\ \\
\begin{theorem}[The ${\mathcal H}$-theorem] 
	Let the transition function $\mathcal{B}$ be positive function almost everywhere, and let $f\geq 0$ such that the collision operator $Q(f,f)$ and entropy production ${\mathcal D}(f)$ are well defined. Then the following properties hold
	\begin{itemize}
		\item[i.] Entropy production is non-positive, that is
		\begin{equation*}
		{\mathcal D}(f) \leq 0.
		\end{equation*}
		\item[ii.] The three following properties are equivalent
		\begin{itemize}
			\item[(1)]${\mathcal D}(f) =0$,\\
			\item[(2)] $Q(f,f) =0$  for all  $(v,I) \in \mathbb{R}^{3+}$,\\
			\item[(3)] There exists $n\geq 0, U \in \mathbb{R}^3, \ \text{and} \ T>0$, such that the unit mass renormalized Maxwellian equilibrium for polyatomic gases is 
			\begin{equation}\label{Maxwellian}
			M_{eq}(v, I) = \frac{n}{Z(T)} \left( \frac{m}{2 \pi k_B T} \right)^{3/2} I^\alpha \ e^{- \frac{1}{k T} \left( \frac{m}{2} \left| v - U \right|^2 + I \right)} ,
			\end{equation}
			where $Z(T)$ is a partition (normalization) function 
			\begin{equation*}
			Z(T) = \int_{[0,\infty)} I^\alpha e^{-\frac{I}{k T} } \mathrm{d}I = (k T)^{\alpha +1} \Gamma(\alpha+1),
			\end{equation*}
			with	$\Gamma$ as gamma function.
		\end{itemize}		
		
	\end{itemize}
\end{theorem}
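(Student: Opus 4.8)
The plan is to use the weak form \eqref{weak form} of the collision operator, specialized to the test function $\chi(v,I) = \log(f(v,I)\, I^{-\alpha})$. Applying \eqref{weak form} with this choice gives
\begin{equation*}
\mathcal{D}(f) = \frac{1}{2} \int_{({\mathbb{R}^{3+}})^2 \times K} \frac{f g_*}{(I I_*)^\alpha}\left( \log\frac{f' g'_*}{(I' I'_*)^\alpha} - \log\frac{f g_*}{(I I_*)^\alpha} \right) \mathrm{d}A,
\end{equation*}
where $g = f$, after recognizing that $\chi(v',I') + \chi(v'_*,I'_*) = \log(f' g'_* (I' I'_*)^{-\alpha})$ and similarly for the unprimed pair, using $\alpha$-additivity of the logarithm. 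Writing $F := f(v,I)\, I^{-\alpha}$, $F_* := f(v_*,I_*)\, I_*^{-\alpha}$, and analogously $F'$, $F'_*$, and noting that $\mathrm{d}A$ carries the symmetrizing weight $(I I_*)^\alpha = (I' I'_*)^\alpha$ built into it via Lemma~\ref{Lemma fun r R}, the integrand becomes proportional to $F F_* \log(F' F'_*/(F F_*))$. First I would symmetrize: apply the change of variables \eqref{microreversibility changes prime}, under which $\mathrm{d}A$ is invariant (Lemma~\ref{Lemma measure inv}) and $(F, F_*, F', F'_*) \leftrightarrow (F', F'_*, F, F_*)$, to obtain the companion expression with integrand $F' F'_* \log(F F_*/(F' F'_*))$. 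Averaging the two representations yields
\begin{equation*}
\mathcal{D}(f) = \frac{1}{4} \int_{({\mathbb{R}^{3+}})^2 \times K} \left( F' F'_* - F F_* \right) \log\frac{F F_*}{F' F'_*}\; \widetilde{\mathrm{d}A},
\end{equation*}
where $\widetilde{\mathrm{d}A}$ is the nonnegative measure obtained from $\mathrm{d}A$ after the $(I I_*)^\alpha$ cancellation. Since $(x - y)\log(y/x) \le 0$ for all $x, y > 0$, with equality iff $x = y$, and $\mathcal{B} > 0$ a.e. makes $\widetilde{\mathrm{d}A}$ a positive measure, part~(i) follows, and $\mathcal{D}(f) = 0$ forces $F F_* = F' F'_*$ for $\mathrm{d}A$-a.e. configuration, i.e. $(1)\Rightarrow$ the pointwise relation $\log F + \log F_* = \log F' + \log F'_*$.

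Next I would establish the equivalences in part~(ii). The implication $(2)\Rightarrow(1)$ is immediate from the definition of $\mathcal{D}$. For $(3)\Rightarrow(2)$ one checks directly that the renormalized Maxwellian \eqref{Maxwellian} satisfies $M_{eq}\, I^{-\alpha} = $ an exponential of a collision invariant — the exponent is a linear combination of $1$, $v$, and $\frac{m}{2}|v|^2 + I$ — so that $F F_* = F' F'_*$ identically by the conservation laws \eqref{micro CL}, whence every factor in the gain–loss difference of \eqref{collision operator} vanishes and $Q(M_{eq}, M_{eq}) = 0$. The substantive implication is $(1)\Rightarrow(3)$: from $(1)$ we have that $\log F$ is (a.e.) a summational invariant of the collision, i.e. $\log F(v,I) + \log F(v_*,I_*) = \log F(v',I') + \log F(v'_*,I'_*)$ whenever the pre/post states are linked by \eqref{micro int energies}, \eqref{velocities} for some $(r,R,\sigma)$. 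The goal is to conclude that $\log F$ must be an affine combination of the collision invariants $1$, $v$ (componentwise), and $\frac{m}{2}|v|^2 + I$; exponentiating then gives exactly the form \eqref{Maxwellian} once one reads off $n$, $U$, $T$ from the (necessarily negative-definite in the energy variable) coefficients and identifies the normalizing partition function $Z(T)$ by the stated Gamma integral.

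The hard part will be this characterization of summational invariants in the polyatomic setting. In the classical monatomic case this is the well-known result that the only continuous (or merely measurable, after a regularization argument) solutions of $\phi(v) + \phi(v_*) = \phi(v') + \phi(v'_*)$ under momentum and energy conservation are the linear combinations of $1, v, |v|^2$; here one must handle the extra internal-energy variable $I$ and the extra partition parameters $r, R$. I would exploit the \emph{richness} of the polyatomic collision manifold $\K = [0,1]^2 \times S^2$: for fixed $(v,v_*,I,I_*)$ one may vary $R \in [0,1]$, which continuously redistributes energy between the kinetic part $RE$ and the internal part $(1-R)E$, and vary $r \in [0,1]$, which splits $(1-R)E$ between $I'$ and $I'_*$; in particular, choosing $R = 1$ recovers a purely kinetic (monatomic-type) collision with $I' = I'_* = 0$, allowing one to first pin down the dependence of $\log F$ on $v$ for $I$ in the support, then use the $R, r$ degrees of freedom to propagate the structure to general $I$. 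A clean route is to differentiate the functional equation in the parameters $(r, R, \sigma)$ — or, to avoid regularity assumptions, to integrate it against suitable kernels and invoke the standard measurable-solution argument — reducing to the statement that $\log F - (\text{affine in } 1, v, \tfrac{m}{2}|v|^2 + I)$ is a function that is collisionally invariant and vanishes on the kinetic slice, hence vanishes identically. Everything else (sign of the $(x-y)\log(y/x)$ term, invariance of $\mathrm{d}A$, the Gamma-function evaluation of $Z(T)$) is routine given the lemmas already proved in this section.
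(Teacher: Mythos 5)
The paper does not prove this theorem; it cites \cite{LD-Bourgat94}, so your proposal must be assessed on its own rather than compared line by line. The mechanism you propose for part (i) --- testing the weak form against $\chi=\log(fI^{-\alpha})$, symmetrizing via \eqref{microreversibility changes prime}, and invoking the sign of $(x-y)\log(y/x)$ --- is the standard H-theorem argument and is sound in outline, but one detail needs repair: the parenthetical assertion ``$(I I_*)^\alpha = (I' I'_*)^\alpha$'' is false, since internal energies are not individually conserved. What is invariant is the full measure $\mathrm{d}A$ of Lemma~\ref{Lemma measure inv}. After rewriting $ff_*/(II_*)^\alpha = F F_*$ the averaged identity should simply carry $\mathrm{d}A$; no auxiliary $\widetilde{\mathrm{d}A}$ is needed, and indeed $\mathrm{d}A/(II_*)^\alpha$ would fail to be invariant under \eqref{microreversibility changes prime}.

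The more substantive gap is in your sketch of $(1)\Rightarrow(3)$. Setting $R=1$ does not ``recover a purely kinetic monatomic-type collision'': by \eqref{micro int energies}--\eqref{velocities} it gives $I'=I'_*=0$ but $|u'|^2 = 4E/m$, where $E=\tfrac{m}{4}|u|^2+I+I_*$ from \eqref{micro CL energy mass-rel vel} still contains the initial internal energies, so $|u'|>|u|$ whenever $I+I_*>0$; this is not an elastic velocity collision. The slice that leaves the internal energies untouched and genuinely reproduces the monatomic kernel is $R=R'=\tfrac{m|u|^2}{4E}$, $r=r'=\tfrac{I}{I+I_*}$, which forces $I'=I$, $I'_*=I_*$, $|u'|=|u|$ with $\sigma\in S^2$ free. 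From there one would characterize $\log F(\cdot,I)$ as affine in $1,v,|v|^2$ at each fixed internal-energy level, and only then exploit the $(r,R)$ freedom to tie the $I$-dependence to the kinetic energy with a common rate $1/(kT)$. This characterization of summational invariants is the heart of the theorem --- not a routine consequence of the earlier lemmas --- and it is exactly what \cite{LD-Bourgat94} supplies; your proposal defers it with an incorrect specialization.
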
  
The proof can be found in \cite{LD-Bourgat94}.

\subsection{Functional space}

The choice of a functional space is crucial for the framework to construct  solutions with a good physical and mathematical meaning.
Given the description of the dynamics of the polyatomic gas Boltzmann model {   for} the evolution of a non-negative  probability density  measure describing the binary  interaction of particles exchanging the states of their   molecular velocities and internal energies, a natural functional normed Banach space is the one of integrable functions with polynomial weights, whose norms describe the observables or moments associated to  such measures. 
   
 The natural polynomial weight to generate a   weighted  Banach space containing the same spaces with a   lower polynomial degree weight, is   given  by 
\begin{equation}\label{brackets}
\left\langle  v, I \right\rangle = \sqrt{ 1+  \frac{1}{2} \left| v  \right|^2 + \frac{I}{m}},
\end{equation} 
all  associated to the velocity $v\in \mathbb{R}^3$ and microscopic internal energy $I \in  [0,\infty)$,  which is independent of mass units. 
We refer to this weight function as the Lebesgue bracket and  
  will  look for a  unique solution to the Cauchy problem associated to the Boltzmann equation \eqref{BE} in the Banach space  weighted by powers of this Lebesgue weight form. 
%, without assuming the initial entropy \eqref{entropy} to be bounded. 
More precisely, we define
\begin{equation}
\begin{split}\label{space L_k^1}
L_k^1 &= \left\{ f \ \text{measurable}: \int_{\mathbb{R}^{3+}} \left| f(v, I)  \right|\left\langle v, I \right\rangle^{\tk}  \mathrm{d} I \mathrm{d}v < \infty, \ k\geq 0 \right\},
\end{split}\end{equation}  
with the range of integration ${\mathbb{R}^{3+}} = \mathbb{R}^3 \times [0, \infty)$ from \eqref{collision operator2}. Its associated norm is
\begin{equation}\label{norm}
\left\| f \right\|_{L_k^1} =  \int_{\mathbb{R}^{3+}} \left| f(v, I)  \right|\left\langle v, I\right\rangle^{\tk} \mathrm{d} I \mathrm{d}v.
\end{equation}
We recall the monotonicity property for norms weighted with \eqref{brackets},
\begin{equation}\label{monotonicity of norm}
\left\| f \right\|_{L_{k_1}^1} \leq \left\| f \right\|_{L_{k_2}^1} \ \text{whenever} \ 0\leq \ k_1 \leq k_2.
\end{equation}

We also define the associated semi-norm, with the classical   notation, 
\begin{equation}
\label{semi-norm_k}
\dot L_k^1 = \left\{ f \ \text{measurable}: \int_{\mathbb{R}^{3+}} | f(v, I)|\,  \left( \frac{1}{2} \left| v  \right|^2 + \frac{I}{m} \right)^{k}  \mathrm{d} I \mathrm{d}v < \infty, \ k\geq 0 \right\}.
\end{equation}

Yet, when we refer to the distribution function, without loss of generalization,  the norm \eqref{norm} is called the polynomial moment, as the following Definition \ref{def poly moment} precises.
\begin{definition}\label{def poly moment} 
	The Lebesgue polynomial moment of order $k\geq 0$ associated to any integrable function  $g(t,v,I)$ is defined with
\begin{equation*}
\pmk[g](t) = 	\int_{\mathbb{R}^{3+}}  g(t,v,I) \left\langle  v, I \right\rangle^{\tk} \mathrm{d}I	\mathrm{d} v \end{equation*}
with the Lebesgue bracket weight from \eqref{brackets}.
\end{definition}

Note that the Lebesgue polynomial moment of any order of any non-negative probability density function $f(v,I)$ coincides with $ \left\| f \right\|_{L_{k}^1}$, and when $k=0$  is the  macroscopic number density associated to the  probability density $f(t,v,I)$, hence $ \mathfrak{m}_0[f](t) =  \| f \|_{L_0^1} (t) = \| f \|_{\dot L_0^1} (t)$. However, the semi-norm   $\| f \|_{\dot L_k^1} (t) <  \pmk[f](t),$ for any $k>0$. 

The particular case $k=1$ is associated to  the  classical total kinetic and internal energy of a non-negative $f(t,v,I)$  is smaller that the one generated by the Lebesgue polynomial moment, that is   $ \| f \|_{\dot L_1^1} (t) <   \mathfrak{m}_1[f](t)$.  It is important to notice that   if the classical energy of a non-negative probability density measure is positive, then such $f(t,v,I)$ can not be a an absolutely singular measure.

In this manuscript, we also study exponentially weighted $L^1$-norms, by means of  exponential moments defined as follows.
\begin{definition}
Exponential moment for an integrable  function $g$,   of  rate $\beta>0$ and order $ 2s$, $0<s\leq \ho$,   is defined by 
\begin{equation}\label{exp moment}  
\mathbb{\mathcal{E}}_{ s}[g](\beta,t) :=  \int_{\mathbb{R}^{3+}} g(t,v,I) \, e^{\beta \left\langle v, I \right\rangle^{2s}} \mathrm{d} I \, \mathrm{d}v. 
\end{equation} 
\end{definition}
Note that, { for } any non-negative probability density function $f(v,I)$,   their associated exponential moments coincide with their   exponential weighted $L^1-$norm.

In the  upcoming Section, we provide a physical interpretation to some polynomial moments.
 
\subsection{Macroscopic observables} We first note that for certain test functions weak form \eqref{weak form} annihilates. This is encoded in the collision conservation laws \eqref{micro CL}. Namely, the  following Lemma holds.

\begin{lemma}
	The collision invariants for the collision operator \eqref{collision operator}, i.e. functions $\chi(v,I)$ for which the weak form \eqref{weak form} annihilates
	\begin{equation*}
	\int_{\mathbb{R}^{3+}} Q(f, g)(v,I)  \, \chi(v,I) \, \mathrm{d}I \, \mathrm{d}v 
=0,
	\end{equation*}
	 are linear combination of the following functions
	\begin{equation}\label{coll inva}
	\chi_1(v, I) =m, \qquad \chi_{k}(v, I) = m\, v_k, \ k=1,2,3, \qquad \chi_5(v, I) = \frac{m}{2} \left|v\right|^2 + I.
	\end{equation}
\end{lemma}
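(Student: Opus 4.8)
The plan is to prove this in two directions. First, I would verify that the listed functions $\chi_1,\dots,\chi_5$ (and hence their linear combinations) are genuine collision invariants. By the weak form \eqref{weak form}, it suffices to check that for each such $\chi$ the symmetrized combination $\chi(v',I')+\chi(v'_*,I'_*)-\chi(v,I)-\chi(v_*,I_*)$ vanishes identically on $\mathbb{R}^{3+}\times\mathbb{R}^{3+}\times\K$. For $\chi_1=m$ this is trivial; for $\chi_k=m\,v_k$ it is exactly the conservation of momentum (the vector identity $v+v_*=v'+v'_*$) from \eqref{micro CL}; and for $\chi_5=\tfrac{m}{2}|v|^2+I$ it is exactly the conservation of total kinetic-plus-internal energy, i.e. \eqref{micro CL energy mass-rel vel} combined with \eqref{CL V}. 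Since \eqref{velocities} and \eqref{micro int energies} were constructed precisely to satisfy \eqref{micro CL}, this direction is immediate and needs no computation.

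For the converse --- that every collision invariant is a linear combination of the $\chi_i$ --- I would argue that a measurable $\chi$ annihilating \eqref{weak form} for all admissible $f,g$ must satisfy the pointwise additive functional equation
\[
\chi(v',I')+\chi(v'_*,I'_*)=\chi(v,I)+\chi(v_*,I_*)
\]
for (almost) all $(v,v_*,I,I_*)$ and all parameters $(r,R,\sigma)\in\K$, using that $\mathcal{B}\,\varphi_\alpha\psi_\alpha(1-R)R^{1/2}>0$ a.e. and that $f,g$ range over a rich enough family (e.g. products of approximate point masses) to localize the integrand. The standard route is then to show that the only solutions of this Cauchy-type functional equation on the constraint manifold defined by \eqref{micro CL} are affine combinations of $1$, the components of $v$, and $\tfrac{m}{2}|v|^2+I$. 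One way is to reduce to the classical monatomic case: fixing $R$ and $r$, the map on velocities is the usual elastic scattering with a rescaled relative speed, and known results (the Boltzmann summational-invariant theorem) force $\chi$, as a function of $(v,I)$, to be affine in the "kinetic" variable $v$ and to depend on the combination $\tfrac{m}{2}|v|^2+I$ in the quadratic part; varying $R$ then pins down that the internal-energy dependence is exactly linear in $I$ and locked to the kinetic energy with the right coefficient. One must also use the second microreversibility change \eqref{microreversibility changes star} to exclude any antisymmetric contribution.

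The main obstacle is the rigorous treatment of this converse under only measurability of $\chi$: passing from "the integral annihilates for all test functions" to the pointwise functional equation requires a careful density/localization argument, and then solving the functional equation without a priori regularity is the delicate classical point (one typically mollifies, or invokes the theory of summational invariants for the Boltzmann operator in the presence of the extra continuous variable $I$ and the parameters $r,R$). I would handle this either by citing the analogous monatomic result and adapting it through the Borgnakke--Larsen parametrization \eqref{velocities}--\eqref{micro int energies}, exploiting the freedom in $R\in[0,1]$ to separate the roles of $|v|^2$ and $I$, or by referring to \cite{LD-Bourgat94} where the collision structure for this model was introduced. Everything else --- the forward direction and the bookkeeping with the weak form --- is routine.
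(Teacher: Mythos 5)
The paper does not prove this lemma: it states it immediately after the remark that the conservation laws \eqref{micro CL} ``encode'' the annihilation property, and implicitly relies on \cite{LD-Bourgat94} (the same reference cited just below for the $\mathcal{H}$-theorem) for the fact that these are the \emph{only} collision invariants. Your forward direction is exactly what the paper has in mind, and it is correct as written.

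For the converse, your overall strategy is the right one, but one step is misstated and would fail as phrased. You claim that ``fixing $R$ and $r$, the map on velocities is the usual elastic scattering with a rescaled relative speed.'' That is not so: with $R$ fixed and $\sigma$ varying, the post-collisional relative speed is $|u'| = 2\sqrt{RE/m}$, whereas $|u|$ is determined by the pre-collisional state, so $|u'| \neq |u|$ unless $R$ happens to equal $R' = m|u|^2/(4E)$. Fixing $R$ therefore does \emph{not} reproduce an elastic velocity collision, and the classical summational-invariant theorem does not apply directly. The correct reduction is: first restrict to $I=I_*=0$ and take $R=1$ (forcing $R'=1$ and hence $|u'|=|u|$), which recovers genuine monatomic elastic scattering and pins down $\chi(v,0)$ as an affine-plus-$|v|^2$ function by the classical theorem; then, keeping $v,v_*$ fixed by choosing $\sigma=\hat u$ and $R=R'$, vary $r$ to obtain the additive Cauchy equation $\chi(v,I')+\chi(v_*,I'_*)=\chi(v,I)+\chi(v_*,I_*)$ on the line $I'+I'_*=I+I_*$, which forces the $I$-dependence to be affine; a final consistency step (varying $R$ away from $R'$) locks the coefficient of $I$ to that of $\tfrac{m}{2}|v|^2$. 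With that repair, and the density/regularity issues you already flag (which are the genuine technical burden), your plan is a sound sketch of the argument the paper leaves to \cite{LD-Bourgat94}.
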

Macroscopic observables are defined as moments of the distribution function $f$ against   functions of molecular variables, velocity $v\in\mathbb{R}^3$ and internal energy $I\in [0,\infty)$. For example, when test functions are  the  collision invariants  \eqref{coll inva} then we define mass density $\rho$, momentum density $\rho\,U$ and total energy density $\frac{\rho}{2} \left| U \right|^2 + \rho\, e $ of a polyatomic gas as  the following moments
\begin{equation*}
\rho = \int_{ {\mathbb{R}^{3+}}} m f \, \mathrm{d}I \, \mathrm{d}v, \quad
\rho \, U = \int_{ {\mathbb{R}^{3+}}} m v f \, \mathrm{d}I \, \mathrm{d}v, \quad
\frac{\rho}{2} \left| U \right|^2 + \rho\, e = \int_{ {\mathbb{R}^{3+}}} \left( \frac{m}{2} \left|v\right|^2 + I \right) f \, \mathrm{d}I \, \mathrm{d}v.
\end{equation*}
We highlight  the relation between collision invariants and the Lebesgue weight \eqref{brackets}, 
\begin{equation*}
\langle v, I \rangle^2 = \frac{1}{m} \left( \chi_1 + \chi_5 \right).
\end{equation*} 
Therefore, polynomial moment  of the order $k=0$  multiplied by mass,  $ m\, \mathfrak{m}_{0}(t)$ is interpreted as gas mass density, while for $k=1$,  the moment $ m\, \mathfrak{m}_{1}(t)$  is   the sum of  mass density plus total energy density of the gas,
\begin{equation*}
m\, \mathfrak{m}_0 = \rho, \qquad m\, \mathfrak{m}_2=\rho + \frac{\rho}{2} \left| U \right|^2 + \rho\, e .
\end{equation*}

%Moreover, if $f$ solves the Boltzmann equation \eqref{BE}, then integration against  collision invariants yields  macroscopic conservation laws,
%\begin{equation*}
%\partial_t \rho =0, \qquad \partial_t \rho \, U=0, \qquad \partial_t \left( \frac{\rho}{2} \left| U \right|^2 + \rho\, e \right)=0.
%\end{equation*}

We can finally find connection with the caloric  equation of state for a polytropic gas \eqref{caloric}.
By introducing the peculiar velocity $c=v-U$, we can define the  internal energy density,
\begin{equation*}
 \rho\, e = \int_{ {\mathbb{R}^{3+}}} \left( \frac{m}{2} \left|c\right|^2 + I \right) f \, \mathrm{d}I \, \mathrm{d}v.
\end{equation*}
In  the local  equilibrium state, when distribution function has a shape of local Maxwellian \eqref{Maxwellian}, internal energy takes the form
 \begin{equation}\label{caloric alpha}
 \rho\, e =\left( \alpha + \frac{5}{2} \right) n\,k\,T.
 \end{equation}
 Now relation to caloric equation of state for a polytropic gas \eqref{caloric} becomes evident. Then we can connect $D$ from \eqref{caloric} and $\alpha$ from \eqref{caloric alpha}, that also appears in the definition of collision operator \eqref{collision operator},
 \begin{equation*}
 \alpha = \frac{D-5}{2}.
 \end{equation*}
 Since for polyatomic gases $D> 3$, we obtain the overall condition  $\alpha> -1$. We recall that $\alpha=-1$ corresponds to a monatomic gas ($D=3$), when we obtain the classical relation $\rho\, e= (3/2) n k T$.
 
 The following Table \ref{Tab} shows a relation between different modes of a polyatomic gas molecule with  its number of degrees of freedom $D$ as much as the corresponding value of the parameter $\alpha$. 
 
 \begin{center}
 	\begin{table}[h]
 		\caption{Degrees of freedom $D$ and the value of $\alpha$ that correspond to different modes (combinations of translation/rotation/vibration), where $N \geq 2$  stands for the number of atoms in a polyatomic gas molecule.}
 		\begin{tabular}{| m{2cm}||m{1.6cm}|| m{1.6cm}||m{2cm}| @{}m{0cm}@{} } \hline
 			& \multicolumn{2}{c||}{Translation and rotation} & \multirow{2}{2cm}{Translation, rotation and vibration} & \\[5pt] \cline{2-3}
 			& Linear molecule &  Non-linear molecule & & \\[5pt] \hline
 			\hfil  D   & \hfil 5  & \hfil6 & \hfil$3N$ & \\[5pt] \hline
 			\hfil 	  $\alpha$  & \hfil0 &\hfil $\frac{1}{2}$ &\hfil $\frac{1}{2}(3 N-5) $ &   \\[5pt]  \hline
 		\end{tabular}
 		\label{Tab}
 	\end{table}
 \end{center}

\section{Sufficient properties for existence and uniqueness theory}\label{Sec: Suff}

In this Section we describe sufficient tools needed to build the Existence and Uniqueness theory  to be presented in Section \ref{Section Ex Uni proof}.   Namely, at the first place we choose an appropriate transition function $\mathcal{B}$ that corresponds to an extended Grad assumption.  Namely, we assume the form of hard potentials with a rate $\gamma\in(0,2]$ for both relative speed and molecular internal energies combined additively, 
$$\tf, \qquad \gamma\in(0,2],$$	
 with a control, from above and below, with respect to the parameters $R, r, \sigma$ belonging to the compact manifold $\K$.  Moreover, we find relevant physical examples, namely the three models for $\mathcal{B}$, that satisfy the imposed  assumptions on $\mathcal{B}$. With this form of the transition function, we prove essential estimates, its upper and lower bounds in molecular velocities and internal energies, that allow to conclude {   the coerciveness estimate for the loss part of the collision operator in Section  \ref{Coerciveness} and the decay of $k-$Lebesgue moment of its gain part  with respect to $k$ in Section \ref{Sec: fund lemmas}.}
 Then in Section \ref{Sec: poly mom}   we are going to have a priori estimates for any solution of the Boltzmann equation for polyatomic gases in $L^1_{k}$ for  $ k> k_*$ with $k_*$    determined by the bounds for  transition function $\mathcal{B}$ and the conditions of  fundamental Lemmas of Sections  \ref{Coerciveness} and \ref{Sec: fund lemmas}.  Finally, we define  
  an \emph{invariant region} $\Omega \subset L^1_{\ho}$ for the Boltzmann equation,  in which the collision operator $Q: \Omega \rightarrow L_{\ho}^1$  satisfies (i) H\"{o}lder continuity, (ii) Sub-tangent and (iii) one-sided Lipschitz conditions. These are sufficient conditions to obtain existence and uniqueness of a global in time solution with a regularity to be described in Section \ref{Section Ex Uni proof}.

\subsection{Transition function $\mathcal{B}$}\label{Sec: tf}
One of the essential ingredients for building existence and uniqueness theory is an assumption on transition function $\mathcal{B}$, that quantifies the collision frequency through scattering mechanisms and partition functions as a function of the total molecular energy \eqref{micro CL energy mass-rel vel}. 
In this manuscript, we want to keep the transition function $\mathcal{B}$ as general as possible in order to allow our kinetic model to cover a broad class of physical interpretations.   To that end, apart from its  positivity and micro-reversibility requirements stated in \eqref{microreversibility changes}, we assume the following minimal mathematical requirements  to ensure existence and  uniqueness properties associated to the initial value problem for the  Boltzmann equation for binary interaction of polyatomic gases as defined in \eqref{BE}.

\begin{assum}[The form of the transition function $\mathcal{B}$] \label{Sec: Ass B} 
Let $\tB=\tB(|u|,I,I_*)$ be defined as 
\begin{equation}\label{tf}
\tB(|u|,I,I_*) := \tfu, \quad u:= v-v_*, \quad \gamma \in (0,2].
\end{equation}
 We assume that 	the transition function $\mathcal{B}:= \mathcal{B}(v,v_*,I,I_*,r,R,\sigma)$ satisfies the following extended Grad assumption for collision kernels,
\begin{equation}\label{trans prob rate ass}
\dgl \, \egl \, b(\hat{u}\cdot\sigma)\, \tB(|u|,I,I_*)
\leq 
\mathcal{B} 
\leq  \dgu \, \egu \,b(\hat{u}\cdot\sigma)\, \tB(|u|,I,I_*),
\end{equation}
for every $v, v_* \in \mathbb{R}^3$, $I,I_*  \in [0,\infty)$, $r, R\in[0,1]$, $\sigma\in S^2$,  with $\hat{u}=u/\left|u\right|$,
where  functions $b(\hat{u} \cdot \sigma)$,   $\dgu$, $\dgl$, $\egu$,  and $\egl$ satisfy the following integrability  conditions,
\begin{enumerate}
	\item the angular function  $b(\hat{u} \cdot \sigma)$ is integrable  with respect to the measure $\mathrm{d}\sigma$,
	\begin{equation}\label{ass b}
	b(\hat{u} \cdot \sigma) \in L^1(S^2; \mathrm{d}\sigma),
	\end{equation}
	\item functions $\dgu$ and $\dgl$  are integrable with respect to the measure $\varphi_\alpha(r) \mathrm{d}r$, with $\varphi_\alpha(r)$ from \eqref{fun r R}, more precisely
	\begin{equation}\label{ass d}
	\dgu \varphi_\alpha(r), \quad  \dgl \varphi_\alpha(r)  \in L^1([0,1]; \mathrm{d}r),
	\end{equation}
	and additionally
	\begin{equation*}
	\dgu = d^\text{ub}_\gamma(1-r), \quad 	\dgl = d^\text{lb}_\gamma(1-r),
	\end{equation*}
	which ensures the second microreversibility property \eqref{microreversibility changes},\\
	
	\item functions $\egu$ and $\egl$ are integrable with respect to the measure $\psi_\alpha(R) (1-R) R^{1/2} \mathrm{d} R$, where $\psi_\alpha$ is introduced in \eqref{fun r R}, namely 
	\begin{equation}\label{ass e}
	\egu \psi_\alpha(R) (1-R) R^{1/2}, \quad  \egl \psi_\alpha(R) (1-R) R^{1/2}  \in L^1([0,1]; \mathrm{d}R).
	\end{equation}
	\end{enumerate}
\end{assum}
\begin{remark}
	We observe that   conditions \eqref{ass d} and \eqref{ass e}  involve the weighted averages of the  factors $\dgl$ and $\dgu$ product to  the partition function for the molecular energy $ \varphi_\alpha(r)$; as well as $ \egl$ and $ \egu$ product to  the partition function for the split of kinetic and internal energy $ \psi_\alpha(R)$; respectively. We introduce  the short hand notation to these  averages by defining the following constants,
	\begin{equation}\label{meas values}
	\begin{alignedat}{2}
	\cgalr &:= \int_0^1 	\dgl \varphi_\alpha(r) \mathrm{d}r,  \quad \CgalR &:= \int_0^1 \egl \psi_\alpha(R) (1-R) R^{1/2} \mathrm{d}R, \\
	\cgaur &:= \int_0^1 	\dgu \varphi_\alpha(r) \mathrm{d}r, \quad \CgauR &:= \int_0^1 \egu \psi_\alpha(R) (1-R) R^{1/2} \mathrm{d}R.
	\end{alignedat}
	\end{equation}
Moreover,
\begin{equation}\label{kappas}
\kappa^{lb} = \left\| b \right\|_{L^1(\mathrm{d}\sigma)}  	\cgalr \CgalR, \quad \quad \kappa^{ub} = \left\| b \right\|_{L^1(\mathrm{d}\sigma)}  	\cgaur \CgauR.
\end{equation}
\end{remark}

 Note that assumption~\ref{Sec: Ass B}  stresses the transition probability associated to the differential cross section is an extended Grad assumption for hard potentials with a dependance to the internal energy exchange which characterizes polyatomic collisional gas models. As such, the assumption of   transition function  $\mathcal B$ will be shown to be achievable for at least three choices of models  of $\mathcal{B}(v,v_*,I,I_*,r,R,\sigma)$ satisfying conditions (\ref{tf}-\ref{ass e}),  that are sufficient  to rigorously  prove  the existence of finite  upper and strictly positive  lower bounds 
sufficient  for solving the Cauchy problem associated to  a natural Banach space defined by \eqref{space L_k^1} with a natural norm characterized by the Lebesgue weight function \eqref{brackets} and norm  by \eqref{norm}.  
Such upper and lower estimates must allow  to control infinity system of ordinary Differential Inequalities (ODI) associated to solutions to the initial value problem   \eqref{BE}  uniformly in time.  
This strategy is rather elaborated and shall be presented in several steps.

\subsection{Models for transition function $\mathcal{B}$}\label{Sec: tf models}

Next  we propose three different choices of the transition function  $\mathcal{B}$   and  prove   that   each choice  satisfies all conditions on Assumption~\ref{Sec: Ass B}. 
We also focus our attention  to the multiplicative factors depending on $r$ and $R$, to  the term  $  b(\hat{u}\cdot\sigma)\, \tB$, in the upper and lower bounds of  $\mathcal B$ from \eqref{trans prob rate ass}. 

For each choice of the extended Grad decomposition that follows we specify functions $\dgu$, $\dgl$, $\egu$ and $\egl$ that, not only  fulfill the integrability conditions, but also   provide explicit expressions for controlling constants, from above and below \eqref{meas values}.   These integrals are used in the Section \ref{Sec: k*}. Here we  calculate the constants \eqref{meas values} only in the first example.

\subsubsection{Model 1 (The total energy)}\label{Sec Model 1}

We first consider the total energy in the relative velocity-center of mass velocity framework, that is
\begin{equation}\label{model 1}
{\mathcal{B}}(v,v_*,I,I_*,r,R,\sigma)  = b(\hat{u} \cdot \sigma)\left( \frac{m}{4} \left|v-v_*\right|^2 + I + I_* \right)^{\gamma/2}, \qquad\gamma \in (0,2].
\end{equation}
Then $\mathcal{B}$ is of the form \eqref{trans prob rate ass} with 
\begin{equation*}
\dgl = \dgu= 1, \quad \egl=  m^{\gamma/2} 2^{-(\gamma/2+1)}, \quad   \egu=m^{\gamma/2},
\end{equation*}
as proven is in the Appendix \ref{Sec Model 1 upper}, \ref{Sec Model 1 lower}.  Therefore, using \eqref{fun r R},  performing the integration to calculate  the constants \eqref{meas values}  by means of the  Gamma function,   for this Model 1 it follows
\begin{multline}\label{model 1 const}
\cgalr = \cgaur =  \frac{\Gamma(\alpha+1)^2}{\Gamma(2\alpha+2)}, \quad \CgalR=m^{\gamma/2} \ 2^{-(\gamma/2+1)} \frac{\sqrt{\pi}  \Gamma(2\alpha+2)}{2 \Gamma\left(2\alpha + \frac{7}{2}\right)}, \\
\CgauR =  m^{\gamma/2} \frac{\sqrt{\pi}  \Gamma(2\alpha+2)}{2 \Gamma\left(2\alpha +\frac{7}{2}\right)},
\end{multline}
for $\alpha>-1$, where $\Gamma$ represents the Gamma function.

\subsubsection{Model 2 (kinetic and microscopic internal energy detached)}\label{Sec Model 2} 
In this model, we split kinetic and microscopic internal energy for the colliding pair of particles, by using parameter $R \in [0,1]$, 
\begin{equation}\label{model 2}
{\mathcal{B}}(v,v_*,I,I_*,r,R, \sigma) = b(\hat{u} \cdot \sigma) \left( R^{\gamma/2} |v-v_*|^\gamma + (1-R)^{\gamma/2} \left(\frac{I+I_*}{m}\right)^{\gamma/2} \right),
\end{equation}
for $\gamma \in (0,2]$. 
As proven in the appendix sections \ref{Sec Model 2 upper} and \ref{Sec Model 2 lower}, this model satisfies the form 
\eqref{trans prob rate ass}, with
\begin{equation}\label{model 2 plot}
\dgl = \dgu= 1, \quad \egl=\min\{R, (1-R) \}^{\gamma/2}, \quad \egu = \max\{R, (1-R) \}^{\gamma/2}.
\end{equation}
Another possible choice is
\begin{equation*}
\dgl = \dgu= 1, \quad \egl=R^{\gamma/2} (1-R)^{\gamma/2},  \quad \egu=1.
\end{equation*}

\subsubsection{Model 3 (kinetic and particle's microscopic internal energies detached)}\label{Sec Model 3}  In this model we separate kinetic and microscopic internal energy with the parameter $R \in [0,1]$. Furthermore, internal energy is divided among colliding particles with the help of parameter $r \in [0,1]$. More precisely, we consider, 
\begin{multline}\label{model 3}
{\mathcal{B}}(v,v_*,I,I_*,r,R) =  b(\hat{u} \cdot \sigma)\left(  R^{\gamma/2} |v-v_*|^\gamma  \right. \\ \left. + \left( r (1-R)\frac{I}{m} \right)^{\gamma/2} + \left( (1-r) (1-R)\frac{I_*}{m} \right)^{\gamma/2} \right),
\end{multline}
for $\gamma \in (0,2]$.
Then the form \eqref{trans prob rate ass} is satisfied with 
\begin{multline}\label{model 3 plot}
\dgl =\min\{ r, (1-r)  \}^{\gamma/2},  \quad \dgu=1, \\
\egu =  2^{1-\gamma/2}   \max\left\{ R, (1-R) \right\}^{\gamma/2}, \quad
\egl = \min\{ R, (1-R) \}^{\gamma/2}, 
\end{multline}
or with
\begin{equation*}
\dgl= r^{\gamma/2} (1-r)^{\gamma/2}, \quad \dgu=1,\quad
\egu =  2^{1-\gamma/2}, \quad
\egl = R^{\gamma/2} (1-R)^{\gamma},
\end{equation*}
as shown in \ref{Sec Model 3 upper} and \ref{Sec Model 3 lower}.

This Model 3 for the transition function $\mathcal{B}$ is of particular importance for establishing macroscopic moments models starting from the Boltzmann equation. Namely, in \cite{MPC-Dj-S} it is shown that the six moments model with   this transition function provides a source term which satisfies the macroscopic residual inequality on the whole range of six moments model validity, that provides the total agreement with the extended thermodynamics theory of six moments,  as one of the rare  systems  admitting a non-linear or  \emph{ far from equilibrium}  closure of the governing equations using the entropy principle. Moreover, for this model 3   of the transition function $\mathcal{B}$, the macroscopic fourteen moments model achieves   the matching with experimental data as well. More precisely, \cite{MPC-Dj-S}  shows that the model 3 yields transport coefficients (shear and bulk viscosities and heat conductivity)   for which both experimentally measured dependence of the shear viscosity upon temperature  is recovered and  the Prantdtl number  coincides with its  theoretical value at a satisfactory level. This remarkable success  relies on the new \emph{additive}  form of the transition function that we propose in \eqref{tf}, instead of the multiplicative one used so far in the literature.

\section{Coerciveness estimates }\label{Coerciveness}

In this and next  Sections we prove   fundamental lemmas, that should be used sequentially,  as they are presented. \\

All of them are motivated by the search of a proof showing that $k$-th polynomial moment of the solution will satisfy an Ordinary Differential Inequality (ODI) in the  Banach Space $L^1_k(\mathbb{R}^{3+})$  with a negative super-linear term, that is for any $\gamma\in (0,2]$, 
\begin{equation}\label{ODI Banach}
\begin{split}
\frac{\mathrm{d}}{\mathrm{d}t} \left\|f\right\|_{L^1_{k}}\!\!(t) &= 	\int_{ \mathbb{R}^{3+}} Q(f,f)(v,I) \, \langle v, I \rangle^{\tk} \mathrm{d} I \mathrm{d}v \\
&= \mathfrak{m}_k\left[  Q(f,f) \right]  \leq - A_{\ks}   \left\|f\right\|_{L^1_{k}}^{1+ \frac{\gamma/2}{k}}\!\!(t)   +  B_k \left\|f\right\|_{L^1_{k}}\!\!(t),
\end{split}
\end{equation}
 where the sequence of strictly positive constants $\{A_k\}_k$ is strictly decreasing and   is such that 
 $0<A_k < A_{\ks}$ for all $k>\ks$,  for large enough $\ks$, to be determined  later. This $A_{\ks}$   depends on the transition probability and partition functions, on $\gamma$ and the initial mass and  is independent of $k>\ks$. 
Our aim is to gather sufficient a priori estimates that will show the existence of global in time solutions to an initial value problem  associated to  to \eqref{ODI Banach}. 

This Section~\ref{Coerciveness} focuses on the fundamental question of obtaining coerciveness  in the natural Banach space norm for the collisional form associated to polyatomic gas models.    This result can be viewed as the  analogue to a coercive constant in the classical theory for diffusion type equations in continuum mechanics where the control of the energies or suitable Banach norms is done in Sobolev spaces, while in the framework of statistical mechanics the suitable norms are non-reflexive Banach spaces, as in this case $L^1_k$. \\

Coercive estimate is fundamental for the existence of global solutions for the Boltzmann flow.  This property provides the following functional inequality that controls the convolution of any function $f \in L_{1^+}^1$ with a potential function of the form  \eqref{tf}  from below by the corresponding Lebesgue bracket \eqref{brackets}. We are inspired by the work \cite{Alonso-IG-BAMS} for the classical Boltzmann equation, as  such  inequality is sufficient  to control from below the collision frequency associated  \eqref{collision frequency}, and the  loss collision  operator  associated to the polyatomic gas model, under assumptions  on the transition probability states as described in Section~3. In addition, the coercive estimates we present here are fundamental to obtain in $W^{m,2}$ Sobolev (Hilbert) spaces as  shown in \cite{IG-A-M}.\\

\begin{lemma}[Lower bound]\label{lemma lower bound}
	Let $\gamma \in [0,2]$. For some function  $0\leq \left\{ f(t) \right\}_{t\geq 0} \subset L_{\ho}^1$ we assume that it satisfies
	\begin{equation}\label{mass energy}
	\Ml \leq  \int_{\mathbb{R}^{3+}}  f(t, v, I)  \mathrm{d}I \mathrm{d}v \leq \Mu, \quad \El \leq  \int_{\mathbb{R}^{3+}} f(t, v, I) \left( \frac{1}{2}  \left|v\right|^2 + \frac{I}{m} \right) \mathrm{d}I \mathrm{d}v \leq \Eu,
	\end{equation}
	for some positive constants $\Ml$, $\Mu$, $\El$ and $\Eu$, and
	\begin{equation}\label{momentum}
	\int_{\mathbb{R}^{3+}} f(t, v, I)  \, v \,  \mathrm{d}I\, \mathrm{d}v=0.
	\end{equation}
	Assume also boundedness of the moment
	\begin{equation}\label{moment epsilon}
	\int_{\mathbb{R}^{3+}} f(t, v, I) \,   \left( \frac{1}{2} \left| v \right|^2 + \frac{I}{m} \right)^\frac{2+\pl}{2}   \mathrm{d}I\, \mathrm{d}v \leq \D, \quad \pl>0. 
	\end{equation}
	Then there exists a constant $\clb>0$ which depends on constants  $c$, $C$, $B$, $\pl$ and $\gamma$ from the assumptions \eqref{mass energy}-\eqref{moment epsilon} above such that the convolutional form of any $f\in  L^1_{1}$ satisfying conditions \eqref{mass energy} and \eqref{moment epsilon}  with a power law function of order  $\gamma$  from \eqref{tf},   is controlled from below by the local Lebesgue weight of order $\gamma$, namely
	\begin{multline}\label{lower bound lemma}
	f*\left(|v|^\gamma \!+ \! \left(\frac Im\right)^{\!\frac{\gamma}2}\!\right) \!:= \! \int_{\mathbb{R}^{3+}} \! \!\! f(t, w, J) \left( \!\left|v\!-w\right|^{\gamma} \!+\! \left(\! \frac{I\! +\! J}{m} \! \right)^{\!\frac{\gamma}2} \! \right) \!\mathrm{d}J  \mathrm{d}w 
	\geq \  \clb\, \langle v, I \rangle^\gamma, \qquad\qquad
	\end{multline}
 It takes the form 
		\begin{multline}\label{clb}
		\clb 
		=  \frac{ \min\left\{ \Ml, \El \right\}}{ 8}   \left(  2^{  4+\pl } \!\left(\frac{ \max\{ \Mu, \D\}}{ \El}\right) \! \left(1+ \left( \frac{  2 (\Mu+\Eu)}{  { \Ml} \left( 2^{\gamma -2 } - \frac{1}{8} \right) } \right)^{\!\frac{2}{\gamma}}\!\right)^{\frac{2+\pl}2}\right)^{\!\frac{-2+\gamma}{\pl}}
		\\ \times \left( 1 +  \left( \frac{  2(\Mu + \Eu)}{  { \Ml} \left( 2^{\gamma -2 } - \frac{1}{8} \right) } \right)^{\!\frac{2}{\gamma}}\right)^{\!-\gamma/2},
		\end{multline}
for $\gamma \in (0,2]$, and if $\gamma=0$ 		then $\clb=2$.
\end{lemma}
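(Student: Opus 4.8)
The plan is to show that the convolution
\[
G(v,I):= f*\!\left(|v|^\gamma + \left(\tfrac{I}{m}\right)^{\gamma/2}\right)
= \int_{\mathbb{R}^{3+}} f(w,J)\left(|v-w|^\gamma + \left(\tfrac{I+J}{m}\right)^{\gamma/2}\right)\mathrm{d}J\,\mathrm{d}w
\]
is bounded below by a positive multiple of $\langle v,I\rangle^\gamma = (1+\tfrac12|v|^2+\tfrac Im)^{\gamma/2}$. The case $\gamma=0$ is immediate since the integrand is identically $2$ and the total mass is $\geq \Ml>0$ (in fact one uses $\El$ or $\Ml$; here constant $2$ comes from the integrand itself, modulo a trivial normalization). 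For $\gamma\in(0,2]$, the key observation is a lower bound on the potential integrand: for $w,J$ in a suitable bounded region depending on $v,I$, the quantity $|v-w|^\gamma + \left(\tfrac{I+J}{m}\right)^{\gamma/2}$ is comparable to $\langle v,I\rangle^\gamma$, while on the complementary region it could be small. The strategy is therefore to (i) restrict the convolution integral to the part of mass of $f$ that lies in a fixed-size ball around the "origin" in the $(w,J)$-space, and (ii) on that restricted region, bound the integrand below pointwise by $c\,\langle v,I\rangle^\gamma$.

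\textbf{Step 1: Localizing the mass.} Using the assumptions \eqref{mass energy} on mass and energy together with the vanishing-momentum condition \eqref{momentum}, I will show that a definite fraction of the "mass plus energy" of $f$ is carried on a set of the form
\[
\mathcal{A}_\rho := \left\{(w,J): \tfrac12|w|^2 + \tfrac Jm \le \rho\right\}
\]
for $\rho$ chosen as an explicit function of $\Ml,\Mu,\Eu$. Indeed, by the Markov/Chebyshev inequality the mass of $f$ outside $\mathcal{A}_\rho$ is at most $\rho^{-1}(\Eu + $ something$)$, so choosing $\rho$ large enough (of the order appearing in the $\tfrac{2}{\gamma}$ exponents in \eqref{clb}, i.e. $\rho \sim \bigl(\tfrac{2(\Mu+\Eu)}{\Ml(2^{\gamma-2}-1/8)}\bigr)^{2/\gamma}$) guarantees
\[
\int_{\mathcal{A}_\rho} f(w,J)\left(1+\tfrac12|w|^2+\tfrac Jm\right)\mathrm{d}J\,\mathrm{d}w \ \ge\ \tfrac{1}{2}\min\{\Ml,\El\}\,,
\]
or some comparable fraction. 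The precise threshold for $\rho$ is dictated by wanting the factor $2^{\gamma-2}-\tfrac18$ (which is positive for $\gamma\in(0,2]$) to survive in the elementary inequality used in Step 2.

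\textbf{Step 2: Pointwise lower bound on the integrand.} On the region $\mathcal{A}_\rho$ I will split according to whether $\langle v,I\rangle^2$ is dominated by the velocity part or the internal-energy part. If $\tfrac Im \ge \tfrac12(1+\tfrac12|v|^2)$, use $\left(\tfrac{I+J}{m}\right)^{\gamma/2} \ge \left(\tfrac Im\right)^{\gamma/2}$, which already controls $\langle v,I\rangle^\gamma$ up to a constant. Otherwise $1+\tfrac12|v|^2$ dominates and I use the elementary inequality $|v-w|^\gamma \ge 2^{\gamma-2}|v|^\gamma - |w|^\gamma$ (valid since $t\mapsto t^{\gamma/2}$ is concave for $\gamma\le 2$, giving $|v-w|^2\ge \tfrac12|v|^2-|w|^2$ and then raising to power $\gamma/2$), together with $|w|^\gamma \le (2\rho)^{\gamma/2}$ on $\mathcal{A}_\rho$; the surplus term $2^{\gamma-2}-$const$\cdot$(something small) forces the choice of $\rho$ already made. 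Combining with the mass retained in Step 1 gives the claimed bound with the explicit constant \eqref{clb} — the nested structure of that constant (a base $\min\{\Ml,\El\}/8$, times a factor with exponent $\tfrac{-2+\gamma}{\pl}$ coming from interpolating the $\D$-moment bound \eqref{moment epsilon}, times a factor with exponent $-\gamma/2$ coming from converting $\langle v,I\rangle^\gamma$ estimates on the two regimes) mirrors exactly these three inputs.

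\textbf{Step 3 (the real obstacle): the interpolation/tail trade-off.} The delicate point is that merely localizing mass on $\mathcal{A}_\rho$ is not enough: on $\mathcal{A}_\rho$ the function $f$ could still concentrate all its retained mass near the boundary $\{|w|^2\sim 2\rho\}$, which weakens the pointwise bound in Step 2. This is precisely why the extra moment assumption \eqref{moment epsilon} with exponent $\tfrac{2+\pl}{2}$ is needed. I would use it, via a Hölder/interpolation argument, to show that the mass of $f$ on the thin shell $\{\rho' \le \tfrac12|w|^2+\tfrac Jm \le \rho\}$ for an intermediate radius $\rho'$ is small — of order $(\rho')^{-\pl/2}$ — so that a definite portion of the retained mass actually sits in $\mathcal{A}_{\rho'}$ with $\rho'$ chosen so that the pointwise bound in Step 2 is uniform there. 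Balancing the two scales $\rho'$ and $\rho$ against the mass/energy lower bounds is what produces the awkward exponent $\tfrac{-2+\gamma}{\pl}$ and the $\max\{\Mu,\D\}/\El$ factor in \eqref{clb}. Once these three ingredients are assembled the inequality \eqref{lower bound lemma} follows by concatenating the bounds and tracking constants; I expect the bookkeeping of the explicit constant to be the only remaining labor after the scale-balancing is set up.
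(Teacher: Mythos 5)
Your strategy has the right ingredients visible from a distance (a localization radius $\rho$, a pointwise bound, the momentum and $\Delta$-moment hypotheses), but the central step fails: the pointwise lower bound in your Step 2 cannot produce the lemma for $(v,I)$ near the origin, and you have no mechanism that replaces it there.

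Concretely, take $(v,I)=(0,0)$, so $\langle v,I\rangle^\gamma = 1$. Your Case A (large $I/m$) does not apply. Your Case B bounds the integrand on $\mathcal{A}_\rho$ below by $2^{\gamma-2}\,|v|^\gamma - |w|^\gamma = -|w|^\gamma$, which is negative, and no choice of $\rho$ fixes that: the issue is small $|v|$, not the sign of $2^{\gamma-2}-\tfrac18$. More fundamentally, a pointwise lower bound on the integrand by $c\,\langle v,I\rangle^\gamma$ over $\mathcal{A}_\rho$ is simply false near $(v,I)=(0,0)$, since the integrand $|w|^\gamma + (J/m)^{\gamma/2}$ tends to $0$ as $(w,J)\to(0,0)$ while $\langle 0,0\rangle^\gamma=1$. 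The correct statement near the origin is an integrated one, not a pointwise one: it is the energy lower bound $\mathcal{E}_l$ acting through the integral that supplies positivity. This is exactly where the zero-momentum hypothesis \eqref{momentum} enters, and your proposal misplaces it — it is irrelevant to the Chebyshev mass localization you invoke in Step 1, but it is essential for the small-$(v,I)$ regime: one needs the cross term $\int f(w,J)\,v\cdot w\,\mathrm{d}J\,\mathrm{d}w$ to vanish so that $\bar\Lambda_2(v,I):=\int f(w,J)\big(\tfrac12|v-w|^2+\tfrac{I+J}{m}\big)\,\mathrm{d}J\,\mathrm{d}w \ge \mathcal{M}_l\big(\tfrac12|v|^2+\tfrac Im\big)+\mathcal{E}_l \ge \mathcal{E}_l$.

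The repair is essentially the paper's structure, which differs from yours in two ways that are not cosmetic. First, the domain splitting is in $(v,I)$, not in $(w,J)$: for $(v,I)\in B_{\rho_*}^c(0,0)$ one does exactly your Case B pointwise argument (and that part of your proposal is fine), but for $(v,I)\in B_{\rho_*}(0,0)$ one switches to a different mechanism. Second, for small $(v,I)$ the auxiliary ball used to extract a uniform lower bound is the ball $B_S(v,I)$ \emph{centered at $(v,I)$} in $(w,J)$-space, not $\mathcal{A}_\rho$ centered at the origin. On $B_S(v,I)$ one has $\big(\tfrac12|v-w|^2+\tfrac{I+J}{m}\big)^{\gamma/2-1}\ge S^{\gamma-2}$, which converts $\bar\Lambda_\gamma$ into $S^{\gamma-2}\big(\bar\Lambda_2 - \text{tail over } B_S^c(v,I)\big)$; the tail is controlled by the $\Delta$-moment \eqref{moment epsilon} via Chebyshev, and $\bar\Lambda_2$ is controlled from below using the momentum condition and $\mathcal{E}_l$ as above. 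Your Step 3 gestures at the right hypotheses, but the quantity you worry about (mass concentrating on a shell of $\mathcal{A}_\rho$) is not what the $\Delta$-moment is used for; it is used to bound the energy leaking outside $B_S(v,I)$. Without the $\bar\Lambda_2$ reduction and the recentered ball, the proof of the lower bound does not close.
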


\begin{remark}
		It is also important to notice that both constants $\Ml \leq  \| f\|_{ L_0^1}$ and $\El \leq \| f\|_{\dot L_1^1}$  coincide with the semi-norms defined in \eqref{semi-norm_k},  for a non-negative  distribution density $f(t,v,I)$, from \eqref{mass energy} and \eqref{moment epsilon} can not be both zero, otherwise  the lower bound $\clb$  from \eqref{clb} would vanish and the coercive estimate  \eqref{coercive_estimate} would fail, and consequently the existence theory of global in time solutions,  we are about to develop, would also fail.  It is also clear that singular measures can not be solutions in this Banach topology,  as both $\Ml$ and $\El$ cannot be zero. 
\end{remark}

\medskip
\begin{proof}[Proof of Lemma~\ref{lemma lower bound}] 
	{ The point of this proof is to bound from below a convolution of a $f\in L^1_1$, satisfying conditions \eqref{mass energy} and \eqref{moment epsilon} integrated with the weight defined in \eqref{tf}, 
		by a constant depending on parameters from those conditions and on  $\gamma$ multiplied by the Lebesgue weight 
		\begin{equation}\label{brackets2}
		\langle v,I \rangle^\gamma = \left( 1+\frac{1}{2}  \left|v\right|^2 + \frac{I}{m} \right)^{{\gamma}/{2}},
		\end{equation}   
		where the integral's weight  function is given by  
		\begin{equation}\label{g gamma}
		\tB(|v-w|,I,J)= \!\left|v\!-w\right|^{\gamma} \!+\! \left(\! \frac{I\! +\! J}{m} \! \right)^{\!\frac{\gamma}2}  \in \mathbb{R}.
		\end{equation}	
		This task entices to study the control from below and above of \eqref{g gamma} which will generate, after performing  the integration with $f(w, J)$, the desired lower bound involving \eqref{brackets2}. Such calculation  requires  a major extension of the proof proposed in \cite{Alonso-IG-BAMS} for the classical Boltzmann in $\mathbb{R}^d$, as for the current model in  $\mathbb{R}^{3+}$  needs to be adjusted to the topological distances in half spaces. \\
		
		We work out the details of  this proof in two parts after we introduce the notion of upper semi-spheres  suitable for  the estimates.
		\\
		
		We start by   setting  the  following set in  $\mathbb{R}^{3+} = \mathbb{R}^3 \times [0,\infty)$. More precisely, we introduce 
		the  semi-sphere  in the upper half space $\mathbb{R}^{3+}:=\mathbb{R}^3 \times [0, \infty)$  by
		\begin{equation}\label{ball rho}
		B_\rho(v,I) := \left\{ (w, J) \in \mathbb{R}^{3+}: \sqrt{ \frac{1}{2} \left| v-w \right|^2 + \frac{(I+J)}{m} } \leq \rho  \right\},
		\end{equation}
		for any $(v,I) \in \mathbb{R}^{3+}.$\\
		
		Moreover,  the following useful pointwise estimates are sufficient to connect the weight convolution factor in \eqref{g gamma} to the Lebesgue weights \eqref{brackets2}.  On one hand,    by  Minkowski-type inequalities and the concavity of $\gamma/2$-power functions, $0<\gamma\leq 2$, we obtain the following crucial estimate from below for  the integrand \eqref{g gamma},
		\begin{multline}\label{g gamma below}
		\tB(|v-w|,I,J) % \left| v-w \right|^\gamma + \left( \frac{I}{m} + \frac{J}{m} \right)^{\gamma/2}   \\
		\geq
		\min\left\{ 1, 2^{1-\gamma} \right\} \left| v \right|^\gamma - \left| w \right|^\gamma  + 2^{\frac{\gamma}{2} -1}  \left( \left( \frac{I}{m}  \right)^{\gamma/2} + \left( \frac{J}{m}  \right)^{\gamma/2} \right)
		\\
		\geq
		2^{\gamma/2 -1} \left( \left( \frac{1}{{2}}\left| v \right|^2\right)^{\gamma/2}  + \left( \frac{I}{m}  \right)^{\gamma/2} \right)
		-  2^{\gamma/2} \left(  \left( \frac{1}{{2}}\left| w \right|^2\right)^{\gamma/2}   +  \left( \frac{J}{m}  \right)^{\gamma/2}\right)
		\\
		\geq
		2^{\gamma -2} \left( \frac{1}{2}\left| v \right|^2  + \frac{I}{m} \right)^{\gamma/2} - 2 \left( \frac{1}{2} \left| w \right|^2  +  \frac{J}{m}  \right)^{\gamma/2}.
		\end{multline}	
		In addition, clearly,	 the concavity of the $\gamma/2$ root function, $0<\gamma\leq 2$, implies for the integrand factor \eqref{g gamma},
		\begin{equation}\label{g gamma 2}
		\tB(|v-w|,I,J) \geq \left(  \frac{1}{2} \left|v-w\right|^2 + \frac{(I+J)}{m} \right)^{\gamma/2}.
		\end{equation}	
		
		On the other hand,  the base of the power expression from the right hand side  in \eqref{g gamma 2} can be estimated from above by invoking 
		 the local Jensen's inequality for convex $k$-power functions}, for any real valued number $k\ge1$,

\begin{equation}\label{pomocna 0}
	\begin{split}
	\left( \frac{1}{2} \left| v-w \right|^2   + \frac{(I+J)}{m} \right)^k & \leq 2^k \left( \frac{1}{2} \left( \left| v \right|^2 +  \left| w \right|^2 \right) + \frac{(I+J)}{m} \right)^k \\
	&\leq 2^{2k-1} \left(  \left( \frac{1}{2} \left| v \right|^2 + \frac{I}{m} \right)^k +\left( \frac{1}{2} \left| w \right|^2 + \frac{J}{m} \right)^k  \right).
	\end{split}
	\end{equation}

	These constructions enable us to first obtain preliminary estimates over the semi-sphere $B_S(v,I)$  for  a parameter $S$ that depends on the data parameters  $ \El$, $ \Mu$, $\D$, $\pl$ and the  radius $\rho_*$  of a   semi-sphere  $B_{\rho_*}\!(0,0)$.    Such estimates will allow us  to control   the integral over  $\mathbb{R}^{3+} $,   by splitting the integration domain  into the semi-sphere  $B_{\rho_*}\!(0,0) $ and its complement  for a good choice of radius $\rho_*$ depending on the data, namely $\rho_* = \rho_*(\Ml, \Mu, \Eu, \gamma)$.  Estimates for the integration over the complement $B^c_{\rho_*}\!(0,0) $ will fix the value of   $\rho_*$ that ensures the lower bound \eqref{lower bound lemma}  with an explicitly calculated constant $\clb$ as a function of the data.\\

	Since the case $\gamma=0$ is trivial, we just consider  $\gamma \in (0,2]$. Setting the shorter notation for our convolution notation in  \eqref{lower bound lemma}  by $\Lambda_\gamma(v,I)$,  and using the inequality \eqref{g gamma 2} for the integrands, we obtain 
	\begin{equation}\label{pomocna 1}
	\begin{split}
	\Lambda_\gamma(v,I) &:= \int_{\mathbb{R}^{3+}}  f(t, w, J) \left( \left| v-w \right|^\gamma + \left( \frac{I+J}{m}  \right)^{\gamma/2} \right) \mathrm{d}J \, \mathrm{d}w
	\\
	&  > \int_{\mathbb{R}^{3+}}  f(t, w, J) \left( \frac{1}{2}\left| v-w \right|^2 +  \frac{(I+J)}{m}  \right)^{\gamma/2} \mathrm{d}J \, \mathrm{d}w =: \bar{\Lambda}_\gamma(v,I).
	\end{split}
	\end{equation}
	Then we  make of use  the semi-sphere defined in \eqref{ball rho} with an arbitrary  radius $S>0$   to  connect the integrals $\bar{\Lambda}_\gamma(v,I)$ and $\bar{\Lambda}_2(v,I)$ which is obtained by setting $\gamma=2$. Namely,  by splitting the integration domain   $\mathbb{R}^{3+}=B_S(v,I) \cup B^c_S(v, I)$   and since the integral $\bar{\Lambda}_\gamma(v,I)$ with respect to the whole space  $\mathbb{R}^{3+}$ is bigger than the same integral with respect to its one  part $B_S(v,I)$, and  noting that 
	$$\left( \frac{1}{2}\left| v-w \right|^2 +  \frac{(I+J)}{m}  \right)^{\gamma/2-1} \geq S^{\gamma-2}\quad \text{ on the semi-sphere} \ B_S(v,I),$$ we obtain
	\begin{equation}\label{pomocna 1.1}
	\begin{split}
	&\bar{\Lambda}_\gamma(v,I) \geq  S^{\gamma-2} \int_{B_S(v,I)}  f(t, w, J) \left( \frac{1}{2} \left| v-w \right|^2+  \frac{(I+J)}{m}  \right) \mathrm{d}J \, \mathrm{d}w, \\
	& \geq  S^{\gamma-2} \left(   \bar{\Lambda}_2(v,I)    -\int_{B^c_S(v,I) }  f(t, w, J) \left( \frac{1}{2} \left| v-w \right|^2+ \frac{(I+J)}{m} \right) \mathrm{d}J \, \mathrm{d}w \right).
	\end{split}
	\end{equation}
	For the first integral $\bar{\Lambda}_2(v,I) $ we develop the square $|v-w|^2$ and use the assumption \eqref{momentum}, as much as the integrability properties on $f$ from conditions \eqref{mass energy}, to bound it from below by the Lebesgue brackets, 
	\begin{align}\label{Lambda 2}
	\bar{\Lambda}_2(v,I) &=   \int_{\mathbb{R}^{3+}}  f(t, w, J) \left( \frac{1}{2} \left( \left| v \right|^2 + \left| w \right|^2\right) +  \frac{(I+J)}{m}  \right)\mathrm{d}J \, \mathrm{d}w \nonumber \\
	&  \geq {\Ml } \left( \frac{1}{2} \left| v \right|^2 +  \frac{I}{m}  \right) +  \El \geq  \El.
	\end{align}	
	On the other hand, for the second integral we invoke integrability conditions \eqref{mass energy} and \eqref{moment epsilon}, 
	and the  pointwise estimate from  \eqref{pomocna 0}, applied to $k=\frac{2+\pl}2>1$, 
	\begin{equation}\label{pomocna 1.2}
	\left( \frac{1}{2} \left| v-w \right|^2   + \frac{(I+J)}{m} \right)^\frac{2+\pl}{2} \leq 2^{1+\pl} \left(  \left( \frac{1}{2} \left| v \right|^2 + \frac{I}{m} \right)^\frac{2+\pl}{2}  +\left( \frac{1}{2} \left| w \right|^2 + \frac{J}{m} \right)^\frac{2+\pl}{2}    \right)
	\end{equation}
	to obtain the control from above of the  integral on the complement of the semi-sphere   	$B^c_S(v,I) $ by the  Lebesgue weight $\langle v,I \rangle^{2+\pl}$, since
	\begin{align}\label{pomocna 1.3}
	&\int_{B^c_S(v,I) } f(t, w, J) \left(\frac{1}{2} \left| v-w \right|^2+  \frac{(I+J)}{m}  \right) \mathrm{d}J \, \mathrm{d}w\nonumber \\
	&\qquad \leq \frac1{S^{\pl} } \int_{B^c_S(v,I) }  \!f(t, w, J) \left(  \frac{1}{2}  \left| v-w \right|^2+ \frac{(I+J)}{m}  \right)^{\frac{2+\pl}2} \mathrm{d}J \mathrm{d}w \nonumber\\
	& \qquad \qquad \leq \frac1{S^{\pl} }  2^{1+\pl} \left( \Mu    \left( \frac{1}{2} \left| v \right|^2 + \frac{I}{m} \right)^\frac{2+\pl}{2} +  \D \right) 
	\nonumber \\ &\qquad\qquad\qquad
	\le \frac 1{S^{\pl}}   2^{1+\pl}    \max\{\Mu, \D \}  \langle v,I \rangle^{2+\pl}.  
	\end{align}	
	Thus, combining \eqref{Lambda 2} and \eqref{pomocna 1.3} for the term in \eqref{pomocna 1.1} we obtain
	\begin{align}\label{pomocna 1.4}
	&\bar{\Lambda}_2(v,I)    -\int_{B^c_S(v,I) } f(t, w, J) \left(\frac{1}{2} \left| v-w \right|^2+ \frac{(I+J)}{m}  \right) \mathrm{d}J \, \mathrm{d}w  \nonumber\\
	&  \ \ \ \ge  \El  -   \frac{  2^{1+\pl}  }{S^{\pl}}     \max\{\Mu, \D \} \langle v,I \rangle^{2+\pl}. 
	\end{align}
	Therefore, in order to end the argument, we need to choose $S$ and  a radius $\rho_*$ for  the     semi-spheres $B_S(v,I)$ and $B_{\rho_*}\!(0,0)$, such that the lower bound for estimate \eqref{pomocna 1.4} is strictly positive.   That means , it is enough to choose $S$ such that 
	\begin{equation*}\label{ }
	\frac{   2^{1+\pl}  }{S^{\pl}}    \max\{\Mu, \D \}  \langle v,I \rangle^{2+\pl} \le  \frac{     2^{1+\pl}     }{S^{\pl}}     \max\{\Mu, \D \}  {  (1+\rho_*^2)^{\frac{2+\pl}2}  } < \frac{ \El }{ 8}
		, 
	\end{equation*}
	for any $(v,I) \in B_{\rho_*}\!(0,0)$, 
	which amounts to choose $S$ 
	\begin{equation}\label{pomocna 1.5}
	S (\rho_*)>    \left(    2^{4+\pl}    \frac{  \max\{\Mu, \D \} }{\El}  (1+\rho_*^2)^{\frac{2+\pl}2}\right)^{\frac1{\pl}} > 1\, . 
	\end{equation}
	Hence,  for $\bar{\Lambda}_\gamma(v,I)  $  from \eqref{pomocna 1.1} we obtain  
	\begin{equation}\label{estimate lb second}
	{\Lambda}_\gamma(v,I) > \bar{\Lambda}_\gamma(v,I) >  \frac{\El }{ 8} \, S(\rho_*)^{\gamma-2}, \quad  \text{for}  \ (v,I) \in B_{\rho_*}\!(0,0).
	\end{equation}

	Now, we are in good conditions to estimate the convolutional form $\Lambda_\gamma(v,I)$  as defined in \eqref{pomocna 1}, or equivalently \eqref{lower bound lemma}, form below in terms of 
	for any $(v,I)\in B_{\rho_*}\!(0,0)$. 
	
	In particular, involving the estimate from below \eqref{g gamma below}  of the integrand $\tB(|v-w|,I,J)$ from \eqref{g gamma}   of  $\Lambda_\gamma(v,I)$, we obtain that the integral  $\Lambda_\gamma(v,I)$ itself is bounded below, after using the  assumption \eqref{mass energy}, by 
	\begin{align}\label{pomocna 1.6}
	\Lambda_\gamma(v, I) &= \int_{\mathbb{R}^{3+}}  f(t, w, J) \left( \left| v-w \right|^\gamma + \left( \frac{I+J}{m}  \right)^{\gamma/2} \right) \mathrm{d}J \, \mathrm{d}w \nonumber \\
	&\quad  \geq   \Ml \left(  2^{\gamma -2} \left( \frac{1}{2}\left| v \right|^2  + \frac{I}{m} \right)^{\gamma/2} \right) \\
	&\qquad  -  2 \int_{\mathbb{R}^{3+}}  f(t, w, J) \left( \frac{1}{2} \left| w \right|^2  +  \frac{J}{m}  \right)^{\gamma/2} \mathrm{d}J \, \mathrm{d}w. \nonumber
	\end{align}
	Then,  the   integral  by the negative sign is easily estimated  from above by splitting the integration domain $\mathbb{R}^{3+} = B_1(0,0) \cup B^c_1(0,0)$ 	by 
	\begin{multline}\label{pomocna 1.7}
	\int_{\mathbb{R}^{3+}}  f(t, w, J) \left( \frac{1}{2} \left| w \right|^2  +  \frac{J}{m}  \right)^{\gamma/2} \mathrm{d}J \, \mathrm{d}w
	\\	\leq 
	\int_{  B_1(0,0)  }  f(t, w, J)  \left( \frac{1}{2} \left| w \right|^2  +  \frac{J}{m}  \right)^{\gamma/2} \mathrm{d}J \, \mathrm{d}w 
	\\
	+ \int_{ B_1^c(0,0)}  f(t, w, J)  \left( \frac{1}{2} \left| w \right|^2  +  \frac{J}{m}  \right)^{\gamma/2}  \mathrm{d}J \, \mathrm{d}w.
	\end{multline}
	The first integral is on $B_1(0,0)$, which implies $\left( \frac{1}{2} \left| w \right|^2  +  \frac{J}{m}  \right) \leq 1$, hence it  
	is controlled by $ \Mu$ using the first inequality  from \eqref{mass energy}.    For the second one on $B^c_1(0,0)$,  since $\gamma\in(0,2]$  then  $\left( \frac{1}{2} \left| w \right|^2  +  \frac{J}{m}  \right)^{\gamma/2} \leq \left( \frac{1}{2} \left| w \right|^2  +  \frac{J}{m}  \right)$,
	and  the second estimate in \eqref{mass energy}  shows that is  bounded by $ \Eu$.  Thus,  the integral from \eqref{pomocna 1.6} is controlled from above by 
	\begin{equation*}
	\int_{\mathbb{R}^{3+}}  f(t, w, J)  \left( \frac{1}{2} \left| w \right|^2  +  \frac{J}{m}  \right)^{\gamma/2} \mathrm{d}J \, \mathrm{d}w \leq  \Mu + \Eu. 
	\end{equation*}	
	It follows then, that  the  integral $\Lambda_\gamma(v,I)$, is controlled from below  after using the  previous considerations, as follows 
	\begin{equation}\label{pomocna 11.1}
	\Lambda_\gamma(v,I)  \geq   2^{ \gamma -2}  { \Ml} \left( \frac{1}{2}\left| v \right|^2 +\frac{I}{m}  \right)^{\gamma/2}  -  2 \left(\Mu + \Eu\right).
	\end{equation}
	
	So now we need to choose $\rho_*$  to be large enough such that whenever	  $(v, I) \in B^c_{\rho_*}\!(0,0)$,    the following condition holds
	\begin{equation}\label{estimate lb first}
	\Lambda_\gamma(v,I) \geq \frac{ \Ml}{ 8}   \,  \left( \frac{1}{2}\left| v \right|^2 +\frac{I}{m}  \right)^{\gamma/2}, \quad \text{for} \ (v, I) \in B^c_{\rho_*}\!(0,0).
	\end{equation}
	This amounts to take
	\begin{equation}\label{rho s}
	\rho_* = \left(  \frac{  2 \left(\Mu + \Eu\right) }{{ \Ml} \left( 2^{\gamma -2} -\frac{1}{ 8} \right) } \right)^{1/\gamma}  \geq 1.
	\end{equation}
	
	We conclude the proof by gathering  estimates \eqref{estimate lb first} and \eqref{estimate lb second},
	\begin{equation}\label{Lambda mod 2}
	\begin{split}
	\Lambda_\gamma(v,I) &\geq  \left(  \frac{ \El}{8} \, S^{\gamma-2}   \mathbbm{1}_{B_{\rho_*}\!(0,0)}(v, I) + \frac{ \Ml}{8} \left( \frac{1}{2}\left| v \right|^2 +\frac{I}{m}  \right)^{\gamma/2}\mathbbm{1}_{ B^c_{\rho_*}\!(0,0)}(v, I) \right)\\
	&\geq   \frac{ \min\left\{ \Ml, \El \right\}}{8} \, S^{\gamma-2}  \left(     \mathbbm{1}_{B_{\rho_*}\!(0,0)}(v, I) +  \left( \frac{1}{2}\left| v \right|^2 +\frac{I}{m}  \right)^{\gamma/2} \mathbbm{1}_{ B^c_{\rho_*}\!(0,0)}(v, I) \right),
	\end{split}
	\end{equation}
	the last inequality is because $S=S(\rho_*)\geq1$  from \eqref{pomocna 1.5}.
	Therefore, it is possible to  find an explicit  constant $\clb>0$ such that the lower bound \eqref{lower bound lemma} holds. \\

	In the sequel we construct the  constant $\clb$ that can be view as an analog to Poincar\'e constant in the classical Sobolev embedding theorem in connection to the diffusion problems.  For   the first term in \eqref{Lambda mod 2}  we obtain
	\begin{equation}\label{clb 1}
	\langle v, I \rangle^2 = 	1 + \frac{1}{2}\left| v \right|^2  + \frac{I}{m} \leq 1 + \rho_*^2, \quad  \text{for} \ (v, I) \in B_{\rho_*}\!(0,0).
	\end{equation}	
	The second one  corresponding to $(v, I) \in B^c_{\rho_*}\!(0,0)$ with  $\rho_*\geq1$ by \eqref{rho s}, it follows
	\begin{equation*}
	\frac{1}{2}\left| v \right|^2  + \frac{I}{m} \geq \rho_*^2 \geq \frac{1}{\rho_*^2},
	\end{equation*}
	or
	\begin{equation}
	\rho_*^2 \left( \frac{1}{2}\left| v \right|^2  + \frac{I}{m}  \right) \geq 1.
	\end{equation}
	Therefore, we get  
	\begin{equation}\label{clb 2}
	\langle v, I \rangle^\gamma   \leq \left(1+\rho_*^2\right)^{\gamma/2}  \left( \frac{1}{2}\left| v \right|^2  + \frac{I}{m}  \right)^{\gamma/2},\quad  \text{for} \   (v, I) \in B^c_{\rho_*}\!(0,0).
	\end{equation}	
	
	Gathering \eqref{clb 1} and \eqref{clb 2}, we obtain the following estimate,
	\begin{multline}
	\langle v, I \rangle^\gamma   = \langle v, I \rangle^\gamma  	 \left(  \mathbbm{1}_{B_{\rho_*}\!(0,0)}(v, I) +   \mathbbm{1}_{ B^c_{\rho_*}\!(0,0)}(v, I)   \right)
	\\
	\leq \left( 1+ \rho_*^2 \right)^{\gamma/2} \left(   \mathbbm{1}_{B_{\rho_*}\!(0,0)}(v, I) +    \left( \frac{1}{2}\left| v \right|^2  + \frac{I}{m}  \right)^{\gamma/2}\mathbbm{1}_{ B^c_{\rho_*}\!(0,0)}(v, I)    \right).
	\end{multline}
	Therefore, going back to the final estimate \eqref{Lambda mod 2}, we obtain,
	\begin{equation}
	\begin{split}
	\Lambda \geq   \frac{ \min\left\{ \Ml, \El \right\}}{8} \,  \frac{ S(\rho_*)^{\gamma-2}}{\left( 1+ \rho_*^2 \right)^{\gamma/2} } \langle v, I \rangle^\gamma =: c_{lb}  \langle v, I \rangle^\gamma\, .
	\end{split}
	\end{equation}
Using  inequalities \eqref{pomocna 1.5}  and \eqref{rho s}  we obtain the final expression for the  constant $c_{lb}$ as announced in \eqref{clb},  which completes the proof of Lemma~\ref{lemma lower bound}. 
\end{proof}

The following corollary holds immediately by the definition of the Banach space  $L^1_k(\mathbb{R}^{3+})$ of integrable functions with respect to the Lebesgue weight, as defined in \eqref{brackets}, \eqref{space L_k^1} and \eqref{norm}.
\begin{corollary}[Coercive Estimate]\label{coercive_bound} The loss collision operator from \eqref{BE}, acting on  any function $f \in L_{k}^1$, for $k\ge1+\gamma/2$, $ \gamma\in[0,2]$, satisfies
	\begin{equation}\label{coercive_estimate}
	 \int_{\mathbb{R}^{3+}} Q^-(f,f) \, \langle v, I \rangle^{2k} \ \mathrm{d}I \mathrm{d}v  
	 \geq \   \kappa^{lb} \clb\, \|f\|_{L^1_{k+\frac{\gamma}2}},  
	\end{equation}
	with $\clb$ independent of  moment order $k$  stated in \eqref{clb} and $\kappa^{lb}$ is from \eqref{kappas}.
\end{corollary}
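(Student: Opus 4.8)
The plan is to reduce the coerciveness estimate, through the purely local form of the loss operator, to the convolutional lower bound of Lemma~\ref{lemma lower bound}. Since $Q^-(f,f)=f\,\nu[f]$ by \eqref{loss op}, and $f\ge 0$ and $\langle v,I\rangle^{2k}\ge 0$, the left-hand side of \eqref{coercive_estimate} equals $\int_{\mathbb{R}^{3+}} f(v,I)\,\nu[f](v,I)\,\langle v,I\rangle^{2k}\,\mathrm{d}I\,\mathrm{d}v$, so it suffices to bound the collision frequency $\nu[f](v,I)$ from below pointwise.

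To do so I would substitute the lower estimate for $\mathcal{B}$ from the extended Grad assumption \eqref{trans prob rate ass} into the definition \eqref{collision frequency} of $\nu[f]$. The resulting integrand factorizes over the compact manifold $\K=[0,1]^2\times S^2$: the dependence on $r$ is confined to $\dgl\,\varphi_\alpha(r)$, the dependence on $R$ to $\egl\,\psi_\alpha(R)(1-R)R^{1/2}$, the dependence on $\sigma$ to $b(\hat u\cdot\sigma)$, while the remaining factor $\tB(|u|,I,I_*)=\tfu$ depends only on $(v,v_*,I,I_*)$. Invoking the integrability hypotheses \eqref{ass b}, \eqref{ass d} and \eqref{ass e}, the change of variable $r\mapsto 1-r$ together with the symmetry of $\varphi_\alpha$ from \eqref{fun r R} (which also makes $\int_0^1\dgl\,\varphi_\alpha(r)\,\mathrm{d}r$ meaningful in spite of $d^{\mathrm{lb}}_\gamma$ being a function of $1-r$), and the rotational invariance of $\int_{S^2} b(\hat u\cdot\sigma)\,\mathrm{d}\sigma$, these three manifold integrals collapse into the constants collected in \eqref{meas values} and \eqref{kappas}, giving
\[
\nu[f](v,I)\ \ge\ \kappa^{lb}\int_{\mathbb{R}^{3+}} f(v_*,I_*)\left(\tf\right)\mathrm{d}I_*\,\mathrm{d}v_* ,
\]
which is exactly $\kappa^{lb}$ times the convolution on the left-hand side of \eqref{lower bound lemma}.

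Finally I would apply Lemma~\ref{lemma lower bound}, using that the solution $f$ carries the standing hypotheses of finite and bounded mass and energy, vanishing momentum, and a finite moment of order $(2+\pl)/2$, namely \eqref{mass energy}--\eqref{moment epsilon}; this bounds the convolution from below by $\clb\,\langle v,I\rangle^\gamma$ with $\clb$ as in \eqref{clb}, independent of $k$. Hence $\nu[f](v,I)\ge \kappa^{lb}\clb\,\langle v,I\rangle^\gamma$, and multiplying by $f(v,I)\langle v,I\rangle^{2k}\ge 0$, integrating in $(v,I)$, and using $\langle v,I\rangle^{2k+\gamma}=\langle v,I\rangle^{2(k+\gamma/2)}$ with the definition \eqref{norm} of the $L^1_{k+\gamma/2}$-norm yields \eqref{coercive_estimate}; the hypothesis $k\ge 1+\gamma/2$ only guarantees that the right-hand side is a genuine higher-order Lebesgue moment and that the lower-order integrability demands of Lemma~\ref{lemma lower bound} are subsumed. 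I do not anticipate a real obstacle here, since the analytic weight of the argument rests on Lemma~\ref{lemma lower bound}; the only points requiring care are the bookkeeping of the $\K$-factorization and the verification that the standing assumptions on $f$ match the hypotheses of that lemma.
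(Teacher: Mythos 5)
Your proposal is correct and takes essentially the same route as the paper: insert the lower bound of Assumption~\ref{Sec: Ass B} into the definition \eqref{collision frequency} of $\nu[f]$, integrate out the compact manifold $\K$ to produce $\kappa^{lb}$, and invoke Lemma~\ref{lemma lower bound} to bound the remaining convolution by $\clb\,\langle v,I\rangle^{\gamma}$, after which multiplying by $f\langle v,I\rangle^{2k}$ and integrating gives \eqref{coercive_estimate}. The extra detail you supply on the $\K$-factorization and the $r\leftrightarrow 1-r$ symmetry is sound but not needed beyond what the constants \eqref{meas values}--\eqref{kappas} already encode.
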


\begin{proof} 
For   the transition function $\mathcal{B}$  satisfying the assumption \ref{trans prob rate ass}, the following lower  bound for the collision frequency defined in \eqref{collision frequency} holds
\begin{equation*}
\nu[f](v,I) \geq \kappa^{lb} \int_{{\mathbb{R}^{3+}} } {f _*}  \tilde{B} \mathrm{d}I_* \mathrm{d}v_* \geq \kappa^{lb} c_{lb} \langle v, I \rangle^{\gamma},
\end{equation*}
with the constant $\kappa^{lb}$ coming out from the integration over the compact set $K$ is introduced in \eqref{kappas} and the last inequality is from the lower bound \eqref{lower bound lemma}. Now the  loss collision operator  \eqref{loss op} can be bounded from below as follows
\begin{equation*}
\begin{split}
\int_{\mathbb{R}^{3+}} Q^-(f,f) \, \langle v, I \rangle^{2k} \ \mathrm{d}I \mathrm{d}v &= \int_{(\mathbb{R}^{3+})^2 \times \K} f(t,v,I) \, \nu[f](v,I) \, \langle v, I \rangle^{2k} \ \mathrm{d}I \mathrm{d}v \\
&\ge \ \kappa^{lb} \, \clb\,  \int_{\mathbb{R}^{3+}} f\, \langle v, I \rangle^{2k+\gamma} \mathrm{d}I \mathrm{d}v  \ = \   \kappa^{lb} \clb\, \|f\|_{L^1_{k+\frac{\gamma}2}},  
\end{split}
\end{equation*}
which concludes the proof.
\end{proof}

\section{Fundamental lemmas for  the gain operator}\label{Sec: fund lemmas}

This section is devoted to obtain the control of the positive contributions of the collision operator moments for the polyatomic case. The following  results   are obtained by gathering  identities and estimates that exalt the nature of binary collisional operator in weak form as a dissipative single species particle  mixing operator whose collisional transition probability component varying on a compact manifold  is controlled  by an averaging  operator on such manifold whose volume remains invariant by the particle. 

The most fundamental result, The Polyatomic Compact Manifold Averaging Lemma,  is presented in Lemma~\ref{lemma povzner}.
Such calculation  is performed in the weak formulation of the polyatomic collisional form after presented the conservation of energy decomposition identity
that enables this proof of  Lemma~\ref{lemma povzner}.  This Lemma can be viewed as a sharper form of the Povzner Lemma developed by \cite{Bob97} for the classical Boltzmann equation for hard spheres in three dimensions, by \cite{GambaBobPanf04} for inelastic hard spheres in three dimensions \cite{GambaPanfVil09} for hard potentials in any dimension above or equal to three, and recently revisited in  \cite{Alonso-IG-BAMS}. 
The presentation that follows in unedited for the Boltzmann equation  model   of polyatomic gases. 

\subsection{The energy identity}

We first define the total energy of the two colliding molecules using the Lebesgue weight \eqref{brackets}.

\begin{definition}[The total energy in the Lebesgue weight form] Let $v'$, $v'_*$, $I'$ and $I'_*$ be functions of $v$, $v_*$, $I$, $I_*$, $r$, $R$ and $\sigma$ as given in \eqref{velocities} and \eqref{micro int energies}. Then we define the total energy in the Lebesgue weight \eqref{brackets} form as follows
	\begin{equation}\label{E bracket}
	\begin{split}
	E^{\langle \rangle} &:= \langle v, I \rangle^2 + \langle v_*, I_* \rangle^2 = \langle v', I' \rangle^2 + \langle v'_*, I'_* \rangle^2 = 2 +  \left| V \right|^2  + \frac{E}{m},
	\end{split}
	\end{equation}
	with $E$ from \eqref{micro CL energy mass-rel vel}.
\end{definition}

In order to encode behavior of a polyatomic gas, we first need to understand  energy recombination during a collision process, using  transformations \eqref{velocities} and \eqref{micro int energies}.  
This knowledge is crucial for  expressing pre-collisional quantities  $\langle v', I' \rangle^2$ and $\langle v'_*, I'_* \rangle^2 $  in terms of particular partitions of  the total energy, as shows the following Lemma.

\begin{lemma}[Energy Identity Decomposition]\label{lemma energy-identity} 	Let $v'$, $v'_*$, $I'$ and $I'_*$ be defined in collision transformations  \eqref{velocities} and \eqref{micro int energies}.
There exists convex conjugate factors $p = p(v, v_*, I, I_*, R)$ and $q = q(v, v_*, I, I_*,  R)$, i.e. $p+q=1$, and a function $\lambda=\lambda(v, v_*, I, I_*, R)$ such that  the following representation holds
\begin{equation*}
\langle v', I' \rangle^2  = E^{\langle \rangle} \left( p+ \lambda  \hat{V} \cdot \sigma\right),  \qquad \langle v'_*, I'_* \rangle^2  = E^{\langle \rangle}  \left( q -  \lambda  \hat{V} \cdot \sigma \right).
\end{equation*}
Moreover, this representation preserves  the total  molecular energy,
	\begin{align}\label{kin+int-energy}
	\langle v', I' \rangle^2  + \langle v'_*, I'_* \rangle^2 = E^{\langle \rangle} \left( (p+ \lambda  \hat{V} \cdot \sigma) +( q -  \lambda  \hat{V} \cdot \sigma) \right)\equiv
	E^{\langle \rangle}.
	\end{align}
\end{lemma}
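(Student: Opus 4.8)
The plan is to establish the identity by a direct substitution of the collision transformations \eqref{velocities} and \eqref{micro int energies} into the squared Lebesgue bracket \eqref{brackets}, and then to isolate the unique term that carries the dependence on the scattering direction $\sigma$.

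First I would expand, using $v' = V + \sqrt{RE/m}\,\sigma$, $|\sigma| = 1$, and $I' = r(1-R)E$,
\[
\langle v', I' \rangle^2 = 1 + \tfrac12 |v'|^2 + \tfrac{I'}{m}
= \Bigl( 1 + \tfrac12 |V|^2 + \tfrac12 \tfrac{RE}{m} + \tfrac{I'}{m} \Bigr) + |V| \sqrt{\tfrac{RE}{m}} \; \hat V \cdot \sigma ,
\]
where I used $V\cdot\sigma = |V|\,(\hat V\cdot\sigma)$; in the degenerate case $V = 0$ this last term simply vanishes and $\hat V$ may be taken to be any unit vector. The parallel computation with $v'_* = V - \sqrt{RE/m}\,\sigma$ and $I'_* = (1-r)(1-R)E$ produces the same $\sigma$-free bracket but with $I'$ replaced by $I'_*$ and the opposite sign in front of $\hat V\cdot\sigma$.

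Next I would normalize both expressions by $E^{\langle \rangle} = 2 + |V|^2 + E/m$ from \eqref{E bracket}, and set
\[
p := \frac{1 + \tfrac12|V|^2 + \tfrac12\tfrac{RE}{m} + \tfrac{I'}{m}}{E^{\langle \rangle}}, \qquad
q := \frac{1 + \tfrac12|V|^2 + \tfrac12\tfrac{RE}{m} + \tfrac{I'_*}{m}}{E^{\langle \rangle}}, \qquad
\lambda := \frac{|V|\sqrt{RE/m}}{E^{\langle \rangle}},
\]
with $I'$, $I'_*$ as in \eqref{micro int energies}, so that by construction $\langle v', I'\rangle^2 = E^{\langle \rangle}(p + \lambda\,\hat V\cdot\sigma)$ and $\langle v'_*, I'_*\rangle^2 = E^{\langle \rangle}(q - \lambda\,\hat V\cdot\sigma)$. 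The only thing left to verify is $p + q = 1$: summing the two numerators gives $2 + |V|^2 + \tfrac{RE}{m} + \tfrac{I' + I'_*}{m} = 2 + |V|^2 + \tfrac{E}{m} = E^{\langle \rangle}$, since $I' + I'_* = (1-R)E$ by \eqref{micro int energies}. Moreover $p, q > 0$ because every summand in their numerators is nonnegative and $E^{\langle \rangle}\geq 2$, so $p, q \in (0,1)$ are indeed genuine convex-conjugate weights. The total-energy identity \eqref{kin+int-energy} is then immediate: adding the two representations, the $\pm\lambda\,\hat V\cdot\sigma$ contributions cancel and $(p + \lambda\,\hat V\cdot\sigma) + (q - \lambda\,\hat V\cdot\sigma) = p + q = 1$, hence $\langle v', I'\rangle^2 + \langle v'_*, I'_*\rangle^2 = E^{\langle \rangle}$, in agreement with \eqref{E bracket}.

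There is no genuine obstacle in this argument — it is essentially bookkeeping — and the only two points deserving explicit mention are the treatment of $V = 0$ (where $\lambda$ vanishes, so the ill-defined $\hat V$ is harmless) and the observation that $p$, $q$, $\lambda$ carry no dependence on $\sigma$ whatsoever. The latter is precisely the structural content that makes the subsequent angular average over $\sigma \in S^2$ in Lemma~\ref{lemma povzner} tractable: the scattering direction enters the post-collisional brackets only through the single scalar $\hat V\cdot\sigma$, linearly, around the deterministic baselines $p$ and $q$.
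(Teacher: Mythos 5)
Your proof is correct and proceeds by the same basic mechanism — expanding the brackets directly and isolating the unique $\hat V\cdot\sigma$ term — but it takes a more streamlined route than the paper, and it is worth noting where the two diverge and why.

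The paper's proof first introduces intermediate convex parameters $\Theta,\Sigma\in[0,1]$ via $\Theta E^{\langle\rangle}=1+|V|^2$ and $\Sigma(1-\Theta)E^{\langle\rangle}=1+RE/m$, and then expresses $p=\tfrac{s}2+r(1-s)$, $q=\tfrac{s}2+(1-r)(1-s)$ with $s:=\Theta+\Sigma(1-\Theta)$, and $\lambda=\sqrt{(\Theta E^{\langle\rangle}-1)(\Sigma(1-\Theta)E^{\langle\rangle}-1)}$. If you unwind these definitions you recover exactly your $p$, $q$ (the numerators $1+\tfrac12|V|^2+\tfrac12 RE/m + I'/m$ and its starred twin, divided by $E^{\langle\rangle}$) and, up to a factor of $E^{\langle\rangle}$, your $\lambda$. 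In fact your normalization $\lambda=|V|\sqrt{RE/m}/E^{\langle\rangle}$ is the one that makes the statement $\langle v',I'\rangle^2 = E^{\langle\rangle}(p+\lambda\hat V\cdot\sigma)$ literally true, whereas the paper's proof writes $\lambda$ unnormalized and then slightly abuses notation when summing — a cosmetic inconsistency your version avoids. Your explicit handling of the degenerate case $V=0$ and the remark that $p,q\in(0,1)$ are also correct and helpful.

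What your short-cut does not set up, and the paper's longer route deliberately does, is the single scalar $s\in[0,1]$ and the affine representation $p=\tfrac{s}2+r(1-s)$. These are precisely what the proof of Lemma~\ref{lemma povzner} leans on: there one applies the arithmetic-geometric mean bound $\lambda\le \tfrac{s}{2}E^{\langle\rangle}$ (coming from Young's inequality on the product form of $\lambda$) and then observes that $s\mapsto s\tfrac{1+|\hat V\cdot\sigma|}{2}+(1-s)r$ is linear in $s$ on $[0,1]$, so the maximum is attained at an endpoint. If you want your proof to serve as a drop-in replacement, you should either carry the $\Theta,\Sigma,s$ bookkeeping forward, or at least record the factorization $\lambda E^{\langle\rangle}=\sqrt{|V|^2\cdot RE/m}$ and the identity $p=\tfrac{s}{2}+r(1-s)$ with $s=\bigl(2+|V|^2+RE/m\bigr)/E^{\langle\rangle}\in[0,1]$, which your expressions yield after a one-line manipulation. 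As a self-contained proof of the lemma as stated, though, yours is complete and arguably cleaner.
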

\begin{proof}
	We consider  partitions of the energy $E^{\langle \rangle}$ obtained  by introducing  convex combinations associated to functions $\Theta$ and $\Sigma$ that may depend on $v, v_*, I, I_*$ and $R$,  as follows
	\begin{enumerate}
		\item[(i)] for $\Theta \in [0,1]$ we have
		\begin{equation}\label{Theta}
		\Theta E^{\langle \rangle} = 1+ \left| V \right|^2  \quad \Rightarrow \quad \left( 1 - \Theta \right) E^{\langle \rangle}  = 1 + \frac{E}{m},
		\end{equation}
		\item[(ii)] for $\Sigma \in [0,1]$ we get
		\begin{equation}\label{Sigma}
		\Sigma	\left( 1 - \Theta \right)  E^{\langle \rangle} = 1+ R \frac{E}{m} \quad \Rightarrow \quad \left( 1 - \Sigma \right)  \left( 1 - \Theta \right) E^{\langle \rangle}  = (1-R) \frac{E}{m}.
		\end{equation}
	\end{enumerate}
	Now, using collisional rules \eqref{velocities} and \eqref{micro int energies} yield the associated Lebesgue weights for the calculation of total molecular energy of the postcollisional (primed) states
	\begin{equation*}
	\begin{split}
	\langle v', I' \rangle^2 &= 1 + \frac{1}{2} \left| V \right|^2 + \frac{1}{2} R \frac{ E}{m} + \sqrt{\frac{R E}{m}} \left|V\right| \hat{V} \cdot \sigma +  r (1-R) \frac{E}{m},\\
	\langle v'_*, I'_* \rangle^2 &= 1 + \frac{1}{2} \left| V \right|^2 + \frac{1}{2} R \frac{ E}{m} - \sqrt{\frac{R E}{m}} \left|V\right| \hat{V} \cdot \sigma +  (1-r) (1-R) \frac{E}{m},
	\end{split}
	\end{equation*}
	which can be rewritten in terms of functions $\Theta$ and $\Sigma$  as in \eqref{Theta}-\eqref{Sigma}, the parameter $r \in [0,1]$ from \eqref{micro int energies}  and the dot product $\hat{V} \cdot \sigma$  as follows
	\begin{equation}\label{post-coll}
	\begin{split}
	\langle v', I' \rangle^2 &= E^{\langle \rangle}  \left( \frac{1}{2} \Theta   + \frac{1}{2} \Sigma (1-\Theta) +r (1-\Sigma)(1-\Theta) \right) \\ & \qquad \qquad \qquad +  \sqrt{(\Theta E^{\langle \rangle} - 1) (\Sigma (1-\Theta) E^{\langle \rangle} -1) }  \,   \hat{V} \cdot \sigma,\\
	\langle v'_*, I'_* \rangle^2 &= E^{\langle \rangle}  \left( \frac{1}{2} \Theta   + \frac{1}{2} \Sigma (1-\Theta) +(1-r) (1-\Sigma)(1-\Theta) \right) \\ & \qquad \qquad \qquad -  \sqrt{(\Theta E^{\langle \rangle} - 1) (\Sigma (1-\Theta) E^{\langle \rangle} -1) }  \,   \hat{V} \cdot \sigma\, .
	\end{split}
	\end{equation}
	Now set  the convex factors from \eqref{post-coll},  to be 
	\begin{equation*}
	\begin{split}
	p &:= \frac{1}{2} \Theta   + \frac{1}{2} \Sigma (1-\Theta) +r (1-\Sigma)(1-\Theta)  = \frac{s}{2} + r (1-s),\\
	q &:= \frac{1}{2} \Theta   + \frac{1}{2} \Sigma (1-\Theta) +(1-r) (1-\Sigma)(1-\Theta)  =  \frac{s}{2} + (1-r)(1-s),
	\end{split}
	\end{equation*}
	where the dependence of $p$ and $q$ upon velocities $v, v_*$, internal energies $I, I_*$ and variable $R$ is through the function $s:=s(v, v_*, I, I_*, R)$ defined with
	\begin{equation}\label{s}
	s= \Theta   +  \Sigma (1-\Theta) \quad \Rightarrow \quad (1-s) = (1-\Sigma)(1-\Theta),
	\end{equation} 
	with $\Theta$, $\Sigma$ from \eqref{Theta}-\eqref{Sigma}. Since $\Theta, \Sigma \in [0,1]$ it also follows
	\begin{equation}\label{s range}
	s \in [0,1], \quad \text{for any} \quad v, v_* \in \mathbb{R}^3, \ I,I_* \in [0,\infty), \ \text{and} \ R\in[0,1].
	\end{equation}
	Clearly  $p$ and $q$ add up to unity.
	In addition set  
	\begin{equation}\label{lambda_factor}
	\lambda :=  \sqrt{(\Theta E^{\langle \rangle} - 1) (\Sigma (1-\Theta) E^{\langle \rangle} -1) }. 
	\end{equation}
	
	Hence, adding the  two left hand sides of  identities from \eqref{post-coll}, the conservation of the total, i.e. kinetic plus internal molecular energy is given by 
	\begin{align*}
	\langle v', I' \rangle^2  + \langle v'_*, I'_* \rangle^2 &= E^{\langle \rangle} (p+\lambda \hat{V} \cdot \sigma + q - \lambda \hat{V} \cdot \sigma) = 
	E^{\langle \rangle} (p+q) \\ &=  E^{\langle \rangle} =\langle v, I \rangle^2  + \langle v_*, I_* \rangle^2  ,
	\end{align*}
	if we recall that the total molecular energy for a polyatomic state interacting  (or colliding) pairs $(v,I)$ and  $(v_*,I_*)$ is given by 
	$ E^{\langle \rangle}:=  \langle v, I \rangle^2  + \langle v_*, I_* \rangle^2 $. 
	Thus, the energy identity  \eqref{kin+int-energy} holds.
\end{proof}

\subsection{The Polyatomic Compact Manifold  Averaging  Lemma} 

The energy identity \eqref{kin+int-energy}  allows to find a  dissipation effect of the  collision operator. Namely, we will prove that $k$-th moment of the gain term decreases with respect to $k$, allowing the moment of the same order   $k$ of the loss term to prevail in the dynamics, when sufficiently large order of moments $k$ is taken into account. The decay of the gain term is attained by averaging $k^{\text{th}}$-power of the postcollisional total molecular energies, that is  $\langle v', I' \rangle^{\tk} + \langle v'_*, I'_* \rangle^{\tk}$. Due to an additional variable $I$ in the polyatomic gas model,  the averaging needs to be performed with respect to the compact manifold that contains a domain of the two parameters: \emph{(i)} one angular  parameter (scattering direction) $\sigma$ that splits the kinetic energy on molecular velocities,  \emph{(ii)} one additional parameter $r$  that distributes the total internal energy among colliding molecules, and parameter $R$ that splits the total molecular energy into the kinetic and internal part, whose averaging does not contribute to the decay of the gain part and integration gives the constant. This result can be viewed as an extension of the angular averaging  Povzner lemma  used for classical elastic and inelastic collisional theory for single of multiple mixture of monatomic gases.

\begin{lemma}[The Polyatomic Compact Manifold  Averaging  Lemma]\label{lemma povzner}
	Let $v'$, $v'_*$, $I'$ and $I'_*$ be given as  in \eqref{velocities} and \eqref{micro int energies}. Suppose  that functions $	b(\sigma \cdot \hat{u})$, $\dgu$  and $\egu$ satisfy the integrability conditions \eqref{ass b}, \eqref{ass d}  and \eqref{ass e}, respectively.  Then the following estimate holds
	\begin{multline}\label{povzner estimate}
	\int_{\K}\left(  \langle v', I' \rangle^{\tk}+ \langle v'_*, I'_* \rangle^{\tk} \right) b(\hat{u} \cdot \sigma)\, \dgu \, \varphi_\alpha (r)  \, \egu \, \psi_\alpha(R) \, (1-R) R^{1/2}   \,\mathrm{d}R\, \mathrm{d}r \, \mathrm{d}\sigma   
	\\
	\leq  \mathcal{C}_{\hk} \left( \langle v, I \rangle^2 + \langle v_*, I_* \rangle^2 \right)^{\hk},
	\end{multline}
	with the contracting constant $\mathcal{C}_{\hk}$, that is $\mathcal{C}_k\searrow 0$, as $k\rightarrow \infty$. In addition, there exists a $\bar{k}_*>1$ such that  
	\begin{equation}\label{k* const}%\label{C pov estimate}
	 \mathcal{C}_{\hk}   <	  \kappa^{lb}, \quad \text{for all} \ k > \bar{k}_*,
	\end{equation}	
	 where $ \kappa^{lb}$ is given in \eqref{kappas}.  
	
	Moreover, when  $b(\sigma \cdot \hat{u}) \in  L^p(S^2; \mathrm{d}\sigma)$ and $\dgu \varphi_\alpha(r) \in L^p([0,1]; \mathrm{d}r)$, $p \in (1,\infty]$,  the contracting constant $ \mathcal{C}_{\hk} $ can be explicitly computed with the  known decay rate, 
		\begin{multline}\label{povz const infty}
			\mathcal{C}_{\hk} \leq 2 \, \CgauR \|b\|_{L^{p}(\mathrm{d}\sigma)}   \|\dgun \varphi_\alpha\|_{L^{p}(\mathrm{d}r)}  (4\pi)^{1/p'} 
			\\ \times
			\left(  \frac{1}{kp'+1} +  \frac{2 kp'}{(kp'+1)(kp'+2)} \left( 1 - \left(\frac{1}{2}\right)^{kp'+2}  \right) \right)^{1/p'}. 
				\end{multline}
			with $p$ and $p'$ being the classical conjugates $\frac{1}{p}+\frac{1}{p'}=1$.
\end{lemma}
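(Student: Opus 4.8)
The plan is to collapse \eqref{povzner estimate} to a single scalar averaging inequality via the Energy Identity Decomposition, and then to extract the contraction by using the $\sigma$-, $r$- and $R$-integrations jointly. First I would substitute the representation of Lemma~\ref{lemma energy-identity}: since $q=1-p$, one has $\langle v',I'\rangle^2=E^{\langle \rangle}(p+\lambda\,\hat V\cdot\sigma)$ and $\langle v'_*,I'_*\rangle^2=E^{\langle \rangle}(1-p-\lambda\,\hat V\cdot\sigma)$, and $E^{\langle \rangle}=\langle v,I\rangle^2+\langle v_*,I_*\rangle^2$ is independent of $(r,R,\sigma)$. Hence the left side of \eqref{povzner estimate} equals $\big(\langle v,I\rangle^2+\langle v_*,I_*\rangle^2\big)^{k}\,\mathcal I_k$, where $\mathcal I_k:=\int_{\K}\big[(p+\lambda\,\hat V\cdot\sigma)^k+(1-p-\lambda\,\hat V\cdot\sigma)^k\big]\,b(\hat u\cdot\sigma)\,\dgu\,\varphi_\alpha(r)\,\egu\,\psi_\alpha(R)(1-R)R^{1/2}\,\mathrm dR\,\mathrm dr\,\mathrm d\sigma$. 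It therefore suffices to set $\mathcal C_k:=\sup_{v,v_*,I,I_*}\mathcal I_k$ and to show that $\mathcal C_k<\infty$, that $\mathcal C_k\searrow 0$, and that \eqref{povz const infty} holds under the $L^p$ hypotheses.

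Next I would record the pointwise facts needed. From the proof of Lemma~\ref{lemma energy-identity} one reads off $p-\tfrac12=(r-\tfrac12)(1-s)$ with $s\in[0,1]$, $\lambda=|V|\sqrt{RE/m}/E^{\langle \rangle}$, and $\tfrac s2-\lambda=\big(2+(|V|-\sqrt{RE/m})^2\big)/(2E^{\langle \rangle})\ge 1/E^{\langle \rangle}$; consequently $|p-\tfrac12|\le\tfrac12$, $0\le\lambda\le\tfrac12$, while $p+\lambda\,\hat V\cdot\sigma$ and $1-p-\lambda\,\hat V\cdot\sigma$ (being the ratios $\langle v',I'\rangle^2/E^{\langle \rangle}$ and $\langle v'_*,I'_*\rangle^2/E^{\langle \rangle}$) lie in $[0,1]$ for every $\sigma$, and $p(r):=\tfrac12+(r-\tfrac12)(1-s)$ sweeps $[\tfrac s2,\,1-\tfrac s2]$ as $r$ runs over $[0,1]$. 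The elementary bound $z^k+(1-z)^k\le 1$ on $[0,1]$ for $k\ge1$ already gives the crude estimate $\mathcal C_k\le\kappa^{ub}$, and since $z\mapsto z^k+(1-z)^k$ is non-increasing in $k$ for each fixed $z$, the sequence $\mathcal C_k$ is non-increasing; the entire content of the Lemma is to upgrade ``$\le\kappa^{ub}$'' to ``$\to 0$''.

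For the quantitative estimate I would work at fixed admissible data. The $R$-integration factors out the constant $\CgauR$ of \eqref{meas values}. On the remaining $(\sigma,r)$-integral I apply Hölder with exponents $(p,p')$ to peel off $\|b\|_{L^p(\mathrm d\sigma)}\,\|\dgu\varphi_\alpha\|_{L^p(\mathrm dr)}$, leaving $\big(\int_{S^2}\int_0^1[(p+\lambda\,\hat V\cdot\sigma)^k+(1-p-\lambda\,\hat V\cdot\sigma)^k]^{p'}\,\mathrm dr\,\mathrm d\sigma\big)^{1/p'}$. Using $(X+Y)^{p'}\le 2^{p'-1}(X^{p'}+Y^{p'})$, the symmetry $r\leftrightarrow 1-r$ (under which $p\leftrightarrow 1-p$), and the rotational invariance $\int_{S^2}g(\hat V\cdot\sigma)\,\mathrm d\sigma=2\pi\int_{-1}^1 g$, this reduces to estimating $\int_0^1\big(\int_{-1}^1(p(r)+\lambda t)^{kp'}\,\mathrm dt\big)\mathrm dr$. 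The inner integral is $\tfrac1\lambda\int_{p-\lambda}^{p+\lambda}\tau^{kp'}\,\mathrm d\tau$, i.e.\ twice the mean of $\tau^{kp'}$ over an interval of length $2\lambda$ whose right endpoint is $\le 1$; carrying out the $r$-integration, which lets the center $p(r)$ sweep $[\tfrac s2,1-\tfrac s2]$, yields an integral of $\tau^{kp'}$ over $[\tfrac s2-\lambda,\,1-\tfrac s2+\lambda]\subset[0,1]$ against a bounded density, and this value is dominated — uniformly in $s$, $\lambda$ and the data — by $\tfrac1{kp'+1}+\tfrac{2kp'}{(kp'+1)(kp'+2)}\big(1-2^{-(kp'+2)}\big)$. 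Reassembling the factors produces \eqref{povz const infty}, which tends to $0$ as $k\to\infty$. For the general case ($b$ and $\dgu\varphi_\alpha$ merely $L^1$) I would truncate, $b=(b\wedge N)+(b-N)_+$ and $\dgu\varphi_\alpha$ likewise, apply the bound just obtained with $p=\infty$ to the bounded parts (so their contribution $\to0$ in $k$ for each fixed $N$), and bound the tail contributions by $z^k+(1-z)^k\le1$ as $\le\CgauR\big(\|b\|_{L^1}\|((\dgu\varphi_\alpha)-N)_+\|_{L^1}+\|(b-N)_+\|_{L^1}\|\dgu\varphi_\alpha\|_{L^1}\big)$, which is small for large $N$ by dominated convergence; hence $\limsup_k\mathcal C_k=0$, and combined with monotonicity $\mathcal C_k\searrow0$. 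Finally, since $\kappa^{lb}>0$ is fixed, $\bar k_*:=\inf\{k>1:\mathcal C_k<\kappa^{lb}\}$ is finite (and $\bar k_*>1$ because $\mathcal C_1=\kappa^{ub}\ge\kappa^{lb}$), which is \eqref{k* const}.

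The hard part will be the uniformity in the last step of the quantitative estimate. The decay of $\mathcal C_k$ is \emph{not} a pure angular-averaging phenomenon as in the monatomic Povzner lemma, where $p\equiv\tfrac12$ and $\int_{S^2}$ alone suffices: here the oscillation amplitude $\lambda$ of $p+\lambda\,\hat V\cdot\sigma$ about $p$ may be arbitrarily small while $r$ pushes $p$ toward the endpoints of $[0,1]$, at which $\varphi_\alpha$ need not vanish — indeed it blows up for $\alpha<0$ — so the decay must then be recovered from the $r$-average; and this has to be done uniformly over unbounded data, since $E^{\langle \rangle}$ ranges over $[2,\infty)$ and the extremal base $1-1/E^{\langle \rangle}$ is not bounded away from $1$. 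Disentangling which integration supplies the decay in each data regime — the $\sigma$-average when $|V|^2$ dominates ($s\to1$, $p\to\tfrac12$), the Beta-type $r$-average when $E/m$ dominates ($\lambda\to0$ with $p$ spread out), and plain exponential decay when $E^{\langle \rangle}$ is bounded — and packaging the bounds so that the overall rate is still $O(k^{-1})$ is the technical core of the proof, and the genuinely new ingredient relative to the single-species monatomic case.
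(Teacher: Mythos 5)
You begin correctly from the Energy Identity Decomposition of Lemma~\ref{lemma energy-identity}, you observe the right pointwise facts ($\lambda\le s/2$, $p-\tfrac12=(r-\tfrac12)(1-s)$, $p+\lambda\hat V\cdot\sigma\in[0,1]$), you correctly separate the $R$-integration, and you correctly identify that what is needed is a bound on the $(\sigma,r)$-integral that is simultaneously uniform over the data $(v,v_*,I,I_*,R)$ and decaying in $k$. The gap is that you never produce such a bound: you keep the exact expression $(p+\lambda\hat V\cdot\sigma)^{k}$ and assert that the resulting data-dependent integral is ``dominated uniformly in $s$, $\lambda$ and the data'' by the explicit quantity in \eqref{povz const infty}. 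That assertion is the whole content of the lemma, not a routine step. Your own last paragraph diagnoses why it is delicate (when $E^{\langle\rangle}\to\infty$ the right endpoint $1-s/2+\lambda$ creeps up to $1$, and the decay must be recaptured from averaging rather than from an endpoint gap; the mechanism is different when $s\to 0$, $s\to1$, or $\lambda\to0$), but the case analysis is deferred rather than done. If you carry out your own Fubini bookkeeping on $\tfrac1\lambda\int_{p(r)-\lambda}^{p(r)+\lambda}\tau^{kp'}\,\mathrm d\tau$ you get a density of order $\min\{2/(1-s),\,1/\lambda\}$, which is unbounded; making the product small uniformly is exactly the regime-splitting you postpone, so the proof is not complete as written.

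The paper sidesteps the uniformity problem entirely with one additional pointwise inequality, and this is the step you are missing. Since $\lambda\le s/2$ and $p=\tfrac s2+(1-s)r$, one has $p+\lambda\hat V\cdot\sigma\le s\,\tfrac{1+|\hat V\cdot\sigma|}{2}+(1-s)\,r$, which is an $s$-convex combination of $\tfrac{1+|\hat V\cdot\sigma|}{2}$ and $r$ and hence bounded by $\max\bigl\{\tfrac{1+|\hat V\cdot\sigma|}{2},\,r\bigr\}$ (the maximum of a linear function on $s\in[0,1]$ is attained at an endpoint). This replacement eliminates every trace of $(v,v_*,I,I_*,R)$ from the integrand while remaining strictly less than $1$ away from the null set $\{\sigma=\pm\hat V\}\cup\{r=1\}$. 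Only then is the factorization of $\CgauR$ legitimate, the supremum over data trivial, and the remaining $(\sigma,r)$-integral reducible to the explicit double integral $\int_0^1\int_0^1\max\{\tfrac{1+\mu}{2},r\}^{kp'}\,\mathrm dr\,\mathrm d\mu$ computed in Appendix~\ref{App Const}, giving \eqref{povz const infty}. This data-independent majorant is precisely the new ingredient beyond the monatomic Povzner lemma. For the $L^1$ case the paper then argues monotone convergence and continuity (Dini-type) on that same explicit integrand rather than your truncation of $b$ and $\dgu\varphi_\alpha$; your truncation would also close, but only once the $L^\infty$ (or $L^p$) bound is actually established, which your write-up does not do.
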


\begin{proof}
	In order to prove this Lemma, we  first use energy identity and representation \eqref{post-coll}, that generated the $\lambda$ factor in \eqref{lambda_factor}, 
	\begin{equation*}
	\lambda :=  \sqrt{(\Theta E^{\langle \rangle} - 1) (\Sigma (1-\Theta) E^{\langle \rangle} -1) }. 
	\end{equation*}
	Using the Young inequality we get an estimate
	\begin{equation*}
	\lambda \ \leq \ \frac{1}{2} \left(\Theta E^{\langle \rangle}  + \Sigma (1-\Theta) E^{\langle \rangle} -2 \right)
	\leq  \frac{ \left(\Theta   + \Sigma (1-\Theta) \right)}{2}   E^{\langle \rangle}   =  \frac{s}{2} \, E^{\langle \rangle},
	\end{equation*}
	with the function $s \in [0,1]$ defined in \eqref{s}-\eqref{s range}. Thus,
	\begin{equation*}
	\pm\, \lambda \,  \hat{V} \cdot \sigma \ \leq\   \frac{ \left(\Theta   + \Sigma (1-\Theta) \right)}{2} | \hat{V} \cdot \sigma | \, E^{\langle \rangle}  = \frac{s}{2} \, | \hat{V} \cdot \sigma |\, E^{\langle \rangle} .
	\end{equation*}
	Therefore,   the convex form \eqref{post-coll} can be estimated pointwise 
	\begin{equation}\label{post-coll-est-1}
	\begin{split}
	\langle v', I' \rangle^2 & \leq E^{\langle \rangle}  \left( \left( \Theta   +  \Sigma (1-\Theta) \right) \left(\frac{1+ | \hat{V} \cdot \sigma| }{2} \right)+ (1-\Sigma)(1-\Theta) r \right) \\  & =  E^{\langle \rangle}  \left( s \left(\frac{1+ | \hat{V} \cdot \sigma| }{2} \right)+ (1-s) r \right), \\
	\langle v'_*, I'_* \rangle^2 & \leq  E^{\langle \rangle}  \left(\left( \Theta   +  \Sigma (1-\Theta) \right) \left(\frac{1+| \hat{V} \cdot \sigma| }{2} \right) + (1-\Sigma)(1-\Theta) (1-r) \right)
	\\  & =  E^{\langle \rangle}  \left( s \left(\frac{1+ | \hat{V} \cdot \sigma| }{2} \right)+ (1-s) (1-r) \right).
	\end{split}
	\end{equation}
	Moreover,  since functions 
	\begin{equation}\label{pomocna 111}
	s \left(\frac{1+ | \hat{V} \cdot \sigma| }{2} \right)+ (1-s) r, \quad \text{and} \quad s \left(\frac{1+ | \hat{V} \cdot \sigma| }{2} \right)+ (1-s) (1-r)
	\end{equation}	
	are linear with respect to $s$ and the range of $s$ is $[0,1]$   by \eqref{s range}, it follows that      the maximum of both functions \eqref{pomocna 111} is attained at the boundary i.e. for either $s=0$ or $s=1$. Therefore,  
	we can write
	\begin{equation*}
	s \left(\frac{1+ | \hat{V} \cdot \sigma| }{2} \right)+ (1-s) r 
	\leq \max\left\{ \frac{1+ | \hat{V} \cdot \sigma | }{2},  r \right\}, 
	\end{equation*}
	and 
	\begin{equation*}
	s \left(\frac{1+ | \hat{V} \cdot \sigma| }{2} \right)+ (1-s) (1-r) 
	\leq \max\left\{ \frac{1+ | \hat{V} \cdot \sigma| }{2},  1-r \right\}.
	\end{equation*}
	This allows to estimate \eqref{post-coll-est-1}  as follows
	\begin{equation*}
	\langle v', I' \rangle^2  \leq E^{\langle \rangle} \max\left\{ \frac{1+ | \hat{V} \cdot \sigma | }{2},  r \right\}, \qquad
	\langle v'_*, I'_* \rangle^2 \leq  E^{\langle \rangle}  \max\left\{ \frac{1+ | \hat{V} \cdot \sigma | }{2},  1-r \right\}.
	\end{equation*}
	With these estimates and using the  symmetry properties associated to the partition function with respect to $r$ by the fact $\varphi_\alpha(r) = \varphi_\alpha(1-r)$ and  $\dgu= \dgun(1-r)$, the left-hand side of \eqref{povzner estimate}  becomes 
	\begin{multline}\label{pomocna 2}
	\int_{\K}  \left(  \langle v', I' \rangle^{\tk}+ \langle v'_*, I'_* \rangle^{\tk} \right) b(\hat{u} \cdot \sigma)\,\dgu \, \varphi_\alpha (r)  \, \egu \, \psi_\alpha(R) \, (1-R) R^{1/2}   \,\mathrm{d}R\, \mathrm{d}r \, \mathrm{d}\sigma   
	\\
	\leq 2 \left(E^{\langle \rangle} \right)^{\hk} \int_{\K}  \left( \max\left\{ \frac{1+ | \hat{V} \cdot \sigma | }{2},  r \right\}\right)^{\hk}   b(\hat{u} \cdot \sigma)\, \dgu \, \varphi_\alpha (r)\\ \times    \egu \, \psi_\alpha(R) \, (1-R) R^{1/2}   \,\mathrm{d}R\, \mathrm{d}r \, \mathrm{d}\sigma =: \mathcal{K}.  
	\end{multline}
	It is interesting to note that decay of $\mathcal{K}$ from \eqref{pomocna 2} in $k$ is warranted by the averaging over $\sigma$ and $r$, and not necessarily on $R$, which implies that integration with respect to $R$ comes out as a constant. Thus, using the notation \eqref{meas values}, \eqref{pomocna 2} becomes
	\begin{multline}
	\mathcal{K} \leq 2 \, \CgauR \left(E^{\langle \rangle} \right)^{\hk} \int_{S^2} \int_{0}^1 \left( \max\left\{ \frac{1+ | \hat{V} \cdot \sigma | }{2},  r \right\}\right)^{\hk}  b(\hat{u} \cdot \sigma)   \dgu \, \varphi_\alpha (r) \mathrm{d}r \, \mathrm{d}\sigma, 
	\\
	\leq \mathcal{C}_{\hk}  \left(E^{\langle \rangle} \right)^{\hk},
	\end{multline}
	where we have denoted
	\begin{equation}\label{povzner double int}
	\mathcal{C}_{\hk} = 2 \, \CgauR \sup_{\{\hat{V}\in {S}^{2} ,\hat{u} \in {S}^{2}\}} \int_{S^2} \int_{0}^1 \left( \max\left\{ \frac{1+ |\hat{V} \cdot \sigma| }{2},  r \right\}\right)^{\hk} b(\hat{u} \cdot \sigma) \, \dgu\, \varphi_\alpha (r) \mathrm{d}r \, \mathrm{d}\sigma.
	\end{equation}
	
	The following estimates  of the constant $\mathcal{C}_{\hk} $  are inspired and follow the analog ones described  in \cite{Bob97},  \cite{GambaBobPanf04}, \cite{GambaPanfVil09},   \cite{AlonsoLods}, \cite{Gamba13}, and more recently revisited in \cite{Alonso-IG-BAMS} for the classical space homogeneous Boltzmann binary elastic interacting particle  model for monatomic gases in the case of the angular transition $b(\hat{u} \cdot \sigma)$ just an integrable function on the sphere $S^2$. 
	
	Indeed,  if the angular transition $b(\hat{u} \cdot \sigma) \in L^1(S^2; \mathrm{d}\sigma)$   and the partition function  $\dgu \varphi_\alpha(r) \in L^1([0,1]; \mathrm{d}r)$, then   the  integral $\mathcal{C}_{\hk}$ is monotonically decreasing in $\hk$, without necessarily a specific decay rate in the parameter $\hk$.
	
	This statement follows from expressing $\hat V$ in polar coordinates with zenith $\hat{u}$.  That makes  the integral on the sphere that characterizes $\mathcal{C}_{\hk}$ to be a continuous function on in the vectors $\hat{V}$ and $\hat{u}$, whose integrand,  with respect to the measure  $b(\hat{u} \cdot \sigma) \mathrm{d}\sigma \dgu \varphi_\alpha(r)  \mathrm{d}r$    is strictly decreasing in $k>1$ up to a set of measure zero (namely at $\sigma=\{\pm\hat{V}\}$  or at $r=1$).  Then  $\mathcal{C}_{k_{1}}>
	\mathcal{C}_{k_{2}}$ for any $k_{1}<k_{2}$ follows by  taking the supremum   from the continuity in  $\hat{V}$ and $\hat{u}$ property. 
	In particular, {\co  we} conclude by monotone convergence Theorem that the  constant $\mathcal{C}_{\hk}$ is contracting 
	\begin{equation}\label{C hk}
	\mathcal{C}_{\hk} \searrow 0, \quad \text{as} \ \hk \rightarrow \infty.
	\end{equation}
	In particular, with $\kappa^{ub}$ from \eqref{kappas}, it follows that
	\begin{equation*}
	\mathcal{C}_{\hk} < 2 \kappa^{ub}, \quad \text{for any} \ k>1.
	\end{equation*}
In addition, since $\kappa^{lb} \leq \kappa^{ub}$ with the constant  $\kappa^{lb}$ also defined in \eqref{kappas}, we conclude  by \eqref{C hk} there exists $\bar{k}_*$ such that
\begin{equation}\label{ck kappa lb}
	\mathcal{C}_{\hk} < \kappa^{lb}  \quad \text{for any} \ k>\bar{k}_*,
\end{equation}
	where $\bar{k}_*$ is the smallest $k$ such that  \eqref{ck kappa lb} holds. \\
	
	On the other side, for the classical Boltzmann model for  binary interactions it was also shown   in  \cite{Bob97},  \cite{GambaBobPanf04}, \cite{GambaPanfVil09} and   in  \cite{AlonsoLods}, applied to both elastic or inelastic collisions,  that   in the case  when $b\in L^\infty(S^2; \mathrm{d}\sigma)$  or $  b\in L^p(S^2; \mathrm{d}\sigma)$, for $p>1$, it is possible to calculate the decay rate of  $\mathcal{C}_{\hk}$ as a function of  $\hk$.
	
	For a polyatomic gas, we   describe in detail these two cases for the angular transition $b(\hat u\cdot\sigma)$   and the upper bound of partition function $\dgu \varphi_\alpha(r)  $  for which the  constant $\mathcal{C}_{\hk}$ has an explicit decay with respect to $\hk$ and the integrability rate $p$,  as follows.
	
	\begin{enumerate}
		\item[(i)]   
		If the angular transition rate $b(\hat{u} \cdot \sigma) \in L^p(S^2; \mathrm{d}\sigma)$,   and  the partition function $\dgu \varphi_\alpha(r) \in L^\infty([0,1]; \mathrm{d}r)$ with $p>1$,  then  a straight forward application of the  H\"older inequality to the integral apart of  \eqref{povzner double int} yields the estimate 
		\begin{multline}\label{Ckp begin}
		\mathcal{C}_{\hk}\leq 2 \, \CgauR \|b\|_{L^{p}(\mathrm{d}\sigma)}   \|\dgun \varphi_\alpha\|_{L^{p}(\mathrm{d}r)} \\ \times \left(\int_{S^2} \int_{0}^1 \left( \max\left\{ \frac{1+ |\hat{V} \cdot \sigma| }{2},  r \right\}\right)^{\hk p'} \mathrm{d}r \, \mathrm{d}\sigma \right)^{1/p'},
		\end{multline}
		with $p$ and $p'$ the pairs satisfying $\tfrac{1}{p} + \tfrac{1}{p'} = 1$. 
		In the last integral we change variables $\sigma \mapsto \mu$, $\mu=\hat{V}\cdot\sigma$ and obtain
		\begin{multline*}
		\int_{S^2} \int_{0}^1 \left( \max\left\{ \frac{1+ |\hat{V} \cdot \sigma| }{2},  r \right\}\right)^{\hk p'} \mathrm{d}r \, \mathrm{d}\sigma\\= 4\pi \int_{0}^1 \int_{0}^1 \left( \max\left\{ \frac{1+ \mu }{2},  r \right\}\right)^{\hk p'} \mathrm{d}r \, \mathrm{d}\mu.
		\end{multline*}
		Therefore, from \eqref{Ckp begin}, for $\mathcal{C}_{\hk}$ we get the following estimate
		\begin{equation*}
		\mathcal{C}_{\hk} \leq 2 \, \CgauR \|b\|_{L^{p}(\mathrm{d}\sigma)}   \|\dgun \varphi_\alpha\|_{L^{p}(\mathrm{d}r)}  (4\pi)^{1/p'} \mathcal{C}^p_{\hk},
		\end{equation*}
		with
		\begin{equation*}
		\mathcal{C}^p_{\hk} =  \left(\int_{0}^1 \int_{0}^1 \left( \max\left\{ \frac{1+ \mu }{2},  r \right\}\right)^{\hk p'} \mathrm{d}r \, \mathrm{d}\mu\right)^{1/p'}.
		\end{equation*}
		Then we can explicitly compute this constant $ \mathcal{C}^p_{\hk}$, as shown in Appendix Section \ref{App Const}, by setting  $n=kp'$ in the expression \eqref{CinftyApp}, namely,
		\begin{equation}
		\mathcal{C}^p_{\hk} = \left(  \frac{1}{kp'+1} +  \frac{2 kp'}{(kp'+1)(kp'+2)} \left( 1 - \left(\frac{1}{2}\right)^{kp'+2}  \right) \right)^{1/p'}.
		\end{equation}
		\item[(ii)]    However, if the angular transition rate  $b(\hat{u} \cdot \sigma) \in L^\infty(S^2; \mathrm{d}\sigma)$, that is $b(\hat{u} \cdot \sigma)$ is   bounded on  the sphere $S^2$  and   $\dgu \varphi_\alpha(r) \in L^\infty([0,1]; \mathrm{d}r)$,  the decay rate is faster for $k>1$.  Indeed,  \eqref{povzner double int} can be upper bounded by   
		\begin{multline}\label{Cn preko Cinfty}
		\mathcal{C}_{\hk}
		\leq 	2  \,\CgauR \left\| b \right\|_{L^\infty(\mathrm{d}\sigma)}   \left\| \dgun \, \varphi_\alpha \right\|_{L^\infty(\mathrm{d}r)} \\ \times  \sup_{\{\hat{V}\in {S}^{2} ,\hat{u} \in {S}^{2}\}} \int_{S^2} \int_{0}^1 \left( \max\left\{ \frac{1+ |\hat{V} \cdot \sigma| }{2},  r \right\}\right)^{\hk} \mathrm{d}r \, \mathrm{d}\sigma
		\\
		= 8 \pi  \,\CgauR \left\| b \right\|_{L^\infty(\mathrm{d}\sigma)}   \left\| \dgun \, \varphi_\alpha \right\|_{L^\infty(\mathrm{d}r)} 	\mathcal{C}_{\hk}^\infty,
		\end{multline}
		where, after the change of variables $\sigma \mapsto \mu$, $\mu=\hat{V}\cdot\sigma$ in the last expression \eqref{Cn preko Cinfty},  $\mathcal{C}_{\hk}^\infty$ is given with 
		\begin{equation}\label{Cinfty integral}
		\mathcal{C}_{\hk}^\infty :=\int_{0}^{1}  \int_{0}^1 \left( \max\left\{ \frac{1+ \mu }{2},  r \right\}\right)^{\hk} \mathrm{d}r \,  \mathrm{d}\mu.
		\end{equation}
		This double integral \eqref{Cinfty integral} is computed in the Appendix Section \ref{App Const}, setting $n=k$ in \eqref{CinftyApp}, and the final expression is
		\begin{equation}\label{Cinfty}
		\mathcal{C}_{\hk}^\infty= \frac{1}{\hk+1} +  \frac{2 \hk}{(\hk+1)(\hk+2)} \left( 1 - \left(\frac{1}{2}\right)^{\hk+2}  \right), \quad \hk >1.
		\end{equation}
	\end{enumerate}
	
\end{proof}

Therefore, the total energy identity  \eqref{kin+int-energy} enables to obtain a partial crucial result that controls the averaging on the compact manifold $\K$  of the  $k^{\text{th}}$-power of the postcollisional total molecular energies, that is  $\langle v', I' \rangle^{\tk} + \langle v'_*, I'_* \rangle^{\tk}$    by the   $k^{\text{th}}$-power of the molecular energy, i.e.  $E^{ \langle \rangle k}$ time a factor $\mathcal{C}_k$ is '{\em contracting}",  that means it decays as $k$ grows to infinity. 

This result is an imperative for proving  decay of the $k$-th moment of  collision operator gain term when averaged over the compact manifold $\K$. This fact allows for the corresponding $k$-th moment of the loss term to prevail in the dynamics, when sufficiently large order of moment  $k>\bks$ is taken into account  in order  to ensure \eqref{k* const}.  	It is worthwhile to mention that the Averaging Lemma ensures the existence of such $\bks$, since only the contracting constant $\mathcal{C}_{\hk} $ depends on $k$.  This order of moment $\bks$ needed to guarantee this property is studied in   the upcoming Remark \ref{Sec: k*}.

\begin{remark}[Sufficient moment order  to ensure prevail of the loss term]\label{Sec: k*}
	For the single monatomic species, when the averaging is performed only in the scattering direction $\sigma$,  it was  sufficient to take the order of moment $k>1$ to  prove the dominance of the moment associated to the loss term with respect to the same moment of the gain term,  with $k=1$ corresponding to the energy. In the monatomic gas mixture setting, the value of $k=k_*$ depends on the ratio of mass species, and it is shown that $k_*$ grows as this ratio deviates from $1/2$, where $1/2$ corresponds to the single specie case.
	
	 In the current setting, corresponding to  polyatomic gases, the averaging  is performed over the  compact manifold $K$, with respect to the angular scattering direction $\sigma$, as well as to the   parameters $r$ and $R$ that are arguments of  the partition functions. We seek for a sufficient order of moment $\bks$ which secures \eqref{k* const}, 
	 under the additional assumption of 
	\begin{equation}\label{assy Linf}
	b(\hat{u}\cdot \sigma)\in L^\infty(\mathrm{d}\sigma), \quad \text{and} \quad \dgu \varphi_\alpha(r) \in  L^\infty(\mathrm{d}r),
	\end{equation}
	when we can explicitly compute the constant $\mathcal{C}_{\hk}$  from Lemma \ref{lemma povzner}, as shown in \eqref{povz const infty}. We focus on the three models for transition function $\mathcal{B}$ introduced in Section \ref{Sec: tf models}. 
	
	Note that for all the three models the condition of boundedness of the function $\dgun \varphi_\alpha$ is fulfilled when $\alpha\geq 0$, in which case
	\begin{equation*}
	\left\| \dgun \varphi_\alpha \right\|_{L^\infty(\mathrm{d}r)}= 1.
	\end{equation*}
	Therefore, the condition \eqref{k* const} reduces to 
	\begin{equation}\label{threshold}
  \frac{1}{\hk+1} +  \frac{2 \hk}{(\hk+1)(\hk+2)} \left( 1 - \left(\frac{1}{2}\right)^{\hk+2}  \right) =: 	\mathcal{C}^\infty_{\hk} < \frac{1}{2} \frac{ \cgalr \ \CgalR}{ \CgauR } :=  C^*_{\gamma,\alpha}.
	\end{equation}
 To complete the study, it remains to calculate the constants $\cgalr$, $\CgalR$ and  $\CgauR $ for the three models. To that end, we need to determine multiplying functions $\dgl$, $\egl$ and $\egu$.  For the Model 1 we use constants already calculated in \eqref{model 1 const}, taking $m=1$. The Model 2 takes the bounds from \eqref{model 2 plot}, while for the Model 3 we assume \eqref{model 3 plot}. The results are presented in the Figure \ref{Fig}.
\end{remark}

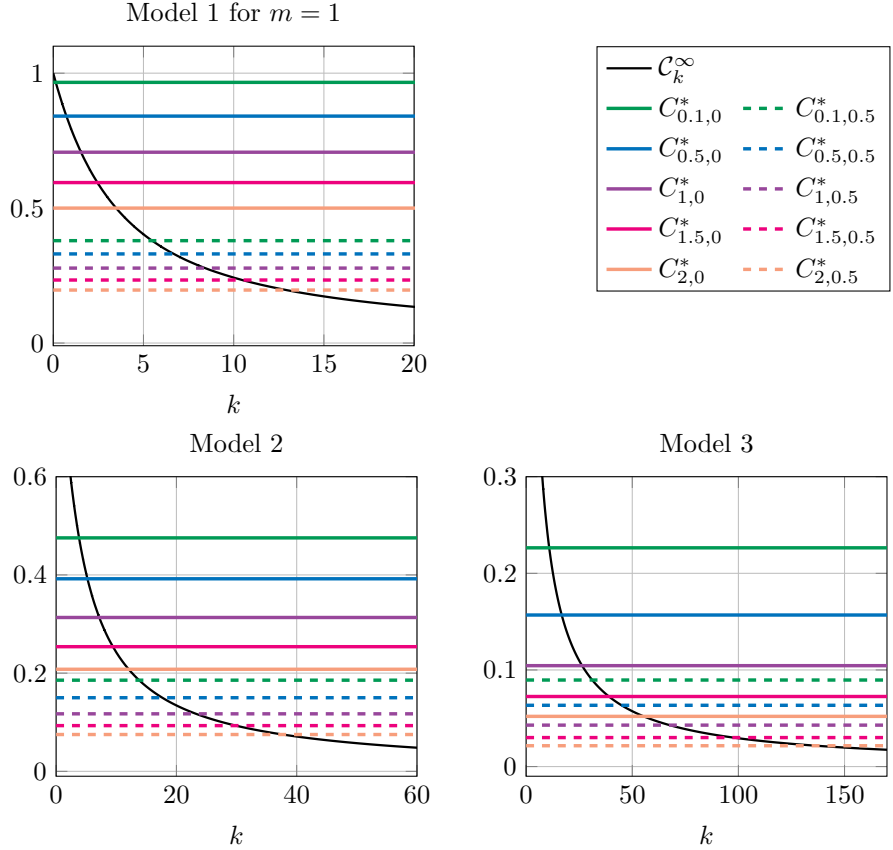
\begin{figure}[H]
	\begin{center}
		\begin{tikzpicture}
		\begin{axis}[
		scale=0.7,
		xmin=0, xmax=20,
		ymin=-0.01, ymax=1.1,
		xmajorgrids,
		ymajorgrids,
		xlabel=$k$,%,ylabel=$y$,
		align =center, title={Model 1  for $m=1$},
		legend columns=2, 
		legend style={/tikz/column 2/.style={
				column sep=5pt,
			},row sep=0.1cm, at={(2.32,1)}, 
			fill=white,draw=black,nodes=right}]
		\addplot[smooth,color=black,solid,line width=0.2ex]coordinates{
(1.e-6,0.9999997)(0.0150593,0.9962461)(0.0301176,0.9925142)(0.0602342,0.985115)(0.1204674,0.970571)(0.2409337,0.942468)(0.4818664,0.889957)(0.4969247,0.8868295)(0.511983,0.8837194)(0.5420996,0.877551)(0.6023328,0.8654182)(0.7227991,0.8419433)(0.9637319,0.7979638)(0.9800568,0.7951193)(0.9963817,0.7922913)(1.0290315,0.7866843)(1.0943312,0.7756631)(1.2249305,0.7543668)(1.4861292,0.7145679)(1.5024541,0.712197)(1.518779,0.7098393)(1.5514289,0.7051629)(1.6167285,0.6959638)(1.7473279,0.6781609)(2.0085266,0.6447885)(2.0237696,0.6429278)(2.0390127,0.6410762)(2.0694989,0.6374005)(2.1304713,0.6301569)(2.252416,0.6160884)(2.4963055,0.5895316)(2.5115486,0.5879382)(2.5267917,0.5863524)(2.5572779,0.583203)(2.6182502,0.5769927)(2.740195,0.5649155)(2.9840845,0.5420593)(2.9990285,0.5407124)(3.0139726,0.5393713)(3.0438608,0.5367071)(3.1036371,0.5314486)(3.2231898,0.5212042)(3.4622951,0.5017491)(3.4772392,0.5005768)(3.4921833,0.4994095)(3.5220715,0.4970895)(3.5818478,0.4925079)(3.7014005,0.4835719)(3.9405058,0.4665625)(3.9567165,0.4654491)(3.9729272,0.4643405)(4.0053486,0.462138)(4.0701914,0.4577901)(4.1998771,0.4493175)(4.4592484,0.4332173)(4.4754591,0.4322467)(4.4916698,0.4312801)(4.5240912,0.429359)(4.588934,0.4255645)(4.7186197,0.4181608)(4.977991,0.4040573)(4.9931199,0.4032623)(5.0082487,0.4024703)(5.0385065,0.400895)(5.099022,0.3977791)(5.2200531,0.391683)(5.4621152,0.3800082)(5.9462394,0.3585464)(5.9626349,0.3578602)(5.9790304,0.3571766)(6.0118214,0.3558167)(6.0774034,0.3531267)(6.2085674,0.3478625)(6.4708955,0.337776)(6.9955516,0.3192136)(7.0116481,0.3186754)(7.0277446,0.318139)(7.0599376,0.3170714)(7.1243235,0.3149571)(7.2530955,0.3108098)(7.5106394,0.3028272)(8.0257272,0.288009)(8.0407419,0.2875984)(8.0557566,0.2871889)(8.0857859,0.2863733)(8.1458446,0.2847558)(8.2659619,0.2815741)(8.5061967,0.275416)(8.9866661,0.2638634)(10.0286688,0.2418299)(11.0014347,0.2243178)(11.9550641,0.2094361)(12.9897573,0.1953642)(13.9552137,0.1838336)(15.0017339,0.1727761)(16.0291175,0.16314)(16.9872644,0.1550724)(18.0264751,0.1471772)(18.9964491,0.1404995)(20.0474868,0.1339149)(21.079388,0.1280237)(22.0420525,0.1229762)(23.0857807,0.1179346)(24.0602722,0.1135865)(25.0156271,0.109624)(26.0520458,0.1056263)(27.0192277,0.1021499)(28.0674735,0.0986314)(29.0965827,0.0954051)(30.0564551,0.0925804)(31.0973914,0.0897003)(32.0690909,0.0871688)(33.0216538,0.084822)(34.0552805,0.0824144)(35.0196705,0.0802881)(36.0651243,0.0781035)(37.0413413,0.0761683)(37.9984217,0.0743619)(39.036566,0.0724969)(40.0054735,0.0708387)(41.0554448,0.0691253)(42.0862795,0.0675219)(43.0478775,0.0660918)(44.0905393,0.0646081)(45.0639643,0.0632818)(46.0182528,0.0620333)(47.053605,0.0607334)(48.0197206,0.0595685)(49.0668999,0.0583553)(50.0949426,0.0572115)(51.0537486,0.0561843)(52.0936184,0.0551112)(53.0642515,0.0541459)(54.1159483,0.0531374)(55.1485086,0.0521832)(56.1118321,0.0513233)(57.1562195,0.0504225)(58.13137,0.0496095)(59.087384,0.0488375)(60.1244619,0.0480268)(61.0923029,0.0472941)(62.1412078,0.0465249)(63.1709761,0.0457936)(64.1315076,0.045132)(65.173103,0.0444358)(66.1454615,0.0438049)(67.0986835,0.0432037)(68.1329694,0.0425696)(69.0980184,0.0419946)(70.1441313,0.0413886)(71.1210074,0.0408383)(72.078747,0.0403127)(73.1175503,0.0397578)(74.0871169,0.0392535)(75.1377473,0.0387212)(76.1692412,0.0382125)(77.1314983,0.0377498)(78.1748192,0.0372607)(79.1489033,0.0368153)(80.1038508,0.0363889)(81.1398622,0.0359373)(82.1066368,0.0355259)(83.1544752,0.0350905)(84.1831771,0.0346734)(85.1426422,0.0342931)(86.1831711,0.03389)(87.1544632,0.0335222)(88.1066188,0.0331693)(89.1398382,0.0327947)(90.1038208,0.0324528)(91.1488672,0.03209)(92.1246769,0.0317585)(93.08135,0.0314401)(94.1190869,0.0311019)(95.0875871,0.0307927)(96.137151,0.0304646)(97.1675784,0.0301491)(98.1287691,0.0298607)(99.1710235,0.0295541)(100.1440412,0.0292735)(101.0979223,0.0290035)(102.1328672,0.0287162)(103.0985754,0.0284532)(104.1453474,0.0281735)(105.1729828,0.0279042)(106.1313814,0.0276577)(107.1708439,0.0273952)(108.1410696,0.0271546)(109.1923591,0.0268987)(110.224512,0.026652)(111.1874282,0.026426)(112.2314082,0.0261852)(113.2061514,0.0259643)(114.1617581,0.0257513)(115.1984285,0.0255242)(116.1658622,0.0253158)(117.2143598,0.0250938)(118.2437207,0.0248796)(119.2038449,0.0246831)(120.2450329,0.0244734)(121.2169841,0.0242809)(122.1697988,0.0240951)(123.2036773,0.0238967)(124.168319,0.0237144)(125.2140245,0.02352)(126.1904933,0.0233413)(127.1478255,0.0231688)(128.1862215,0.0229845)(129.1553808,0.0228151)(130.2056038,0.0226343)(131.2366903,0.0224596)(132.1985401,0.022299)(133.2414536,0.0221275)(134.2151304,0.0219697)(135.1696706,0.0218172)(136.2052746,0.0216541)(137.1716419,0.0215041)(138.219073,0.0213439)(139.2473675,0.0211889)(140.2064252,0.0210463)(141.2465468,0.0208938)(142.2174316,0.0207535)(143.2693802,0.0206036)(144.3021922,0.0204585)(145.2657675,0.0203249)(146.3104066,0.0201821)(147.2858089,0.0200505)(148.2420747,0.0199232)(149.2794042,0.0197869)(150.247497,0.0196613)(151.2966537,0.0195271)(152.3266737,0.019397)(153.287457,0.0192773)(154.3293041,0.0191491)(155.3019144,0.0190309)(156.2553882,0.0189165)(157.2899258,0.0187939)(158.2552266,0.018681)(159.3015912,0.01856)(160.3288193,0.0184428)(161.2868106,0.0183348)(162.3258657,0.0182191)(163.2956841,0.0181125)(164.3465662,0.0179983)(165.3783118,0.0178876)(166.3408207,0.0177855)(167.3843933,0.0176761)(168.3587292,0.0175752)(169.3139285,0.0174774)(170.3501916,0.0173726)(171.317218,0.0172758)(172.3653082,0.0171722)(173.3942618,0.0170716)(174.3539786,0.0169789)(175.3947593,0.0168795)(176.3663031,0.0167877)(177.3187105,0.0166987)(178.3521816,0.0166032)(179.316416,0.0165151)(180.3617142,0.0164206)(181.3377756,0.0163333)(182.2947004,0.0162487)(183.3326891,0.0161578)(184.301441,0.016074)(185.3512567,0.0159841)(186.3819359,0.0158968)(187.3433783,0.0158162)(188.3858845,0.0157297)(189.3591539,0.0156499)(190.3132868,0.0155724)(191.3484834,0.0154891)(192.3144434,0.0154123)(193.3614671,0.0153298)(194.3893543,0.0152497)(195.3480047,0.0151758)(196.3877189,0.0150964)(197.3581963,0.015023)(198.4097376,0.0149443)(199.4421423,0.0148678)(200.4053102,0.0147972)(201.449542,0.0147214)(202.4245369,0.0146513)(203.3803953,0.0145832)(204.4173176,0.0145101)(205.385003,0.0144425)(206.4337523,0.0143699)(207.463365,0.0142994)(208.4237409,0.0142343)(209.4651807,0.0141643)(210.4373837,0.0140995)(211.3904501,0.0140367)(212.4245803,0.0139691)(213.3894738,0.0139066)(214.4354311,0.0138395)(215.4121516,0.0137774)(216.3697356,0.0137171)(217.4083833,0.0136523)(218.3777943,0.0135923)(219.4282692,0.0135279)(220.4596074,0.0134653)(221.4217089,0.0134074)(222.4648742,0.0133452)(223.4388027,0.0132876)(224.3935947,0.0132317)(225.4294505,0.0131715)(226.3960695,0.0131158)(227.4437523,0.013056)(228.4722986,0.0129978)(229.4316081,0.012944)(230.4719814,0.0128862)(231.4431179,0.0128327)(232.3951179,0.0127806)(233.4281817,0.0127246)(234.3920087,0.0126728)(235.4368996,0.0126171)(236.4125536,0.0125656)(237.3690711,0.0125154)(238.4066525,0.0124615)(239.374997,0.0124116)(240.4244054,0.0123579)(241.4546772,0.0123057)(242.4157122,0.0122574)(243.4578111,0.0122054)(244.4306732,0.0121573)(245.3843987,0.0121105)(246.419188,0.0120601)(247.3847406,0.0120135)(248.431357,0.0119633)(249.4588368,0.0119145)(250.4170799,0.0118693)(251.4563867,0.0118207)(252.4264568,0.0117757)(253.4775908,0.0117273)(254.5095881,0.0116802)(255.4723487,0.0116366)(256.5161731,0.0115897)(257.4907607,0.0115462)(258.4462118,0.0115039)(259.4827267,0.0114584)(260.4500048,0.0114162)(261.4983467,0.0113708)(262.5275521,0.0113266)(263.4875207,0.0112857)(264.5285531,0.0112417)(265.5003487,0.0112009)(266.4530078,0.0111612)(267.4867307,0.0111185)(268.4512168,0.0110789)(269.4967668,0.0110362)(270.4730799,0.0109967)(271.4302565,0.0109583)(272.468497,0.0109169)(273.4375006,0.0108785)(274.4875681,0.0108373)(275.518499,0.0107971)(276.4801931,0.0107598)(277.5229511,0.0107197)(278.4964723,0.0106826)(279.4508569,0.0106464)(280.4863053,0.0106074)(281.452517,0.0105713)(282.4997925,0.0105324)(283.5279314,0.0104945)(284.4868335,0.0104595)(285.5267995,0.0104217)(286.4975287,0.0103866)(287.5493217,0.010349)(288.5819782,0.0103122)(289.5453979,0.0102782)(290.5898814,0.0102415)(291.5651281,0.0102076)(292.5212382,0.0101745)(293.5584122,0.0101388)(294.5263494,0.0101057)(295.5753505,0.0100701)(296.6052149,0.0100355)(297.5658426,0.0100033)(298.6075341,0.0099687)(299.5799889,0.0099366)(300.533307,0.0099053)(301.567689,0.0098716)(302.5328343,0.0098403)(303.5790433,0.0098067)(304.6061158,0.0097739)(305.5639515,0.0097435)(306.602851,0.0097107)(307.5725138,0.0096803)(308.6232403,0.0096476)(309.6548303,0.0096157)(310.6171836,0.0095861)(311.6606006,0.0095543)(312.6347809,0.0095247)(313.5898246,0.0094959)(314.6259322,0.0094649)(315.5928029,0.0094361)(316.6407375,0.0094051)(317.6695355,0.0093749)(318.6290968,0.0093468)(319.6697219,0.0093166)(320.6411102,0.0092886)(321.5933619,0.0092613)(322.6266774,0.0092319)(323.5907562,0.0092046)(324.6358988,0.0091751)(325.6118046,0.0091478)(326.5685739,0.0091212)(327.606407,0.0090925)(328.5750033,0.0090659)(329.6246634,0.0090373)(330.655187,0.0090093)(331.6164738,0.0089833)(332.6588244,0.0089554)(333.6319382,0.0089295)(334.5859155,0.0089042)(335.6209566,0.0088769)(336.5867609,0.0088516)(337.633629,0.0088244)(338.6613606,0.0087978)(339.6198554,0.0087731)(340.659414,0.0087465)(341.6297359,0.0087218)(342.6811216,0.0086953)(343.7133707,0.0086693)(344.676383,0.0086453)(345.7204592,0.0086193)(346.6952985,0.0085953)(347.6510013,0.0085718)(348.687768,0.0085465)(349.6552978,0.008523)(350.7038915,0.0084977)(351.7333487,0.008473)(352.693569,0.00845)(353.7348532,0.0084253)(354.7069006,0.0084024)(355.6598114,0.00838)(356.693786,0.0083559)(357.6585239,0.0083335)(358.7043256,0.0083094)(359.6808905,0.0082869)(360.6383189,0.0082651)(361.6768111,0.0082415)(362.6460665,0.0082196)(363.6963857,0.008196)(364.7275684,0.008173)(365.6895143,0.0081516)(366.732524,0.0081286)(367.7062969,0.0081072)(368.6609333,0.0080864)(369.6966335,0.0080638)(370.6630969,0.0080429)(371.7106241,0.0080204)(372.7390148,0.0079984)(373.6981687,0.007978)(374.7383864,0.007956)(375.7093674,0.0079356)(376.7614121,0.0079136)(377.7943203,0.0078921)(378.7579918,0.0078721)(379.802727,0.0078506)(380.7782255,0.0078306)(381.7345874,0.0078111)(382.7720132,0.0077901)(383.7402021,0.0077705)(384.7894549,0.0077495)(385.8195711,0.0077289)(386.7804506,0.0077098)(387.8223938,0.0076892)(388.7951003,0.0076701)(389.7486703,0.0076514)(390.783304,0.0076313)(391.748701,0.0076126)(392.7951618,0.0075924)(393.822486,0.0075728)(394.7805735,0.0075545)(395.8197247,0.0075348)(396.7896392,0.0075165)(397.8406176,0.0074967)(398.8724593,0.0074774)(399.8350643,0.0074595)(400.8787331,0.0074402)(401.8531652,0.0074223)(402.8084606,0.0074048)(403.8448199,0.0073859)(404.8119424,0.0073684)(405.8601288,0.0073494)(406.8891786,0.007331)(407.8489916,0.0073138)(408.8898684,0.0072953)(409.8615084,0.0072781)(410.8140119,0.0072613)(411.8475792,0.0072432)(412.8119097,0.0072264)(413.8573041,0.0072082)(414.8334617,0.0071913)(415.7904827,0.0071749)(416.8285675,0.0071571)(417.7974156,0.0071406)(418.8473275,0.0071228)(419.8781028,0.0071054)(420.8396413,0.0070893)(421.8822437,0.0070719)(422.8556093,0.0070557)(423.8098383,0.0070399)(424.8451311,0.0070228)(425.8111872,0.007007)(426.8583071,0.0069899)(427.8862904,0.0069732)(428.845037,0.0069577)(429.8848474,0.0069409)(430.855421,0.0069254)(431.9070584,0.0069086)(432.9395593,0.0068922)(433.9028233,0.006877)(434.9471513,0.0068606)(435.9222424,0.0068453)(436.878197,0.0068304)(437.9152154,0.0068143)(438.882997,0.0067994)(439.9318424,0.0067832)(440.9615513,0.0067675)(441.9220234,0.0067529)(442.9635593,0.0067371)(443.9358584,0.0067224)(444.889021,0.0067081)(445.9232474,0.0066926)(446.8882371,0.0066782)(447.9342905,0.0066627)(448.9111072,0.0066483)(449.8687873,0.0066342)(450.9075312,0.006619)(451.8770384,0.0066049)(452.9276094,0.0065896)(453.9590438,0.0065747)(454.9212414,0.0065609)(455.9645029,0.0065459)(456.9385276,0.0065321)(457.8934157,0.0065185)(458.9293677,0.0065039)(459.8960828,0.0064903)(460.9438618,0.0064756)(461.9725042,0.0064612)(462.9319099,0.0064479)(463.9723794,0.0064335)(464.9436121,0.0064202)(465.8957082,0.0064071)(466.9288682,0.006393)(467.8927914,0.0063799)(468.9377784,0.0063657)(469.9135286,0.0063526)(470.8701423,0.0063398)(471.9078197,0.0063259)(472.8762605,0.006313)(473.925765,0.0062991)(474.956133,0.0062855)(475.9172642,0.0062728)(476.9594592,0.0062592)(477.9324174,0.0062465)(478.8862391,0.0062341)(479.9211246,0.0062208)(480.8867733,0.0062083)(481.9334859,0.0061949)(482.9610619,0.0061818)(483.9194011,0.0061696)(484.9588041,0.0061565)(485.9289704,0.0061442)(486.9802005,0.006131)(488.012294,0.0061181)(488.9751507,0.0061061)(490.0190713,0.0060932)(490.9937551,0.0060811)(491.9493023,0.0060694)(492.9859133,0.0060567)(493.9532876,0.0060449)(495.0017257,0.0060321)(496.0310272,0.0060197)(496.991092,0.0060081)(498.0322205,0.0059956)(499.0041124,0.005984)(499.0196731,0.0059838)(499.0352338,0.0059836)(499.0663553,0.0059832)(499.1285982,0.0059825)(499.253084,0.005981)(499.5020557,0.005978)(499.5176164,0.0059779)(499.5331771,0.0059777)(499.5642986,0.0059773)(499.6265415,0.0059766)(499.7510273,0.0059751)(499.7665881,0.0059749)(499.7821488,0.0059747)(499.8132703,0.0059743)(499.8755132,0.0059736)(499.8910739,0.0059734)(499.9066346,0.0059732)(499.9377561,0.0059729)(499.9533168,0.0059727)(499.9688775,0.0059725)(499.9844383,0.0059723)(499.999999,0.0059721)
		};
		\addlegendentry{$\mathcal{C}^\infty_{k}$}
		\addlegendimage{empty legend}
		\addlegendentry{}
		\addplot[smooth,color=ForestGreen,line width=0.3ex] coordinates { (0,0.965936) (200,0.965936)};\label{a=0}
		\addlegendentry{$C^*_{0.1,0}$}
		\addplot[smooth,color=ForestGreen,dashed, line width=0.3ex] coordinates { (0,0.379322) (200,0.379322)};\label{a=0.5}
		\addlegendentry{$C^*_{0.1,0.5}$}
		\addplot[smooth,color=RoyalBlue,line width=0.3ex] coordinates { (0, 0.840896) (200, 0.840896) };
		\addlegendentry{$C^*_{0.5,0}$}
		\addplot[smooth,color=RoyalBlue,dashed, line width=0.3ex] coordinates { (0, 0.330219) (200, 0.330219) };
		\addlegendentry{$C^*_{0.5,0.5}$}
		\addplot[smooth,color=Purple,  line width=0.3ex] coordinates { (0, 0.707107) (200, 0.707107) };
		\addlegendentry{$C^*_{1,0}$}
		\addplot[smooth,color=Purple, dashed, line width=0.3ex] coordinates { (0, 0.27768) (200, 0.27768) };
		\addlegendentry{$C^*_{1,0.5}$}
		\addplot[smooth,color=RubineRed, line width=0.3ex] coordinates { (0, 0.594604) (200, 0.594604) };
		\addlegendentry{$C^*_{1.5,0}$}
		\addplot[smooth,color=RubineRed, dashed, line width=0.3ex] coordinates { (0, 0.2335) (200, 0.2335) };
		\addlegendentry{$C^*_{1.5,0.5}$}
		\addplot[smooth,color=Melon,line width=0.3ex] coordinates { (0,0.5) (200,0.5)};
		\addlegendentry{$C^*_{2,0}$}
		\addplot[smooth,color=Melon,dashed,line width=0.3ex] coordinates { (0,0.19635) (200,0.19635)};
		\addlegendentry{$C^*_{2,0.5}$}
		\end{axis}
		\end{tikzpicture}
		
		\begin{tikzpicture}
		\begin{axis}[
		scale=0.7,
		xmin=0, xmax=60,
		ymin=-0.01, ymax=0.6,
		xmajorgrids,
		ymajorgrids,
		xlabel=$k$,%,ylabel=$y$,
		align =center, title={Model 2 },
		legend columns=2, 
		legend style={/tikz/column 2/.style={
				column sep=5pt,
			},row sep=0.1cm, at={(1,1)}, 
			fill=white,draw=black,nodes=right}]
		\addplot[smooth,color=black,solid,line width=0.2ex]coordinates{
(1.e-6,0.9999997)(0.0150593,0.9962461)(0.0301176,0.9925142)(0.0602342,0.985115)(0.1204674,0.970571)(0.2409337,0.942468)(0.4818664,0.889957)(0.4969247,0.8868295)(0.511983,0.8837194)(0.5420996,0.877551)(0.6023328,0.8654182)(0.7227991,0.8419433)(0.9637319,0.7979638)(0.9800568,0.7951193)(0.9963817,0.7922913)(1.0290315,0.7866843)(1.0943312,0.7756631)(1.2249305,0.7543668)(1.4861292,0.7145679)(1.5024541,0.712197)(1.518779,0.7098393)(1.5514289,0.7051629)(1.6167285,0.6959638)(1.7473279,0.6781609)(2.0085266,0.6447885)(2.0237696,0.6429278)(2.0390127,0.6410762)(2.0694989,0.6374005)(2.1304713,0.6301569)(2.252416,0.6160884)(2.4963055,0.5895316)(2.5115486,0.5879382)(2.5267917,0.5863524)(2.5572779,0.583203)(2.6182502,0.5769927)(2.740195,0.5649155)(2.9840845,0.5420593)(2.9990285,0.5407124)(3.0139726,0.5393713)(3.0438608,0.5367071)(3.1036371,0.5314486)(3.2231898,0.5212042)(3.4622951,0.5017491)(3.4772392,0.5005768)(3.4921833,0.4994095)(3.5220715,0.4970895)(3.5818478,0.4925079)(3.7014005,0.4835719)(3.9405058,0.4665625)(3.9567165,0.4654491)(3.9729272,0.4643405)(4.0053486,0.462138)(4.0701914,0.4577901)(4.1998771,0.4493175)(4.4592484,0.4332173)(4.4754591,0.4322467)(4.4916698,0.4312801)(4.5240912,0.429359)(4.588934,0.4255645)(4.7186197,0.4181608)(4.977991,0.4040573)(4.9931199,0.4032623)(5.0082487,0.4024703)(5.0385065,0.400895)(5.099022,0.3977791)(5.2200531,0.391683)(5.4621152,0.3800082)(5.9462394,0.3585464)(5.9626349,0.3578602)(5.9790304,0.3571766)(6.0118214,0.3558167)(6.0774034,0.3531267)(6.2085674,0.3478625)(6.4708955,0.337776)(6.9955516,0.3192136)(7.0116481,0.3186754)(7.0277446,0.318139)(7.0599376,0.3170714)(7.1243235,0.3149571)(7.2530955,0.3108098)(7.5106394,0.3028272)(8.0257272,0.288009)(8.0407419,0.2875984)(8.0557566,0.2871889)(8.0857859,0.2863733)(8.1458446,0.2847558)(8.2659619,0.2815741)(8.5061967,0.275416)(8.9866661,0.2638634)(10.0286688,0.2418299)(11.0014347,0.2243178)(11.9550641,0.2094361)(12.9897573,0.1953642)(13.9552137,0.1838336)(15.0017339,0.1727761)(16.0291175,0.16314)(16.9872644,0.1550724)(18.0264751,0.1471772)(18.9964491,0.1404995)(20.0474868,0.1339149)(21.079388,0.1280237)(22.0420525,0.1229762)(23.0857807,0.1179346)(24.0602722,0.1135865)(25.0156271,0.109624)(26.0520458,0.1056263)(27.0192277,0.1021499)(28.0674735,0.0986314)(29.0965827,0.0954051)(30.0564551,0.0925804)(31.0973914,0.0897003)(32.0690909,0.0871688)(33.0216538,0.084822)(34.0552805,0.0824144)(35.0196705,0.0802881)(36.0651243,0.0781035)(37.0413413,0.0761683)(37.9984217,0.0743619)(39.036566,0.0724969)(40.0054735,0.0708387)(41.0554448,0.0691253)(42.0862795,0.0675219)(43.0478775,0.0660918)(44.0905393,0.0646081)(45.0639643,0.0632818)(46.0182528,0.0620333)(47.053605,0.0607334)(48.0197206,0.0595685)(49.0668999,0.0583553)(50.0949426,0.0572115)(51.0537486,0.0561843)(52.0936184,0.0551112)(53.0642515,0.0541459)(54.1159483,0.0531374)(55.1485086,0.0521832)(56.1118321,0.0513233)(57.1562195,0.0504225)(58.13137,0.0496095)(59.087384,0.0488375)(60.1244619,0.0480268)(61.0923029,0.0472941)(62.1412078,0.0465249)(63.1709761,0.0457936)(64.1315076,0.045132)(65.173103,0.0444358)(66.1454615,0.0438049)(67.0986835,0.0432037)(68.1329694,0.0425696)(69.0980184,0.0419946)(70.1441313,0.0413886)(71.1210074,0.0408383)(72.078747,0.0403127)(73.1175503,0.0397578)(74.0871169,0.0392535)(75.1377473,0.0387212)(76.1692412,0.0382125)(77.1314983,0.0377498)(78.1748192,0.0372607)(79.1489033,0.0368153)(80.1038508,0.0363889)(81.1398622,0.0359373)(82.1066368,0.0355259)(83.1544752,0.0350905)(84.1831771,0.0346734)(85.1426422,0.0342931)(86.1831711,0.03389)(87.1544632,0.0335222)(88.1066188,0.0331693)(89.1398382,0.0327947)(90.1038208,0.0324528)(91.1488672,0.03209)(92.1246769,0.0317585)(93.08135,0.0314401)(94.1190869,0.0311019)(95.0875871,0.0307927)(96.137151,0.0304646)(97.1675784,0.0301491)(98.1287691,0.0298607)(99.1710235,0.0295541)(100.1440412,0.0292735)(101.0979223,0.0290035)(102.1328672,0.0287162)(103.0985754,0.0284532)(104.1453474,0.0281735)(105.1729828,0.0279042)(106.1313814,0.0276577)(107.1708439,0.0273952)(108.1410696,0.0271546)(109.1923591,0.0268987)(110.224512,0.026652)(111.1874282,0.026426)(112.2314082,0.0261852)(113.2061514,0.0259643)(114.1617581,0.0257513)(115.1984285,0.0255242)(116.1658622,0.0253158)(117.2143598,0.0250938)(118.2437207,0.0248796)(119.2038449,0.0246831)(120.2450329,0.0244734)(121.2169841,0.0242809)(122.1697988,0.0240951)(123.2036773,0.0238967)(124.168319,0.0237144)(125.2140245,0.02352)(126.1904933,0.0233413)(127.1478255,0.0231688)(128.1862215,0.0229845)(129.1553808,0.0228151)(130.2056038,0.0226343)(131.2366903,0.0224596)(132.1985401,0.022299)(133.2414536,0.0221275)(134.2151304,0.0219697)(135.1696706,0.0218172)(136.2052746,0.0216541)(137.1716419,0.0215041)(138.219073,0.0213439)(139.2473675,0.0211889)(140.2064252,0.0210463)(141.2465468,0.0208938)(142.2174316,0.0207535)(143.2693802,0.0206036)(144.3021922,0.0204585)(145.2657675,0.0203249)(146.3104066,0.0201821)(147.2858089,0.0200505)(148.2420747,0.0199232)(149.2794042,0.0197869)(150.247497,0.0196613)(151.2966537,0.0195271)(152.3266737,0.019397)(153.287457,0.0192773)(154.3293041,0.0191491)(155.3019144,0.0190309)(156.2553882,0.0189165)(157.2899258,0.0187939)(158.2552266,0.018681)(159.3015912,0.01856)(160.3288193,0.0184428)(161.2868106,0.0183348)(162.3258657,0.0182191)(163.2956841,0.0181125)(164.3465662,0.0179983)(165.3783118,0.0178876)(166.3408207,0.0177855)(167.3843933,0.0176761)(168.3587292,0.0175752)(169.3139285,0.0174774)(170.3501916,0.0173726)(171.317218,0.0172758)(172.3653082,0.0171722)(173.3942618,0.0170716)(174.3539786,0.0169789)(175.3947593,0.0168795)(176.3663031,0.0167877)(177.3187105,0.0166987)(178.3521816,0.0166032)(179.316416,0.0165151)(180.3617142,0.0164206)(181.3377756,0.0163333)(182.2947004,0.0162487)(183.3326891,0.0161578)(184.301441,0.016074)(185.3512567,0.0159841)(186.3819359,0.0158968)(187.3433783,0.0158162)(188.3858845,0.0157297)(189.3591539,0.0156499)(190.3132868,0.0155724)(191.3484834,0.0154891)(192.3144434,0.0154123)(193.3614671,0.0153298)(194.3893543,0.0152497)(195.3480047,0.0151758)(196.3877189,0.0150964)(197.3581963,0.015023)(198.4097376,0.0149443)(199.4421423,0.0148678)(200.4053102,0.0147972)(201.449542,0.0147214)(202.4245369,0.0146513)(203.3803953,0.0145832)(204.4173176,0.0145101)(205.385003,0.0144425)(206.4337523,0.0143699)(207.463365,0.0142994)(208.4237409,0.0142343)(209.4651807,0.0141643)(210.4373837,0.0140995)(211.3904501,0.0140367)(212.4245803,0.0139691)(213.3894738,0.0139066)(214.4354311,0.0138395)(215.4121516,0.0137774)(216.3697356,0.0137171)(217.4083833,0.0136523)(218.3777943,0.0135923)(219.4282692,0.0135279)(220.4596074,0.0134653)(221.4217089,0.0134074)(222.4648742,0.0133452)(223.4388027,0.0132876)(224.3935947,0.0132317)(225.4294505,0.0131715)(226.3960695,0.0131158)(227.4437523,0.013056)(228.4722986,0.0129978)(229.4316081,0.012944)(230.4719814,0.0128862)(231.4431179,0.0128327)(232.3951179,0.0127806)(233.4281817,0.0127246)(234.3920087,0.0126728)(235.4368996,0.0126171)(236.4125536,0.0125656)(237.3690711,0.0125154)(238.4066525,0.0124615)(239.374997,0.0124116)(240.4244054,0.0123579)(241.4546772,0.0123057)(242.4157122,0.0122574)(243.4578111,0.0122054)(244.4306732,0.0121573)(245.3843987,0.0121105)(246.419188,0.0120601)(247.3847406,0.0120135)(248.431357,0.0119633)(249.4588368,0.0119145)(250.4170799,0.0118693)(251.4563867,0.0118207)(252.4264568,0.0117757)(253.4775908,0.0117273)(254.5095881,0.0116802)(255.4723487,0.0116366)(256.5161731,0.0115897)(257.4907607,0.0115462)(258.4462118,0.0115039)(259.4827267,0.0114584)(260.4500048,0.0114162)(261.4983467,0.0113708)(262.5275521,0.0113266)(263.4875207,0.0112857)(264.5285531,0.0112417)(265.5003487,0.0112009)(266.4530078,0.0111612)(267.4867307,0.0111185)(268.4512168,0.0110789)(269.4967668,0.0110362)(270.4730799,0.0109967)(271.4302565,0.0109583)(272.468497,0.0109169)(273.4375006,0.0108785)(274.4875681,0.0108373)(275.518499,0.0107971)(276.4801931,0.0107598)(277.5229511,0.0107197)(278.4964723,0.0106826)(279.4508569,0.0106464)(280.4863053,0.0106074)(281.452517,0.0105713)(282.4997925,0.0105324)(283.5279314,0.0104945)(284.4868335,0.0104595)(285.5267995,0.0104217)(286.4975287,0.0103866)(287.5493217,0.010349)(288.5819782,0.0103122)(289.5453979,0.0102782)(290.5898814,0.0102415)(291.5651281,0.0102076)(292.5212382,0.0101745)(293.5584122,0.0101388)(294.5263494,0.0101057)(295.5753505,0.0100701)(296.6052149,0.0100355)(297.5658426,0.0100033)(298.6075341,0.0099687)(299.5799889,0.0099366)(300.533307,0.0099053)(301.567689,0.0098716)(302.5328343,0.0098403)(303.5790433,0.0098067)(304.6061158,0.0097739)(305.5639515,0.0097435)(306.602851,0.0097107)(307.5725138,0.0096803)(308.6232403,0.0096476)(309.6548303,0.0096157)(310.6171836,0.0095861)(311.6606006,0.0095543)(312.6347809,0.0095247)(313.5898246,0.0094959)(314.6259322,0.0094649)(315.5928029,0.0094361)(316.6407375,0.0094051)(317.6695355,0.0093749)(318.6290968,0.0093468)(319.6697219,0.0093166)(320.6411102,0.0092886)(321.5933619,0.0092613)(322.6266774,0.0092319)(323.5907562,0.0092046)(324.6358988,0.0091751)(325.6118046,0.0091478)(326.5685739,0.0091212)(327.606407,0.0090925)(328.5750033,0.0090659)(329.6246634,0.0090373)(330.655187,0.0090093)(331.6164738,0.0089833)(332.6588244,0.0089554)(333.6319382,0.0089295)(334.5859155,0.0089042)(335.6209566,0.0088769)(336.5867609,0.0088516)(337.633629,0.0088244)(338.6613606,0.0087978)(339.6198554,0.0087731)(340.659414,0.0087465)(341.6297359,0.0087218)(342.6811216,0.0086953)(343.7133707,0.0086693)(344.676383,0.0086453)(345.7204592,0.0086193)(346.6952985,0.0085953)(347.6510013,0.0085718)(348.687768,0.0085465)(349.6552978,0.008523)(350.7038915,0.0084977)(351.7333487,0.008473)(352.693569,0.00845)(353.7348532,0.0084253)(354.7069006,0.0084024)(355.6598114,0.00838)(356.693786,0.0083559)(357.6585239,0.0083335)(358.7043256,0.0083094)(359.6808905,0.0082869)(360.6383189,0.0082651)(361.6768111,0.0082415)(362.6460665,0.0082196)(363.6963857,0.008196)(364.7275684,0.008173)(365.6895143,0.0081516)(366.732524,0.0081286)(367.7062969,0.0081072)(368.6609333,0.0080864)(369.6966335,0.0080638)(370.6630969,0.0080429)(371.7106241,0.0080204)(372.7390148,0.0079984)(373.6981687,0.007978)(374.7383864,0.007956)(375.7093674,0.0079356)(376.7614121,0.0079136)(377.7943203,0.0078921)(378.7579918,0.0078721)(379.802727,0.0078506)(380.7782255,0.0078306)(381.7345874,0.0078111)(382.7720132,0.0077901)(383.7402021,0.0077705)(384.7894549,0.0077495)(385.8195711,0.0077289)(386.7804506,0.0077098)(387.8223938,0.0076892)(388.7951003,0.0076701)(389.7486703,0.0076514)(390.783304,0.0076313)(391.748701,0.0076126)(392.7951618,0.0075924)(393.822486,0.0075728)(394.7805735,0.0075545)(395.8197247,0.0075348)(396.7896392,0.0075165)(397.8406176,0.0074967)(398.8724593,0.0074774)(399.8350643,0.0074595)(400.8787331,0.0074402)(401.8531652,0.0074223)(402.8084606,0.0074048)(403.8448199,0.0073859)(404.8119424,0.0073684)(405.8601288,0.0073494)(406.8891786,0.007331)(407.8489916,0.0073138)(408.8898684,0.0072953)(409.8615084,0.0072781)(410.8140119,0.0072613)(411.8475792,0.0072432)(412.8119097,0.0072264)(413.8573041,0.0072082)(414.8334617,0.0071913)(415.7904827,0.0071749)(416.8285675,0.0071571)(417.7974156,0.0071406)(418.8473275,0.0071228)(419.8781028,0.0071054)(420.8396413,0.0070893)(421.8822437,0.0070719)(422.8556093,0.0070557)(423.8098383,0.0070399)(424.8451311,0.0070228)(425.8111872,0.007007)(426.8583071,0.0069899)(427.8862904,0.0069732)(428.845037,0.0069577)(429.8848474,0.0069409)(430.855421,0.0069254)(431.9070584,0.0069086)(432.9395593,0.0068922)(433.9028233,0.006877)(434.9471513,0.0068606)(435.9222424,0.0068453)(436.878197,0.0068304)(437.9152154,0.0068143)(438.882997,0.0067994)(439.9318424,0.0067832)(440.9615513,0.0067675)(441.9220234,0.0067529)(442.9635593,0.0067371)(443.9358584,0.0067224)(444.889021,0.0067081)(445.9232474,0.0066926)(446.8882371,0.0066782)(447.9342905,0.0066627)(448.9111072,0.0066483)(449.8687873,0.0066342)(450.9075312,0.006619)(451.8770384,0.0066049)(452.9276094,0.0065896)(453.9590438,0.0065747)(454.9212414,0.0065609)(455.9645029,0.0065459)(456.9385276,0.0065321)(457.8934157,0.0065185)(458.9293677,0.0065039)(459.8960828,0.0064903)(460.9438618,0.0064756)(461.9725042,0.0064612)(462.9319099,0.0064479)(463.9723794,0.0064335)(464.9436121,0.0064202)(465.8957082,0.0064071)(466.9288682,0.006393)(467.8927914,0.0063799)(468.9377784,0.0063657)(469.9135286,0.0063526)(470.8701423,0.0063398)(471.9078197,0.0063259)(472.8762605,0.006313)(473.925765,0.0062991)(474.956133,0.0062855)(475.9172642,0.0062728)(476.9594592,0.0062592)(477.9324174,0.0062465)(478.8862391,0.0062341)(479.9211246,0.0062208)(480.8867733,0.0062083)(481.9334859,0.0061949)(482.9610619,0.0061818)(483.9194011,0.0061696)(484.9588041,0.0061565)(485.9289704,0.0061442)(486.9802005,0.006131)(488.012294,0.0061181)(488.9751507,0.0061061)(490.0190713,0.0060932)(490.9937551,0.0060811)(491.9493023,0.0060694)(492.9859133,0.0060567)(493.9532876,0.0060449)(495.0017257,0.0060321)(496.0310272,0.0060197)(496.991092,0.0060081)(498.0322205,0.0059956)(499.0041124,0.005984)(499.0196731,0.0059838)(499.0352338,0.0059836)(499.0663553,0.0059832)(499.1285982,0.0059825)(499.253084,0.005981)(499.5020557,0.005978)(499.5176164,0.0059779)(499.5331771,0.0059777)(499.5642986,0.0059773)(499.6265415,0.0059766)(499.7510273,0.0059751)(499.7665881,0.0059749)(499.7821488,0.0059747)(499.8132703,0.0059743)(499.8755132,0.0059736)(499.8910739,0.0059734)(499.9066346,0.0059732)(499.9377561,0.0059729)(499.9533168,0.0059727)(499.9688775,0.0059725)(499.9844383,0.0059723)(499.999999,0.0059721)
		};
		%\addlegendentry{$\mathcal{C}^\infty_{k/2}$}
		%\addlegendimage{empty legend}
		%\addlegendentry{}
		\addplot[smooth,color=ForestGreen,line width=0.3ex] coordinates { (0,0.475426) (200,0.475426)};
		%\addlegendentry{$C^*_{0.1,0}$}
		\addplot[smooth,color=ForestGreen,dashed, line width=0.3ex] coordinates { (0,0.185617) (200,0.185617)};
		%\addlegendentry{$C^*_{0.1,0.5}$}
		\addplot[smooth,color=RoyalBlue,line width=0.3ex] coordinates { (0, 0.392121) (200, 0.392121) };
		%\addlegendentry{$C^*_{0.3,0}$}
		\addplot[smooth,color=RoyalBlue,dashed, line width=0.3ex] coordinates { (0, 0.149901) (200, 0.149901) };
		%\addlegendentry{$C^*_{0.3,0.5}$}
		\addplot[smooth,color=Purple,  line width=0.3ex] coordinates { (0, 0.3133) (200, 0.3133) };
		%\addlegendentry{$C^*_{0.5,0}$}
		\addplot[smooth,color=Purple, dashed, line width=0.3ex] coordinates { (0, 0.117116) (200, 0.117116) };
		%\addlegendentry{$C^*_{0.5,0.5}$}
		\addplot[smooth,color=RubineRed, line width=0.3ex] coordinates { (0, 0.25383) (200, 0.25383) };
		%\addlegendentry{$C^*_{0.7,0}$}
		\addplot[smooth,color=RubineRed, dashed, line width=0.3ex] coordinates { (0, 0.0930847) (200, 0.0930847) };
		%\addlegendentry{$C^*_{0.7,0.5}$}
		\addplot[smooth,color=Melon,line width=0.3ex] coordinates { (0, 0.207851) (200, 0.207851)};
		%\addlegendentry{$C^*_{0.9,0}$}
		\addplot[smooth,color=Melon,dashed,line width=0.3ex] coordinates { (0,0.0749582) (200,0.0749582)};
		%\addlegendentry{$C^*_{0.9,0.5}$}
%		\addplot[smooth,color=Goldenrod,line width=0.3ex] coordinates { (0, 0.3133) (200, 0.3133) };
		%\addlegendentry{$C^*_{1,0}$}
%		\addplot[smooth,color=Goldenrod,dashed,line width=0.3ex] coordinates { (0, 0.117116) (200, 0.117116) };
		%\addlegendentry{$C^*_{1,0.5}$}
		\end{axis}
		\end{tikzpicture}
		\quad 
		\begin{tikzpicture}
		\begin{axis}[
		scale=0.7,
		xmin=0, xmax=170,
		ymin=-0.01, ymax=0.3,
		xmajorgrids,
		ymajorgrids,
		xlabel=$k$,%,ylabel=$y$,
		align =center, title={Model 3},
		legend columns=2, 
		legend style={/tikz/column 2/.style={
				column sep=5pt,
			},row sep=0.1cm, at={(1,1)}, 
			fill=white,draw=black,nodes=right}]
		\addplot[smooth,color=black,solid,line width=0.2ex]coordinates{
(1.e-6,0.9999997)(0.0150593,0.9962461)(0.0301176,0.9925142)(0.0602342,0.985115)(0.1204674,0.970571)(0.2409337,0.942468)(0.4818664,0.889957)(0.4969247,0.8868295)(0.511983,0.8837194)(0.5420996,0.877551)(0.6023328,0.8654182)(0.7227991,0.8419433)(0.9637319,0.7979638)(0.9800568,0.7951193)(0.9963817,0.7922913)(1.0290315,0.7866843)(1.0943312,0.7756631)(1.2249305,0.7543668)(1.4861292,0.7145679)(1.5024541,0.712197)(1.518779,0.7098393)(1.5514289,0.7051629)(1.6167285,0.6959638)(1.7473279,0.6781609)(2.0085266,0.6447885)(2.0237696,0.6429278)(2.0390127,0.6410762)(2.0694989,0.6374005)(2.1304713,0.6301569)(2.252416,0.6160884)(2.4963055,0.5895316)(2.5115486,0.5879382)(2.5267917,0.5863524)(2.5572779,0.583203)(2.6182502,0.5769927)(2.740195,0.5649155)(2.9840845,0.5420593)(2.9990285,0.5407124)(3.0139726,0.5393713)(3.0438608,0.5367071)(3.1036371,0.5314486)(3.2231898,0.5212042)(3.4622951,0.5017491)(3.4772392,0.5005768)(3.4921833,0.4994095)(3.5220715,0.4970895)(3.5818478,0.4925079)(3.7014005,0.4835719)(3.9405058,0.4665625)(3.9567165,0.4654491)(3.9729272,0.4643405)(4.0053486,0.462138)(4.0701914,0.4577901)(4.1998771,0.4493175)(4.4592484,0.4332173)(4.4754591,0.4322467)(4.4916698,0.4312801)(4.5240912,0.429359)(4.588934,0.4255645)(4.7186197,0.4181608)(4.977991,0.4040573)(4.9931199,0.4032623)(5.0082487,0.4024703)(5.0385065,0.400895)(5.099022,0.3977791)(5.2200531,0.391683)(5.4621152,0.3800082)(5.9462394,0.3585464)(5.9626349,0.3578602)(5.9790304,0.3571766)(6.0118214,0.3558167)(6.0774034,0.3531267)(6.2085674,0.3478625)(6.4708955,0.337776)(6.9955516,0.3192136)(7.0116481,0.3186754)(7.0277446,0.318139)(7.0599376,0.3170714)(7.1243235,0.3149571)(7.2530955,0.3108098)(7.5106394,0.3028272)(8.0257272,0.288009)(8.0407419,0.2875984)(8.0557566,0.2871889)(8.0857859,0.2863733)(8.1458446,0.2847558)(8.2659619,0.2815741)(8.5061967,0.275416)(8.9866661,0.2638634)(10.0286688,0.2418299)(11.0014347,0.2243178)(11.9550641,0.2094361)(12.9897573,0.1953642)(13.9552137,0.1838336)(15.0017339,0.1727761)(16.0291175,0.16314)(16.9872644,0.1550724)(18.0264751,0.1471772)(18.9964491,0.1404995)(20.0474868,0.1339149)(21.079388,0.1280237)(22.0420525,0.1229762)(23.0857807,0.1179346)(24.0602722,0.1135865)(25.0156271,0.109624)(26.0520458,0.1056263)(27.0192277,0.1021499)(28.0674735,0.0986314)(29.0965827,0.0954051)(30.0564551,0.0925804)(31.0973914,0.0897003)(32.0690909,0.0871688)(33.0216538,0.084822)(34.0552805,0.0824144)(35.0196705,0.0802881)(36.0651243,0.0781035)(37.0413413,0.0761683)(37.9984217,0.0743619)(39.036566,0.0724969)(40.0054735,0.0708387)(41.0554448,0.0691253)(42.0862795,0.0675219)(43.0478775,0.0660918)(44.0905393,0.0646081)(45.0639643,0.0632818)(46.0182528,0.0620333)(47.053605,0.0607334)(48.0197206,0.0595685)(49.0668999,0.0583553)(50.0949426,0.0572115)(51.0537486,0.0561843)(52.0936184,0.0551112)(53.0642515,0.0541459)(54.1159483,0.0531374)(55.1485086,0.0521832)(56.1118321,0.0513233)(57.1562195,0.0504225)(58.13137,0.0496095)(59.087384,0.0488375)(60.1244619,0.0480268)(61.0923029,0.0472941)(62.1412078,0.0465249)(63.1709761,0.0457936)(64.1315076,0.045132)(65.173103,0.0444358)(66.1454615,0.0438049)(67.0986835,0.0432037)(68.1329694,0.0425696)(69.0980184,0.0419946)(70.1441313,0.0413886)(71.1210074,0.0408383)(72.078747,0.0403127)(73.1175503,0.0397578)(74.0871169,0.0392535)(75.1377473,0.0387212)(76.1692412,0.0382125)(77.1314983,0.0377498)(78.1748192,0.0372607)(79.1489033,0.0368153)(80.1038508,0.0363889)(81.1398622,0.0359373)(82.1066368,0.0355259)(83.1544752,0.0350905)(84.1831771,0.0346734)(85.1426422,0.0342931)(86.1831711,0.03389)(87.1544632,0.0335222)(88.1066188,0.0331693)(89.1398382,0.0327947)(90.1038208,0.0324528)(91.1488672,0.03209)(92.1246769,0.0317585)(93.08135,0.0314401)(94.1190869,0.0311019)(95.0875871,0.0307927)(96.137151,0.0304646)(97.1675784,0.0301491)(98.1287691,0.0298607)(99.1710235,0.0295541)(100.1440412,0.0292735)(101.0979223,0.0290035)(102.1328672,0.0287162)(103.0985754,0.0284532)(104.1453474,0.0281735)(105.1729828,0.0279042)(106.1313814,0.0276577)(107.1708439,0.0273952)(108.1410696,0.0271546)(109.1923591,0.0268987)(110.224512,0.026652)(111.1874282,0.026426)(112.2314082,0.0261852)(113.2061514,0.0259643)(114.1617581,0.0257513)(115.1984285,0.0255242)(116.1658622,0.0253158)(117.2143598,0.0250938)(118.2437207,0.0248796)(119.2038449,0.0246831)(120.2450329,0.0244734)(121.2169841,0.0242809)(122.1697988,0.0240951)(123.2036773,0.0238967)(124.168319,0.0237144)(125.2140245,0.02352)(126.1904933,0.0233413)(127.1478255,0.0231688)(128.1862215,0.0229845)(129.1553808,0.0228151)(130.2056038,0.0226343)(131.2366903,0.0224596)(132.1985401,0.022299)(133.2414536,0.0221275)(134.2151304,0.0219697)(135.1696706,0.0218172)(136.2052746,0.0216541)(137.1716419,0.0215041)(138.219073,0.0213439)(139.2473675,0.0211889)(140.2064252,0.0210463)(141.2465468,0.0208938)(142.2174316,0.0207535)(143.2693802,0.0206036)(144.3021922,0.0204585)(145.2657675,0.0203249)(146.3104066,0.0201821)(147.2858089,0.0200505)(148.2420747,0.0199232)(149.2794042,0.0197869)(150.247497,0.0196613)(151.2966537,0.0195271)(152.3266737,0.019397)(153.287457,0.0192773)(154.3293041,0.0191491)(155.3019144,0.0190309)(156.2553882,0.0189165)(157.2899258,0.0187939)(158.2552266,0.018681)(159.3015912,0.01856)(160.3288193,0.0184428)(161.2868106,0.0183348)(162.3258657,0.0182191)(163.2956841,0.0181125)(164.3465662,0.0179983)(165.3783118,0.0178876)(166.3408207,0.0177855)(167.3843933,0.0176761)(168.3587292,0.0175752)(169.3139285,0.0174774)(170.3501916,0.0173726)(171.317218,0.0172758)(172.3653082,0.0171722)(173.3942618,0.0170716)(174.3539786,0.0169789)(175.3947593,0.0168795)(176.3663031,0.0167877)(177.3187105,0.0166987)(178.3521816,0.0166032)(179.316416,0.0165151)(180.3617142,0.0164206)(181.3377756,0.0163333)(182.2947004,0.0162487)(183.3326891,0.0161578)(184.301441,0.016074)(185.3512567,0.0159841)(186.3819359,0.0158968)(187.3433783,0.0158162)(188.3858845,0.0157297)(189.3591539,0.0156499)(190.3132868,0.0155724)(191.3484834,0.0154891)(192.3144434,0.0154123)(193.3614671,0.0153298)(194.3893543,0.0152497)(195.3480047,0.0151758)(196.3877189,0.0150964)(197.3581963,0.015023)(198.4097376,0.0149443)(199.4421423,0.0148678)(200.4053102,0.0147972)(201.449542,0.0147214)(202.4245369,0.0146513)(203.3803953,0.0145832)(204.4173176,0.0145101)(205.385003,0.0144425)(206.4337523,0.0143699)(207.463365,0.0142994)(208.4237409,0.0142343)(209.4651807,0.0141643)(210.4373837,0.0140995)(211.3904501,0.0140367)(212.4245803,0.0139691)(213.3894738,0.0139066)(214.4354311,0.0138395)(215.4121516,0.0137774)(216.3697356,0.0137171)(217.4083833,0.0136523)(218.3777943,0.0135923)(219.4282692,0.0135279)(220.4596074,0.0134653)(221.4217089,0.0134074)(222.4648742,0.0133452)(223.4388027,0.0132876)(224.3935947,0.0132317)(225.4294505,0.0131715)(226.3960695,0.0131158)(227.4437523,0.013056)(228.4722986,0.0129978)(229.4316081,0.012944)(230.4719814,0.0128862)(231.4431179,0.0128327)(232.3951179,0.0127806)(233.4281817,0.0127246)(234.3920087,0.0126728)(235.4368996,0.0126171)(236.4125536,0.0125656)(237.3690711,0.0125154)(238.4066525,0.0124615)(239.374997,0.0124116)(240.4244054,0.0123579)(241.4546772,0.0123057)(242.4157122,0.0122574)(243.4578111,0.0122054)(244.4306732,0.0121573)(245.3843987,0.0121105)(246.419188,0.0120601)(247.3847406,0.0120135)(248.431357,0.0119633)(249.4588368,0.0119145)(250.4170799,0.0118693)(251.4563867,0.0118207)(252.4264568,0.0117757)(253.4775908,0.0117273)(254.5095881,0.0116802)(255.4723487,0.0116366)(256.5161731,0.0115897)(257.4907607,0.0115462)(258.4462118,0.0115039)(259.4827267,0.0114584)(260.4500048,0.0114162)(261.4983467,0.0113708)(262.5275521,0.0113266)(263.4875207,0.0112857)(264.5285531,0.0112417)(265.5003487,0.0112009)(266.4530078,0.0111612)(267.4867307,0.0111185)(268.4512168,0.0110789)(269.4967668,0.0110362)(270.4730799,0.0109967)(271.4302565,0.0109583)(272.468497,0.0109169)(273.4375006,0.0108785)(274.4875681,0.0108373)(275.518499,0.0107971)(276.4801931,0.0107598)(277.5229511,0.0107197)(278.4964723,0.0106826)(279.4508569,0.0106464)(280.4863053,0.0106074)(281.452517,0.0105713)(282.4997925,0.0105324)(283.5279314,0.0104945)(284.4868335,0.0104595)(285.5267995,0.0104217)(286.4975287,0.0103866)(287.5493217,0.010349)(288.5819782,0.0103122)(289.5453979,0.0102782)(290.5898814,0.0102415)(291.5651281,0.0102076)(292.5212382,0.0101745)(293.5584122,0.0101388)(294.5263494,0.0101057)(295.5753505,0.0100701)(296.6052149,0.0100355)(297.5658426,0.0100033)(298.6075341,0.0099687)(299.5799889,0.0099366)(300.533307,0.0099053)(301.567689,0.0098716)(302.5328343,0.0098403)(303.5790433,0.0098067)(304.6061158,0.0097739)(305.5639515,0.0097435)(306.602851,0.0097107)(307.5725138,0.0096803)(308.6232403,0.0096476)(309.6548303,0.0096157)(310.6171836,0.0095861)(311.6606006,0.0095543)(312.6347809,0.0095247)(313.5898246,0.0094959)(314.6259322,0.0094649)(315.5928029,0.0094361)(316.6407375,0.0094051)(317.6695355,0.0093749)(318.6290968,0.0093468)(319.6697219,0.0093166)(320.6411102,0.0092886)(321.5933619,0.0092613)(322.6266774,0.0092319)(323.5907562,0.0092046)(324.6358988,0.0091751)(325.6118046,0.0091478)(326.5685739,0.0091212)(327.606407,0.0090925)(328.5750033,0.0090659)(329.6246634,0.0090373)(330.655187,0.0090093)(331.6164738,0.0089833)(332.6588244,0.0089554)(333.6319382,0.0089295)(334.5859155,0.0089042)(335.6209566,0.0088769)(336.5867609,0.0088516)(337.633629,0.0088244)(338.6613606,0.0087978)(339.6198554,0.0087731)(340.659414,0.0087465)(341.6297359,0.0087218)(342.6811216,0.0086953)(343.7133707,0.0086693)(344.676383,0.0086453)(345.7204592,0.0086193)(346.6952985,0.0085953)(347.6510013,0.0085718)(348.687768,0.0085465)(349.6552978,0.008523)(350.7038915,0.0084977)(351.7333487,0.008473)(352.693569,0.00845)(353.7348532,0.0084253)(354.7069006,0.0084024)(355.6598114,0.00838)(356.693786,0.0083559)(357.6585239,0.0083335)(358.7043256,0.0083094)(359.6808905,0.0082869)(360.6383189,0.0082651)(361.6768111,0.0082415)(362.6460665,0.0082196)(363.6963857,0.008196)(364.7275684,0.008173)(365.6895143,0.0081516)(366.732524,0.0081286)(367.7062969,0.0081072)(368.6609333,0.0080864)(369.6966335,0.0080638)(370.6630969,0.0080429)(371.7106241,0.0080204)(372.7390148,0.0079984)(373.6981687,0.007978)(374.7383864,0.007956)(375.7093674,0.0079356)(376.7614121,0.0079136)(377.7943203,0.0078921)(378.7579918,0.0078721)(379.802727,0.0078506)(380.7782255,0.0078306)(381.7345874,0.0078111)(382.7720132,0.0077901)(383.7402021,0.0077705)(384.7894549,0.0077495)(385.8195711,0.0077289)(386.7804506,0.0077098)(387.8223938,0.0076892)(388.7951003,0.0076701)(389.7486703,0.0076514)(390.783304,0.0076313)(391.748701,0.0076126)(392.7951618,0.0075924)(393.822486,0.0075728)(394.7805735,0.0075545)(395.8197247,0.0075348)(396.7896392,0.0075165)(397.8406176,0.0074967)(398.8724593,0.0074774)(399.8350643,0.0074595)(400.8787331,0.0074402)(401.8531652,0.0074223)(402.8084606,0.0074048)(403.8448199,0.0073859)(404.8119424,0.0073684)(405.8601288,0.0073494)(406.8891786,0.007331)(407.8489916,0.0073138)(408.8898684,0.0072953)(409.8615084,0.0072781)(410.8140119,0.0072613)(411.8475792,0.0072432)(412.8119097,0.0072264)(413.8573041,0.0072082)(414.8334617,0.0071913)(415.7904827,0.0071749)(416.8285675,0.0071571)(417.7974156,0.0071406)(418.8473275,0.0071228)(419.8781028,0.0071054)(420.8396413,0.0070893)(421.8822437,0.0070719)(422.8556093,0.0070557)(423.8098383,0.0070399)(424.8451311,0.0070228)(425.8111872,0.007007)(426.8583071,0.0069899)(427.8862904,0.0069732)(428.845037,0.0069577)(429.8848474,0.0069409)(430.855421,0.0069254)(431.9070584,0.0069086)(432.9395593,0.0068922)(433.9028233,0.006877)(434.9471513,0.0068606)(435.9222424,0.0068453)(436.878197,0.0068304)(437.9152154,0.0068143)(438.882997,0.0067994)(439.9318424,0.0067832)(440.9615513,0.0067675)(441.9220234,0.0067529)(442.9635593,0.0067371)(443.9358584,0.0067224)(444.889021,0.0067081)(445.9232474,0.0066926)(446.8882371,0.0066782)(447.9342905,0.0066627)(448.9111072,0.0066483)(449.8687873,0.0066342)(450.9075312,0.006619)(451.8770384,0.0066049)(452.9276094,0.0065896)(453.9590438,0.0065747)(454.9212414,0.0065609)(455.9645029,0.0065459)(456.9385276,0.0065321)(457.8934157,0.0065185)(458.9293677,0.0065039)(459.8960828,0.0064903)(460.9438618,0.0064756)(461.9725042,0.0064612)(462.9319099,0.0064479)(463.9723794,0.0064335)(464.9436121,0.0064202)(465.8957082,0.0064071)(466.9288682,0.006393)(467.8927914,0.0063799)(468.9377784,0.0063657)(469.9135286,0.0063526)(470.8701423,0.0063398)(471.9078197,0.0063259)(472.8762605,0.006313)(473.925765,0.0062991)(474.956133,0.0062855)(475.9172642,0.0062728)(476.9594592,0.0062592)(477.9324174,0.0062465)(478.8862391,0.0062341)(479.9211246,0.0062208)(480.8867733,0.0062083)(481.9334859,0.0061949)(482.9610619,0.0061818)(483.9194011,0.0061696)(484.9588041,0.0061565)(485.9289704,0.0061442)(486.9802005,0.006131)(488.012294,0.0061181)(488.9751507,0.0061061)(490.0190713,0.0060932)(490.9937551,0.0060811)(491.9493023,0.0060694)(492.9859133,0.0060567)(493.9532876,0.0060449)(495.0017257,0.0060321)(496.0310272,0.0060197)(496.991092,0.0060081)(498.0322205,0.0059956)(499.0041124,0.005984)(499.0196731,0.0059838)(499.0352338,0.0059836)(499.0663553,0.0059832)(499.1285982,0.0059825)(499.253084,0.005981)(499.5020557,0.005978)(499.5176164,0.0059779)(499.5331771,0.0059777)(499.5642986,0.0059773)(499.6265415,0.0059766)(499.7510273,0.0059751)(499.7665881,0.0059749)(499.7821488,0.0059747)(499.8132703,0.0059743)(499.8755132,0.0059736)(499.8910739,0.0059734)(499.9066346,0.0059732)(499.9377561,0.0059729)(499.9533168,0.0059727)(499.9688775,0.0059725)(499.9844383,0.0059723)(499.999999,0.0059721)
		};
		% \addlegendentry{$\mathcal{C}^\infty_{k/2}$}
		% \addlegendimage{empty legend}
		% \addlegendentry{}
		\addplot[smooth,color=ForestGreen,line width=0.3ex] coordinates { (0,0.226394) (200,0.226394)};
		% \addlegendentry{$C^*_{0.1,0}$}
		\addplot[smooth,color=ForestGreen,dashed, line width=0.3ex] coordinates { (0,0.0895676) (200,0.0895676)};
		%  \addlegendentry{$C^*_{0.1,0.5}$}
		\addplot[smooth,color=RoyalBlue,line width=0.3ex] coordinates { (0, 0.156848) (200, 0.156848) };
		% \addlegendentry{$C^*_{0.3,0}$}
		\addplot[smooth,color=RoyalBlue,dashed, line width=0.3ex] coordinates { (0, 0.0634203) (200, 0.0634203) };
		% \addlegendentry{$C^*_{0.3,0.5}$}
		\addplot[smooth,color=Purple,  line width=0.3ex] coordinates { (0, 0.104433) (200, 0.104433) };
		%\addlegendentry{$C^*_{0.5,0}$}
		\addplot[smooth,color=Purple, dashed, line width=0.3ex] coordinates { (0, 0.0428829) (200, 0.0428829) };
		%\addlegendentry{$C^*_{0.5,0.5}$}
		\addplot[smooth,color=RubineRed, line width=0.3ex] coordinates { (0, 0.072523) (200, 0.072523) };
		%\addlegendentry{$C^*_{0.7,0}$}
		\addplot[smooth,color=RubineRed, dashed, line width=0.3ex] coordinates { (0, 0.0300118) (200, 0.0300118) };
		%\addlegendentry{$C^*_{0.7,0.5}$}
		\addplot[smooth,color=Melon,line width=0.3ex] coordinates { (0,0.0519627) (200,0.0519627)};
		%\addlegendentry{$C^*_{0.9,0}$}
		\addplot[smooth,color=Melon,dashed,line width=0.3ex] coordinates { (0,0.0215725) (200,0.0215725)};
		%\addlegendentry{$C^*_{0.9,0.5}$}
%		\addplot[smooth,color=Goldenrod,line width=0.3ex] coordinates { (0, 0.104433) (200, 0.104433) };
		%\addlegendentry{$C^*_{1,0}$}
	%	\addplot[smooth,color=Goldenrod,dashed,line width=0.3ex] coordinates { (0, 0.0428829) (200, 0.0428829) };
		%\addlegendentry{$C^*_{1,0.5}$}
		\end{axis}
		\end{tikzpicture}
	\end{center}
	\caption{ Study of the constant $C^*_{\gamma,\alpha}$ defined in \eqref{threshold}, for the physical values of   $\alpha=0$ (\ref{a=0}) and $\alpha=0.5$ (\ref{a=0.5}), while varying $\gamma \in (0,2]$.}
	\label{Fig}
\end{figure}

\subsection{The upper bound for the transition potential function}

The second important estimate  enable us to control the positive contributions of the moments of the collisional polyatomic operator are simply derived from obtaining  local upper bound  estimates that  establish the control of the potentials \eqref{tf} by  their associated  Lebesgue weights \eqref{brackets}. This is a much simpler argument that the one needed to obtain the coercive  lower bounds calculated in the previous Section~\ref{Coerciveness}.  \\
\begin{lemma}[Upper bound]\label{lemma upper bound} For any $\gamma \in (0,2]$ the following inequality holds
	\begin{equation}\label{up tf}
	\tf \leq 2^{\frac{3 \gamma}{2} - 1} \left(   \langle v, I \rangle^\gamma   + \langle v_*, I_* \rangle^\gamma  \right),
	\end{equation}
	for  $v, v_* \in \mathbb{R}^3$ and $I,I_* \in [0,\infty)$.
\end{lemma}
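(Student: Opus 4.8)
The asserted inequality is elementary, so the plan is to split $\tf=\left|v-v_*\right|^{\gamma}+\left(\tfrac{I+I_*}{m}\right)^{\gamma/2}$ into its kinetic and internal parts, bound each separately in terms of the Lebesgue weights \eqref{brackets}, and then recombine; the whole of the work is the bookkeeping of powers of $2$. The only ingredients I will use are: the triangle inequality $\left|v-v_*\right|\le\left|v\right|+\left|v_*\right|$; the two pointwise bounds $\tfrac12\left|v\right|^{2}\le\langle v,I\rangle^{2}$ and $\tfrac{I}{m}\le\langle v,I\rangle^{2}$, both immediate from the definition \eqref{brackets}; the elementary power inequality $(a+b)^{\theta}\le\max\{1,2^{\theta-1}\}\,(a^{\theta}+b^{\theta})$ for $a,b\ge0$ and $\theta>0$ (convexity if $\theta\ge1$, subadditivity if $\theta\le1$); and, since $\gamma/2\le1$, the subadditivity $(x+y)^{\gamma/2}\le x^{\gamma/2}+y^{\gamma/2}$.

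\textbf{The two termwise bounds.} For the kinetic part, $\left|v\right|\le\sqrt2\,\langle v,I\rangle$ and the analogous estimate for $v_*$ give $\left|v-v_*\right|^{\gamma}\le\big(\sqrt2\,\langle v,I\rangle+\sqrt2\,\langle v_*,I_*\rangle\big)^{\gamma}=2^{\gamma/2}\big(\langle v,I\rangle+\langle v_*,I_*\rangle\big)^{\gamma}\le 2^{\gamma/2}\max\{1,2^{\gamma-1}\}\big(\langle v,I\rangle^{\gamma}+\langle v_*,I_*\rangle^{\gamma}\big)$, where the last factor is $2^{\gamma-1}$ for $\gamma\in[1,2]$ and $1$ for $\gamma\in(0,1)$; note $2^{\gamma/2}\max\{1,2^{\gamma-1}\}\le 2^{\gamma}$ on $(0,2]$. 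For the internal part, $\tfrac{I+I_*}{m}=\tfrac{I}{m}+\tfrac{I_*}{m}$ together with the subadditivity of $t\mapsto t^{\gamma/2}$ and $\tfrac{I}{m}\le\langle v,I\rangle^{2}$ gives $\big(\tfrac{I+I_*}{m}\big)^{\gamma/2}\le\big(\tfrac{I}{m}\big)^{\gamma/2}+\big(\tfrac{I_*}{m}\big)^{\gamma/2}\le\langle v,I\rangle^{\gamma}+\langle v_*,I_*\rangle^{\gamma}$.

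\textbf{Recombination and the main obstacle.} Adding the two estimates yields $\tf\le\big(2^{\gamma/2}\max\{1,2^{\gamma-1}\}+1\big)\big(\langle v,I\rangle^{\gamma}+\langle v_*,I_*\rangle^{\gamma}\big)$, and it remains to check that the bracketed constant does not exceed $2^{3\gamma/2-1}$ on $(0,2]$. This last step is where the real care is needed and is the point I expect to be delicate: naively summing the two termwise constants is too lossy, so one should either treat $\gamma\in[1,2]$ and $\gamma\in(0,1)$ on separate intervals, or — the route I would actually take — avoid splitting $\tf$ at all by first applying Jensen's inequality for the concave map $t\mapsto t^{\gamma/2}$, $\tf\le 2^{1-\gamma/2}\big(\left|v-v_*\right|^{2}+\tfrac{I+I_*}{m}\big)^{\gamma/2}$, then bounding the base via $\left|v-v_*\right|^{2}+\tfrac{I+I_*}{m}\le 2\left|v\right|^{2}+2\left|v_*\right|^{2}+\tfrac{I+I_*}{m}\le 4\big(\langle v,I\rangle^{2}+\langle v_*,I_*\rangle^{2}\big)$ (the internal coefficient $1$ and the additive constant $2$ inside each bracket are both dominated by $4$), and finally using subadditivity of $t\mapsto t^{\gamma/2}$ on $\langle v,I\rangle^{2}+\langle v_*,I_*\rangle^{2}$. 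Tracking the powers of $2$ through whichever chain is chosen, sharpened by the $\gamma\lessgtr1$ dichotomy, is exactly what pins down the stated constant; I anticipate the author's proof to follow essentially this line, with the constant-optimization being the only genuinely nontrivial bookkeeping.
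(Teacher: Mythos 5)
Your ``route I would actually take'' --- apply the concavity bound $a^{\gamma/2}+b^{\gamma/2}\le 2^{1-\gamma/2}(a+b)^{\gamma/2}$ to $a=\left|v-v_*\right|^2$ and $b=(I+I_*)/m$, then bound the base and invoke subadditivity of $t\mapsto t^{\gamma/2}$ --- is essentially the decomposition used in the paper's proof, and your instinct that the constant is where the real care lies is exactly right. Carrying the bookkeeping through, however, does \emph{not} produce $2^{3\gamma/2-1}$. The concavity step costs $2^{1-\gamma/2}$, the base estimate $\left|v-v_*\right|^2+(I+I_*)/m\le 4\bigl(\langle v,I\rangle^2+\langle v_*,I_*\rangle^2\bigr)$ costs $4^{\gamma/2}=2^\gamma$, and subadditivity is free; the net factor is $2^{1+\gamma/2}$, which equals $2^{3\gamma/2-1}$ only at $\gamma=2$ and is strictly larger for $\gamma<2$. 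The paper's own chain \eqref{pom} slips at exactly this point: after writing $\left|v-v_*\right|^\gamma=2^\gamma\bigl(\tfrac14\left|v-v_*\right|^2\bigr)^{\gamma/2}$ and pulling $2^\gamma$ out front, the second displayed inequality there reduces to $a^{\gamma/2}+b^{\gamma/2}\le 2^{\gamma/2-1}(a+b)^{\gamma/2}$, which has the sign of the exponent reversed from the correct $2^{1-\gamma/2}$ and already fails at $b=0$ whenever $\gamma<2$.

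In fact the claimed estimate \eqref{up tf} is false for small $\gamma$, so no bookkeeping can rescue the stated constant. Take $v=v_*=0$, $I_*=0$, and send $I/m\to\infty$: the left side of \eqref{up tf} behaves like $(I/m)^{\gamma/2}$ while the right side behaves like $2^{3\gamma/2-1}(I/m)^{\gamma/2}$, so the inequality asymptotically forces $2^{3\gamma/2-1}\ge 1$, i.e.\ $\gamma\ge 2/3$; for $\gamma\in(0,2/3)$ it is violated. Your two intermediate estimates --- $2^{\gamma/2}\max\{1,2^{\gamma-1}\}+1$ from the termwise split, and $2^{1+\gamma/2}$ from Jensen --- are both genuinely correct upper-bound constants, so the only gap in your proposal is deferring to the stated $2^{3\gamma/2-1}$ rather than checking it. Since the downstream arguments of the paper only use that \emph{some} finite constant works, never its numerical value, replacing $2^{3\gamma/2-1}$ by $2^{1+\gamma/2}$ throughout would leave the structure of the analysis intact.
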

\
\begin{proof}
	We first write
	\begin{multline}\label{pom}
	|v-v_*|^\gamma + \left(\frac{I+I_*}{m}\right)^{\gamma/2} \leq 2^\gamma \left( \left( \frac{1}{4}  |v-v_*|^2  \right)^{\gamma/2} + \left(\frac{I+I_*}{m}\right)^{\gamma/2} \right)
	\\
	\leq 2^{\frac{3 \gamma}{2} -1 } \left( \frac{1}{4} \left|v-v_*\right|^2 + \frac{I + I_*}{m}  \right)^{\gamma/2}.
	\end{multline}
	Then, using  
	\begin{equation*}
	\frac{1}{4} \left|v-v_*\right|^2 + \frac{I}{m} + \frac{I_*}{m} \leq  \frac{1}{2} \left| v \right|^2  +\frac{1}{2} \left| v_* \right|^2  +   \frac{I}{m}  +  \frac{I_*}{m} \leq \langle v, I \rangle^2   + \langle v_*, I_* \rangle^2,
	\end{equation*}
	and since  $\gamma/2 \leq 1$, we can estimate 
	\begin{equation}\label{pomocna 4}
	\left( \frac{1}{4} \left|v-v_*\right|^2 + \frac{I}{m} + \frac{I_*}{m} \right)^{\gamma/2} \leq   \langle v, I \rangle^\gamma   + \langle v_*, I_* \rangle^\gamma.
	\end{equation}
	Combining this result with \eqref{pom}, we conclude the proof.
\end{proof}

\section{$L^1_k$ moments a priori estimates }\label{Sec: poly mom}

%Once the existence and uniqueness theory for the Cauchy problem \eqref{Cauchy} has been established for the polyatomic gas, we switch to studying its properties. Firstly, we analyze $L^1$-norms associated to the solution with a polynomial weight in the sense of Lebesgue norm \eqref{brackets}. 

The Boltzmann collisional model for polyatomic gases becomes particularly challenging when we gather a priori estimates for the propagation and generation of  $k-$Lebesgue moment  generating the Banach space $L^1_k(\mathbb{R}^{3+})$ defined in \eqref{space L_k^1} and \eqref{norm}. 
More precisely, the biggest challenge is to find sufficient conditions for  transition  functions and their corresponding probability measures and bounds, as discussed in Section~\ref{Sec: Suff}, in order to be able to generate an Ordinary Differential Inequality (ODI) that describes the times evolution of  the  solution  norm of $\$f(\cdot, t\|_{L^1_k(\mathbb{R}^{3+})}.$   In fact,  the previous polyatomic compact manifold averaging Lemma \ref{lemma povzner} together with the requirement \eqref{k* const} are sufficient  conditions to show that the evolution of the $k$-th polynomial moment of the collision operator will become negative for $k>k_*$ and $\gamma>0$, due to the fact that the negative contribution of the collision operator moments are superlinear with respect to their positive contribution. 

  Thus, the following Lemma is a non-trivial 
adaptation form the one proven in \cite{Alonso-IG-BAMS} for the classical binary elastic Cauchy Theory of the Boltzmann equation in the  Banach $k-$Lebesgue moment weighted space $L^1_k(\mathbb{R}^3)$ to the Cauchy initial value problem Boltzmann  for polyatomic gases \eqref{Cauchy},  whose solution posed as a probability density in now, for the polyatomic case  the Banach $k-$Lebesgue moment space $L^1_k(\mathbb{R}^{3+})$  with transition probability functions  described in Section~\ref{Sec: Suff}.

\begin{lemma}[Moments bound for the collision operator]\label{lemma Q}
	Let $f \in L^1_1$ satisfying assumptions from Lower bound Lemma \ref{lemma lower bound} and condition \eqref{lower bound lemma}. Moreover, suppose that the transition function $\mathcal{B}$ satisfies Assumption \ref{Sec: Ass B}. Then for any  $\gamma \in (0,2]$, the following inequality holds
	\begin{equation}\label{poly coll op}
	\int_{ \mathbb{R}^{3+}} Q(f,f)(v,I) \, \langle v, I \rangle^{\tk} \mathrm{d} I \mathrm{d}v = \mathfrak{m}_k\left[  Q(f,f) \right]  \leq - A_{\ks}  \left\|f\right\|_{L^1_k}^{1+ \frac{\gamma/2}{k} }    +  B_k \left\|f\right\|_{L^1_k},
	\end{equation}
	for large enough $k$ such that
	\begin{equation}\label{k from moment bound}
	k> \ks, \qquad \ks=\max\left\{  \bks, 1+\gamma,  1+\pl/2  \right\} \quad \text{finite}
	% \quad \text{as in} \ \eqref{ks intro},
	\end{equation}
	where $\bks$, defined in \eqref{ck kappa lb}, is such that inequality \eqref{k* const} holds, and $\delta>0$ is from  condition \eqref{lower bound lemma}.

 For such $\ks$,  the  constant  $ A_{\ks} \geq A_k >0$ is independent of $k$,  and $B_k>0$,
	\begin{align}\label{moments const}
 A_{\ks} & := {    \frac{ \clb }{2} \left(\kappa^{lb}  - \mathcal{C}_{\ks} \right) }   \, \left\|f\right\|_{L^1_0}^{-\frac{\hg}{k_*}},  \\
	B_k 	&     = \mathcal{C}_k 2^{\frac{3 \gamma+k}{2}} 
	\max\left\{   \left( \frac{ \kappa^{lb} \,  2^{\frac{3 \gamma +k}{2}} }{A_{\ks}} \right)^{\frac{\theta_{k}}{1-\theta_k}}   \frac{\eta_{k}^{\frac{1}{1-\theta_k}}}{\left\|f\right\|_{L^1_1}},  1 \right\} \left\|f\right\|_{L^1_1},\nonumber
	\end{align} 
	{   with
		\begin{equation}\label{eta i theta k}
		\eta_{k} :=  \left\|f\right\|_{L^1_0}^\frac{1+\hg}{k+\hg}   \left\|f\right\|_{L^1_1}^\frac{k-1}{k-1+\hg}, \qquad \theta_k = \frac{\hg}{k-1+\hg} +  \frac{k-1}{k+\hg}<1,
		\end{equation}	
	}
	and  $ \clb$ is the lower bound  constant for the  convolution of $f$ and power law function of order $\gamma$ in terms of the $\gamma$-power of Lebesgue weight  from \eqref{lower bound lemma} that enables the super linear behavior for moments of order $k> \ks$, $\mathcal{C}_{\hk} $, monotone decreasing in $k$,  is the constant from Lemma \ref{lemma povzner}  and $\kappa^{lb}$ from \eqref{kappas}.
\end{lemma}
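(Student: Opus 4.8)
## Proof Plan

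The plan is to work with the weak form of the collision operator \eqref{weak form} applied with the test function $\chi(v,I) = \langle v, I\rangle^{\tk}$, and split the resulting collisional integrand using the algebraic identity
\[
\langle v',I'\rangle^{\tk} + \langle v'_*,I'_*\rangle^{\tk} - \langle v,I\rangle^{\tk} - \langle v_*,I_*\rangle^{\tk}
= \Bigl( \langle v',I'\rangle^{\tk} + \langle v'_*,I'_*\rangle^{\tk} \Bigr) - \Bigl( \langle v,I\rangle^{\tk} + \langle v_*,I_*\rangle^{\tk} \Bigr).
\]
The term coming from the primed (post-collisional, i.e.\ gain) quantities is handled by the Polyatomic Compact Manifold Averaging Lemma~\ref{lemma povzner}: after integrating over the compact manifold $\K$ against $b\,\dgu\,\varphi_\alpha\,\egu\,\psi_\alpha(1-R)R^{1/2}$, it is bounded by $\mathcal{C}_k\,( \langle v,I\rangle^2 + \langle v_*,I_*\rangle^2 )^k$. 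For the unprimed (loss) term, the compact-manifold integral just produces the constant $\kappa^{lb}$ (or $\kappa^{ub}$, depending on which bound of $\mathcal{B}$ from \eqref{trans prob rate ass} one uses), so this term contributes $-\kappa^{lb}( \langle v,I\rangle^2 + \langle v_*,I_*\rangle^2 )^k$ when we use the lower bound on $\mathcal{B}$. Combining, the key pointwise bound inside the $v,v_*,I,I_*$ integration becomes something like $-(\kappa^{lb} - \mathcal{C}_k)\langle v,I\rangle^{\tk} + (\text{lower order cross terms})$, where the negative sign of the leading coefficient is exactly guaranteed by \eqref{k* const} once $k > \bks$.

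Next I would carry out the remaining $(v,v_*,I,I_*)$ integration against $\tfrac{f g_*}{(II_*)^\alpha}\,\mathrm{d}A$ (here $g=f$). The leading negative term, after multiplying by the potential weight $\tB(|u|,I,I_*)$ and using the Coerciveness Estimate — specifically the lower bound \eqref{lower bound lemma} on $f * (|v|^\gamma + ((I)/m)^{\gamma/2})$ controlled below by $\clb\,\langle v,I\rangle^\gamma$ — yields a negative term proportional to $\clb(\kappa^{lb} - \mathcal{C}_k)\,\|f\|_{L^1_{k+\gamma/2}}$. This is then converted into the super-linear form $-A_{\ks}\|f\|_{L^1_k}^{1+(\gamma/2)/k}$ by the standard interpolation inequality: using the Hölder/Jensen interpolation $\|f\|_{L^1_k} \le \|f\|_{L^1_0}^{(\gamma/2)/(k+\gamma/2)}\,\|f\|_{L^1_{k+\gamma/2}}^{k/(k+\gamma/2)}$, so that $\|f\|_{L^1_{k+\gamma/2}} \ge \|f\|_{L^1_0}^{-(\gamma/2)/k}\,\|f\|_{L^1_k}^{1+(\gamma/2)/k}$. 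This is where the factor $\|f\|_{L^1_0}^{-\gamma/(2k_*)}$ in the definition of $A_{\ks}$ in \eqref{moments const} comes from; the mass is controlled (conserved), so this is a fixed constant, and the monotonicity $\mathcal{C}_k \le \mathcal{C}_{\ks}$ together with the $k$-independence of $\clb$, $\kappa^{lb}$ gives $A_k \le A_{\ks}$ uniformly in $k > \ks$.

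For the positive (lower-order) term $B_k\|f\|_{L^1_k}$, I would collect the cross terms produced both by the upper bound estimate of the potential weight from Lemma~\ref{lemma upper bound}, $\tf \le 2^{3\gamma/2 - 1}(\langle v,I\rangle^\gamma + \langle v_*,I_*\rangle^\gamma)$, and by the binomial expansion of $(\langle v,I\rangle^2 + \langle v_*,I_*\rangle^2)^k - \langle v,I\rangle^{\tk} - \langle v_*,I_*\rangle^{\tk}$ into terms of the form $\binom{k}{j}\langle v,I\rangle^{2j}\langle v_*,I_*\rangle^{2(k-j)}$ with $1 \le j \le k-1$. Each such term, after integration and using that $\|f\|_{L^1_0} = \Mu$-type bounds, gives a product $\mathfrak{m}_a[f]\,\mathfrak{m}_b[f]$ with $a+b = k+\gamma/2$ and both strictly smaller than $k$; then interpolation (the $\eta_k$, $\theta_k$ quantities in \eqref{eta i theta k} encode precisely this Hölder exponent bookkeeping) bounds these by $\|f\|_{L^1_k}^{\theta_k}$ times lower moments, and since $\theta_k < 1$ Young's inequality absorbs part into the negative super-linear term while leaving a residual linear-in-$\|f\|_{L^1_k}$ term with the stated constant $B_k$. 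The \textbf{main obstacle} I expect is the careful bookkeeping to show that $B_k$ has exactly the claimed closed form — in particular tracking the $\theta_k/(1-\theta_k)$ and $1/(1-\theta_k)$ exponents through Young's inequality and verifying that the $\max\{\cdot,1\}$ structure arises naturally from whether the interpolation constant dominates $1$ — and to ensure that the threshold $\ks = \max\{\bks, 1+\gamma, 1+\delta/2\}$ is genuinely sufficient, the conditions $k > 1+\gamma$ and $k > 1+\delta/2$ being needed so that the lower bound Lemma's hypothesis \eqref{moment epsilon} (finiteness of the $(2+\delta)$-moment) and the interpolation arithmetic are consistent. The rest is routine once these estimates are assembled in the right order.
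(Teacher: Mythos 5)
Your proposal follows the paper's route essentially step for step: weak form split into gain/loss, Averaging Lemma for the gain side combined with the $\kappa^{lb}$ lower bound on $\mathcal{B}$ for the loss side to produce the pointwise coefficient $-(\kappa^{lb}-\mathcal{C}_k)$, then the Coercive Estimate \eqref{lower bound lemma}, the upper bound of Lemma~\ref{lemma upper bound}, polynomial reduction of the binomial cross terms, interpolation and Young's inequality, and finally the Jensen step $\mathfrak{m}_{k+\gamma/2}\ge\mathfrak{m}_0^{-\gamma/(2k)}\mathfrak{m}_k^{1+\gamma/(2k)}$. The one minor slip is in the prose around the cross-term interpolation: the paper interpolates $\mathfrak{m}_{1+\gamma/2}\,\mathfrak{m}_{k-1}$ against $\mathfrak{m}_{k+\gamma/2}$ (not $\mathfrak{m}_k$), so Young's inequality absorbs $\eta_k\,\mathfrak{m}_{k+\gamma/2}^{\theta_k}$ into the coercive $-\mathfrak{m}_{k+\gamma/2}$ contribution \emph{before} the Jensen conversion is made; since your $\eta_k$ and $\theta_k$ match the paper's definitions exactly, this reads as a misstatement rather than a structural divergence.
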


\begin{proof}
For the test function
	$$\chi(v,I)= \langle v, I \rangle^{\tk},$$
the	weak form  \eqref{weak form} yields
	\begin{multline}\label{mathcal W}
\mathcal{W} := \int_{ \mathbb{R}^{3+}} Q(f,f)(v,I) \, \langle v, I \rangle^{\tk} \mathrm{d} I \mathrm{d}v
\\	= \frac{1}{2} \int_{({\mathbb{R}^{3+}})^2\times \K} \frac{f f_*}{I^\alpha I_*^\alpha}
	\left( \langle v', I' \rangle^{\tk}+ \langle v'_*, I'_* \rangle^{\tk} - \langle v, I \rangle^{\tk} - \langle v_*, I_* \rangle^{\tk} \right)
	\mathrm{d}A.
	\end{multline}
	We now use the form of transition probability rate \eqref{trans prob rate ass}. Because of the integrability properties of all multiplying functions involved, we can separate the integral $\mathcal{W}$  into the gain $\mathcal{W}^+$ 
			\begin{multline}\label{gain}
	\mathcal{W}^+
	= \frac{1}{2} \int_{({\mathbb{R}^{3+}})^2 \times \K } f f_*
	\left( \langle v', I' \rangle^{\tk}+ \langle v'_*, I'_* \rangle^{\tk} 
	\right) 
\mathcal{B}(v,v_*,I,I_*,r,R,\sigma) \\ \times \varphi_\alpha(r) \,  \psi_\alpha(R)\, (1-R) R^{1/2}  \, \mathrm{d}R \,\mathrm{d}r \, \mathrm{d} \sigma\,   \mathrm{d}I_* \mathrm{d} v_* \mathrm{d}I \mathrm{d}v,
	\end{multline}
and loss part $\mathcal{W}^-$,
	\begin{multline}\label{loss}
	\mathcal{W}^-
	= \frac{1}{2} \int_{({\mathbb{R}^{3+}})^2 \times \K } f f_*
	\left( \langle v, I \rangle^{\tk}+ \langle v_*, I_* \rangle^{\tk} 
	\right) 
\mathcal{B}(v,v_*,I,I_*,r,R,\sigma) \\ \times \varphi_\alpha(r) \,  \psi_\alpha(R) \, (1-R) R^{1/2} \, \mathrm{d}R \, \mathrm{d}r \,  \mathrm{d} \sigma\,  \mathrm{d}I_* \mathrm{d} v_* \mathrm{d}I \mathrm{d}v,
	\end{multline}
	so that 
	\begin{equation}\label{W}
	\mathcal{W}=\mathcal{W}^+-\mathcal{W}^-.
	\end{equation}
	We treat each term separately. For the gain part  weak form, we use the bound from above stated in \eqref{trans prob rate ass},
\begin{multline*}
\mathcal{W}^+
\leq \frac{1}{2}  \int_{({\mathbb{R}^{3+}})^2  } f f_* \tB(v,v_*,I,I_*)  \int_{\K}
\left( \langle v', I' \rangle^{\tk}+ \langle v'_*, I'_* \rangle^{\tk} 
\right) 
b(\hat{u} \cdot \sigma) \, \dgu\, \egu \\ \times  \varphi_\alpha(r)  \,   \psi_\alpha(R)\,  (1-R) R^{1/2} \, \mathrm{d}R \,  \mathrm{d}r \,  \mathrm{d} \sigma\,  \mathrm{d}I_* \mathrm{d} v_* \mathrm{d}I \mathrm{d}v.
\end{multline*}
The  Averaging Lemma \ref{lemma povzner}  estimates the primed quantities averaged over the compact set  $\K$, and for the gain term it yields
\begin{multline}\label{W+}
\mathcal{W}^+
\leq  \frac{\mathcal{C}_{\hk}  }{2}      
 \int_{({\mathbb{R}^{3+}})^2 } f f_*   \left(  \tf \right)  \\ \times \left(  \left( \langle v, I \rangle^2 + \langle v_*, I_* \rangle^2 \right)^{\hk} \right)
\mathrm{d}I_* \mathrm{d} v_* \mathrm{d}I \mathrm{d}v,
\end{multline}
Now the polynomial inequalities from Lemmas  \ref{binomial} and \ref{moment products} yield
\begin{multline}\label{polynomial 1}
\left(\left\langle v, I \right\rangle^2 + \left\langle v_*, I_* \right\rangle^2 \right)^{\hk}  
 \\ \leq \left\langle v, I \right\rangle^{\tk} + \left\langle v_*, I_* \right\rangle^{\tk} + \sum_{\ell=1}^{\ell_{k}} \left( \begin{matrix}
k \\ \ell 
\end{matrix} \right) \left( \left\langle v, I \right\rangle^{2\ell} \left\langle v_*, I_* \right\rangle^{2(k- \ell)}+ \left\langle v, I \right\rangle^{2(k- \ell)} \left\langle v_*, I_* \right\rangle^{2\ell} \right),\\
\leq \left\langle v, I \right\rangle^{\tk} + \left\langle v_*, I_* \right\rangle^{\tk} + \tilde{c}_k \left( \left\langle v, I \right\rangle^2 \left\langle v_*, I_* \right\rangle^{2(k- 1)}+ \left\langle v, I \right\rangle^{2(k- 1)} \left\langle v_*, I_* \right\rangle^2 \right),
\end{multline}
with 
\begin{equation}\label{C tilde}
  \sum_{\ell=1}^{\ell_{k}} \left( \begin{matrix}
k \\ \ell 
\end{matrix} \right)\le  2^{\frac{k+2}2} -1  =: \tilde{c}_k,  \qquad  \ \  \text{for}\   \ell_{k} = \lfloor\frac{k + 1}{2}\rfloor.
\end{equation}
Thus, the bound for $\mathcal{W}^+$ becomes
\begin{multline}\label{gain 2}
\mathcal{W}^+
\leq    \frac{\mathcal{C}_{\hk} }{2}     
 \int_{({\mathbb{R}^{3+}})^2 } f f_*  \left( \tf \right)
\left( \left\langle v, I \right\rangle^{\tk} + \left\langle v_*, I_* \right\rangle^{\tk} 
\right. \\ \left.
+ \tilde{c}_k \left( \left\langle v, I \right\rangle^2 \left\langle v_*, I_* \right\rangle^{2(k- 1)}+ \left\langle v, I \right\rangle^{2(k- 1)} \left\langle v_*, I_* \right\rangle^2 \right) \right) \mathrm{d}I_* \mathrm{d} v_* \mathrm{d}I \mathrm{d}v.
\end{multline}

Now we turn to the loss term  $\mathcal{W}^-$ defined in \eqref{loss}. We first use the bound from below of the transition function $\mathcal{B}$  as stated in \eqref{trans prob rate ass}, and obtain 
	\begin{multline}\label{loss 2}
\mathcal{W}^-
\geq 
\,   \frac{\kappa^{lb}}{2}   
\int_{({\mathbb{R}^{3+}})^2 }  f f_* \left(\tf\right)\\
\times \left( \langle v, I \rangle^{\tk}+ \langle v_*, I_* \rangle^{\tk} 
\right) \mathrm{d}I_* \mathrm{d} v_* \mathrm{d}I \mathrm{d}v,
\end{multline}
where the constant $\kappa^{lb}$ is defined in \eqref{kappas}.    Gathering the estimates for the gain term \eqref{gain 2} and for the loss term \eqref{loss 2}, the weak form $\mathcal{W}$ from \eqref{W} becomes
		\begin{multline}\label{W 2}
	\mathcal{W}
	\leq  \frac{1}{2} \int_{({\mathbb{R}^{3+}})^2 \times \K } f f_* \left(  \tf  \right) %\\
	\ \left\{  - \tilde{A}_{\ks}  \left( \langle v, I \rangle^{\tk}+ \langle v_*, I_* \rangle^{\tk} 
	\right)  \right. \\ \left. + \tilde{B}_k \left( \left\langle v, I \right\rangle^2 \left\langle v_*, I_* \right\rangle^{2(k- 1)} + \left\langle v, I \right\rangle^{2(k- 1)} \left\langle v_*, I_* \right\rangle^2 \right)  \right\} \mathrm{d}I_* \mathrm{d} v_* \mathrm{d}I \mathrm{d}v,
	\end{multline}
with  the uniform in $k$ constant $\tilde{A}_{\ks} $, for  $\ks$ chosen  in  \eqref{k from moment bound},  defined by 
\begin{equation}\label{Ak tilde}
 \tilde{A}_{\ks}  =   \kappa^{lb} - \mathcal{C}_{k_*}   >  \tilde{A}_{k}  >\, 0,
\end{equation}
i.e. strictly  positive, for large enough $k>\ks$, by virtue of \eqref{k* const}. On the other hand, for $\tilde{c}_k $ as defined in \eqref{C tilde}, the constant
  $\tilde{B}_k $ bounded for each fixed  $k$
\begin{equation}\label{Bk tilde}
\tilde{B}_k = \tilde{c}_k \, {  \mathcal{C}_{\hk}    } \geq 0.
\end{equation}

In order to reach the conclusion of this Lemma, we remind on the  moment notation 
\begin{equation*}
\mathfrak{m}_{k}(t) := \left\| f \right\|_{L^1_k}(t).
\end{equation*}
from the Definition \ref{def poly moment}.

Now  for \eqref{W 2} we make of use the upper bound \eqref{up tf} and the lower bound $c_{lb}$ from (\ref{lower bound lemma}, \ref{clb})  for the term
\begin{equation*}
\tf\, , 
\end{equation*}
combined  with the Coercive Lemma  estimate \eqref{coercive_estimate},    the right hand side  in  \eqref{W 2} is controlled by  
	\begin{equation}\label{mm pom}
\mathcal{W}
\leq  - \clb \, \tilde{A}_{\ks}  \, \mathfrak{m}_{k+\hg}  + 2^{\frac{3 \gamma}{2}-1} \tilde{B}_k\left( \mathfrak{m}_{1+{\hg}} \, \mathfrak{m}_{k-1}  +  \mathfrak{m}_{k-1+{\hg}} \, \mathfrak{m}_{1}   \right), \ \hk \geq \ks.
\end{equation}
For the second term we  use monotonicity of moments \eqref{monotonicity of norm} 

implying
	\begin{equation*}
	\mathfrak{m}_{k-1+{\hg}} \leq \mathfrak{m}_{k}, \quad \text{since} \  0< \gamma \leq 2.
	\end{equation*}
Then we invoke  arguments of  \cite{Alonso-IG-BAMS} that involve moment interpolation formulas 
\begin{equation}\label{mom interpolation}
 \mathfrak{m}_j \leq  \mathfrak{m}_{j_1}^\tau \,  \mathfrak{m}_{j_2}^{1-\tau}, \quad 0\leq j_1 \leq j \leq j_2, \quad 0< \tau < 1, \quad j=\tau j_1 + (1-\tau) j_2,
\end{equation}
and Young's inequality  estimate, for a given parameter $\epsilon$ to be chosen, 
\begin{equation}\label{Y}
\left| a \, b\right| \leq \frac{1}{p \epsilon^{p/q} } \left|a\right|^p + \frac{\epsilon}{q} \left|b\right|^q, \quad \text{for} \ \  \epsilon>0 \ \ \text{and} \ \ \frac{1}{p} + \frac{1}{q} =1.
\end{equation}
We first interpolate the moment $ \mathfrak{m}_{1+{\hg}}$ by applying \eqref{mom interpolation} for the following choice
%\begin{equation*}
$j=1+{\hg}, \quad j_1 = 1, \quad j_2=k+\hg, \quad \tau = \frac{k-1}{k-1+\hg},$
%\end{equation*}
leading to the inequality
\begin{equation}\label{mom int 1}
 \mathfrak{m}_{1+{\hg}} \leq  \mathfrak{m}_{1}^\frac{k-1}{k-1+\hg} \mathfrak{m}_{k+\hg}^\frac{\hg}{k-1+\hg}.
\end{equation}
Next, we interpolate the moment  $ \mathfrak{m}_{k-1}$, by taken 
%\begin{equation*}
$j=k-1, \quad j_1 = 0, \quad j_2=k+\hg, \quad \tau = \frac{1+\hg}{k+\hg}, $
%\end{equation*}
leading to
\begin{equation}\label{mom int 2}
\mathfrak{m}_{k-1} \leq  \mathfrak{m}_{0}^\frac{1+\hg}{k+\hg}  \mathfrak{m}_{k+\hg}^\frac{k-1}{k+\hg} \, .
\end{equation}
Therefore, \eqref{mom int 1} and \eqref{mom int 2} yield
\begin{equation}\label{mom int 3}
 \mathfrak{m}_{1+{\hg}} \mathfrak{m}_{k-1} \leq \eta_{k}  \, \mathfrak{m}_{k+\hg}^{\theta_{k}},   
\end{equation}
with the factor $\eta_k$ depending solely on the zeroth and first moment of the initial data $f_0(v)$,  
\begin{equation}\label{eta k}
 \eta_{k} := \eta_{k}( \mathfrak{m}_0,  \mathfrak{m}_1) =  \mathfrak{m}_{0}^\frac{1+\hg}{k+\hg}   \mathfrak{m}_{1}^\frac{k-1}{k-1+\hg},
\end{equation}
and
\begin{equation}\label{theta k}
 \theta_k = \frac{\hg}{k-1+\hg} +  \frac{k-1}{k+\hg} = 1 - \frac{k-1}{(k-1+\hg)(k+\hg)} < 1.
\end{equation}

Taking estimates  (\ref{mom int 3},\ref{eta k}, \ref{theta k})  and recalling  that $ \tilde{B}_k= \tilde{c}_k \, {  \mathcal{C}_{\hk}\le   2^{\frac{k+2}{2}}  \mathcal{C}_{\hk}\   } $ from \eqref{Bk tilde} and \eqref{C tilde},  then inequality \eqref{mm pom}  can be  estimated  from above by
\begin{equation}\label{mm pom 2}
\mathcal{W}
\leq  - \clb \, \tilde{A}_{\ks}  \, \mathfrak{m}_{k+\hg}  + 2^{\frac{3 \gamma +k }{2}} \mathcal{C}_k\left( \eta_{k}  \, \mathfrak{m}_{k+\hg}^{\theta_{k}},    +  \mathfrak{m}_{k} \, \mathfrak{m}_{1}   \right), \ \hk \geq \ks.
\end{equation}

 Next, in order to obtain a Bernoulli type of ODI for the time evolutions of the $k-$ Lebesgue moments associated to an a priori estimate to the solutions of the Cauchy problem \eqref{Cauchy},   it is needed to  absorb  the positive contribution of $\mathfrak{m}_{k+\hg}$   in  the second term of the right hand side of \eqref{mm pom 2} into  the negative one. This is perform  by means  of the  Young inequality \eqref{Y}.    Indeed,   recalling  definition of the constant     the estimate for the term 
$2^{\frac{3 \gamma +k}{2}}\, \eta_{k}  \, \mathfrak{m}_{k+\hg}^{\theta_{k}}, $   is performed by the following choice of parameters in the   inequality \eqref{Y}
\begin{equation*}
a= 2^{\frac{3 \gamma +k}{2}} \, \eta_{k}, \quad b= \mathfrak{m}_{k+\hg}^{\theta_{k}}, \quad q = \frac{1}{\theta_{k}},  \ \text{and} \ p = \frac{1}{1-\theta_{k}} \, ,
\end{equation*}
yields
\begin{equation}\label{Y2}
\begin{split}
2^{\frac{3 \gamma}{2}-1} \tilde{c}_k \eta_{k}  \, \mathfrak{m}_{k+\hg}^{\theta_{k}}
 &\leq    \epsilon^{-\frac{\theta_{k}}{1-\theta_k}}(1-\theta_{k})  \left(2^{\frac{3 \gamma +k }{2}}  \eta_{k} \right)^{\frac{1}{1-\theta_k}} + \epsilon \, \theta_k  \,   \mathfrak{m}_{k+\hg}
 \\ &\leq \epsilon^{-\frac{\theta_{k}}{1-\theta_k}}  \left(2^{\frac{3 \gamma +k }{2}} \tilde{c}_k \eta_{k} \right)^{\frac{1}{1-\theta_k}} + \epsilon  \, \mathfrak{m}_{k+\hg}.
 \end{split}
\end{equation}
Therefore,  \eqref{mm pom 2} becomes
\begin{equation}\label{mm pom 3}
\mathcal{W}
\leq  -  \left( \clb \, \tilde{A}_{\ks} -  \epsilon \, \mathcal{C}_k   \right) \mathfrak{m}_{k+\hg}  +  \tilde{D}_k \, \mathfrak{m}_{1} \, \mathfrak{m}_{k}, \ \hk \geq \ks,
\end{equation}
where the constant $\tilde{D}_k$ is given by the upper bound of the following expression  
%due to definition of $\tilde{B}_k$ in \eqref{Bk tilde} and the upper bound for the constant $ \tilde{c}_k \leq 2^{\frac{k+2}{2}}$ from \eqref{C tilde},
\begin{multline}\label{Dk tilde}
  \max\left\{\epsilon^{-\frac{\theta_{k}}{1-\theta_k}}  \left(2^{\frac{3 \gamma}{2}-1}  \tilde{c}_k \eta_{k} \right)^{\frac{1}{1-\theta_k}} \frac{1}{\mathfrak{m}_{1}},  2^{\frac{3 \gamma}{2}-1} \tilde{B}_k \right\}
\\  \leq  \mathcal{C}_k  2^{\frac{3 \gamma +k}{2} }    \max\left\{ \left(\frac{2^{\frac{3 \gamma +k}{2} } }{ \epsilon} \right)^{\frac{\theta_{k}}{1-\theta_k}} \frac{ \eta_{k}^{\frac{1}{1-\theta_k}}}{\mathfrak{m}_{1}},  1 \right\} =: \tilde{D}_k.
\end{multline}
By the inequality \eqref{ck kappa lb} that upper bounds the  constant $\mathcal{C}_k$ for any $k>\bar{k}_*$, we obtain
\begin{equation}\label{mm pom 4}
\mathcal{W}
\leq  -  \left( \clb \, \tilde{A}_{\ks} -   \kappa^{lb}\, \epsilon   \right) \mathfrak{m}_{k+\hg}  +  \tilde{D}_k \, \mathfrak{m}_{1} \, \mathfrak{m}_{k}, \ \hk \geq \ks.
\end{equation}
Next,   choosing $\epsilon$, a positive constant  uniform in $k$, by setting
\begin{equation}\label{eps}
\epsilon := \frac{ \clb \, \tilde{A}_{\ks} }{2 \, \kappa^{lb}} = \frac{ \clb }{2} \left( 1 - \frac{ \mathcal{C}_{\ks}}{\kappa^{lb}} \right) >0,
\end{equation}
where we have used a definition of $\tilde{A}_{\ks}$ from \eqref{Ak tilde}. 

 In particular, the $k-$moment  of the collisional integral  \eqref{mm pom 5}  is bounded by 
\begin{equation}\label{mm pom 5}
\mathcal{W}
\leq  -   \frac{ \clb \, \tilde{A}_{\ks} }{2 }  \mathfrak{m}_{k+\hg}  +  \tilde{D}_k \, \mathfrak{m}_{1} \, \mathfrak{m}_{k}, \quad \hk \geq \ks,
\end{equation}
with $\tilde{D}_k$ as in \eqref{Dk tilde} with specified $\epsilon$ from \eqref{eps}.

Even further, applying  Jensen's inequality  to each  moment $ \mathfrak{m}_{k+\hg} $, $k>1$,  
\begin{multline*}
\int_{ \mathbb{R}^{3+}} f(v,I) \left\langle v, I \right\rangle^{\tk+\gamma} \mathrm{d}I \mathrm{d}v
\\
 \geq \left( \int_{ \mathbb{R}^{3+}} f(v, I) \mathrm{d}I  \mathrm{d}v \right)^{-\frac{\hg}{k}}  \left( \int_{\mathbb{R}^{3+}} f(v, I) \left\langle v, I \right\rangle^{\tk} \mathrm{d}I \mathrm{d}v \right)^{1+\frac{\hg}{k}}, 
\end{multline*}
or in terms of moments,
\begin{equation*}
\mathfrak{m}_{k+\hg} \geq \mathfrak{m}_0^{-\frac{\hg}{k}} \mathfrak{m}_k^{1+\frac{\hg}{k}} = \left\| f \right\|_{L^1_0}^{-\frac{\hg}{k}} \mathfrak{m}_k^{1+\frac{\hg}{k}}.
\end{equation*}
Hence,   In particular, the $k-$moment  of the collisional integral  \eqref{mathcal W} is bounded by   \ \eqref{mm pom} becomes
	\begin{equation*}
		\mathcal{W}
	\leq  -   \frac{ \clb \, \tilde{A}_{\ks} }{2 }  \,  \mathfrak{m}_{0}^{-\frac{\hg}{k}}  \mathfrak{m}_k^{1+\frac{\hg}{k}}+ {
 \tilde{D}_k }\, \mathfrak{m}_{\ho}  \, \mathfrak{m}_{k},
	\end{equation*}
where strictly positive the coercive  constant $c_{lb}$ is from Lemma \ref{lemma lower bound} and also the strictly positive constant  $\tilde{A}_{\ks}$  is from estimate  \eqref{Ak tilde} independent of $k$ as $k_*$ is fixed, depending on nonnegative factor calculated \eqref{k* const} from the Polyatomic Compact Manifold Averaging Lemma~\ref{lemma povzner}.
Denoting
\begin{align}\label{Aks-Bk}
0 <  A_{\ks} & :=      \frac{ \clb \, \tilde{A}_{\ks} }{2 }  \,  \mathfrak{m}_{0}^{-\frac{\hg}{k}} =  \frac{ \clb }{2}  \left( \kappa^{lb} -  \mathcal{C}_{\ks} \right)    \, \left\|f\right\|_{L^1_0}^{-\frac{\hg}{k_*}}, \ \qquad \text{and} \nonumber \\
B_k &:=  \tilde{D}_k \, \mathfrak{m}_{\ho}  \\ 
&
 =  \mathcal{C}_k  2^{\frac{3 \gamma + k}{2} }    \max\left\{ \left(\frac{\kappa^{lb} \, 2^{\frac{3 \gamma}{2} + \frac{k}{2}} }{ A_{k_*}} \right)^{\frac{\theta_{k}}{1-\theta_k}} \frac{\eta_{k}^{\frac{1}{1-\theta_k}}}{\left\|f\right\|_{L^1_1}},  1 \right\} \left\|f\right\|_{L^1_1},\nonumber,
\end{align} 
with $\eta_{k}$ as defined in \eqref{eta k} and $\theta_{k}$ as in \eqref{theta k},   
the last 
 inequality	 concludes the proof of Lemma \eqref{lemma Q} and the constant identities \eqref{moments const}.

\end{proof}

When regularizing properties of the collision operator stated in Lemma \ref{lemma Q} are combined with the Boltzmann equation \eqref{BE},   then we obtain ordinary differential inequality for $L^1$ polynomially weighted norms or polynomial moments $\mathfrak{m}_k$  in the sense of Definition \ref{def poly moment}.

\begin{lemma}[Ordinary Differential Inequality for Polynomial Moment]\label{Lemma ODI} If $f$   is a solution of the Boltzmann equation for polyatomic gases \eqref{BE} and  $\left\| f \right\|_{L_k^1}$ its associated norm of order $k$, then, for any $k>\ks$, with $\ks$ finite from \eqref{k from moment bound}, and $\gamma\in (0,2]$ the following estimate holds
	\begin{equation}\label{ODI poly}
	\frac{\mathrm{d} }{\mathrm{d}t} \left\| f \right\|_{L_k^1}  = \mathfrak{m}_k\left[  Q(f,f) \right] 
	\leq  - {A}_{\ks} \left\| f \right\|_{L_k^1}^{1+\frac{\hg}{k}}    +  {B}_k \left\| f \right\|_{L_k^1},
	\end{equation}
	where  both  ${A}_{\ks}$ and  ${B}_k$   are positive constants form  Lemma~\ref{lemma Q}, equations \eqref{Ak tilde} and \eqref{Bk tilde}, respectively.
	\end{lemma}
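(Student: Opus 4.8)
\textbf{Proof proposal for Lemma~\ref{Lemma ODI}.}

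The plan is to combine the differential structure of the space homogeneous Boltzmann equation \eqref{BE} with the moments bound for the collision operator already established in Lemma~\ref{lemma Q}. First I would justify that the time derivative of the $L^1_k$ norm equals the $k$-th polynomial moment of the collision operator. Since $f$ is a non-negative solution, $\left\| f \right\|_{L^1_k}(t) = \mathfrak{m}_k[f](t) = \int_{\mathbb{R}^{3+}} f(t,v,I)\,\langle v, I\rangle^{2k}\,\mathrm{d}I\,\mathrm{d}v$, and differentiating under the integral sign — legitimate because the a priori moment estimates guarantee enough integrability and the required dominated convergence hypotheses hold on the solution — gives
\begin{equation*}
\frac{\mathrm{d}}{\mathrm{d}t}\left\| f \right\|_{L^1_k}(t) = \int_{\mathbb{R}^{3+}} \partial_t f(t,v,I)\,\langle v, I\rangle^{2k}\,\mathrm{d}I\,\mathrm{d}v = \int_{\mathbb{R}^{3+}} Q(f,f)(v,I)\,\langle v, I\rangle^{2k}\,\mathrm{d}I\,\mathrm{d}v = \mathfrak{m}_k\!\left[ Q(f,f)\right].
\end{equation*}

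Next I would invoke Lemma~\ref{lemma Q} directly. Its hypotheses are that $f \in L^1_1$ satisfies the assumptions of the Lower bound Lemma~\ref{lemma lower bound} together with condition \eqref{lower bound lemma}, and that the transition function $\mathcal{B}$ satisfies Assumption~\ref{Sec: Ass B}. A solution of \eqref{BE} with finite and strictly positive mass and energy, finite momentum (which may be normalized to zero by Galilean invariance), and finite $\ks$-polynomial moment satisfies all of these: mass and energy are conserved by the collision invariants \eqref{coll inva}, so the bounds \eqref{mass energy} hold uniformly in time, momentum is conserved by \eqref{momentum}, and the moment of order $\frac{2+\delta}{2}$ in \eqref{moment epsilon} is controlled once $\ks \geq 1 + \delta/2$, which is built into \eqref{k from moment bound}; hence the coerciveness estimate holds with the constant $\clb$ from \eqref{clb}. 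Applying \eqref{poly coll op} then yields, for every $k > \ks$ with $\ks$ finite as in \eqref{k from moment bound} and every $\gamma \in (0,2]$,
\begin{equation*}
\mathfrak{m}_k\!\left[ Q(f,f)\right] \leq -A_{\ks}\left\| f \right\|_{L^1_k}^{1+\frac{\gamma/2}{k}} + B_k \left\| f \right\|_{L^1_k},
\end{equation*}
with $A_{\ks}$ and $B_k$ the positive constants of \eqref{moments const}. Chaining this with the identity from the previous paragraph gives exactly \eqref{ODI poly}.

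The only genuine subtlety — and the step I expect to require the most care — is the differentiation under the integral sign and the a priori justification that the solution whose moments we are estimating indeed lies in the class to which Lemma~\ref{lemma Q} applies; in particular that $\mathfrak{m}_{\ks}$ stays finite for positive times so that $\mathfrak{m}_{k+\gamma/2}$ appearing in the proof of Lemma~\ref{lemma Q} is meaningful. This is a bootstrapping issue that is resolved by the construction of the invariant region $\Omega$ in Section~\ref{Section Ex Uni proof}: on $\Omega$ the relevant moments are controlled, so the ODI \eqref{ODI poly} is valid there, and conversely the ODI is one of the ingredients that makes $\Omega$ invariant. Everything else is a bookkeeping substitution of the constants \eqref{Ak tilde} and \eqref{Bk tilde} from Lemma~\ref{lemma Q}, so the proof is short once those are in hand.
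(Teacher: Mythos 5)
Your proposal is correct and follows essentially the same route as the paper's own proof: integrate the Boltzmann equation against the test function $\langle v, I\rangle^{2k}$ (equivalently, differentiate the norm in time and use the equation and weak form), then apply Lemma~\ref{lemma Q} to bound the resulting $k$-th moment of $Q(f,f)$. The paper leaves the verification of Lemma~\ref{lemma Q}'s hypotheses and the justification of differentiating under the integral implicit, which you spell out; those additions are sound and consistent with how the theory is assembled in Section~\ref{Section Ex Uni proof}.
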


\begin{remark}\label{remark coercive factor}  The  constant $A_{\ks} = \frac{ \clb }{2}  \left( \kappa^{lb} -  \mathcal{C}_{\ks} \right)    \, \left\|f\right\|_{L^1_0}^{-\frac{\hg}{k_*}}, $ defined in \eqref{moments const} for $\ks$  specify in  \eqref{k from moment bound},  
  can be  identified as the $k$ independent {\em coercive   factor.}
This strictly positive factor, which controls the lower bound to the absorption term on the moments inequality \eqref{ODI poly}, provides a sufficient condition to proceed next with  Theorem~\ref{theorem bound on norm}  yielding the global in time  propagation and generation of $k^{th}$-moments of any order  $k>\ks$, provided that the initial data $f_0(v,I)$  has positive mass, positive energy, as much as  satisfies conditions of  the Lower Bound Lemma~\ref{lemma lower bound}.    

While these estimates are obtained without  the need of  entropy estimates,  yet, if the initial entropy is bounded,  the constructed solutions will have well defined  entropy \eqref{entropy} that remains bounded  for all times  by the initial one. 
\end{remark}

\begin{proof}
	In order to get ODI for the evolution of  $ \left\| f \right\|_{L_k^1}\!(t) = \pmk(t)$ defined in \ref{def poly moment},  we integrate  the Boltzmann equation \eqref{BE} over the space $(v,I) \in \mathbb{R}^{3+} $ against test function 
	$$\chi(v,I)= \langle v, I \rangle^{\tk}.$$
	Using the weak form \eqref{weak form}, we get
	\begin{equation*}
		\frac{\mathrm{d} }{\mathrm{d}t} \left\| f \right\|_{L_k^1} =
\frac{\mathrm{d} \, \pmk}{\mathrm{d}t} 
	=\int_{ \mathbb{R}^{3+}} Q(f,f)(v,I) \, \langle v, I \rangle^{\tk} \mathrm{d} I \mathrm{d}v.
	\end{equation*}
	Applying Lemma \ref{lemma Q} on the right-hand side we get the desired estimate.
\end{proof}

This differential inequality by means of a comparison principle for ODEs implies the following Theorem.

\begin{theorem}[Generation and propagation of polynomial moments]\label{theorem bound on norm}
If $f$ a solution of the Boltzmann equation \eqref{BE}  with the transition function from Assumption \ref{Sec: Ass B} , and the constants 	   ${A}_{\ks} >0 $ and $ 0\leq{B}_k$ bounded , for all $k>\ks$,  as defined in \eqref{ODI poly}, 
 then the following properties hold.	
	\begin{itemize}
		\item[1.](Generation)   Let the initial data $f_0(v,I)\in\L^1_{\ks}(\mathbb R^{3+)}$, i.e. $\mathfrak{m}_0[f](0)<\infty$,  then there is a constant $\mathfrak{C}^{\mathfrak{m}}$  uniformly in $k>\ks$,  $\ks$ from \eqref{k from moment bound}, such that for any $\gamma\in(0,2]$,
		\begin{align}\label{poly gen max t-2}
	&\qquad\  \ \mathfrak{m}_k[f](t)  \leq  \mathfrak{B}_{k} \ \max\{ 1, t^{-\frac{k}{{\hg}}}\},  \quad \forall t>0,  \quad \text{with} 
\\
	&\ \ \mathfrak{B}_{k} =  \left(\frac{{B}_k}{{A}_{\ks}}\right)^{\frac{k}{{\hg}}} 
	%\mathfrak{C}^{\mathfrak{m}} 
\max\left\{ \left(\frac{{\gamma}{B}_k}{2k}  \right)^{-\frac{2k}{{\gamma}}} e^{\frac{B_k}{2}}, \left( 1- e^{-\frac{{\gamma} {B}_k}{ 2k} }\right)^{-\frac{2k}{{\gamma}}} \right\}, \ \   \text {for any} \ k> \ks. \nonumber
\end{align}

		\item[2.](Propagation) Moreover, if $\mathfrak{m}_k[f](0)<\infty$, then 
		\begin{equation}\label{poly propagation}
		\mathfrak{m}_k[f](t)  \leq \max\left\{\left(\frac{{B}_k}{{A}_{\ks}}\right)^{\frac{2k}{{\gamma}}}, \mathfrak{m}_k[f](0) \right\},  \quad\text{for all}\ t\geq 0.
		\end{equation}
	\end{itemize}
\end{theorem}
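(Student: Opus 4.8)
The plan is to deduce Theorem~\ref{theorem bound on norm} from the Ordinary Differential Inequality \eqref{ODI poly} of Lemma~\ref{Lemma ODI} by a comparison argument for scalar ODEs. Write $y(t) = \mathfrak{m}_k[f](t) = \|f\|_{L^1_k}(t)$, so that for every $k>\ks$ we have the Bernoulli-type differential inequality
\begin{equation*}
y'(t) \le -A_{\ks}\, y(t)^{1+\frac{\gamma/2}{k}} + B_k\, y(t).
\end{equation*}
The first step is to recall the standard comparison principle: if $z(t)$ solves the associated Bernoulli ODE $z' = -A_{\ks} z^{1+\frac{\gamma/2}{k}} + B_k z$ with $z(0)\ge y(0)$ (or $z(0)$ any positive value, for the generation part), then $y(t)\le z(t)$ for all $t\ge0$ where both are defined. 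The Bernoulli equation is integrable in closed form: substituting $w = z^{-\gamma/(2k)}$ linearizes it to $w' = \frac{\gamma}{2k}(A_{\ks} - B_k w)$, a linear first-order ODE whose solution is $w(t) = \frac{A_{\ks}}{B_k} + \big(w(0) - \frac{A_{\ks}}{B_k}\big)e^{-\frac{\gamma B_k}{2k}t}$ (with the obvious modification $w(t) = w(0) + \frac{\gamma A_{\ks}}{2k}t$ when $B_k=0$).

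For the \emph{propagation} statement, I would take $z(0) = y(0) = \mathfrak{m}_k[f](0)<\infty$. One observes that the quantity $\bar y_k := \big(\tfrac{B_k}{A_{\ks}}\big)^{2k/\gamma}$ is the unique positive equilibrium of the Bernoulli ODE; if $y(0)\le \bar y_k$ the solution stays below $\bar y_k$, and if $y(0)>\bar y_k$ it decreases monotonically towards $\bar y_k$. In either case $z(t)\le\max\{\bar y_k,\,y(0)\}$ for all $t\ge0$, which together with the comparison inequality gives \eqref{poly propagation}. This is essentially a sign analysis of the right-hand side: $z'<0$ whenever $z>\bar y_k$.

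For the \emph{generation} statement, the point is that even when $y(0)$ is only known to be finite (indeed when only $\mathfrak{m}_0[f](0)<\infty$, so $\mathfrak{m}_k[f](0)$ may be infinite for $k>\ks$), the superlinear absorption $-A_{\ks}y^{1+\gamma/(2k)}$ forces an immediate, universal-in-$k$ upper bound. Here one compares with the solution of the pure barrier ODE $z' = -A_{\ks}z^{1+\gamma/(2k)}$ started from $z(0)=+\infty$; its solution is $z(t) = \big(\tfrac{\gamma}{2k}A_{\ks}t\big)^{-2k/\gamma}$, giving the $t^{-k/(\gamma/2)}$ blow-up rate near $t=0$. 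To control the contribution of the linear term $B_k y$ one splits time into, say, $t\le 1$ and $t>1$: on $[0,1]$ absorb $B_k z$ by enlarging the constant (Gronwall-type factor $e^{B_k/2}$), on $[1,\infty)$ use the equilibrium/propagation bound established above with initial time $t=1$. Carefully tracking the two regimes and taking the maximum of the resulting bounds yields the explicit $\mathfrak{B}_k$ with its two-term maximum $\max\{(\gamma B_k/2k)^{-2k/\gamma}e^{B_k/2},\,(1-e^{-\gamma B_k/2k})^{-2k/\gamma}\}$ and the prefactor $(B_k/A_{\ks})^{k/(\gamma/2)}$. The uniformity constant $\mathfrak C^{\mathfrak m}$ comes from bounding $\mathfrak B_k$ uniformly in $k>\ks$, using the explicit forms of $A_{\ks}$ and $B_k$ from \eqref{moments const} together with the decay $\mathcal C_k\searrow0$ and the polynomial-in-$k$ growth of the interpolation factors $\eta_k,\theta_k$; I expect this final uniform-in-$k$ bookkeeping — verifying that $B_k^{k}$ does not overwhelm the barrier — to be the main technical obstacle, though it is purely a matter of estimating the explicit constants already assembled in Lemma~\ref{lemma Q}.
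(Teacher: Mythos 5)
Your proposal is correct and follows essentially the same route as the paper: a comparison argument with the associated Bernoulli ODE, linearized by the substitution $w=z^{-\gamma/(2k)}$ (the paper writes the explicit Bernoulli solution directly, which is algebraically identical), a sign analysis at the equilibrium $(B_k/A_{\ks})^{2k/\gamma}$ for propagation, and, for generation, dropping the initial-data term $y_0^{-c}e^{-cB_k t}$ from the explicit solution to obtain a bound independent of $y_0$, then splitting at $t=1$ to get the $t^{-2k/\gamma}$ blow-up with the Gronwall-type factor $e^{B_k/2}$ on $(0,1]$ and the constant bound $(1-e^{-\gamma B_k/(2k)})^{-2k/\gamma}$ on $[1,\infty)$. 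One small caveat: the remark about comparing with the ``pure barrier ODE'' $z'=-A_{\ks}z^{1+\gamma/(2k)}$ is misleading as stated, since dropping the $+B_k z$ term shrinks the right-hand side and hence yields a \emph{lower} bound rather than an upper one; but the correct mechanism you also describe — solve the full Bernoulli ODE via the linearizing substitution, then discard the positive $w_0 e^{-cB_k t}$ term — is exactly what the paper does, and the $e^{B_k/2}$ factor comes out of bounding $(1-e^{-cB_k t})^{-1/c}$ on $t\le1$, not from a separate Gronwall step. Your closing observation that the claimed uniformity of the constant $\mathfrak{C}^{\mathfrak{m}}$ in $k$ is not obviously delivered by $\mathfrak{B}_k$ (since $B_k$ contains $2^{(3\gamma+k)/2}$) is a fair reading of the statement; the paper's proof in fact only produces the $k$-dependent $\mathfrak{B}_k$ and does not carry out the uniform-in-$k$ bookkeeping, so this is not a gap in your argument relative to the paper's.
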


\begin{proof}
	The aim of the proof is to associate an   ODE of Bernoulli type  to the derived ODI \eqref{ODI poly} from Lemma \ref{Lemma ODI}.  A comparison or maximum principle argument for Ordinary Differential Equation argument enables us to establish that that solution to the initial value to the following family of ODEs, for each fixed $k\ge \ks$, and time $t>0$,
	%{    after setting constants $c:=\gamma/(2k) <1 ,  a:= {A}_{\ks}$ and $b:={B}_k$ positive}
	\begin{equation}\label{ODI poly-2}
	\begin{split}
&	\left\{ 
	\begin{split}
&	\dfrac {\mathrm{d}}{\mathrm{d}t} 	\mathfrak{m}_k[f](t) =  \mathfrak{m}_k [Q(f,f)](t), \\[5pt]
&	\mathfrak{m}_k[f](0) =y_0,  
	\end{split}	\right.\\
&	\text{with} \ \  \mathfrak{m}_k [Q(f,f)](t)   \leq   - {A}_{\ks} \left\| f \right\|_{L_k^1}^{1+\frac{\hg}{k}}(t)    +  {B}_k \left\| f \right\|_{L_k^1}(t).
\end{split}
	\end{equation}	
	Thus, we solve the auxiliary upper ODE with the same initial data  as in \eqref{ODI poly-2},  
		\begin{equation}\label{upper ODE}
			\begin{split}
		&	\left\{ 
		\begin{split}
	 y'(t)&= - a \,y(t)^{1+c}+  b \,  y(t)\, , \ \qquad \text{for} \  a , \ b >0, \\[5pt]
	y(0)&=y_0,
		\end{split}	\right.\\
	&
	\text{with }\ \  \ a:= {A}_{\ks}, \ b:={B}_k, \ c:= \gamma/(2k),     \ \text{for each same}\  k>{\ks} \ \text{fixed}.
	\end{split}
	\end{equation}	
	
	That means any solution   $y(t)$ of the initial value problem to   \eqref{upper ODE} controls  from above any solution $ \mathfrak{m}_k[f](t),$   to the initial value problem for the ODI from \eqref{ODI poly-2}, with the same initial data for both problems.  That implies $y(0)\geq  \mathfrak{m}_k[f](0) >0$

%	\begin{equation}\label{ODE B}
%
% \ a:= {A}_{\ks}, \ b:={B}_k, \ c:= \frac{\hg}k,  \ k>{\ks}.
%	\end{equation}
	   In addition solutions to  $y(t)$ of the super  linear initial value problem given in \eqref{upper ODE}, are of  Bernoulli type, that is they   have a unique explicit solution  of the form
	   \begin{equation}\label{ode comparison}
	y(t) = \left(\frac{a}{b}\left(1-e^{-c \, b\, t}\right) + y_0^{-c} e^{-c\,b\,t}  \right)^{-\frac{1}{c}}.
	\end{equation}
	    Yet,  a simple argument allow us to describe the solution analytical properties. Indeed, the ODE  \eqref{upper ODE} has a unique stationary point at 
	  $(b/a)^{1/c}$, since  the initial data $y_0$ positive and finite,  will monotonically increase  whenever $y_0<(b/a)^{1/c}$, monotonically decrease whenever $y_0>(b/a)^{1/c}$, and both cases will converge for larger time $t$,  to the stationary value  when the initial data $y_0\equiv (b/a)^{1/c}$, making the  uniform in time  estimate
		\begin{equation}\label{ODE Ber prop up}
	y(t) \leq \max\{y(0),   (b/a)^{1/c}\}, \qquad \forall t \geq 0.
	\end{equation}
	Then,  by the maximum principle for ODE's the  solution of  \eqref{upper ODE}, controls from above    the solution of moments ODIs \eqref{ODI poly}. 
	
	Therefore, if $ 0<\mathfrak{m}_k[f](0)<y_0$, 
	%then	\eqref{ODE Ber gen} yields generation of polynomial moments \eqref{bound on norm}, with
	%$$\mathfrak{C}^{\mathfrak{m}} = \left(\frac{B_k}{\mathcal{A}}\right)^{\frac{k}{{\hg}}}, \qquad  \text {for any} \ k>\ks,$$ 
	for $k>\ks$, with $\ks$ from \eqref{k from moment bound},  estimate \eqref{ODE Ber prop up} implies 
	\begin{equation}\label{ODE Ber prop}
		\mathfrak{m}_k[f](t)  \leq  \max\left\{\left(\frac{{B}_k}{{A}_{\ks}}\right)^{\frac{2k}{{\gamma}}}, \mathfrak{m}_k[f](0) \right\}, \quad \text{for all}\  t>0,  
	\end{equation}
	with which means the polynomial $k$-moments are bounded uniformly in time, if initially so. Thus, the   propagation result \eqref{poly propagation} follows.

However,  the  moment's generation property is different since one needs to show 
$\mathfrak{m}_k[f](t)  \le y(t)$, with $k>\ks$, with an  initial data corresponds to a lower order moment that $k$,  that is  $y_0:=\mathfrak{m}_{\ks}[f](0)$ finite, but  not necessarily the value  of  $\mathfrak{m}_k[f](0)$  which could be infinity.   In this case we use  the exact from of the solution  of the  Bernoulli type ODE \eqref{ode comparison} in order to search for an  upper bound  to  $y(t)$ independent of the value of   $y_0$. In fact estimating from above the right hand side of the $y(t)$ solution \eqref{ode comparison} independently from the initial data $y_0$,  yields
	\begin{equation*}%\label{generation poly decay}
	y(t) <  \left(\frac{a}{b}\left(1-e^{-c \, b\, t}\right) \right)^{-\frac{1}{c}}   \leq  \left(\frac{a}{b}  \right)^{-\frac{1}{c}} \begin{cases}
	\left(c \, b \right)^{-\frac{1}{c}}  e^{\frac{b}{2} }  \ t^{-\frac{1}{c}}, & t<1 \\
	\left(1-e^{-c \,b }  \right)^{-\frac{1}{c}}, & t\geq 1.
	\end{cases}
	\end{equation*} 
	Therefore,  an upper bound  to $\mathfrak{m}_k[f](t) $ uniformly in time readily follows, for any $k>\ks$ and initial finite data for just  $\mathfrak{m}_{\ks}[f](0)$,  clearly yields %	{\color{red}\begin{equation}\label{poly gen max t}
%	\text{\tt Replace this definition by the one below}\\ \ \mathfrak{m}_k[f](t)  \leq  \mathfrak{B}^{\mathfrak{m}} \ \max\{ 1, t^{-\frac{k}{{\hg}}}\}, \quad \forall t>0,
%	\end{equation}}
	{\begin{equation*}
	\mathfrak{m}_k[f](t)  \leq  \mathfrak{B}_{k} \ \max\{ 1, t^{-\frac{k}{{\hg}}}\}, \quad \forall t>0,
	\end{equation*}}
	where the constant is 
	\begin{equation*}
	\mathfrak{B}_{k} =  \left(\frac{{B}_k}{{A}_{\ks}}\right)^{\frac{k}{{\hg}}} 
	%\mathfrak{C}^{\mathfrak{m}} 
 \max\left\{ \left(\frac{{\gamma}{B}_k}{2k}  \right)^{-\frac{2k}{{\gamma}}} e^{\frac{B_k}{2}}, \left( 1- e^{-\frac{{\gamma} {B}_k}{ 2k} }\right)^{-\frac{2k}{{\gamma}}} \right\}, \qquad  \text {for any} \ k> \ks, 
 \end{equation*}
 as stated in \eqref{poly gen max t-2}. Thus,  Theorem's~\ref{theorem bound on norm}  proof is now complete.

\end{proof}

\section{Existence and Uniqueness Theory}\label{Section Ex Uni proof}

It this Section, we establish an existence and uniqueness theorem that solves the Cauchy problem 
\begin{equation}\label{Cauchy}
\left\{ 
\begin{split}
&\partial_t f(t,v,I)=Q(f,f)(v,I)\\
&f(0,v,I)=f_0(v,I),
\end{split}
\right.
\end{equation}
under the Assumption \ref{Sec: Ass B} on the transition function $\mathcal{B}$.

\noindent{\bf The invariant region $\Omega$ to solve the Cauchy problem for Boltzmann equation for polyatomic gases.}
%\label{Sec: Omega}
Our goal is to set conditions on initial data that  ensures existence and uniqueness of the solution to the Cauchy problem \eqref{Cauchy} associated to the Boltzmann equation  under conditions described in Section \ref{Sec: Suff}.

These conditions will include moments with physical interpretation of mass and energy of the gas, and the imposed  restrictions will be physically relevant. For instance, we will necessitate  bounded mass and energy both from below and above, thus excluding zero and infinitely large mass and energy. Moreover, we will require bounded moment  of order 
\begin{equation}\label{ks intro}
\ks:= \max\left\{ \bks, 1+\gamma, 1+ \pl/2 \right\},
\end{equation}	
for $\gamma \in (0,2]$ related to the  potential of the transition function \eqref{tf}, $\pl>0$ is from the lower bound \eqref{moment epsilon} and $\bks$ from \eqref{k from moment bound} is sufficiently large to ensure the prevail of the polynomial moments of loss term with respect to those same moments of the gain term. Such $\bks$ depends on $\gamma$, $\alpha$ and the model of transition function at hand.    
Now we define the bounded, convex and closed subset $\Omega \subseteq L_{\ho}^1$,
\begin{multline}\label{inv region}
\Omega =  \Big\{ f(v, I) \in L_{\ho}^1: f \geq 0 \ \text{in} \ (v, I),  \quad \int_{\mathbb{R}^{3+}}  v \, f \, \mathrm{d}I \,  \mathrm{d}v=0,  
\\
\exists \,  C_0, C_1, \text{ such that} \ \forall t\geq0, \
\mathfrak{m}_0[f](t) = C_0,  \ \mathfrak{m}_{1}[f](t) = C_1,   
\\
\mathfrak{m}_{\ks}[f](t) \leq \Cg, \ \text{with} \ \ks \ \text{from } \eqref{ks intro}   \  \Big\}.
\end{multline}
The value of $\ks$ which determines how many moments need to be bounded initially in order to guarantee existence and uniqueness to the Cauchy problem \eqref{Cauchy} is strongly related to the evolution of    of collision operator $Q(f,f)$, $\mathfrak{m}_k[Q(f,f)]$ . 
 More precisely, after the estimates  obtained in Sections~\ref{Coerciveness}, \ref{Sec: poly mom} and \ref{Sec: fund lemmas},  the a priori estimates 
 applied in  Lemma~\ref{Lemma ODI}, enable us to study an upper bound for  $\mathfrak{m}_k[Q(f,f)]$. That is, equivalent to study the map associated to the right hand side of Ordinary Differential Inequality   \eqref{ODI poly}, 
 
 More precisely,  we set $x:=
 %\left\| f \right\|_{L_k^1(\mathbb{R}^{3+})}\!(t)  +
 \mathfrak{m}_{k}\!(t)$, and consider the   map  $ \mathcal{L}_{ k}(x): [0,\infty) \rightarrow \mathbb{R}, $
\begin{equation}\label{Lg} 
 \mathfrak{m}_{k}[Q(f,f)] = \int_{ \mathbb{R}^{3+}} Q(f,f)(v,I) \, \langle v, I \rangle^{\tk} \mathrm{d} I \mathrm{d}v  \le  \ \mathcal{L}_{ k}(x):=  - A_{\ks} \, x^{1+\frac{\hg}{k}} + B_{k} x,   
\end{equation}
  where $A_{\ks}$ is strictly  positive and independent of $k$,  and  $B_{k} $ non-negative  constants for any $\gamma>0$ for $k>\ks$ with $\ks$ 
  from \eqref{ks intro}.\\

 The next result follows.

\begin{theorem}[Existence and Uniqueness]\label{theorem existence uniqueness} Assume that $f(0,v, I)=f_0(v,I) \in \Omega$. Then the Boltzmann equation \eqref{Cauchy} for the transition function $\mathcal{B}$ under the Assumption \ref{Sec: Ass B}   has the unique solution in $\mathcal{C}\left(\left[0, \infty \right), \Omega \right) \cap \mathcal{C}^1\left(\left(0,\infty  \right), L_{\ho}^1 \right)$.
\end{theorem}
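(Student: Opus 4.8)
The plan is to apply the abstract ODE theory in Banach spaces of Martin \cite{Martin}, exactly as in the works \cite{IG-P-C, AlonsoLods18, IG-Alonso-Tran-pre}, to the map $Q\colon \Omega \to L^1_1$ on the invariant region $\Omega$ defined in \eqref{inv region}. The abstract result requires verifying that $Q$ satisfies three structural conditions on $\Omega$: (i) H\"older continuity (in fact we will get local Lipschitz continuity from $L^1_{\ks}$ into $L^1_1$), (ii) the sub-tangent condition $\lim_{h\to 0^+} h^{-1}\,\mathrm{dist}\!\left(f + h\,Q(f,f),\Omega\right) = 0$ for every $f\in\Omega$, and (iii) the one-sided Lipschitz (quasi-dissipativity) estimate $[Q(f,f)-Q(g,g),\,f-g] \le C\,\|f-g\|_{L^1_1}$ for $f,g\in\Omega$, where $[\cdot,\cdot]$ is the appropriate one-sided directional derivative of the $L^1_1$-norm. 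Once these three are in place, Martin's theorem produces a unique solution in $\mathcal{C}\big([0,\infty),\Omega\big)\cap \mathcal{C}^1\big((0,\infty),L^1_1\big)$, which is exactly the assertion of Theorem~\ref{theorem existence uniqueness}.

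First I would establish that $\Omega$ is a bounded, convex, closed subset of $L^1_1$ that is genuinely invariant under the flow: convexity and closedness are immediate from the defining constraints (nonnegativity, the vanishing-momentum linear constraint, the equality constraints $\mathfrak m_0[f]=C_0$, $\mathfrak m_1[f]=C_1$, and the inequality $\mathfrak m_{\ks}[f]\le \Cg$), while boundedness in $L^1_1$ follows since $\|f\|_{L^1_1}=\mathfrak m_1[f]=C_1$ on $\Omega$. The conservation of mass and energy (hence the two equality constraints being propagated) comes from the collision invariants \eqref{coll inva} and the weak form \eqref{weak form}; conservation of zero momentum likewise. The propagation of $\mathfrak m_{\ks}[f](t)\le \Cg$ is precisely the content of Theorem~\ref{theorem bound on norm}, part 2 (propagation of polynomial moments): choosing $\Cg:=\max\{ (B_{\ks}/A_{\ks})^{2\ks/\gamma}, \mathfrak m_{\ks}[f_0]\}$ makes $\Omega$ invariant, since $f_0\in\Omega$ guarantees $\mathfrak m_{\ks}[f_0]<\infty$ and the ODI \eqref{ODI poly} at $k=\ks$ gives the uniform-in-time bound. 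This uses that $\ks$ from \eqref{ks intro} is large enough for Lemma~\ref{lemma Q} to apply, which in turn relies on the coerciveness estimate (Corollary~\ref{coercive_bound}) and the Averaging Lemma~\ref{lemma povzner}.

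Next I would verify (iii): using the weak form \eqref{weak form}, one writes $Q(f,f)-Q(g,g) = Q(f-g,f) + Q(g,f-g)$ by bilinearity, and the one-sided Lipschitz bracket $[Q(f,f)-Q(g,g),f-g]$ is estimated by $\int |Q(f-g,f)+Q(g,f-g)|\,\langle v,I\rangle^2\,\mathrm dI\,\mathrm dv$, which splits into gain and loss contributions. The loss terms are controlled using the collision-frequency bound $\nu[h](v,I)\le \kappa^{ub}\,\|h\|_{L^1_1}\,\langle v,I\rangle^{\gamma}$ (an upper-bound companion to the coercive lower bound), together with $\gamma\le 2$ so that the weight $\langle v,I\rangle^{2+\gamma}$ is controlled by $\mathfrak m_{\ks}[\cdot]\le\Cg$ since $\ks\ge 1+\gamma$; the gain terms use the Averaging Lemma~\ref{lemma povzner} with $k=1$ plus the upper bound Lemma~\ref{lemma upper bound}. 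Condition (i) is handled analogously, estimating $\|Q(f,f)-Q(g,g)\|_{L^1_1}\le C(\Cg)\,\|f-g\|_{L^1_1}$ on $\Omega$. The sub-tangent condition (ii) is the most delicate: one must show that for small $h>0$ the perturbation $f+hQ(f,f)$ stays (approximately) in $\Omega$; the linear constraints are preserved exactly because the collision invariants annihilate against $Q$, the $\mathfrak m_{\ks}$-inequality is preserved by the ODI argument, but the nonnegativity $f+hQ(f,f)\ge 0$ requires the standard decomposition $f+hQ(f,f)=f(1-h\nu[f]) + hQ^+(f,f)$ and an approximation of $\nu[f]$ by a bounded truncation, exactly as in \cite{IG-P-C}. \textbf{The main obstacle} I anticipate is this verification of the sub-tangent condition: the collision frequency $\nu[f](v,I)$ grows like $\langle v,I\rangle^{\gamma}$ and is unbounded, so one cannot directly conclude $1-h\nu[f]\ge 0$; the remedy is a careful truncation/regularization argument showing the distance from $\Omega$ is $o(h)$, keeping track of all the moment bounds available on $\Omega$, and this is where the choice $\ks\ge 1+\gamma$ and the finiteness of $\Cg$ do the essential work. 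Finally, once existence and uniqueness in $\mathcal C([0,\infty),\Omega)$ is obtained, the extra regularity $\mathcal C^1((0,\infty),L^1_1)$ follows from the continuity of $t\mapsto Q(f(t),f(t))$ in $L^1_1$, which is a consequence of (i) and the continuity of $t\mapsto f(t)$, together with the generation estimate of Theorem~\ref{theorem bound on norm}, part 1, that instantly lifts the solution into higher moment spaces for $t>0$.
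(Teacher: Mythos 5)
Your overall strategy matches the paper's: apply Martin's ODE theory in Banach spaces to $Q\colon\Omega\to L^1_1$ by verifying (a) H\"older continuity, (b) the sub-tangent condition, and (c) the one-sided Lipschitz condition, then invoke Theorem~\ref{Theorem general}. Your bilinear decomposition $Q(f,f)-Q(g,g)=Q(f-g,f)+Q(g,f-g)$ is algebraically equivalent to the symmetrized form $\tfrac12\left(Q(f+g,f-g)+Q(f-g,f+g)\right)$ that the paper uses, so that part is fine.

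There is, however, a genuine error in your treatment of condition (a). You claim first ``local Lipschitz continuity from $L^1_{\ks}$ into $L^1_1$'' and later the Lipschitz estimate $\|Q(f,f)-Q(g,g)\|_{L^1_1}\le C(\Cg)\|f-g\|_{L^1_1}$ on $\Omega$. The first form does not match the hypothesis of Theorem~\ref{Theorem general}, which requires modulus-of-continuity control of $\mathcal Q$ in the \emph{same} norm as the ambient Banach space $E=L^1_1$; the second form is false. Because the transition kernel produces a factor of order $\langle v,I\rangle^\gamma+\langle v_*,I_*\rangle^\gamma$, after integrating against the weight $\langle v,I\rangle^2$ the natural bound is in terms of $\|f-g\|_{L^1_{1+\gamma/2}}$, which cannot be dominated by $\|f-g\|_{L^1_1}$. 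The paper repairs this with the interpolation inequality \eqref{interpolation inequality},
\begin{equation*}
\|f-g\|_{L^1_{1+\gamma/2}} \le \|f-g\|_{L^1_1}^{1/2}\,\|f-g\|_{L^1_{1+\gamma}}^{1/2},
\end{equation*}
and then bounds $\|f-g\|_{L^1_{1+\gamma}}^{1/2}\le 2C_{1+\gamma}^{1/2}$ using that $\Omega$ carries a uniform $\ks$-moment bound with $\ks\ge 1+\gamma$ (see \eqref{ks intro}). The outcome is the H\"older estimate $\|Q(f,f)-Q(g,g)\|_{L^1_1}\le C_H\|f-g\|_{L^1_1}^{1/2}$: the exponent is $1/2$, not $1$, and this is precisely why condition (a) of Theorem~\ref{Theorem general} is stated as H\"older rather than Lipschitz. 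Claiming a Lipschitz bound here is not a shortcut but a mistake.

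A secondary imprecision: you justify $\mathfrak{m}_{\ks}[g_\rho]\le \Cg$ (the inequality constraint in $\Omega$) by appealing to Theorem~\ref{theorem bound on norm} (propagation of polynomial moments). That theorem presupposes the existence of a solution to \eqref{Cauchy}, which is exactly what is being constructed, so the appeal is circular at this stage. The sub-tangent condition must be verified by a static estimate: the paper truncates the distribution (not the collision frequency), setting $f_\rho=f\mathbbm{1}_{B_\rho(0,0)}$, and then analyzes the auxiliary map $\mathcal{L}_k(x)=-A_{\ks}x^{1+\gamma/(2k)}+B_k x$ from \eqref{Lg}, splitting into the cases $\mathfrak{m}_{\ks}[f]\le \hat x_{\ks}$ and $\mathfrak{m}_{\ks}[f]>\hat x_{\ks}$, with $\hat x_{\ks}$ the positive root of $\mathcal{L}_{\ks}$, to conclude directly that $\mathfrak{m}_{\ks}[f+hQ(f_\rho,f_\rho)]\le C_{\ks}$ for $h$ small. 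Finally, the $\mathcal C^1\big((0,\infty),L^1_1\big)$ regularity is already part of the conclusion of Theorem~\ref{Theorem general}; the additional argument you sketch via the generation estimate is unnecessary.
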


\begin{proof} The goal is apply general ODE theory from Appendix \ref{Appendix exi and uni}, that is to study collision operator $Q$ as mapping $Q: \Omega \rightarrow L_{\ho}^1$, and to show
	\begin{enumerate}
		\item H\"{o}lder continuity condition
	\begin{equation}\label{Holder}
	\left\| Q(f,f) - Q(g,g) \right\|_{L^1_{\ho}} \leq C_H   \left\| f-g \right\|_{L^1_{\ho}}^{1/2},
	\end{equation}
	\item Sub-tangent condition
		\begin{equation*}
	\lim\limits_{h\rightarrow 0+} \frac{\text{dist}\left(f + h Q(f,f), \Omega \right)}{h} =0,
	\end{equation*}
	where
	\begin{equation*}
	\text{dist}\left(\mathbb{H},\Omega\right)=\inf_{\omega \in \Omega} \left\| h-\omega \right\|_{L_{\ho}^1},
	\end{equation*}
	\item One-sided Lipschitz condition
	\begin{equation}\label{Lipschitz}
	\left[  Q(f,f) - Q(g,g), f-g \right] \leq C_L  \left\| f-g \right\|_{L^1_{\ho}},
	\end{equation}
	where  brackets $\left[  \cdot,  \cdot \right] $ by Remark \ref{Lip remark} become
	\begin{multline*}
	\left[ Q(f,f) - Q(g,g), f- g\right] 
	\\
	=\lim_{h\rightarrow 0^-}\frac{\left(\left\| \left(f-g\right) + h \left( Q(f,f)-Q(g,g)\right) \right\|_{L_{\ho}^1} - \left\| \left(f-g\right) \right\|_{L_{\ho}^1} \right)}{h} 
	\\
	\leq  \int_{ \mathbb{R}^{3+}} \left( Q(f,f)(v,I) - Q(g,g)(v,I)\right) \, \text{sign}\left(f(v,I) - g(v,I)\right) \left\langle v, I \right\rangle^2 \mathrm{d}I \, \mathrm{d}v.
	\end{multline*}
		\end{enumerate}
First, we check $Q: \Omega \rightarrow L_{\ho}^1$ is well defined. Indeed, for  any $f \in \Omega$, using $\left| \cdot \right|= \cdot \, \text{sign}(\cdot)$
\begin{multline*}
\left\| Q(f,f) \right\|_{L^1_{\ho}} = \int_{ \mathbb{R}^{3+}}  Q(f,f)(v,I) \ \text{sign}\left( Q(f,f)(v,I) \right) \, \langle v, I \rangle^2 \, \mathrm{d} I \, \mathrm{d}v
\\
\leq \frac{1}{2} \int_{({\mathbb{R}^{3+}})^2 \times K }  \frac{ f f_*}{(I I_*)^\alpha} \left( \langle v', I' \rangle^2 + \langle v'_*, I'_* \rangle^2 + \langle v, I \rangle^2 + \langle v_*, I_* \rangle^2\right) \mathrm{d}A
\end{multline*}
by virtue of the weak form \eqref{weak form} for the test function 
$$\chi(v,I) = \text{sign}\left( Q(f,f)(v,I) \right) \, \langle v, I \rangle^2.$$
Using microscopic energy law \eqref{micro CL} and the form of transition function \eqref{trans prob rate ass}   with the multiplying functions from above together with the upper bound \eqref{up tf}, as much as monotonicity of moments \eqref{monotonicity of norm} we get
\begin{multline*}
\left\| Q(f,f) \right\|_{L^1_{\ho}} \leq 2^{\frac{3 \gamma}{2} - 1}  \kappa^{ub} 
  \int_{({\mathbb{R}^{3+}})^2  }  f f_* \left( \langle v, I \rangle^2 + \langle v_*, I_* \rangle^2\right)  \left( \langle v, I \rangle^\gamma  +  \langle v_*, I_* \rangle^\gamma \right) \mathrm{d}I_*\, \mathrm{d}v_* \mathrm{d}I \, \mathrm{d}v
  \\ =   2^{\frac{3 \gamma}{2} }  \kappa^{ub} \left( \left\| f \right\|_{L^1_{\ho+\hg}} \left\| f \right\|_{L^1_{0}}  +  \left\| f \right\|_{L^1_{\ho}} \left\| f \right\|_{L^1_{\hg}}  \right) \leq  2^{\frac{3 \gamma}{2} + 1}  \kappa^{ub} \,\left\| f \right\|_{L^1_{\ho+\hg}} \left\| f \right\|_{L^1_{\ho}},
\end{multline*}
with $ \kappa^{ub}$ from \eqref{kappas}.
Since $f\in\Omega$, the right hand  side is bounded, and thus $Q(f,f) \in L_{\ho}^1$.\\

 Then, the proof consists in three parts. \\

\noindent\emph{Part I: H\"{o}lder continuity condition.} We first rewrite difference of the two collision operators acting on distribution functions $f$ and $g$ as collision operator on sums and differences of these two distribution functions,
\begin{equation}\label{coll op difference}
Q(f,f) - Q(g,g) = \frac{1}{2} \left( Q(f+g, f-g) + Q(f-g, f+g)\right),
\end{equation}
by virtue of the bilinear structure of the strong form of collision operator \eqref{collision operator}. 
Using this relation, we write the $L^1_1$ norm
\begin{multline*}
\mathcal{I}_H := \left\| Q(f,f) - Q(g,g) \right\|_{L^1_1} = \int_{ {\mathbb{R}^{3+}}} \left|  Q(f,f) - Q(g,g) \right| \langle v, I \rangle^2 \, \mathrm{d}I \, \mathrm{d}v
\\
\leq \frac{1}{2} \int_{ {\mathbb{R}^{3+}}}   \left(  \left|  Q(f+g, f-g)  \right|  + \left| Q(f-g, f+g) \right| \right) \langle v, I \rangle^2 \, \mathrm{d}I \, \mathrm{d}v.
\end{multline*}
The absolute value of collision operators will be rewritten in terms of the sign function using $\left| \cdot \right| = \cdot \, \text{sign}(\cdot)$, that will be understood as a test function. This implies
\begin{multline*}
\mathcal{I}_H
\leq \frac{1}{2} \int_{ ({\mathbb{R}^{3+}})^2 \times \K}  \left(  \left(f(v, I) + g(v, I)\right)  \left| f(v_*, I_*) - g(v_*, I_*)\right| 
\right. \\ \left.
+  \left| f(v, I) - g(v, I)\right|  \left( f(v_*, I_*) + g(v_*, I_*)\right)  \right) \\
\times \left( \langle v', I' \rangle^2 +  \langle v'_*, I'_* \rangle^2  +  \langle v, I \rangle^2  +  \langle v_*, I_* \rangle^2 \right)\\
\times  \mathcal{B} \,  \varphi_\alpha(r) \, \psi_\alpha(R)  \,(1-R) R^{1/2}  \,  \mathrm{d}R   \, \mathrm{d}r \, \mathrm{d}\sigma \, \mathrm{d}I_*\, \mathrm{d}v_* \mathrm{d}I \, \mathrm{d}v
\\
=
\int_{ ({\mathbb{R}^{3+}})^2 \times \K}  \left(  \left(f(v, I) + g(v, I)\right)  \left| f(v_*, I_*) - g(v_*, I_*)\right| 
\right. \\ \left.
+  \left| f(v, I) - g(v, I)\right|  \left( f(v_*, I_*) + g(v_*, I_*)\right)  \right) 
\, \left( \langle v, I \rangle^2  +  \langle v_*, I_* \rangle^2 \right)\\
\times  \mathcal{B} \, \varphi_\alpha(r) \,  \psi_\alpha(R) \,(1-R) R^{1/2} \, \mathrm{d}R \, \mathrm{d}r  \, \mathrm{d}\sigma  \mathrm{d}I_*\, \mathrm{d}v_* \mathrm{d}I \, \mathrm{d}v,
\end{multline*}
and the last equality is by energy conservation law during collision \eqref{micro CL}. Now we make use of the transition function \eqref{trans prob rate ass},  its bound from above \eqref{up tf}, and the upper bound constant $\kappa^{ub}$ for the integration over the compact manifold $\K$ given in \eqref{kappas}, 
\begin{multline}\label{hol1}
\mathcal{I}_H
\leq  2^{\frac{3 \gamma}{2} - 1}  \kappa^{ub} 
\int_{ ({\mathbb{R}^{3+}})^2 }  \left(  \left(f(v, I) + g(v, I)\right)  \left| f(v_*, I_*) - g(v_*, I_*)\right| 
\right. \\ \left.
+  \left| f(v, I) - g(v, I)\right|  \left( f(v_*, I_*) + g(v_*, I_*)\right)  \right) 
\\
\times \left( \langle v, I \rangle^2  +  \langle v_*, I_* \rangle^2 \right) \left( \langle v, I \rangle^\gamma  +  \langle v_*, I_* \rangle^\gamma \right) \mathrm{d}I_*\, \mathrm{d}v_* \mathrm{d}I \, \mathrm{d}v.
\end{multline}
We rewrite \eqref{hol1} in moment notation,
\begin{multline*}
\mathcal{I}_H
\leq   2^{\frac{3 \gamma}{2}}  \kappa^{ub} 
\left( \left\| f + g \right\|_{L^1_{\ho+\hg}}  \left\| f-g \right\|_{L^1_0}  
+  \left\| f + g \right\|_{L^1_{\ho}}  \left\| f-g \right\|_{L^1_{\hg}} 
\right. \\ \left.
+  \left\| f + g \right\|_{L^1_{\hg}}  \left\| f-g \right\|_{L^1_{\ho}}  
+  \left\| f + g \right\|_{L^1_{0}}  \left\| f-g \right\|_{L^1_{\ho+\hg}}     \right).
\end{multline*}
Furthermore, monotonicity of norms \eqref{monotonicity of norm} yields
\begin{equation*}
\mathcal{I}_H
\leq  2^{\frac{3 \gamma}{2} + 1}  \kappa^{ub}  \, \left\| f-g \right\|_{L^1_{\ho+\hg}}  
\left( \left\| f + g \right\|_{L^1_{\ho+\hg}} 
+  \left\| f + g \right\|_{L^1_{\ho}}   \right).
\end{equation*}
Next, we use interpolation inequality \eqref{interpolation inequality} on $ \left\| f-g \right\|_{L^1_{\ho+\hg}}  $ and get
\begin{equation*}
 \left\| f-g \right\|_{L^1_{\ho+\hg}}   \leq  \left\| f-g \right\|_{L^1_{\ho}}^{1/2}  \left\| f-g \right\|_{L^1_{\ho+\gamma}}^{1/2}.    
\end{equation*}
Moreover, characterization of the set $\Omega$ gives the following bounds
\begin{equation*}
\left\| f-g \right\|_{L^1_{\ho+\gamma}}^{1/2}  \leq \left\| f \right\|_{L^1_{\ho+\gamma}}^{1/2}  + \left\| g \right\|_{L^1_{\ho+\gamma}}^{1/2} \leq 2 \, C_{\ho+\gamma}^{1/2},
\end{equation*}
and 
\begin{equation*}
\left\| f + g \right\|_{L^1_{\ho+\hg}}  \leq 2 \, C_{\ho+\hg}, \  \ \left\| f + g \right\|_{L^1_{\ho}} \leq 2 \, C_{\ho}. 
\end{equation*}
Finally,  denoting 
\begin{equation*}
C_H :=    2^{\frac{3 \gamma}{2} +3}  \kappa^{ub}  \, C_{\ho+\gamma}^{1/2} \left( C_{\ho+\hg}  + C_{\ho}  \right),
\end{equation*}
we get desired estimate \eqref{Holder}.\\

\noindent\emph{Part II: Sub-tangent condition.} We first study the collision frequency
\begin{equation*}
\nu(f)(v,I) := \int_{{\mathbb{R}^{3+}} \times K} f(v_*, I_*) \mathcal{B} \, \varphi_\alpha(r) \psi_\alpha(R) (1-R) R^{1/2}  \, \mathrm{d}R  \, \mathrm{d}r \, \mathrm{d} \sigma  \, \mathrm{d}I_* \, \mathrm{d}v_*.
\end{equation*}
Using the form \eqref{trans prob rate ass} of the transition function $\mathcal{B}$ together with its bound from above \eqref{up tf}, we obtain
\begin{align}\label{coll freq estimate}
\nu(f)(v,I) &\leq  2^{\frac{3 \gamma}{2} - 1}  \kappa^{ub} \int_{{\mathbb{R}^{3+}} } f(v_*, I_*) \left( \langle v, I \rangle^\gamma +  \langle v_*, I_* \rangle^\gamma \right)\, \mathrm{d}I_* \, \mathrm{d}v_*   \nonumber
\\
&\leq   2^{\frac{3 \gamma}{2} - 1}  \kappa^{ub}  \left( C_0 \langle v, I \rangle^\gamma + C_{\hg} \right)  \nonumber
\\& \leq   2^{\frac{3 \gamma}{2} }  \kappa^{ub} \left( C_0 +  C_{\hg} \right) \left( 1+ \left(  \frac{1}{2} \left| v \right|^2 + \frac{I}{m}\right)^{\gamma/2} \right),
\end{align}
with $\kappa^{ub}$ from \eqref{kappas} and using characterization of the invariant region $\Omega$ as in \eqref{inv region}.

The idea of the proof of sub-tangent condition is to prove that for $f\in \Omega$ and for any $\epsilon>0$ there exists $h_1>0$ such that the interval $\mathcal{N}$ centered at $f+h Q(f,f)$ with radius $h \epsilon$ denoted by
\begin{equation}\label{interval}
\mathcal{N}(f+h Q(f,f), h \epsilon),
\end{equation}
has non-empty intersection with $\Omega$ for any $0<h<h_1$, as formulated in the Proposition \ref{Prop sub tangent} below. Then for  such $h_1$ we have
\begin{equation*}
h^{-1} \text{dist}\left( f + h Q(f,f), \Omega \right) \leq \epsilon, 
\end{equation*}
for all $0<h<h_1$,  which concludes the sub-tangent condition. Therefore, it remains to prove the following Proposition \ref{Prop sub tangent}.

\begin{proposition}\label{Prop sub tangent}
	Fix $f\in \Omega$. Then for any $\epsilon>0$ there exists $h_1>0$ such that
\begin{equation}\label{ball sub tangent}
\mathcal{N}(f+h Q(f,f), h \epsilon) \cap \Omega \neq \emptyset,
\end{equation}	
for any $0<h<h_1$.
\end{proposition}
\begin{proof}
	We  recall the definition of the semi-sphere in the velocity-internal energy space \eqref{ball rho},  
	\begin{equation*}
	B_\rho(0, 0) := \left\{ (v, I) \in \mathbb{R}^3 \times [0, \infty): \sqrt{ \frac{1}{2} \left| v \right|^2 + \frac{I}{m} } \leq \rho  \right\}.
	\end{equation*}
	Then  with the help of the  characteristic function of this sphere $B_\rho(0, 0)$, we  define truncated distribution function
\begin{equation}\label{f rho}
f_\rho(t, v, I) := f(t, v, I) \, \mathbbm{1}_{B_\rho(0, 0)}(v, I),  
\end{equation}
and consider the following function
\begin{equation}\label{w}
g_\rho = f + h \, Q(f_\rho, f_\rho),\quad \text{for} \ h>0.
\end{equation}

Our goal is to find $\rho$ and $h$ so that $g_\rho \in \mathcal{N}(f+h Q(f,f), h \epsilon) \cap \Omega $.\\

%We find $\rho$ and $h$  so that $f_\rho \in \Omega$. 
We first note that for any $f\in \Omega$, its truncation $f_\rho \in \Omega$ as well. Then using definition \eqref{collision operator-pull-out} yields
\begin{multline*}
Q(f_\rho, f_\rho)(v, I) \geq Q^-(f_\rho, f_\rho)(v, I) \\= - f_\rho \int_{ {\mathbb{R}^{3+}} \times K}  \, f_{\rho *}  \, \mathcal{B} \, \varphi_\alpha(r) \, \psi_\alpha(R) \, (1-R) R^{1/2} \,  \mathrm{d}R \, \mathrm{d}r  \, \mathrm{d} \sigma \, \mathrm{d}I_* \, \mathrm{d}v_*,
\end{multline*}
since the gain term is positive. Next, using an upper bound on the collision frequency \eqref{coll freq estimate}, yields the following lower bound estimate on the collision operator acting on $f_\rho$,
\begin{multline*}
Q(f_\rho, f_\rho)(v, I) \geq -  2^{\frac{3 \gamma}{2} }  \kappa^{ub} \left( C_0 +  C_{\hg}  \right) \left( 1+ \left(  \frac{1}{2} \left| v \right|^2 + \frac{I}{m}\right)^{\gamma/2} \right) f_\rho
\\
\geq -   2^{\frac{3 \gamma}{2} }  \kappa^{ub} \left( C_0 + C_{\hg} \right) \left( 1+ \rho^{\gamma} \right) f.
\end{multline*}
Therefore, for $g_\rho$ we can bound
\begin{equation*}
g_\rho \geq f \left( 1 -    2^{\frac{3 \gamma}{2} }  \kappa^{ub} \left( C_0 +  C_{\hg}   \right) \left( 1+ \rho^{\gamma} \right) h \right) \geq 0,
\end{equation*}
for any $h \in(0, \frac{1}{  2^{\frac{3 \gamma}{2} }  \kappa^{ub} \left( C_0 +  C_{\hg}  \right) \left( 1+ \rho^{\gamma} \right)})$.

Next, weak form \eqref{weak form} implies 
\begin{equation*}
\int_{\mathbb{R}^{3+}} Q(f_\rho, f_\rho)(v, I) \, \mathrm{d}I \, \mathrm{d}v =0, \quad \int_{\mathbb{R}^{3+}} Q(f_\rho, f_\rho)(v, I) \langle v, I \rangle^2 \, \mathrm{d}I \, \mathrm{d}v =0,
\end{equation*}
which yields
\begin{equation*}
\mathfrak{m}_0[g_\rho](t) = \mathfrak{m}_0[f](t), \quad \mathfrak{m}_{\ho}[g_\rho](t) = \mathfrak{m}_{\ho}[f](t),
\end{equation*}
independently of $\rho$ and $h$. Then upper and lower bounds for these polynomial moments of  $f$ imply the same kind of estimates for $g_\rho$.

Finally, as anticipated at the opening of this section, we  prove that $L^1_{\ks}$ norm of $g_\rho$ is bounded.

To this end we study the  the map 
$\mathcal{L}_{ k}(x): [0,\infty) \rightarrow \mathbb{R}$ from \eqref{Lg}, 
\begin{equation*}
\mathcal{L}_{ k}(x):= - A_{\ks} x^{1+\frac{\hg}{k}} + B_{k} x,
\end{equation*}
where $A_{\ks}$ and $B_{k}$ are positive constants for any $\gamma>0$, $k>\ks$, as described in \eqref{Lg}.  That means, this map has only one root, denoted by 
\begin{equation}\label{xhat}
\hat x_{k}:=  \left(\frac {B_{k}}{A_{\ks}}\right)^{\frac{\hg}{k + \hg}}, \quad k\geq \ks,
\end{equation}
at which $\mathcal{L}_{ k}(x)$ changes from positive to negative. 
 
 Thus, we  set, for any $x \geq 0$,
\begin{equation*}
\mathcal{L}_{k}(x) \leq \max_{0 \leq x \leq \hat x_{\gamma,k}} \mathcal{L}_{ k}(x) =: \hat{\mathcal{L}}_{ k},% := K_{\gamma,  k} , 
\end{equation*}
since   such maximum can be explicitly computed by
\begin{multline}\label{L st ks}
\hat{\mathcal{L}}_{k}:=     \mathcal{L}_{k}\left( \left(\frac {B_{k} k}{A_{\ks}(k +\gamma/2)  }\right)^{\frac{k}{\hg}}\right) =     \mathcal{L}_{k}\left(    \hat x_{k}^{\frac{k + \hg}{\hg}} \frac{k}{k +\gamma/2}  \right)^{\frac{k}{\hg}}  := K_{\gamma,  k}  
%
 %=  - A_{\ks} \, \left( \left(\frac {B_{k} \gamma/2}{A_{\ks}(k +\gamma/2)  }\right)^{\frac{k}{\hg}}\right)^{1+\frac{\hg}{k}} + B_{k}  \left(\frac {B_{k} \gamma/2}{A_{\ks}(k +\gamma/2)  }\right)^{\frac{k}{\hg}} := K_{\gamma,  k} 
 \end{multline}
 and    $K_{\gamma,  k} $ also depends on the initial mass, energy and  $\|f_0\|_{L^1_{\ks}}$.
 %
%Such defined map $\mathcal{L}_{\gamma, \ks}$ allows to write \eqref{poly coll op} for $k=\ks$ in terms of it,

In particular, from  \eqref{poly coll op} it follows that, for any $k>1$
	\begin{equation*}
\int_{ {\mathbb{R}^{3+}}} Q(f,f)(v,I) \, \langle v, I \rangle^{k} \mathrm{d} I \mathrm{d}v \leq \mathcal{L}_{ k}\left(\mathfrak{m}_{k}[f]\right) \leq  \hat{\mathcal{L}}_{\gamma, k}.
\end{equation*}
Next, define 
\begin{equation}\label{c dva plus dva gama def}
 C_{k} = \hat x_{k} + \hat{\mathcal{L}}_{\gamma, k}.
\end{equation} 
 
For any $f \in \Omega$, we have two possibilities: either $(i)$ $\mathfrak{m}_{k}[f] \leq \hat x_{k}$ or $(ii)$ $\mathfrak{m}_{k}[f] > \hat x_{k}$. 

In the first case,  for the $k$-moment of $g_\rho$ we get
\begin{equation*}
\mathfrak{m}_{k}[g_\rho]  \leq \hat x_{k} + h \int_{ {\mathbb{R}^{3+}}} Q(f_\rho,f_\rho)(v,I) \, \langle v, I \rangle^{2 k} \mathrm{d} I \mathrm{d}v \leq \hat x_{k} + h  \hat{ \mathcal{L}}_{\gamma, k} \leq C_{k},
\end{equation*}
where we have used $h \leq 1$, without loss of generality.

In the second case, we take $\rho=\rho(f)>0$ large enough to ensure  $\mathfrak{m}_{k}[f_\rho] > \hat x_{k}$ as well. In that case, $\mathcal{L}_{k}$ is negative, i.e.
\begin{equation*}
\mathcal{L}_{k}\left( \mathfrak{m}_{k}[f_\rho]   \right) \leq 0.
\end{equation*} 
Therefore,
\begin{equation*}
\mathfrak{m}_{k}[g_\rho] \leq \hat x_{k} \leq C_{k}, \quad k\geq \ks.
\end{equation*}
Therefore, in either case $\mathfrak{m}_{k}[g_\rho]$ is bounded, and moreover we have constructed the constant of boundedness $C_{k}$ for any $k\geq k_*$.

We conclude that $g_\rho \in \Omega$ provided that $0<h<h_*$,
\begin{equation*}
h_* = \min\left\{ 1, \frac{1}{  2^{\frac{3 \gamma}{2} }  \kappa^{ub} \left( C_0 +C_{\hg} \right) \left( 1+ \rho(f)^{\gamma} \right)}\right\}
\end{equation*}

On the other hand, let us show that $g_\rho \in  B(f+h Q(f,f), h \epsilon) $. From the H\"{o}lder estimate \eqref{Holder} we get
\begin{equation*}
h^{-1} \left\|  f + h Q(f, f) - g_\rho \right\|_{L^1_{\ho}} = \left\|  Q(f, f) - Q(f_\rho, f_\rho) \right\|_{L^1_{\ho}} \leq C_H  \left\|  f - f_\rho \right\|_{L^1_{\ho}}^{1/2} \leq \epsilon,
\end{equation*}
for $\rho = \rho(\epsilon)$ large enough. Thus, for this choice of $\rho $, we have $g_\rho \in  B(f+h Q(f,f), h \epsilon) $.\\

Finally, we conclude the proof of the Proposition by choosing 
\begin{equation*}
\rho= \max\left\{\rho(f), \rho(\epsilon)  \right\}, \ \text{and} \ h_1=  \min\left\{ 1, \frac{1}{2 \, \kappa \left( C_0 + C_{\hg} \right) \left( 1+ \rho^{\gamma} \right)} \right\}.
\end{equation*}

\end{proof}

\noindent\emph{Part III: One-sided Lipschitz condition.}  The left hand side of \eqref{Lipschitz} after use of representation \eqref{coll op difference} becomes
\begin{multline*}
\mathcal{I}_L := \left[ Q(f,f) - Q(g,g), f-g \right]
\\
\leq  \int_{ {\mathbb{R}^{3+}}} \left( Q(f,f)(v,I) - Q(g,g)(v,I)\right) \, \text{sign}\left(f(v,I) - g(v,I)\right) \left\langle v, I \right\rangle^2 \mathrm{d}I \, \mathrm{d}v
\\
\leq \frac{1}{2}  \int_{ {\mathbb{R}^{3+}}}\left( Q(f+g, f-g)(v,I) + Q(f-g, f+g)(v,I)\right)  
\\ \times \text{sign}\left(f(v,I) - g(v,I)\right) \left\langle v, I \right\rangle^2 \mathrm{d}I \, \mathrm{d}v.
\end{multline*}
Using the weak form \eqref{weak form}, we get
\begin{multline*}
\mathcal{I}_L 
\leq \frac{1}{4}  \int_{ ({\mathbb{R}^{3+}})^2\times K} \left( \frac{(f+g)(f_*-g_*) }{(II_*)^\alpha} +  \frac{(f-g)(f_*+g_*) }{(II_*)^\alpha}  \right) 
\\
\times \left(  \text{sign}\left(f' - g'\right) \left\langle v', I' \right\rangle^2 +  \text{sign}\left(f'_* - g'_*\right) \left\langle v'_*, I'_* \right\rangle^2 \right. \\ \left. -  \text{sign}\left(f - g\right) \left\langle v, I \right\rangle^2 - \text{sign}\left(f_* - g_*\right) \left\langle v_*, I_* \right\rangle^2 \right) \mathrm{d}A
\end{multline*}
We bound sign function by 1 for the first two terms, and from the last two terms we use $\left| \cdot \right|= \cdot \, \text{sign}(\cdot)$ where applicable,
\begin{multline*}
\mathcal{I}_L 
\leq \frac{1}{4}  \int_{ ({\mathbb{R}^{3+}})^2\times K} \left\{ \left( (f+g) \left|f_*-g_*\right| +  \left|f-g\right| (f_*+g_*)  \right) 
\times \left(  \left\langle v', I' \right\rangle^2 +  \left\langle v'_*, I'_* \right\rangle^2 \right) 
\right. 
\\
\left.
+ \left( (f+g) \left| f_*-g_* \right| -  \left| f-g \right| (f_*+g_*)  \right) 
\left\langle v, I \right\rangle^2
 \right. 
 \\
 \left.
 + \left( - (f+g)\left| f_*-g_* \right| +  \left| f-g \right| (f_*+g_*)  \right) 
 \left\langle v_*, I_* \right\rangle^2
  \right\} \frac{\mathrm{d}A}{(II_*)^\alpha}.
\end{multline*}
Using the energy collision law \eqref{micro CL}, after cancellations of some terms we get 
\begin{multline*}
\mathcal{I}_L 
\leq \frac{1}{2}  \int_{ ({\mathbb{R}^{3+}})^2\times K} \left( (f+g) \left|f_*-g_*\right| \left\langle v, I \right\rangle^2 +  \left|f-g\right| (f_*+g_*) \left\langle v_*, I_* \right\rangle^2 \right)  \frac{\mathrm{d}A}{(II_*)^\alpha}
\\
= \int_{ ({\mathbb{R}^{3+}})^2\times K}   \left|f-g\right| (f_*+g_*) \left\langle v_*, I_* \right\rangle^2  \frac{\mathrm{d}A}{(II_*)^\alpha},
\end{multline*}
the last equality is due to the change of variables \eqref{microreversibility changes star} in the first integral. 
We make use of the transition function $\mathcal{B}$ assumption \eqref{trans prob rate ass} and the upper bound \eqref{up tf},
\begin{multline*}
\mathcal{I}_L 
\leq {C}_K \int_{ ({\mathbb{R}^{3+}})^2 }   \left|f-g\right| (f_*+g_*) \left\langle v_*, I_* \right\rangle^{2} \left( \left\langle v, I \right\rangle^\gamma + \left\langle v_*, I_* \right\rangle^\gamma\right)  \mathrm{d} I_*\, \mathrm{d}v_* \, \mathrm{d}I \, \mathrm{d}v
\\ =  2^{\frac{3 \gamma}{2} - 1}  \kappa^{ub} \left( \left\|f-g\right\| _{L^1_{\hg}} \left\|f+g\right\| _{L^1_{\ho}} +  \left\|f-g\right\| _{L^1_0} \left\|f+g\right\| _{L^1_{\ho+\hg}}  \right) 
\\
\leq  2^{\frac{3 \gamma}{2} }  \kappa^{ub}  \left( C_{\ho} + C_{\ho+\hg} \right)\left\|f-g\right\| _{L^1_\ho} 
\end{multline*}
where $\kappa^{ub}$ is from \eqref{kappas}, and we have used monotonicity of norms \eqref{monotonicity of norm} and definition of the set $\Omega$ from \eqref{inv region}, which concludes the proof.

\end{proof}

\section{Generation and propagation of  exponential moments}\label{Section gen prop exp mom}

In the case of single monatomic gas \cite{Bob97}, \cite{Mouhot06}, \cite{GambaPanfVil09},  \cite{Gamba13},  \cite{GambaTask18},  and more recently in \cite{Alonso-IG-BAMS} and monatomic gas mixtures \cite{IG-P-C}, generation and propagation of polynomial moments implied  the same properties of exponential moments, that {   polynomial }  moments estimates to solutions of the Boltzmann models allow for  the summability of {   polynomial }  moments  if the initial data %$f_0\in L^1_k(\mathbb{R}^{3+}) $ 
has this summability property.
 The expert reader  can easily observe {   that}  the analog results hold for the Boltzmann equation for polyatomic gases. However the expository proof we present in this section makes modifications that would clarify many points {   of these techniques}  to first time readers. 

The notion of exponential moments as defined in \eqref{exp moment} is  associated  to the finding conditions for the summability of {   polynomial} moments {   (or only moments) }  to be a convergent series. More precisely, 
 the conditions for the  summability of moments propagation, {   if} the initial data has that the same  property, is related to show that there is an associated geometric convergent series of moments, whose   radius of convergence is the rate  of decay  in such exponential moment form.   The natural link between these {   two} objects is provided {   by} the Taylor series of an exponential form, formally written
% \label{exponential moment}
\begin{align}\label{exp mom def 2}
\mathcal{E}_{s}[f]({\beta},t) &:=   \int_{\mathbb{R}^{3+}} f(t,v,I) \, e^{\beta \left\langle v, I \right\rangle^{2s}} \mathrm{d} I \, \mathrm{d}v\\
&= \int_{\mathbb{R}^{3+} }f(t, v, I) \sum_{k=0}^\infty  \langle v,I\rangle ^{2sk}\frac{\beta^k}{k!} 
 \mathrm{d} I \, \mathrm{d}v=\sum_{k=0}^\infty \frac{\beta^k}{k!} \, \mathfrak{m}_{s k}[f](t),   \quad\text{for} \ t>0,  \nonumber
\end{align}
which we refer as to exponential moments with order $2s$, with $s\in (0,1]$,  $s=1$ corresponding to Gaussians,  and rate $\beta    > 0$.

Indeed, our goal is to show {   that} the constructed solutions  in  Section~\ref{Section Ex Uni proof} can propagate  or generate exponential moments depending on the integrability properties of the exponential moment of the initial data.  Propagation of initial data in this context  means given $f_0\in \Omega$, a order factor $0<s    \leq 1$ and a rate $0<\beta_0$ such that   $\mathcal{E}_{s}[f]({\beta_0},0)$ is finite, then the solution of the Cauchy problem for $f(t, v,I)$ posed in \eqref{Cauchy}  satisfies   $\mathcal{E}_{s}[f]({   \beta},t)$ is finite for all time $t\ge0$ {   and $0<\beta \leq \beta_0$} . 
However, generation of data means the following stronger property:  given just  $f_0\in \Omega$,  (not necessarily  $\mathcal{E}_{s}[f]({\beta_0},0)$ finite for any order $2s$, $ 0<s\leq 1$, and $0<\beta_0$) the solution of the corresponding Cauchy problem \eqref{Cauchy} satisfies  that $\mathcal{E}_{s}[f]({\beta_0},t)$ is finite for all time $t>0$, for some order factor $s$ and rate $\beta$ to be found depending on the data. 

The following Theorem proves the accuracy of these two statements. Their proofs consists in developing ordinary differential inequalities for  the quantities $\mathcal{E}_{s}[f]({\beta},t) $ valid for $t>0$, whose initial data is referred as  to $\mathcal{E}_{s}[f]({\beta},t)\mid_{t=0}$.

\medskip

\begin{theorem}[{\bf Propagation and Generation of  exponential moments}]\label{theorem gen prop ML}
	Let $f$ be the solution of the Cauchy problem \eqref{Cauchy}. The following properties hold.
	\begin{itemize}
		\item[(a)](Propagation) Let $0<s\leq \ho$. Suppose that there exists a  constant $\beta_0 >0$, such that the initial data $f_0(v,I)$ has a bounded exponential moment of order $2s$ and rate $\beta_0$ 
		%bigger than $2\|f_0\|_{L^1_0(\mathbb{R}^{3+})}$, that is 
		\begin{equation}\label{initial data exp prop}
		\mathcal{E}_{s}[f]({\beta},t) \mid_{t=0} = \mathcal{E}_{s}[f_0]({\beta_0},0) = \int_{\mathbb{R}^{3+}} f_0(v,I) \, e^{\beta_0 \left\langle v, I \right\rangle^{2s}} \mathrm{d} I \, \mathrm{d}v =: M_P  < \infty, 
		\end{equation}
		then there exist  a  constant $0<\beta \leq \beta_0$ 
		%\mathfrak{C}^{\mathcal{E}}>0$ 
		such that
		\begin{equation}\label{exp propagation p}
		\mathcal{E}_{s}[f]({\beta},t)   \leq 3  \, \mathcal{E}_{s}[f_0]({\beta_0},t)  = 3M_P\qquad \forall t\geq 0.
		\end{equation}
		
		\item[(b)](Generation)   If the prescribed initial data is  in the solution set   $\Omega$ defined in \eqref{inv region}, that is 
		\begin{equation}\label{initial data exp gen}
		\mathcal{E}_{s}[f]({\beta},t)\mid_{t=0} = \mathfrak{m}_{\ks}[f_0]  =\int_{\mathbb{R}^{3+}} f_0(v,I) \, \left\langle v, I \right\rangle^{2\ks} \mathrm{d} I \, \mathrm{d}v =: M_G  < \infty, 
		\end{equation}
		then, there exist a rate constant $\beta>0$ and   a positive order  $2s$, with $0<s\leq 1$,  such that the exponential moment is    generated to be bounded uniformly in time by the initial polynomial moment
		\begin{equation}\label{exp generation p}
		\mathcal{E}_{{\hg}}[f](\beta \min\left\{t,1\right\}, t) \leq \mathfrak{m}_{\ks}[f_0]= \int_{\mathbb{R}^{3+}} f_0(v,I)  \left\langle v, I \right\rangle^{2\ks} \mathrm{d} I \, \mathrm{d}v
		=M_G, \quad \forall t> 0.
		\end{equation}   
	\end{itemize}
\end{theorem}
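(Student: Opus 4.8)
The plan is to convert the polynomial moment estimates of Lemma~\ref{lemma Q} and Theorem~\ref{theorem bound on norm} into estimates on the \emph{partial sums} of the exponential-moment Taylor series \eqref{exp mom def 2}, following the partial-sums summability method of \cite{Gamba13} and \cite{Alonso-IG-BAMS}, but now adapted to the polyatomic bracket $\langle v,I\rangle$ on $\mathbb{R}^{3+}$. For a parameter $s\in(0,1]$ and rate $\beta>0$ introduce the truncated exponential moment
\begin{equation*}
E_n(\beta,t) := \sum_{k=0}^{n} \frac{\beta^k}{k!}\,\mathfrak{m}_{sk}[f](t),
\end{equation*}
which is finite for every $n$ since $f(t)\in\Omega$ generates all polynomial moments for $t>0$ (resp.\ propagates them in case (a)). The goal is a differential inequality of Bernoulli type for $E_n$, with constants uniform in $n$, so that a comparison argument gives $E_n(\beta,t)\le C$ for all $n$ and $t$, whence the monotone convergence theorem yields finiteness of $\mathcal{E}_s[f](\beta,t)=\lim_n E_n$.

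\textbf{Key steps.} First I would differentiate $E_n$ in time and plug in the ODI \eqref{ODI poly} term by term, so that $\frac{d}{dt}E_n \le \sum_{k=0}^n \frac{\beta^k}{k!}\big(-A_{\ks}\mathfrak{m}_{sk}^{1+\frac{\gamma/2}{sk}} + B_{sk}\mathfrak{m}_{sk}\big)$; the negative super-linear terms are the gain of the argument. Second, and this is the heart of the matter, I would re-derive the Povzner-type angular averaging estimate of Lemma~\ref{lemma povzner} in a form that tracks the \emph{binomial structure} of the intermediate moments: from \eqref{polynomial 1} the $k$-th moment of the gain picks up a sum $\sum_{\ell=1}^{\ell_k}\binom{k}{\ell}(\mathfrak{m}_{\ell+\gamma/2}\mathfrak{m}_{k-\ell}+\cdots)$, and when one multiplies by $\beta^k/k!$ and sums over $k$, the double sum reorganizes (after the change of index and Fubini) into a product of two partial exponential sums times $\beta^{\gamma/(2s)}$ and the contracting constant $\mathcal{C}_{\cdot}$. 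Because $\mathcal{C}_{sk}\to 0$, for a fixed large threshold the ``diagonal'' negative term dominates the off-diagonal positive contributions once $\beta$ is chosen small; this is exactly where the coercive constant $A_{\ks}$ enters as a lower bound for the absorption. Third, I would split off the finitely many low-order terms $k\le k_0$ (with $k_0$ chosen from the threshold $\bar k_*$), which contribute a bounded inhomogeneity, and close the inequality as $\frac{d}{dt}E_n \le -a\,\beta^{-\gamma/(2s)}(E_n - \text{l.o.t.})^{1+\gamma/(2sn)}$-type bound plus a controlled remainder, valid uniformly in $n$.

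\textbf{The two cases.} For propagation (a): the initial partial sums satisfy $E_n(\beta_0,0)\le M_P$; choosing $\beta\le\beta_0$ small enough that the inhomogeneous term is at most $M_P$ and the Bernoulli comparison keeps $E_n(\beta,t)\le 3M_P$ uniformly in $n,t$ gives \eqref{exp propagation p} after passing to the limit. The three conditions on $\beta$ alluded to in the introduction are: smallness forced by the gain/loss balance through $A_{\ks}$, smallness forced by matching the initial data $\beta\le\beta_0$, and an upper bound coming from the finitely many propagated $k$-Lebesgue moments $\mathfrak{m}_{sk}[f](0)$ for $k\le k_0$. For generation (b): here only $\mathfrak{m}_{\ks}[f_0]=M_G$ is assumed, so one uses the \emph{generation} bound $\mathfrak{m}_k[f](t)\le \mathfrak{B}_k\max\{1,t^{-k/(\gamma/2)}\}$ from \eqref{poly gen max t-2}, takes $s=\gamma/2$ and the time-dependent rate $\beta(t)=\beta\min\{t,1\}$ so that the factors $\beta(t)^k \mathfrak{B}_k t^{-k/(\gamma/2)}$ telescope into a convergent geometric series dominated by $M_G$; the short-time blow-up of moments is exactly compensated by the vanishing of $\beta(t)$ as $t\to 0^+$. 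The main obstacle I anticipate is controlling the double-sum re-indexing cleanly enough to extract a genuinely $n$-uniform Bernoulli inequality — in particular verifying that the combinatorial factor $\binom{k}{\ell}/k! = \frac{1}{\ell!(k-\ell)!}$ produces a true product of partial sums and that the leftover ``boundary'' terms $\ell$ near $\ell_k$ (which carry the largest moments $\mathfrak{m}_{k-1+\gamma/2}$) are absorbed by the super-linear negative term via the interpolation and Young inequality arguments already used in Lemma~\ref{lemma Q}, now applied uniformly along the whole tail $k>k_0$.
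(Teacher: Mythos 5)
Your proposal correctly identifies the high-level architecture of the paper's proof: convert the polynomial moment ODIs into ODIs for the partial sums of the exponential moment, keep the binomial structure so the double sum reorganizes into a product of partial sums, split off finitely many low-order terms at a threshold $k_0$, use the decay of the contracting constant $\mathcal{C}_{sk}$ to dominate the off-diagonal gain, and impose several smallness conditions on $\beta$. The distinction you draw between the two cases (propagation uses $\mathfrak{m}_{sk}[f](0)$ finite; generation uses $\mathfrak{m}_k(t)\lesssim t^{-k/(\gamma/2)}$ compensated by the vanishing rate $\beta t$) is also correct.

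However, there is a genuine gap in your closing step. You propose to feed in the ODI \eqref{ODI poly}, whose negative term is the superlinear power $-A_{k_*}\mathfrak{m}_{sk}^{1+\frac{\gamma/2}{sk}}$, and then to close with a Bernoulli-type inequality for $E_n$ with exponent $1+\gamma/(2sn)$. That exponent degenerates to $1$ as $n\to\infty$, so the stationary value of the comparison ODE grows without bound in $n$ and the argument does \emph{not} give an $n$-uniform estimate. The paper deliberately avoids this: it does not use the collapsed Bernoulli form of Lemma~\ref{lemma Q} but rather the pre-Young inequality \eqref{poly ODI delta q}, whose negative term is the \emph{shifted moment} $-\bar A_{k_*}\mathfrak{m}_{sk+\gamma/2}$, and whose positive side retains the full binomial sum. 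Summing $\beta^k/k!$ times this produces a shifted partial sum $\mathcal{E}^n_{s;\gamma}$ on both sides; the closing step is then to bound $\mathcal{E}^n_{s;\gamma}$ from below by $\beta^{-\gamma/2}\big(\mathcal{E}^n_s - 2M_P\big)$ (obtained by splitting the integral into $\{\langle v,I\rangle\ge\beta^{-1/2}\}$ and its complement), which turns the ODI into a \emph{linear} absorption inequality whose coefficient $\tfrac{1}{2}\bar A_{k_*}\beta^{-\gamma/2}$ is uniform in $n$. That linear (not Bernoulli) structure is what permits the $T_n$-contradiction bootstrap and the passage $n\to\infty$. In the generation case you are also missing two specifically new ingredients: differentiating the time-dependent rate $\beta t$ produces an extra \emph{positive} term $\beta\,\mathcal{E}^n_{\gamma/2;\gamma}$ which must itself be absorbed by the negative $-\bar A_{k_*}\mathcal{E}^n_{\gamma/2;\gamma}$ contribution (forcing the condition $\beta<\bar A_{k_*}/2$), and the closing lower bound is the different identity $\mathcal{E}^n_{\gamma/2;\gamma}(\beta t,t)\ge\tfrac{1}{\beta t}\big(\mathcal{E}^n_{\gamma/2}(\beta t,t)-M_G\big)$, exploiting the $k$-factor from the Taylor index, which gives the $1/t$-weighted Gronwall estimate rather than a $\beta^{-\gamma/2}$ factor.
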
 
\begin{proof}

Let $f$ be the solution of the Cauchy problem \eqref{Cauchy}. 

%{\color{red} \tt  We need  to revise the references quoted in the next paragraph - 
%}

The proof of both items  $(a)$ and $(b)$ in Theorem~\ref{theorem gen prop ML} strongly relies on estimates worked out in the previous sections to prove the  generation and propagation of polynomial moments stated in Theorem \ref{theorem bound on norm}. We will first present the sufficient estimates  and the  proof of each  item it performed in the next two subsections.

In order to obtain information about the summability of moments,  we need here {   to} use  the sharper upper bound of the  binomial expansion  \eqref{polynomial 1},   for any $0<s\leq 1$,  associated to the Lebesgue bracket
\begin{align*}\label{binomial 8}
&\left(\left\langle v, I \right\rangle^2 + \left\langle v_*, I_* \right\rangle^2 \right)^{sk}  
\leq \left(\left\langle v, I \right\rangle^{2s} + \left\langle v_*, I_* \right\rangle^{2 s} \right)^{k}\\
	&\ \leq \left\langle v, I \right\rangle^{2sk} +  \!\left\langle v_*, I_* \right\rangle^{2sk}  
	+ \!\sum_{\ell=1}^{\ell_{k}}\!\! \left( \begin{matrix}
k \\ \ell 
\end{matrix} \!\right)\!\! \left( \left\langle v, I \right\rangle^{2 s \ell} \left\langle v_*, I_* \right\rangle^{ 2(sk  \!- \! s \ell)}\!+ \left\langle v, I \right\rangle^{2( sk  \!-  \! s \ell)} \!\left\langle v_*, I_* \right\rangle^{2s \ell} \right)\!,
\end{align*}
with $\ell_k = \lfloor\frac{k + 1}{2}\rfloor$, which combined with estimate bounds from above \eqref{up tf} and below \eqref{lower bound lemma} for the transition function,  $\tilde{\mathcal{B}}= |v-v_*|^\gamma + \left(\frac{I +I_*}m\right)^{\hg}$, results in the following inequalities
%
% but it uses polynomial moment ODI written in a slightly different manner. 
for   polynomial moments
\begin{equation*}
\mathfrak{m}_{sk}[f](t)=: \mathfrak{m}_{sk}(t), \quad 0<s\leq \ho, \ k\geq 0,  \ \ \text{ with}\   sk>\ks,
\end{equation*}
with $\ks$ as defined in \eqref{k from moment bound},  where the only modification with respect to the proof of Lemma \ref{Lemma ODI} is  with the estimate on positive contribution.
Then, after applying coercive estimate    \eqref{lower bound lemma}, we need  the analog result as in  \eqref{mm pom} , now in a more accurate form suitable to study exponential  moments propagation and generation properties, namely 
\begin{equation}\label{poly ODI delta q}
\mathfrak{m}_{sk}'(t) \leq  - \Aks  \mathfrak{m}_{sk + \hg} +   \B_{sk} \sum_{\ell=1}^{\ell_{k}} \left( \begin{matrix}
k\\ \ell 
\end{matrix} \right) \left( \mathfrak{m}_{ s \ell + \hg} \mathfrak{m}_{sk -s \ell} + \mathfrak{m}_{sk -s \ell +\hg} \mathfrak{m}_{ s \ell } \right),
\end{equation}
where $\Aks$ and $\B_{sk}$ are  positive constants,
 \begin{align}\label{constants k}
\Aks & =   c_{lb} \tilde{A}_{\ks} = \clb \left(  \kappa^{lb} - \mathcal{C}_{k_*} \right),  \ \text{independent of}  \ sk>\ks, \ \text{and}, \nonumber\\ 
\B_{sk} &:= 2^{\frac{3 \gamma}{2}-1} \mathcal{C}_{sk}   \searrow 0, \ \text{for}\  sk >\ks,
\end{align}  
$ \mathcal{C}_{sk}\leq\kappa^{lb}$ {   for $sk >\ks$,} are from Lemma~\ref{lemma povzner}, estimate \eqref{povzner estimate} and 
  $\ks$ from \eqref{k from moment bound}. The  coercive constant {   satisfies  $\Aks > A_{\ks}$, with $A_{\ks}$}
  	associated   \eqref{Aks-Bk},
   $c_{lb}$ is the one from Lemma~\ref{lemma lower bound}, \eqref{clb}  and   $\tilde{A}_{\ks} $ is from \eqref{Ak tilde}.  Yet we note that $\B_{sk}$ is a much smaller constant than $B_{sk}$, as defined in   \eqref{Aks-Bk}, since the factor $ 2^{\frac{3 \gamma}{2}-1} <    2^{\frac{3\gamma+sk}2} \leq   \max\left\{ \left(\frac{\kappa^{lb} \, 2^{\frac{3 \gamma}{2} + \frac{k}{2}} }{ A_{k_*}} \right)^{\frac{\theta_{k}}{1-\theta_k}} \frac{\eta_{k}^{\frac{1}{1-\theta_k}}}{\left\|f\right\|_{L^1_1}},  1 \right\} \left\|f\right\|_{L^1_1} $, for any $sk>\ks$.
   %is a factor in the definition of $B_{sk}$  \\
   
We are now in conditions to show both items in  Theorem~\ref{theorem gen prop ML} of this section.

\medskip
\subsection{\emph{End of proof  of Theorem~\ref{theorem gen prop ML}, part (a):  Propagation of exponential moments.}}    
Our goal is to find conditions for the  summability of moments propagation, starting from an initial data   $f_0\in L^1_k(\mathbb{R}^{3+}) $  finite  for all $k\ge\ks$,  that actually satisfies \eqref{initial data exp prop},  that is, there is a rate $0<\beta_0$, such that 
\begin{equation}\label{exp mom def in}
\mathcal{E}_{s}[f]({\beta},t) \mid_{t=0} =    \sum_{k=0}^\infty \frac{\beta_0^k}{k!} \, \mathfrak{m}_{s k}[f_0]\quad\text{is finite}\, .
\end{equation}

 Next, the calculation {   of} an  ODI associated to $\mathcal{E}_{s}[f]({\beta},t)$ is obtained as follows.
  
 Starting  from the Taylor series of an exponential function, one can represent  exponential moment  as  presented in \eqref{exp mom def 2}, set
\begin{equation}\label{exp mom def}
\mathcal{E}_{s}[f]({\beta},t) =  \sum_{k=0}^\infty \frac{\beta^k}{k!} \, \mathfrak{m}_{s k}[f](t).
\end{equation}
We consider its partial sum given above in \eqref{exp mom def},  and another partial sum  with a shift in the moment order $\hg$, namely,
\begin{equation}\label{pomocna 18}
\mathcal{E}_{s}^n = \sum_{k=0}^n \frac{\beta^k}{k!} \, \mathfrak{m}_{s k}, \quad \mathcal{E}_{s;\gamma}^n= \sum_{k=0}^n \frac{\beta^k}{k!} \, \mathfrak{m}_{s k+ \hg},
\end{equation}
where we have omitted to highlight   dependence on $t$ and $\beta$, and relation to $f$, i.e. we have assumed
\begin{equation*}
\mathcal{E}_{s}^n[f]({\beta},t)=:\mathcal{E}_{s}^n, \qquad \mathcal{E}_{s;\gamma}^n[f]({\beta},t):=\mathcal{E}_{s;\gamma}^n,  \qquad  \mathfrak{m}_{s k+ \hg}[f](t)=:\mathfrak{m}_{s k+ \hg}. 
\end{equation*}
%When it will be important to highlight dependence on $t$ and $\beta$, we will also, for example, write $\mathcal{E}_{s}^n(\beta,t)$ instead of $\mathcal{E}_{s}^n$.\\

The goal if to   prove that there is a $\beta$ independent of time $t$ such  partial sum $\mathcal{E}_{s}^n$ is bounded uniformly in time $t$ and $n$
such that, $\lim_{n\to\infty}\mathcal{E}_{s}^n(\beta, t) =  \mathcal{E}_{s}(\beta, t)$ . 

Taking derivative with respect to time $t$ of \eqref{exp mom def}, we get 
\begin{equation}\label{exp mom odi 1}
\frac{\mathrm{d} \, }{\mathrm{d} t}\mathcal{E}_{s}^n = \sum_{k=0}^n \frac{\beta^{ k}}{k!}   \mathfrak{m}_{s k}' 
= \sum_{k=0}^{k_0-1} \frac{\beta^{ k}}{k!}  \mathfrak{m}_{s k}' +  \sum_{k=k_0}^n \frac{\beta^{ k}}{k!}  \mathfrak{m}_{s k}',
\end{equation}
where $k_0$ is an index that will be determined later on. Since $\mathcal{E}_{s}^n$  is written in terms of $\mathfrak{m}_{sk}$ we  derive ordinary differential inequality (ODI) for polynomial moment $\mathfrak{m}_{sk}$. 

Indeed, from polynomial ODI, also  \eqref{poly ODI delta q}, and therefore  
	\begin{equation}\label{poly ODI prop}
\frac{\mathrm{d}}{\mathrm{d} t} \mathfrak{m}_{s k} \leq - \Aks \mathfrak{m}_{s k+ \hg} +   \B_{s k} \sum_{\ell=1}^{\ell_{k}} \left( \begin{matrix}
k\\ \ell 
\end{matrix} \right) \left( \mathfrak{m}_{s \ell+\hg} \ \mathfrak{m}_{s k- s \ell} + \mathfrak{m}_{s k- s \ell+\hg} \ \mathfrak{m}_{s \ell}\right).
\end{equation}
Making use  inequality \eqref{poly ODI prop} on the second term in \eqref{exp mom odi 1}, yields the new ODI
\begin{multline}\label{pomocna 10}
\frac{\mathrm{d} \, }{\mathrm{d} t}\mathcal{E}_{s}^n[f]   =    \mathcal{E}_{s}^n[Q(f,f)] 
\leq     \sum_{k=0}^{k_0-1} \frac{\beta^{k}}{k!}  \mathfrak{m}_{s k}'  - \Aks  \sum_{k=k_0}^n \frac{\beta^{ k}}{k!} \,   \mathfrak{m}_{s k+ \hg} \\ +   \sum_{k=k_0}^n \B_{sk} \frac{\beta^{ k}}{k!}   \sum_{\ell=1}^{\ell_{k}} \left( \begin{matrix}
k \\ \ell 
\end{matrix} \right) \left( \mathfrak{m}_{s \ell+\hg}  \mathfrak{m}_{s k- s \ell} + \mathfrak{m}_{s k- s \ell+\hg} \mathfrak{m}_{s \ell}\right)\\
=: S_0 - \Aks S_1 +   S_2.
\end{multline}
We estimate each sum $S_0$, $S_1$ and $S_2$ separately.

We first recall the  $k$-polynomial moments global bounds  in the case   of propagation of the initial data from Theorem~\ref{theorem bound on norm}, \eqref{poly propagation},  noting that the  bounds  for $\mathfrak{m}_{sk}'(t)$ easily follow from taking an upper bound to the upper ODE \eqref{upper ODE} to make the comparison argument to obtain  global bounds for  $\mathfrak{m}_{sk}(t)$, namely, 
\begin{align*}
\mathfrak{m}_{sk}  &\leq    \max\left\{\left(\frac{{B}_{    sk}}{{A}_{\ks}}\right)^{\frac{2 {    sk}}{{\gamma}}}\!\!, \mathfrak{m}_{    sk}[f](0) \right\},  \quad \text{and} \\ \mathfrak{m}_{sk}'  &\leq  B_{sk}    \max\left\{\left(\frac{{B}_{    sk}}{{A}_{\ks}}\right)^{\frac{2{    sk}}{{\gamma}}}\!\!, \mathfrak{m}_{    sk}[f](0) \right\}.
\end{align*}
It follows that,  for  any  fixed ${    s}k_0>\ks$,  the following {   constant is} independent of time 
\begin{equation}\label{pomocna 11-2}
{c}_{sk_0} :=    \max_{0\le k\le k_0 {    -1 }} \left\{\mathfrak{m}_{sk}, \mathfrak{m}_{sk}'  \right\}, \quad 0<s\le1\, . 
\end{equation}
{  
Note that this constant ${c}_{sk_0} $ monotonically increases with respect to  $k_0$, for $sk_0>\ks$, by the same property of moments $\mathfrak{m}_{sk}$ in $k$. Therefore,
\begin{equation}\label{csk0 monot}
c_{sk_0} \leq c_{s(k_0 +1)} \leq c_{sk_0 +1}.
\end{equation}
}
 % These properties will be use in the proof of propagation of exponential moments, when the initial data satisfies \eqref{initial data exp prop}

\noindent{\bf  Estimate for the term $S_0$ from \eqref{pomocna 10}. }  The first term  of \eqref{pomocna 10} is estimated using 
%Since the initial data $f_0(v)$ has en exponential moment of order $s$ and rate $\beta_0$  by hypothesis\eqref{initial data exp prop}, then  all initial moments the corresponding infinity  sum of  the initial moments is finite, and so   all moments of order $k>k_*$ are bounded by 
% Propagation of polynomial moments \eqref{poly propagation} ensures bound on $\mathfrak{m}_{s k}$ uniformly in time, which implies from \eqref{ODI poly} bound on its derivative as well, therefore  from \eqref{pomocna 11-3},  
 the constant $c_{sk_0}$ just  defined in \eqref{pomocna 11-2}, to obtain an upper estimate to both 
\begin{equation}\label{pomocna 9}
\mathfrak{m}_{s k}, \mathfrak{m}_{s k}' \leq c_{sk_0} \quad \text{for all} \ k\in\{0,1,\dots,k_0 {    -1 } \}, \ \ \text{for}\  sk_0\geq \ks. 
\end{equation} 
Hence, an upper bound for  $S_0$  is controlled, with a good choice of the rate $\beta$, by
\begin{equation}\label{S0 estimate}
S_0 \leq  \sum_{k=0}^{k_0-1} \frac{\beta^{k}}{k!}  \mathfrak{m}_{s k}' {   \leq c_{sk_0} }  \sum_{k=0}^{k_0-1} \frac{\beta^{k}}{k!}  \leq c_{sk_0} e^{\beta} \leq 2 \, c_{sk_0},
\end{equation}
for some  $k_0$ fixed to be chosen later, provided that  $\beta$ is small enough to satisfy
\begin{equation}\label{beta first cond}
e^{\beta} \leq 2, \quad\text{or equivalently,}\quad    \beta< \ln 2.
\end{equation}

\noindent{\bf  Estimate for the term $S_1$ from \eqref{pomocna 10}. }   The   term  containing $S_1$ is negative and so it needs to be bounded from below.  Thus, recasting  the partial sum $\mathcal{E}_{s;\hg}^n$ minus the term {   similar to} $S_0$ by
\begin{equation*}
S_1 := \sum_{k=k_0}^n \frac{\beta^{k}}{k!} \,  \mathfrak{m}_{s k+ \hg} = \mathcal{E}_{s;\gamma}^n - \sum_{k=0}^{k_0-1} \frac{\beta^{ k}}{k!} \,\mathfrak{m}_{s k+ \hg}, 
\end{equation*}
and invoking the upper estimate to {   the second addend } in terms of the moments upper bounds   \eqref{pomocna 11-3}, {   namely
\begin{equation*}
\mathfrak{m}_{s k+ \hg} \leq \mathfrak{m}_{s k+ 1} \leq c_{sk_0 +1}, \quad \text{for all} \quad k = 1, \dots k_0-1,
\end{equation*}
}
yields the following lower  bound  to the term $S_1$ in  \eqref{pomocna 10}, as follows
%\begin{equation*}
%\mathfrak{m}_{s k+ \hg} \leq c_{k_0}, \quad k=0,\dots,k_0-1,
%\end{equation*}
%which together with considerations for  the term $S_0$ yields
\begin{equation}\label{S1 estimate}
S_1 \geq \mathcal{E}_{s;\gamma}^n - 2 c_{sk_0     +1}.
\end{equation}

\noindent{\bf  Estimate for the term $S_2$ from \eqref{pomocna 10}. } This next term in  \eqref{pomocna 10} can be split into two parts 
\begin{equation*}
S_2 = \sum_{k=k_0}^n \B_{sk}  \frac{\beta^{ k}}{k!}   \sum_{\ell=1}^{\ell_{k}} \left( \begin{matrix}
k \\ \ell 
\end{matrix} \right) \left( \mathfrak{m}_{s \ell+\hg}  \mathfrak{m}_{s k- s \ell} + \mathfrak{m}_{s k- s \ell+\hg} \mathfrak{m}_{s \ell}\right)=: S_{2_1} + S_{2_2}. 
\end{equation*}
that are worked out in the same way.  The sum $S_{2_1}$ is rearrenged by 
\begin{equation*}
S_{2_1} =  \sum_{k=k_0}^n \B_{sk}  \sum_{\ell=1}^{\ell_{k}}\frac{\beta^\ell \, \mathfrak{m}_{s \ell+\hg}}{\ell!}  \frac{\beta^{k-\ell} \, \mathfrak{m}_{s k- s \ell}}{(k-\ell)!} \leq \B_{sk_0} \mathcal{E}_{s;\gamma}^n \, \mathcal{E}_{s}^n,
\end{equation*}
where the last inequality is due to the  monotone decreasing property of   $\B_{sk}\leq \B_{sk_0}$  as defined in \eqref{constants k}.\\

Proceeding in a similar way,  $S_{2_2}$  can be estimated by 
\begin{equation}\label{S2 estimate}
S_{2_2} \leq  2 \, \B_{s k_0} \, \mathcal{E}_{s;\gamma}^n \, \mathcal{E}_{s}^n.\\
\end{equation}

Finally, inserting all the estimates \eqref{S0 estimate} {   with \eqref{csk0 monot}}, \eqref{S1 estimate} and \eqref{S2 estimate} into the right hand side of inequality 
 \eqref{pomocna 10},   the following upper ODI for the partial sum $\mathcal{E}_{s}^n$ follows
\begin{equation}\label{pomocna 12}
\frac{\mathrm{d} \, }{\mathrm{d} t}\mathcal{E}_{s}^n \leq  - \Aks \mathcal{E}_{s;\gamma}^n
+ 2 c_{sk_0     +1} (1+\Aks)    + 2 \, \B_{s k_0} \, \mathcal{E}_{s;\gamma}^n \, \mathcal{E}_{s}^n.
\end{equation}

The next step consists in finding a positive rate constant rate $\beta$ and an upper bound for $\mathcal{E}_{s}^n$ from  this ODI \eqref{pomocna 12}.
To this end, for each  $n \in \mathbb{N}$,  define
\begin{equation}\label{Tn prop}
T_n := \sup\{ t\geq 0: \mathcal{E}_{s}^n(\beta,\tau) \leq  3\, M_P, \ \forall \tau \in [0,t] \}, 
\end{equation}
where $M_P$ bounds the initial exponential moment of order $2s$ and rate $\beta_0$, from \eqref{initial data exp prop}. 

Since we already imposed the condition \eqref{beta first cond}, our task is to  show that we can find a $\beta \leq \min\{\beta_0, \ln 2\}$, such that  $\mathcal{E}_{s}^n(\beta,t)$ is uniformly bounded in $t$ and $n$,   by proving that  the time  for which   partial sums  remain bounded, as defined in 
\eqref{Tn prop},  is  actually unbounded. That means  $T_n=\infty$ for all $n\in \mathbb{N}$.\\

Indeed,  since $0< \beta \leq \beta_0$,  then assumption by \eqref{initial data exp prop} at at $t=0$,  yields 
\begin{equation}\label{Tn bound constant}
\mathcal{E}_{s}^n(\beta,0) = \sum_{k=0}^n \frac{\beta^{ k}}{k!} \mathfrak{m}_{s k}(0) \leq  \sum_{k=0}^\infty \frac{\beta_0^{ k}}{k!} \mathfrak{m}_{s k}(0) = \mathcal{E}_{s}({\beta_0},0)= M_P,
\end{equation}
uniformly in $n$. Thus, as each term $\mathfrak{m}_{s k}(t)$ is continuous in $t$, then $\mathcal{E}_{s}^n(\beta, t)$ is continuous in $t$ as well. Therefore, $\mathcal{E}_{s}^n(\beta, t)< M_P$ on some time interval $[0,t_n)$, with $t_n>0$, which implies that the sequence $T_n$ is well-defined and positive, for every $n\in \mathbb{N}$.

In addition, from the definition \eqref{Tn prop} of $T_n$,  it follows that $\mathcal{E}_{s}^n(\beta, t)\leq 3\, M_P$, for $t\in [0,T_n]$, so right hand side of the  ODI   \eqref{pomocna 12} is control from above by
\begin{equation}\label{pomocna 20}
\frac{\mathrm{d} \, }{\mathrm{d} t}\mathcal{E}_{s}^n \leq  -  \mathcal{E}_{s;\gamma}^n \left( \Aks - 8 \, \B_{s k_0} \, M_P  \right) \\ +  2 c_{sk_0     +1} \left(  1  + \Aks \right).
\end{equation}
 Since, from  \eqref{pomocna 11-2}, $\B_{s k_0}$ converges to zero as $k_0$ goes to infinity, allow us to conclude that 
  there is a  sufficiently large, fixed $k_0 >  \frac{ \ks}{s }$, such that  
\begin{equation}\label{K_1kstar}
\Aks - 8\, \B_{s k_0}  M_P    > \frac{\Aks}{2}.
\end{equation} 

Therefore,  the ODI in \eqref{pomocna 20} is estimated by the following upper ODI  for  the partial sums $\mathcal{E}_{s}^n(t)$
\begin{equation}\label{pomocna 21}
\frac{\mathrm{d} \, }{\mathrm{d} t}\mathcal{E}_{s}^n \leq  -\frac{\Aks}{2}\,  \mathcal{E}_{s;\gamma}^n+  2 { c_{sk_0     +1}} \left(  1  +\Aks \right), \qquad\text{for}\ \  sk_0 > {\ks}, 
\end{equation}
 with $k_0$ defined in \eqref{K_1kstar}.   

It remains to find a lower bound for the $\hg$ shifted partial sum $\mathcal{E}_{s;\gamma}^n$ in terms of $\mathcal{E}_{s}^n$,  for a  suitable rate $\beta$ to be chosen. We start by 
  estimating 
%\begin{multline}\label{ES lower bound}
%\mathcal{E}_{s;\gamma}^n =  \sum_{k=0}^n \frac{\beta^{ k}}{k!} \mathfrak{m}_{s k+ \hg} \geq \sum_{k=0}^n \frac{\beta^{ k}}{k!} \int_{\left\{ \left\langle v, I \right\rangle \geq \beta^{-1/2} \right\}} f(t,v,I) \left\langle v, I \right\rangle^{2(sk+\hg)} \mathrm{d}I  \, \mathrm{d}v 
%\\ \geq  \beta^{-\gamma/2} \left( \mathcal{E}_{s}^n - \sum_{k=0}^n \frac{\beta^{ k}}{k!}  \int_{\left\{ \left\langle v, I \right\rangle < \beta^{-1/2} \right\}} f(t,v,I) \left\langle v, I \right\rangle^{2sk}  \mathrm{d}I  \,  \mathrm{d}v  \right)
%\\ \geq  \beta^{-\gamma/2} \left( \mathcal{E}_{s}^n - \sum_{k=0}^n \frac{\beta^{k (1-s)}}{k!} \mathfrak{m}_{0}(0)  \right) \geq  \beta^{-\gamma/2} \left( \mathcal{E}_{s}^n - \mathfrak{m}_{0}(0) e^{\beta^{1-s}} \right).
%\end{multline}
%
% for $n>  s\ks$, 
by recalling that the conserved quantity of the Boltzmann flow satisfies $ \mathfrak{m}_{0}(t)= \mathfrak{m}_{0}(0) < \mathcal{E}_{s}^n(\beta_0, 0)$. Second, we  note that the rate $\beta$, to be chosen soon,  needs to satisfy   from \eqref{beta first cond}  $\beta<\ln2$. That means, since $0<s\le 1$, then  $e^{\beta^{1-s}}< 2^{1-s} < 2 $.

Therefore, these two observations pave the way  to find a bound from below to the partial sum $\mathcal{E}_{s;\gamma}^n$,   for any $n>  s\ks$, as follows
\begin{multline}\label{ES lower bound}
\mathcal{E}_{s;\gamma}^n =  \sum_{k=0}^n \frac{\beta^{ k}}{k!} \mathfrak{m}_{s k+ \hg} \geq \sum_{k=0}^n \frac{\beta^{ k}}{k!} \int_{\left\{ \left\langle v, I \right\rangle \geq \beta^{-1/2} \right\}} f(t,v,I) \left\langle v, I \right\rangle^{2(sk+\hg)} \mathrm{d}I  \, \mathrm{d}v 
\\ \geq  \beta^{-\gamma/2} \left( \mathcal{E}_{s}^n - \sum_{k=0}^n \frac{\beta^{ k}}{k!}  \int_{\left\{ \left\langle v, I \right\rangle < \beta^{-1/2} \right\}} f(t,v,I) \left\langle v, I \right\rangle^{2sk}  \mathrm{d}I  \,  \mathrm{d}v  \right)
\\ \geq  \beta^{-\gamma/2} \left( \mathcal{E}_{s}^n - \sum_{k=0}^n \frac{\beta_0^{k (1-s)}}{k!} \mathfrak{m}_{0}(t)  \right) 
\geq  \beta^{-\gamma/2} \left( \mathcal{E}_{s}^n - \mathfrak{m}_{0} e^{\beta^{1-s}} \right)\\
  \geq  \beta^{-\gamma/2} \left( \mathcal{E}_{s}^n - 2  \mathcal{E}_{s}^n(\beta_0, 0) \right) = \beta^{-\gamma/2} \left( \mathcal{E}_{s}^n - 2  M_P \right)\, .
\end{multline}
Thus, using this lower estimate to control the negative term  from  \eqref{pomocna 21} yields the Ordinary Differential Inequality 
\begin{equation*}
\frac{\mathrm{d} \, }{\mathrm{d} t}\mathcal{E}_{s}^n \leq  -\frac{\Aks}{2}\, \beta^{-\gamma/2} \mathcal{E}_{s}^n +  {\Aks} \beta^{-\gamma/2} M_P +  2 c_{sk_0     +1} \left(  1  + \Aks \right),  
 \end{equation*}\label{ODE prop}
for $ sk_0 > {\ks},$ or equivalently, the following absorption ODI, 
\begin{equation}
\frac{\mathrm{d}}{\mathrm{d} t}\left(\mathcal{E}_{s}^n -  2M_P\right)    \leq  -\frac{\Aks}{2}\, \beta^{-\gamma/2}\left( \mathcal{E}_{s}^n - 2 M_P\right) +  2 c_{sk_0     +1} \left(  1  + \Aks \right),  \text{for}\ \  sk_0 > {\ks}.
 \end{equation}

%Using this lower estimate to control the negative term  from  \eqref{pomocna 21} yields
%\begin{equation*}
%\frac{\mathrm{d} \, }{\mathrm{d} t}\mathcal{E}_{s}^n \leq  -\frac{\Aks}{2}\, \beta^{-\gamma/2} \mathcal{E}_{s}^n +  \frac{\Aks}{2} \beta^{-\gamma/2}  \mathfrak{m}_{0}(0) e^{\beta^{1-s}}  +  2 c_{sk_0} \left(  1  + \Aks \right),  \qquad\text{for}\ \  sk_0 > {\ks}.
%\end{equation*}

Therefore, invoking the  maximum principle for ODI's majorized by an upper associated ODE to \eqref{ODE prop}, we obtain
\begin{align}\label{pomocna 22}
\mathcal{E}_{s}^n(\beta, t) -2M_P  &\leq \max\left\{ \left(\mathcal{E}_{s}^n(\beta_0, 0)-2M_P\right),  \frac{4 c_{    s k_0+1} \left(  1  + \Aks \right)}{ \Aks \, \beta^{-\gamma/2}}  \right\} \nonumber \\  
&\leq   \beta^{\gamma/2}  \, \frac{4 c_{sk_0     +1} \left(  1  + \Aks \right)}{ \Aks },
\end{align}
for any $t\in [0,T_n], $ since $\mathcal{E}_{s}^n(\beta_0, t) =M_p$, implying  the term $ \mathcal{E}_{s}^n(\beta_0, 0)-2M_P= - M_P=-\mathcal{E}_{s}^n(\beta_0, t) <0$.   In particular, inequality \eqref{pomocna 22} yields, by  taking  $\beta =\beta_1$ small enough  such that 
\begin{equation*}
\beta_1^{\gamma/2}   \,  \frac{4 c_{sk_0    +1} \left(  1  + \Aks \right)}{ \Aks \, } \ \le \ {M_P}
\end{equation*} 
or equivalently,
\begin{equation}\label{choose beta1}
\beta_1   \ \le\  \left(\frac{ \Aks  }{4 c_{sk_0     +1} \left(  1  + \Aks \right)}  {M_P} \right)^{\gamma/2} 
\end{equation}

Therefore setting  $\beta$ to be 
 the minimum satisfying  conditions  \eqref{initial data exp prop},  \eqref{beta first cond}, \eqref{K_1kstar} and \eqref{choose beta1}, namely, 
 \begin{equation}\label{beta  cond}
  \beta \le \min\{ \beta_0, \ln 2, \beta_1\}, 
 \end{equation}
the  inequality \eqref{ODE prop} implies that the following strict inequality of the partial sums up to term $n$ are strictly controlled by a $3M_P$,  or equivalently, 
%weighted $L^1$-norm of the initial data from \eqref{initial data exp prop},  namely
\begin{equation}\label{pomocna 23}
\mathcal{E}_{s}^n(\beta, t) < 3 \mathcal{E}_{s}[f]({\beta_0},0), \quad\text{for any}\   0<\beta\leq  \min\{ \beta_0, \ln 2, \beta_1\},  \ t\in [0,T_n]\, , 
\end{equation}  
  with $\beta$ a constant independent of time $t$, depending on a fixed $k_0\ge\ks$  from \eqref{K_1kstar}. \\

Finally, due to the continuity of $\mathcal{E}_{s}^n(\beta,t)$ with respect to time $t$, this strict inequality actually  holds on a slightly larger time interval $[0, T_n + \varepsilon)$, $\varepsilon >0$. This contradicts the maximality of $T_n$ unless $T_n = + \infty$. Therefore,  
$\mathcal{E}_{s}^n(\beta, t) \leq 3 M_P$ for all $t\geq 0$ and $n\in \mathbb{N}$. Thus, letting $n\rightarrow \infty$ we conclude
\begin{equation*}
\mathcal{E}_{s}[f]({\beta},t) = \lim\limits_{n\rightarrow \infty} \mathcal{E}_{s}^n[f]({\beta},t) \leq 3 M_P =  3  \int_{\mathbb{R}^{3+}} f_0(v,I) \, e^{\beta_0 \left\langle v, I \right\rangle^{2s}} \mathrm{d} I \, \mathrm{d}v, \quad \forall t\geq 0,
\end{equation*}
i.e. the solution $f(t,v)$ to   Boltzmann equation with finite initial exponential moment of order $2s$ and propagates the exponential moments of order $2s$, but at a lower  rate $\beta$ than the initial  rate $\beta_0$ ,  with $\beta$ defined by 
\begin{equation}\label{pomocna 23-3}
\beta =\min \left\{ \beta_0,\,  \ln 2, \,\left(\frac{ \Aks  }{4 c_{sk_0   +1} \left(  1  + \Aks \right)}  \, {\mathcal{E}_{s}({\beta_0},0)} \right)^{\gamma/2} \right\}.
\end{equation}

It is significant  to notice that the  exponential moment rate $\beta$ is proportional to the coercive constant $\Aks = c_{lb} \tilde A_{\ks} $ as described  in \eqref{constants k} at the top of this last section, with  $c_{lb} $ from the functional  lower estimate calculated  in  Lemma~\ref{lemma lower bound}, estimate \eqref{clb}, and  positive factor  $ \tilde A_{\ks} $ defined in \eqref{Ak tilde}, estimating the  moments negative contribution derived {   from} the compact manifold averaging Lemma~\ref{lemma povzner}  crucial for  moments estimates of the gain collisional operator, with $\ks$  defined in \eqref{k from moment bound}. \\

\bigskip

\subsection{\noindent\emph{End of proof  of Theorem~\ref{theorem gen prop ML}, part (b):  Generation of exponential moments} } 

  The proof is more delicate since we seek to {   derive} an ODI for   the exponential moment $\mathcal{E}_{s}[f]({\beta},t)$ where the order factor $s$ and rate $\beta$ need to be found,  
just from an initial data    $f_0\in L^1_{\ks}(\mathbb{R}^{3+})$, that is $\int_{\mathbb{R}^{3+}} f_0(v,I) \, \left\langle v, I \right\rangle^{2\ks} \mathrm{d} I \, \mathrm{d}v =: M_G $              as defined in \eqref{initial data exp gen}.

	We will show that  the generated exponential moment associated to this  polynomial moment initial data has order $\gamma   \in (0,2]$, 
	and  a rate $\beta$ that  actually is  proportional to time $t$, as it provides a continuous  in time  venue to pass from a  data with a  $\ks$-Lesbegue polynomial moment to be able to show the instantaneously in time generation of an exponential moment. This calculation  clearly needs   the  developed estimates on the generation of moments Theorem~\ref{theorem bound on norm},   \eqref{poly gen max t-2}.

Following the ideas developed in  \cite{Gamba13}, \cite{GambaTask18},  and more recently in \cite{Alonso-IG-BAMS} and \cite{IG-P-C},
we start by  associating   an exponential moment of order $\gamma $, with a  rate linearly dependent on time, namely $\beta t$, with $\beta$ depending on $\ks$ from \eqref{k from moment bound}, to the solution $f(t,v,I) \in \Omega$ of the Boltzmann equation in the invariant region defined in \eqref{inv region},
\begin{equation}\label{exp mom gen}
\mathcal{E}_{\hg}[f](\beta t,t) := \int_{\mathbb{R}^{3+}} f(t,v,I) \, e^{\beta t \left\langle v, I \right\rangle^{\gamma}} \mathrm{d}I \mathrm{d}v =\sum_{k=0}^\infty \frac{(\beta t)^k}{k!} \mathfrak{m}_{\gamma k/2}[f](t).
\end{equation}

As in part  $(a)$,  we  also define partial sums and the associated  shifted ones now with time dependent rate, that is,
\begin{equation}\label{hg partial sums}
\!\!\!\!\mathcal{E}^n_{\hg}[f](\beta t,t)\!=\! \! \sum_{k=0}^n \frac{(\beta t)^k}{k!} \mathfrak{m}_{\gamma k/2}[f](t) \ \text{and} \ \mathcal{E}^n_{\hg; \gamma}[f](\beta t,t)\!=\! \! \sum_{k=0}^n \frac{(\beta t)^k}{k!} \mathfrak{m}_{\gamma k/2+ \hg  }[f](t).
\end{equation} 
Proceeding in a similar way as in the  previous Subsection for the propagation of exponential moments proof, we will also relieve notation by omitting explicit dependence on time $t$ and relation to $f$ by setting
\begin{equation*}
\mathcal{E}^n_{\hg}[f](\beta t,t)=: \mathcal{E}^n_{\hg}, \quad\text{and}\quad  \mathcal{E}^n_{\hg; \gamma}[f](\beta t,t):= \mathcal{E}^n_{\hg; \gamma}.
\end{equation*}

Next, taking  the time derivative of $ \mathcal{E}^n_{\hg}$ yields the identity
\begin{equation}\label{exp mom eq}
\frac{\mathrm{d}}{\mathrm{d} t} \mathcal{E}^n_{\hg} = \beta \sum_{k=1}^n \frac{(\beta t)^{k-1}}{(k-1)!} \mathfrak{m}_{\gamma k/2} + \sum_{k=0}^{k_0-1}  \frac{(\beta t)^{k}}{k!}  \mathfrak{m}_{\gamma k/2}' +  \sum_{k=k_0}^n \frac{(\beta t)^{k}}{k!}   \mathfrak{m}_{\gamma k/2}', 
\end{equation}
which, making use of the ODIs for $ \mathfrak{m}_{\gamma k/2}'$   \eqref{poly ODI delta q}, evaluated in   $s:=\hg$, results into
\begin{multline}\label{poly ODI gen}
\frac{\mathrm{d}}{\mathrm{d} t} \mathfrak{m}_{\gamma k/2}\leq - \Aks \mathfrak{m}_{\gamma k/2+ \hg  } \\+   \B_{\gamma k} \sum_{\ell=1}^{\ell_{k}} \left( \begin{matrix}
k\\ \ell 
\end{matrix} \right) \left( \mathfrak{m}_{\gamma \ell/2+\hg} \ \mathfrak{m}_{\gamma k/2- \gamma \ell/2} + \mathfrak{m}_{\gamma k/2- \gamma \ell/2+\hg} \ \mathfrak{m}_{\gamma \ell/2} \right), 
\end{multline}
that allows  identity  \eqref{exp mom eq} to be estimated by above as follows. 

The first and second terms are   obtained  by simply re-indexing  the sum and using the  definition of the corresponding  shifted partial sum;   and the last one is control by above  by the right hand side of the moment ODI \eqref{poly ODI gen}, which together implies
\begin{multline}\label{pomocna 11}
\frac{\mathrm{d}}{\mathrm{d} t} \mathcal{E}^n_{\hg} \leq \beta \,  \mathcal{E}^n_{\hg; \gamma} + \sum_{k=0}^{k_0-1}  \frac{(\beta t)^{k}}{k!}  \mathfrak{m}_{\gamma k/2}' - \Aks  \sum_{k=k_0}^n \frac{(\beta t)^{k}}{k!}  
\mathfrak{m}_{\gamma k/2+ \hg  }\\+    \sum_{k=k_0}^n \frac{(\beta t)^{k}}{k!} \B_{\frac{\gamma k}{2}}  \sum_{\ell=1}^{\ell_{k}} \left( \begin{matrix}
k\\ \ell 
\end{matrix} \right) \left( \mathfrak{m}_{\gamma \ell/2+\hg}  \mathfrak{m}_{\gamma k/2- \gamma \ell/2} + \mathfrak{m}_{\gamma k/2- \gamma \ell/2+\hg} \mathfrak{m}_{\gamma \ell/2}\right)
\\ =: \beta \,  \mathcal{E}^n_{\hg; \gamma} + S_0 - \Aks S_1 +  \left(S_{2_1} + S_{2_2} \right),
\end{multline}
 for  any $k_0\ge \ks$, since $0 <\gamma/2\le 1$.   

We treat each of this terms term separately.

Clearly, as we wrote in part $(a)$ of the proof of Theorem~\ref{theorem gen prop ML},  we need to invoke the global in time estimates for moments of $ \mathfrak{m}_{\gamma k/2}(t)$
under initial conditions  for the generation property as  defined in 
%\eqref{poly gen max t} and  
\eqref{poly gen max t-2}.

 Hence, set 
\begin{equation*}
\mathfrak{m}_{    k \gamma/2}  \leq \mathfrak{B}_{    k \gamma/2} \ \max_{t>0}\{ 1, t^{-{k}}\},  \quad  \mathfrak{m}_{    k \gamma/2}'  \leq  B_{    k \gamma/2} \mathfrak{B}_{    k \gamma/2}\ \max_{t>0}\{ 1, t^{-{k}}\}, 
\end{equation*}
and  denote, for any fixed  finite ${    \gamma k_0/2} \geq \ks$, the  positive constant 
\begin{equation}\label{pomocna 11-3}
 \bar c_{\gamma k_0/2}:=   \max_{0\leq k\leq k_0-1}\{\mathfrak{B}_{\gamma k/2},     B_{ k \gamma/2} \mathfrak{B}_{ k \gamma/2} \}\, ,
\end{equation}
where $\mathfrak{B}_{    k \gamma/2} $  was defined in   \eqref{poly gen max t-2}

It is {   worth} to notice that  these  moments upper bounds are time dependent. This is a  crucial feature that enables the  generation of  exponential moments with polynomial moment initial data \eqref{initial data exp gen}  in a short interval of time, and then bootstrap the argument to get global in time estimates.

\medskip

\noindent {\bf Estimate for the  term $S_0$ from \eqref{pomocna 11}.}  From polynomial moment generation estimates \eqref{poly gen max t-2}, one can bound polynomial moments of any order  $k>\ks$, as well as their derivatives by means of \eqref{ODI poly} with bounds  $\mathfrak{B}^{k}$. In particular, set the moments upper bounds to be
\begin{equation*}
\mathfrak{m}_{\gamma k/2}  \leq \mathfrak{B}^{\gamma k/2} \ \max_{t>0}\{ 1, t^{-{k}}\},  \quad  \mathfrak{m}_{\gamma k/2}'  \leq  {    B}_{\gamma k/2} \mathfrak{B}^{\gamma k/2} \ \max_{t>0}\{ 1, t^{-{k}}\}, 
\end{equation*}
and in particular 
\begin{equation}\label{pomocna 11-4}
\mathfrak{m}_{\gamma k/2}  \leq \mathfrak{B}_{\gamma k/2}  t^{-{k}},  \quad\text{and}\quad   \mathfrak{m}_{\gamma k/2}'  \leq  {    B}_{\gamma k/2} \mathfrak{B}_{\gamma k/2}  t^{-{k}},  \qquad\text{for} \ 0< t\le 1. 
\end{equation}

%and  denote
%
% Let $   \max_{k\in \{0,\dots k_0-1\}} \left\{ \mathfrak{m}_{\gamma k/2}(t),  \mathfrak{m}'_{\gamma k/2}(t)  \right\}$ be achieved at  $k:=k_0$,   then,  set
% 
%  \begin{align}\label{pomocna 11-3}
%   c_{\gamma k_0/2}(t)  \frac{1}{t^{k_0}}   &:=     2^{\frac{3 \gamma}{2}-1}  \mathcal{C}_{\gamma k_0/2}  \, \mathfrak{B}^{\gamma k_0/2}\   \frac{1}{t^k_0}
%\end{align} 
%
%
%since $ \bar B_{\gamma k/2} := 2^{\frac{3 \gamma}{2}-1} \mathcal{C}_{\gamma k/2} \leq 2^{\frac{3 \gamma}{2}-1} \kappa^{lb}$, as shown in \eqref{constants k}.
%

%{    Then,  taking $t\leq 1$, one can obtain the first constrain to the choice of the rate $\beta$ by taking  $  \mathfrak{m}_{\gamma k/2}' (t) \leq c_{\gamma k_0/2} \leq  \bar B_{\gamma k/2} \mathfrak{B}^{\gamma k/2}  t^{-k}$}

Then,  for any $ t\in(0,1]$ and $k\le k_0     -1$, and estimating $S_0$ by 

%{\color{red} {\tt dont we need $t^{-k_0}$ multiplying    $c_{\gamma k_0/2} $?}}
\begin{equation}\label{S0 gen}
S_0:= \sum_{k=0}^{k_0-1}  \frac{(\beta t)^{k}}{k!}  \mathfrak{m}_{\gamma k/2}'(t) \leq  \sum_{k=0}^{k_0-1}  \frac{\beta^{k}}{k!} \mathfrak{B}_{\gamma k/2}   \leq \bar c_{\gamma k_0/2} e^{\beta} \leq 2 \bar c_{\gamma k_0/2},
\end{equation}
with  $\bar c_{\gamma k_0/2}$ defined in \eqref{pomocna 11-3} constant in time,
for  $ t\in(0, 1]$    and $ {    \gamma k_0/2} \ge \ks$, which holds provided that $\beta$  satisfies, at most
\begin{equation}\label{gen alpha 1}
\beta \leq  \ln 2
\end{equation}
exactly as it was required in the proof of  Theorem~\ref{theorem gen prop ML}, part $(a)$, \eqref{beta first cond}.

%%{\color{red} {\tt dont we need $t^{-k_0}$ multiplying    $c_{\gamma k_0/2} $?}}
%\begin{equation*}
%S_0:= \sum_{k=0}^{k_0-1}  \frac{(\beta t)^{k}}{k!}  \mathfrak{m}_{\gamma k/2}'(t) \leq  \sum_{k=0}^{k_0-1}  \frac{\beta^{k}}{k!} \leq \bar c_{\gamma k_0/2} e^{\beta} \leq 2 c_{\gamma k_0/2},
%\end{equation*}
%for  $ t\in(01]$    and $ k_0\ge \ks$, which holds provided that $\beta$  satisfies, at most
%\begin{equation}\label{gen alpha 1}
%\beta \leq  \ln 2
%\end{equation}
%exactly as it was required in the proof of  Theorem~\ref{theorem gen prop ML}, part (a), \eqref{beta first cond}.

\noindent {\bf Estimate for the  term $S_1$ from \eqref{pomocna 11}.}  This estimate is crucial  for the control from below of the negative contribution associated to the ODI for    exponential moments  that yields  their generation  argument,   with initial data depending on just $f_0(v,I) \in L^1_{\ks}(\mathbb{R}^{3+})$.  

Indeed,   using the boundedness of the $\gamma/2$-shifted moments, namely  $\mathfrak{m}_{\gamma k/2+ \hg  } \leq \mathfrak{B}^{\gamma k/2+\hg} \ \max_{t>0}\{ 1, t^{-k-1}\}$, and that $\beta$ has been chosen to satisfy  $e^{\beta} <2$, then 
\begin{align}\label{S1 gen}
S_1:= \sum_{k=k_0}^n \frac{(\beta t)^{k}}{k!} \mathfrak{m}_{\gamma k/2+ \hg  } &= \mathcal{E}^n_{\hg; \gamma} -  \sum_{k=0}^{ k_0-1 } \frac{(\beta t)^{k}}{k!} \mathfrak{m}_{\gamma k/2+ \hg  }  \nonumber \\
&\ \ \geq  \mathcal{E}^n_{\hg; \gamma} - 2 \bar c_{\gamma (k_0+1)/2}{\frac{1}{t} } ,
\end{align}
for $\beta$ chosen as in \eqref{gen alpha 1} and  $ \bar c_{\gamma (k_0+1)/2}$ as defined in \eqref{pomocna 11-3}.   

Note that the appearance of the factor $1/t$ multiplying $2 \bar c_{ \gamma(k_0 +1)/2} $ is a major deviation from the analog estimate of the term $S_1$ in the propagation argument of Theorem~\ref{theorem gen prop ML}, part $(b)$, as it will be use later on in this proof.
  
\noindent  {\bf Estimate for the  term $S_2$ from \eqref{pomocna 11}.} Terms $S_{2_1}$ and $S_{2_2}$ are treated in the same way as follows. We will detail calculation for $S_{2_1}$.  First,  reorganize the terms in sum to obtain
\begin{align}\label{S2 gen}
S_{2_1}&:=\sum_{k=k_0}^n \frac{(\beta t)^{k}}{k!}  \B_{\frac{\gamma k}{2}}  \sum_{\ell=1}^{\ell_{k}} \left( \begin{matrix}
k\\ \ell 
\end{matrix} \right)  \mathfrak{m}_{\gamma \ell/2+\hg}  \mathfrak{m}_{\gamma k/2- \gamma \ell/2} \\
&= \sum_{k=k_0 }^n \B_{\frac{\gamma k}{2}} \sum_{\ell=1}^{\ell_{k}} \frac{(\beta t)^{\ell}  \mathfrak{m}_{\gamma \ell/2+\hg}}{\ell!}   \frac{(\beta t)^{k-\ell}\mathfrak{m}_{\gamma k/2- \gamma \ell/2} }{(k-\ell)!} \leq \B_{\frac{\gamma     k_0}{2}}  \mathcal{E}^n_{\hg; \gamma} \, \mathcal{E}^n_{\hg}. \nonumber
\end{align}
where,  as defined in \eqref{constants k}, $\B_{\frac{\gamma k}{2}} = 2^{\frac{3 \gamma}{2}-1}{\mathcal{C}}_{\frac{\gamma k}2}  $ monotonically  decays with respect to $k  \geq k_0$,  which implies $\B_{\frac{\gamma k}{2}}  \leq \B_{\frac{\gamma k_0}{2}}$  for  $\frac{2}{\gamma}\ks< k_0 \leq k\le n$.  The  estimate for the term $S_{2_2}$ in obtain in a similar way.

Gathering and inserting   estimates  \eqref{S0 gen},  \eqref{S1 gen} and \eqref{S2 gen} into \eqref{pomocna 11},     with  $\beta$ satisfying \eqref{gen alpha 1}, that is $\beta<\ln2$, then all partial sums from \eqref{pomocna 11} satisfy the  following a priori ODIs, for any $\frac{2}{\gamma} \ks < k_0 \sout{\le k}\le n$, 
\begin{align}\label{pomocna 25}
\frac{\mathrm{d}}{\mathrm{d} t} \mathcal{E}^n_{\hg} &\leq \beta \,  \mathcal{E}^n_{\hg; \gamma} + 2 c_{\gamma k_0/2} - \Aks \left(  \mathcal{E}^n_{\hg; \gamma} - 2 \bar c_{ \gamma(k_0 +1)/2}  \frac{1}{t} \right) +   \B_{\frac{\gamma}{2}  { k_0}}  \, \mathcal{E}^n_{\hg; \gamma} \, \mathcal{E}^n_{\hg} 
\nonumber\\ 
&=\mathcal{E}^n_{\hg; \gamma} \left(\beta-\Aks +    \B_{\frac{\gamma     k_0}{2}}    \mathcal{E}^n_{\hg}    \right)   +  2 \bar c_{ \gamma(k_0 +1)/2} \left( \frac{t+  \Aks}{t} \right).
\end{align}

Now, fix $\gamma\in (0,2]$,  and  $\beta<\ln2$,  while  seting the time interval $[0, \bar{T}_n ]$  with the upper end given  by 
\begin{equation*}
\bar{T}_n := \sup\left\{  t\in [0,1]: \mathcal{E}^n_{\hg}[f](\beta t,t)\leq 4 M_G \right\}.
\end{equation*}

Note that this
$ \bar{T}_n $ is well defined. 
Indeed, taking the conserved quantity
\begin{equation}\label{M0 gen}
M_G:= \left\| f \right\|_{L^1_1} = \int_{\mathbb{R}^{3+}} f_0(v,I) \langle v, I\rangle^{    2} \, \mathrm dI \mathrm dv ,   \qquad { \text{with}\  f_0(v,I) \in \Omega,  } 
\end{equation}
and noting  that just for  $t=0$, $\mathcal{E}^n_{\hg}(0,0) \leq   \mathcal{E}_{\hg}(0,0)=\mathfrak{m}_{0}(0)<\left\| f \right\|_{L^1_1}\!(0)   < 4 M_G$, then  
 by the continuity of partial sums $\mathcal{E}^n_{\hg}$ with respect to $t$  imply that  $\mathcal{E}^n_{\hg}(\beta t,t) \leq 4 M_G$ on a slightly larger time interval $t\in[0,t_n)$, $t_n>0$, and thus $0 < \bar{T}_n     \leq 1$.

In the next step, we  need to show that there is a rate  $0< \beta< \ln 2$ such that   $ \mathcal{E}^n_{\hg}(t)$ is bounded for $t\in [0, \bar{T}_n ]$. 
It is clear that for any $t\in [0, \bar{T}_n ]$,  the evaluation in the partial sum  $\mathcal{E}^n_{\hg}(\beta t,t)\leq 4 M_G$,  since $\bar{T}_n\leq 1$ and so   $t^{-1}\geq 1$. Therefore,  from \eqref{pomocna 25}, the ODI for the partial sums with time dependent rates $\beta t$ yields  the following upper estimate for  the time derivative of $ \mathcal{E}^n_{\hg}(t)$, namely,  for any time $t$, such that  $\ \ 0\le t \le \bar{T}_n\le 1$, 
\begin{equation}\label{pomocna 25.b}
\frac{\mathrm{d}}{\mathrm{d} t} \mathcal{E}^n_{\hg}(t) \leq   - \mathcal{E}^n_{\hg; \gamma}(t) \left(  \! - \beta  +\Aks -\B_{\frac{\gamma}2  k_0} \, 4 M_G\right) +   2 \bar c_{ \gamma(k_0 +1)/2} \frac{1+  \Aks}{t} \, .
\end{equation}

Now, since  $ \B_{\frac{\gamma k_0}2 }  $ converges to zero as  $ {    \frac{2}{\gamma} }\ks< k_0$ grows, we can choose large enough number of moments $k_0$,  such that 
\begin{equation}\label{choice beta rate k00} 
\frac{\Aks} 2 >  4 \B_{\frac{\gamma}2  k_0}  M_G,
\end{equation}
with $M_G + =\|f\|_{L^1_1}$ from \eqref{M0 gen},     then just taking  a small enough $\beta$ satisfying 
\begin{equation}\label{choice beta rate k0} 
%- \beta + \Aks -  \B_{\frac{\gamma k_0}{2}}  \, 4 M_G > \frac{\Aks}{2},\quad\text{or, equivalently,}\quad
  0< \beta < \frac{\Aks}2 
\, ,
 \end{equation}
yields the following upper bound from \eqref{pomocna 25.b} 
\begin{equation}\label{K3}
\frac{\mathrm{d}}{\mathrm{d} t} \mathcal{E}^n_{\hg} \leq  - \frac{\Aks}{2} \mathcal{E}^n_{\hg; \gamma} +  2\bar c_{ \gamma(k_0 +1)/2}\left(\frac {1+  \Aks}{t} \right).
\end{equation}

Finally, the shifted moment  $\mathcal{E}^n_{\hg; \gamma}(\beta t,t) $   can be bounded as follows
\begin{equation*}
\mathcal{E}^n_{\hg; \gamma}(\beta t,t) = \sum_{k=1}^{n+1} \frac{ (\beta t)^k \mathfrak{m}_{\gamma k/2}(t) }{k!} \frac{k}{\beta t} \geq  \frac{1}{\beta t}  \sum_{k=2}^{n} \frac{ (\beta t)^k \mathfrak{m}_{\gamma k/2}(t) }{k!} \geq  \frac{\mathcal{E}^n_{\hg}(\beta t,t) - M_G}{\beta t},
\end{equation*}
hence, the ODI \eqref{pomocna 25.b} yields the reduced one
\begin{equation}\label{pomocna 25.c}
\frac{\mathrm{d}}{\mathrm{d} t} \mathcal{E}^n_{\hg} \leq  - \frac{\Aks}{2\beta t} \left( \mathcal{E}^n_{\hg} - M_G -  \frac{2\beta}{\Aks} 2\bar c_{ \gamma(k_0 +1)/2} (1+  \Aks) \right).
\end{equation}

Now,  recalling the value of $M_G$ from the initial data from \eqref{M0 gen}, and that $\hg<1$, and so $\bar c_{ \gamma(k_0 +1)/2}\leq \bar c_{k_0 +1}$,    one can choose $\beta$ small enough such that 
\begin{equation}\label{choice beta rate}
M_G +  4\beta \bar c_{ k_0+1} \left(\frac{1+  \Aks}{\Aks} \right)  < 2 \|f\|_{L^1_{\ks}}
\end{equation} 
 or equivalently, choosing $\beta$ the smallest,   between $\ln 2$,    \eqref{choice beta rate k0} and \eqref{choice beta rate}, by 
\begin{equation}\label{choice beta rate 2}
0< \beta\ <\ \min\left\{\ln 2; \frac{ \Aks }{4 \bar c_{ k_0+1} } \frac{\|f\|_{L^1_{\ks}} }{1+  \Aks }\right\},
\end{equation} 
for $k_0$ large enough to satisfy condition \eqref{choice beta rate k00} and the corresponding constant  $\bar c_{ k_0+1}$ from  \eqref{pomocna 11-3}
the upper bound for the right hand side in \eqref{pomocna 25.c} yields the absorption    ODI,
\begin{equation}\label{ODI absorbing}
\frac{\mathrm{d}}{\mathrm{d} t} \mathcal{E}^n_{\hg}(\beta t,t) \leq  - \frac{\Aks}{2\beta t} \left( \mathcal{E}^n_{\hg}(\beta t,t) - 2\|f\|_{L^1_{\ks}} \right).
\end{equation}
or, since  $\|f\|_{L^1_1}$ is constant in time, due to conservation of mass and energy, we can  equivalently write  for $X(t):= \mathcal{E}^n_{\hg}(\beta t,t) - 2\|f\|_{L^1_1}$, 
\begin{equation*}
\frac{\mathrm{d}}{\mathrm{d} t}  X(t)  \leq  - \frac{\Aks}{2\beta t} X(t), \quad \text{ that implies} \quad X(t)\le X(0) , \quad\forall t\in (0,\bar{T}_n],
\end{equation*}
by Gronwall inequality  or equivalently,  since $ \mathcal{E}^n_{\hg}(0,0) <  \|f\|_{L^1_{\ks}} $
 for $f\in \Omega$,
\begin{align}\label{ODI absorbing-2}
\mathcal{E}^n_{\hg}(\beta t,t)  - 2 \|f\|_{L^1_{\ks}}   &\le  \mathcal{E}^n_{\hg}(0,0)  - 2 \|f\|_{L^1_{\ks}}  \leq  - 1 \|f\|_{L^1_{\ks}} , \ \text{that implies}\nonumber\\[4pt]
 \mathcal{E}^n_{\hg}(\beta t,t)  &\leq \ \|f\|_{L^1_{\ks}} ,\qquad\forall t\in (0,\bar{T}_n].
\end{align}

By the  continuity of  partial sums $\mathcal{E}^n_{\hg}(\beta t,t)$ these inequalities hold on a slightly larger interval, which contradicts maximality of $\bar{T}_n$, unless $\bar{T}_n=1$. Therefore, we can conclude $\bar{T}_n=1$  uniformly in  $n \in \mathbb{N}$.

Hence, letting $n\rightarrow \infty$, we conclude 
\begin{equation}\label{pomocna 27}
\mathcal{E}_{\hg}(\beta t,t)   :=   \int_{\mathbb{R}^{3+}} f(t,v,I) \, e^{\beta t \left\langle v, I \right\rangle^{\gamma}} \mathrm{d}I \mathrm{d}v  \leq    \|f\|_{L^1_{\ks}(\mathbb{R}^{3+})}  \qquad \forall t \in [0,1], 
\end{equation}
and thus the exponential moment of the order $\gamma$,  and a rate  $0< \beta$, from \eqref{choice beta rate k0} and \eqref{choice beta rate},  
propagates in the time interval $(0,1]$, and stays uniformly bounded for all $t>1$ to obtain a global in time as the argument bootstraps on time intervals $[l, l+1]$, for $l\in \mathbb N^+$.

It is also {   very} significant   to notice in the generation of exponential moments that the  exponential moment rate $\beta$ is proportional to the coercive constant $\Aks$, and  inversely proportional to the moments generation bounds expressed in the constant $ \bar c_{ k_0+1}$ defined in \eqref{pomocna 11-3} with  ${    k_0 \gamma/2}>\ks$, $k_0$  large enough to satisfy   condition \eqref{choice beta rate k00}, The $\ks$-moment is characterized 
from \eqref{k from moment bound} and depends on the assumptions on the three different transition probability rates associated to the polyatomic gas model presented in this manuscript. 
These conditions,  while somehow  different in substance, {   they} are closely related to the coerciviness property of the Boltzmann collisional binary form for polyatomic gases,   as much as described  at the end of the previous Subsection of theorem~\ref{theorem gen prop ML}, part $(a)$.

%{    In particular, for time $t=1$, \eqref{pomocna 27} can be seen as an initial condition for propagation \eqref{initial data exp prop},  ??}

\end{proof}

\section*{Acknowledgments}
The authors would like to thank Professor Thierry Magin for fruitful discussions on the topic.   Authors also  thank and gratefully acknowledge the hospitality and support from the Oden Institute of Computational Engineering and Sciences and the University of Texas Austin.
Irene M. Gamba was supported by the funding DMS-RNMS-1107291 (Ki-Net),  NSF DMS1715515 and  DOE DE-SC0016283 project \emph{Simulation Center for Runaway Electron Avoidance and Mitigation}.   
Milana Pav\'c-\v Coli\'c acknowledges  support of the Ministry of Education, Science and Technological Development of the Republic of Serbia (Grant No. 451-03-68/2020-14/ 200125),  and the support of the Science Fund of the Republic of Serbia, PROMIS, \#6066089, MaKiPol.

\appendix

\section{Proof of the Lemma \ref{Lemma coll mapping} (Jacobian of the collision transformation)}\label{app Jacobian}

\begin{proof}
	Using ideas from \cite{DesMonSalv}, we decompose the mapping $T$ from \eqref{coll mapping} into a sequence of mappings and calculate Jacobian of each of them. Then the Jacobian of $T$ will be a product of those Jacobians.   More precisely, $T$ can be understood  as a composition of the following transformations 
	\begin{equation*}
	T= T_9 \circ T_8 \circ T_7 \circ T_6 \circ T_5 \circ T_4 \circ T_3 \circ T_2 \circ T_1,
	\end{equation*}
	where composition is understood as $(f\circ g)(x) = f(g(x))$ and each $T_i$ is described below.
	\begin{itemize}
		\item[(1)]  We first pass to the center-of-mass reference frame 
		\begin{equation*}
		T_1:(v, v_*, I, I_*, r, R, \sigma) \mapsto (u, V,I,I_*,r,R, \sigma),
		\end{equation*}
		where $u$ and $V$ are relative velocity and velocity of center of mass from \eqref{cm-rv}. It is clear that Jacobian of this transformation is 1,
		\begin{equation*}
		J_{T_1}=1.
		\end{equation*}
		\item[(2)] For the relative velocity $u$ we pass to its spherical coordinates $\left(\left|u\right|, \frac{u}{\left|u \right|}\right)$, where $u/\left| u \right| \in S^2$ is the angular variable, with the transformation $T_2$,
		\begin{equation*}
		\left({u},{V},I,I_*,r,R,{\sigma}\right) \mapsto
		\left(\left|u\right|,\frac{{u}}{\left| u \right|},{V},I,I_*,r,R,{\sigma}\right),
		\end{equation*}
		whose Jacobian is
		$$J_{T_2}=\left| {u}\right| ^{-2}.$$
		\item[(3)] In order to facilitate further calculation, we consider square of relative speed instead of relative speed itself, 
		$$T_3: \left(\left|u\right|,\frac{u}{\left| u \right|}, V, I,I_*,r,R, \sigma \right) \mapsto \left(\left|u \right|^2 ,\frac{{u}}{\left| {u} \right|}, V,I,I_*,r,R, \sigma\right) $$
		with the Jacobian  
		$$J_{T_3}= 2 \left| {u}\right|.$$
		\item[(4)] Instead of $I_*$ we will use total energy $E$, 	linked with the equation \eqref{micro CL energy mass-rel vel},
		\begin{equation*}
		T_4:  \left(\left|u \right|^2 ,\frac{{u}}{\left| {u} \right|}, V,I,I_*,r,R, \sigma\right) \mapsto \left(\left|u \right|^2 ,\frac{{u}}{\left| {u} \right|}, V,I,E,r,R, \sigma\right)
		\end{equation*}
		whose Jacobian is 1,
		$$ J_{T_4}= 1.$$
		\item[(5)] Moreover, instead of $R$ we want to  have $ER$,
		\begin{equation*}
		T_5: \left(\left|u \right|^2 ,\frac{{u}}{\left| {u} \right|}, V,I,E,r,R, \sigma\right) \mapsto \left(\left|u \right|^2 ,\frac{{u}}{\left| {u} \right|}, V,I,E,r,E R, \sigma\right),
		\end{equation*}
		with Jacobian 
		$$ J_{T_5}= E.$$
		\item[(6)]  Finally, we pass to pre-collisional  quantities with the following mapping
		$$T_6: \left(\left|u \right|^2 ,\frac{{u}}{\left| {u} \right|}, V,I,E,r,ER, \sigma\right) \mapsto \left(\left|u' \right|^2 ,\frac{{u'}}{\left| u' \right|}, V',I',I'_*,r',R', \sigma'\right). $$
		Let us compute Jacobian of this central transformation. First, for $V$ we are using conservation law \eqref{CL V}. Change of the unit vectors  $\frac{u}{\left| u \right|}$ and $\sigma$ can be considered as a rotation. Thus we eliminate these variables and for the rest of variables, we use the following relations
		\begin{multline*}
		\left|u' \right|^2 = \frac{4 RE}{m}, \ I'=r(1-R)E, \  I'_*=(1-r)(1-R)E, \\ r'=\frac{I}{E-\frac{m}{4}\left|u\right|^2}, \ R'=\frac{m \left|u\right|^2}{4 E},
		\end{multline*} 
		and compute the corresponding Jacobian
		\begin{multline*}
		J_{\left(\left|u \right|^2 ,I,E,r,ER\right)  \mapsto \left(\left|u' \right|^2,I',I'_*,r',R'\right)}\\
		=\begin{vmatrix}
		0 & 0 & \frac{4 R}{m}& 0& \frac{4}{m} \\[5pt]
		0 &0&  r(1-R)& (1-R)E& -r\\[5pt]
		0 &0&  (1-r)(1-R)& -(1-R)E& 1-r\\[5pt]
		\frac{m I}{4(E-\frac{m}{4}\left|u\right|^2)^2} &\frac{1}{E-\frac{m}{4}\left|u\right|^2}&  -	\frac{ I}{(E-\frac{m}{4}\left|u\right|^2)^2} & 0 & 0\\[5pt]
		\frac{m}{4 E}& 0 & - \frac{ m \left|u\right|^2}{4 E^2} & 0&0\\[5pt]
		\end{vmatrix} \\[5pt]
		=\frac{(-1)^{4+2}}{E-\frac{m}{4}\left|u\right|^2} \frac{(-1)^{4+1}m}{4E} 
		\begin{vmatrix}
		\frac{4 R}{m}& 0& \frac{4}{m} \\[5pt]
		r(1-R)& (1-R)E& -r\\[5pt]
		(1-r)(1-R)& -(1-R)E& 1-r\\[5pt]
		\end{vmatrix} \\[5pt]
		=\frac{-m(1-R)E}{(E-\frac{m}{4} \left| u \right|^2)4E} 
		\begin{vmatrix}
		\frac{4 R}{m}& 0& \frac{4}{m} \\[5pt]
		r(1-R)& 1& -r\\[5pt]
		(1-r)(1-R)& -1& 1-r\\[5pt]
		\end{vmatrix} \\[5pt]
		=\frac{-m(1-R)}{4(E-\frac{m}{4}\left| u \right|^2)} 
		\begin{vmatrix}
		\frac{4 R}{m}& 0& \frac{4}{m} \\[5pt]
		1-R& 0& -1\\[5pt]
		(1-r)(1-R)& -1& 1-r\\[5pt]
		\end{vmatrix} \\[5pt]
		= \frac{-m(1-R)}{4\left(E-\frac{m}{4}\left| u \right|\right)}\left(\frac{-4 R}{m}+\frac{4(R-1)}{m}\right) \nonumber 
		\end{multline*}
		Finally,
		$$
		J_{T_6}=\frac{1-R}{(E-\frac{m}{4} \left|u\right|^2)}= \frac{1-R}{I+I_*}=\frac{1-R}{(1-R')E }.$$
		\item[(7)] Now we go back, first from squares to squares of relative speed to relative speed itself,
		$$T_7:  \left(\left|u' \right|^2 ,\frac{{u'}}{\left| u' \right|}, V',I',I'_*,r',R', \sigma'\right) \mapsto \left(\left|u' \right| ,\frac{{u'}}{\left| u' \right|}, V',I',I'_*,r',R', \sigma'\right). $$	
		with Jacobian
		$$J_{T_7}= \frac{1}{2 \left|u'\right|}. $$
		\item[(8)] For $u'$ we pass from spherical coordinates to  Cartesian ones,
		$$T_8:  \left(\left|u'\right|,\frac{{u'}}{\left| u' \right|}, V',I',I'_*,r',R', \sigma'\right) \mapsto \left(u', V',I',I'_*,r',R', \sigma'\right). $$	
		with Jacobian
		$$J_{T_8}= \left|u'\right|^2. $$
		\item[(9)] We go back from center-of-mass reference frame,  $$T_9:(u',V',I',I'_*,r',R',\sigma') \mapsto (v', v'_*,I',I'_*,r',R', \sigma') $$ with unit Jacobian
		$$J_{T_9}=1.$$
	\end{itemize}
	Finally, we get the Jacobian of transformation $T$,
	$$J_T = \prod_{i=1}^{9}J_{T_i}= \frac{ (1-R) \left|u'\right|  }{ (1-R') \left|u\right| }=  \frac{ (1-R) R^{1/2} }{ (1-R') R'^{1/2} },
	$$
	where for the last inequality  we have used $\left|u'\right| = \sqrt{\frac{4RE}{m}}$ and $\left|u\right| = \sqrt{\frac{4R'E}{m}}$.
\end{proof}

\section{Explicit calculation of multiplicative factors to the transition function models}\label{Sec: App bounds}
This appendix provides upper and lower estimates for the multiplicative factors $\dgl$, $\dgu$, $\egl$, $\egu$ for the three models of transition function ${\mathcal{B}} = \mathcal{B}(v, v_*, I, I_*, r,  R, \sigma)$ introduced in section~\ref{Sec: tf},  \eqref{model 1}, \eqref{model 2} and \eqref{model 3}, namely\\[5pt]
(Model 1) $\displaystyle \mathcal{B}  = b(\hat{u}\cdot\sigma) \left( \frac{m}{4} \left|u\right|^2 + I + I_* \right)^{\gamma/2}$,\\[5pt]
(Model 2) 	$ \displaystyle	\mathcal{B} =b(\hat{u}\cdot\sigma) \left(  R^{\gamma/2} |u|^\gamma + (1-R)^{\gamma/2} \left(\frac{I+I_*}{m}\right)^{\gamma/2} \right)$,\\[5pt]
(Model 3) $	 \displaystyle \mathcal{B}= b(\hat{u}\cdot\sigma) \left(  R^{\gamma/2} |u|^\gamma  + \left( r (1-R)\frac{I}{m} \right)^{\gamma/2} + \left( (1-r) (1-R)\frac{I_*}{m} \right)^{\gamma/2} \right)$,\\[5pt]
where $u:=v-v_*$ and $\hat{u}=u/\left| u \right|$. 

\subsection{Calculation of the upper bounds} For the three models  \eqref{model 1}, \eqref{model 2} and \eqref{model 3}, we   determine functions  $\dgu$ and $\egu$ that appear in bound from above for the   transition function $\mathcal{B}$, as defined in  \eqref{trans prob rate ass}. 

\subsubsection{Model 1}\label{Sec Model 1 upper}
We write for the model 1, from \eqref{model 1},
\begin{multline*}
\left( \frac{m}{4} \left|v-v_*\right|^2 + I + I_* \right)^{\gamma/2} = m^{\gamma/2}\left( \frac{1}{4} \left|v-v_*\right|^2 + \frac{I}{m}  + \frac{I_*}{m}  \right)^{\gamma/2}
\\
\leq m^{\gamma/2}\left( \left|v-v_*\right|^2 + \frac{I}{m}  + \frac{I_*}{m}  \right)^{\gamma/2}
\leq m^{\gamma/2} \left( \tf \right),
\end{multline*}
for any $\gamma \in (0,2]$. 
Therefore, we can take   $\dgu=1$. and $\egu=m^{\gamma/2}$.

\subsubsection{Model 2}\label{Sec Model 2 upper}
  We first estimate  model 2 in \eqref{model 2}  by 
\begin{multline*}
R^{\gamma/2} |u|^\gamma + (1-R)^{\gamma/2} \left(\frac{I+I_*}{m}\right)^{\gamma/2}
\\
 \leq \max\left\{ R, (1-R) \right\}^{\gamma/2} \left( \tfu \right),
\end{multline*}
and thus one possible choice is
\begin{equation}\label{dgu model 2 max}
\dgu=1, \quad \egu=\max\left\{ R, (1-R) \right\}^{\gamma/2}.
\end{equation}
Another more course estimate can be obtained by using $R\leq 1$, which leads to  the choice
\begin{equation*}
\dgu=\egu= 1.
\end{equation*}

\subsubsection{Model 3}\label{Sec Model 3 upper}
Finally,  model 3 from \eqref{model 3} is estimated as follows 
\begin{multline*}
 R^{\gamma/2} |v-v_*|^\gamma  + \left( r (1-R)\frac{I}{m} \right)^{\gamma/2} + \left( (1-r) (1-R)\frac{I_*}{m} \right)^{\gamma/2}
 \\
 \leq \max\left\{ R, (1-R) \right\}^{\gamma/2} \left(   |u|^\gamma + \left(\frac{I}{m}\right)^{\gamma/2} + \left(\frac{I_*}{m}\right)^{\gamma/2} \right)
\\ \leq 2^{1-\gamma/2}  \max\left\{ R, (1-R) \right\}^{\gamma/2} \left( \tfu \right).
\end{multline*}
since $\max\left\{1, r^{\gamma/2}, (1-r)^{\gamma/2} \right\} =1$. Thus, one possible choice is
\begin{equation}\label{dgl model 3 min}
\dgu=1, \quad \egu=  2^{1-\gamma/2}   \max\left\{ R, (1-R) \right\}^{\gamma/2}.
\end{equation}
Another possibility comes with the use of  $R \leq 1$,  which leads to 
\begin{equation*}
\dgu=1, \quad \egu=  2^{1-\gamma/2}.
\end{equation*}

\subsection{Calculation of the lower bounds} In this Section we provide various choices for functions  $\dgl$ and $\egl$ that appear in the bound from below of the transition function $\mathcal{B}$, from \eqref{trans prob rate ass}, for the three models of transition functions   introduced in \eqref{model 1},  \eqref{model 2} and \eqref{model 3}.

\subsubsection{Model 1}\label{Sec Model 1 lower}
For the model 1 from \eqref{model 1} we can estimate
\begin{equation*}
 \left( \frac{m}{4} \left|u\right|^2 + I + I_* \right)^{\gamma/2} \geq m^{\gamma/2} 2^{-(\gamma/2-1) }\left(  \tfu\right),
\end{equation*}
which implies  $\dgl=1$, and $\egl =  m^{\gamma/2} 2^{-(\gamma/2-1) }$.

\subsubsection{Model 2}\label{Sec Model 2 lower} For the model 2 introduced in \eqref{model 2}, we write
\begin{multline*}
 R^{\gamma/2} |u|^\gamma + (1-R)^{\gamma/2} \left(\frac{I+I_*}{m}\right)^{\gamma/2} \\ \geq \min\{ R^{\gamma/2}, (1-R)^{\gamma/2}   \} \left( \tfu \right)
\\
\geq R^{\gamma/2} (1-R)^{\gamma/2} \left( \tf \right),
\end{multline*}
and therefore  $\dgl=1$ and for $\egl$ we have two possible choices, 
\begin{equation}\label{dgl model 2 min}
\egl =  \min\{ R^{\gamma/2}, (1-R)^{\gamma/2}   \},
\end{equation}
or
\begin{equation}\label{dgl model 2}
\egl = R^{\gamma/2} (1-R)^{\gamma/2}.
\end{equation}

\subsubsection{Model 3}\label{Sec Model 3 lower} For the model 3 from \eqref{model 3} we estimate 
\begin{multline*}
 R^{\gamma/2} |u|^\gamma  + \left( r (1-R)\frac{I}{m} \right)^{\gamma/2} + \left( (1-r) (1-R)\frac{I_*}{m} \right)^{\gamma/2} \\ 
 \geq \min\left\{ R, (1-R)  \right\}^{\gamma/2} \left(  |u|^\gamma  + \left( r \frac{I}{m} \right)^{\gamma/2} + \left( (1-r) \frac{I_*}{m} \right)^{\gamma/2} \right)
\\
\geq \min\left\{ R, (1-R)  \right\}^{\gamma/2} \min\left\{ r, (1-r)  \right\}^{\gamma/2}  \left( \tfu \right) 
\\
\geq  r^{\gamma/2} (1-r)^{\gamma/2} R^{\gamma/2} (1-R)^{\gamma/2} \left( \tfu \right),
\end{multline*}
which allows two different choices
\begin{equation}\label{egl model 3 min} 
\dgl = \min\left\{ r, (1-r)  \right\}^{\gamma/2}, \quad \egl = \min\left\{ R, (1-R)  \right\}^{\gamma/2},
\end{equation}
or
\begin{equation}
\dgl = r^{\gamma/2} (1-r)^{\gamma/2}, \quad \egl = R^{\gamma/2} (1-R)^{\gamma/2}.
\end{equation}

\section{Computation of the constant $\mathcal{C}_{n}^1$ from Lemma \ref{lemma povzner}(The Polyatomic Compact Manifold Averaging Lemma)}\label{App Const}

This Section is devoted to the computation of the constant $\mathcal{C}_{n}^1$ from \eqref{Cinfty integral}, namely,
\begin{equation*}
\mathcal{C}_{n}^1 =\int_{0}^{1}  \int_{0}^1 \left( \max\left\{ \frac{1+ \mu }{2},  r \right\}\right)^{n} \mathrm{d}r \,  \mathrm{d}\mu.
\end{equation*}
We split integration domain of $r$ as $[0,1]= [0, \tfrac{1+\mu}{2}] \cup  [ \tfrac{1+\mu}{2}, 1] $. Then we use
\begin{equation*}
\max\left\{ \frac{1+ \mu }{2},  r \right\} = 
\begin{cases}
\frac{1+ \mu }{2}, \quad &r\in [0, \tfrac{1+\mu}{2}],\\[5pt]
r, \quad &r\in [\tfrac{1+\mu}{2},1]\\
\end{cases}
\end{equation*}
which yields
\begin{align}
\mathcal{C}_{n}^1 &=\int_{0}^{1}  \int_{0}^{\tfrac{1+\mu}{2}} \left( \frac{1+ \mu }{2}\right)^{n} \mathrm{d}r \,  \mathrm{d}\mu + \int_{0}^{1}  \int_{\tfrac{1+\mu}{2}}^1 r^{n} \mathrm{d}r \,  \mathrm{d}\mu \nonumber
\\
&=\int_{0}^{1}   \left( \frac{1+ \mu }{2}\right)^{n+1}  \mathrm{d}\mu + \frac{1}{n+1} \int_{0}^{1}  \left( 1- \left( \frac{1+ \mu }{2}\right)^{n+1}  \right) \mathrm{d}\mu \nonumber
\\
&= \frac{1}{n+1} +  \frac{2 n}{(n+1)(n+2)} \left( 1 - \left(\frac{1}{2}\right)^{n+2}  \right). \label{CinftyApp}
\end{align}
that completes the computation of $\mathcal{C}_{n}^1 $ as given in \eqref{Cinfty}.

\section{Some technical results }

\begin{lemma}[Polynomial inequality I] \label{binomial}
	Assume $p>1$, and let $n_p=\lfloor\frac{p+1}{2}\rfloor$. Then for all $x, y>0$, the following inequality holds
	\begin{equation*}
	\left(x+y\right)^p - x^p - y^p \leq \sum_{n=1}^{n_p} \left( \begin{matrix}
	p \\ n 
	\end{matrix} \right) \left(x^n y^{p-n} + x^{p-n}  y^n \right).
	\end{equation*}	
\end{lemma}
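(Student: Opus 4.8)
\textbf{Proof proposal for Lemma~\ref{binomial}.}

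The plan is to exploit the full binomial theorem together with the symmetry of the binomial coefficients, so that the ``tail'' of the binomial expansion can be folded onto the ``head''. First I would write, for integer $p$, the identity $(x+y)^p = \sum_{n=0}^{p} \binom{p}{n} x^n y^{p-n}$, subtract the two extreme terms $x^p$ (the $n=p$ summand) and $y^p$ (the $n=0$ summand), so that
\begin{equation*}
(x+y)^p - x^p - y^p = \sum_{n=1}^{p-1} \binom{p}{n} x^n y^{p-n}.
\end{equation*}
Then I would pair the summand indexed by $n$ with the one indexed by $p-n$, using $\binom{p}{n} = \binom{p}{p-n}$. For $p$ odd there is no fixed point of the involution $n \mapsto p-n$ on $\{1,\dots,p-1\}$, and the sum is exactly $\sum_{n=1}^{(p-1)/2} \binom{p}{n}(x^n y^{p-n} + x^{p-n} y^n)$, which matches the claimed right-hand side with $n_p = (p+1)/2 = \lfloor (p+1)/2 \rfloor$. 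For $p$ even, the index $n = p/2$ is its own partner and contributes a single term $\binom{p}{p/2} x^{p/2} y^{p/2}$; pairing the rest gives $\sum_{n=1}^{p/2 - 1} \binom{p}{n}(x^n y^{p-n} + x^{p-n} y^n)$. Here $n_p = \lfloor (p+1)/2 \rfloor = p/2$, so the claimed sum includes the $n = p/2$ term as $\binom{p}{p/2}(x^{p/2} y^{p/2} + x^{p/2} y^{p/2}) = 2\binom{p}{p/2} x^{p/2}y^{p/2}$, which is at least the actual contribution $\binom{p}{p/2} x^{p/2} y^{p/2}$ since $x, y > 0$. Hence in both parities the inequality holds (with equality when $p$ is odd).

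For non-integer $p > 1$ the binomial theorem is no longer a finite sum, so a short additional argument is needed. I would instead reduce to the integer case: set $q = \lceil p \rceil$ if one wants an upper bound, but more cleanly one can use the elementary inequality $(x+y)^p \le 2^{p-1}(x^p + y^p)$ together with a direct estimate of the mixed terms — however this does not obviously produce the binomial-coefficient constants. A cleaner route is to observe that the statement is homogeneous of degree $p$, so it suffices to prove it for $y = 1$ and $x = t \in (0,\infty)$, i.e.\ $(1+t)^p - t^p - 1 \le \sum_{n=1}^{n_p} \binom{p}{n}(t^n + t^{p-n})$, where $\binom{p}{n} = \frac{p(p-1)\cdots(p-n+1)}{n!}$ is the generalized binomial coefficient. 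For $t \in [0,1]$ the convexity of $s \mapsto (1+s)^p$ and the Taylor expansion with positive remainder give $(1+t)^p \le \sum_{n=0}^{n_p} \binom{p}{n} t^n + (\text{positive tail bounded by } t^{p})$ — this is the delicate point. By the $t \leftrightarrow 1/t$ symmetry (equivalently $x \leftrightarrow y$) one then handles $t \ge 1$.

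The main obstacle is precisely the non-integer case: making the tail-folding argument rigorous when the binomial series is infinite. I expect to control the tail $\sum_{n > n_p} \binom{p}{n} t^n$ (for $0 \le t \le 1$) by comparison with a geometric series or by the integral remainder form of Taylor's theorem, showing it is dominated by $t^p$ plus lower-order terms already accounted for; the alternating/monotone behavior of the generalized binomial coefficients $\binom{p}{n}$ past $n = \lceil p \rceil$ is what drives this. If the paper only ever applies this lemma with integer exponents $k$ (which, looking at \eqref{polynomial 1} and \eqref{C tilde}, appears to be the case, since there $k$ is the moment order used as an integer power), then the clean finite-sum argument of the first paragraph suffices and the non-integer subtlety can be omitted or relegated to a remark.
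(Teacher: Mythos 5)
The paper never proves this lemma; it is stated without proof in the technical appendix and is a classical Povzner-type binomial estimate (see \cite{Bob97,GambaBobPanf04}). So the comparison can only be between your proposal and the correct mathematics, not against a paper-internal argument.

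Your integer-$p$ argument is the right idea (expand, strip the endpoints, fold $n\leftrightarrow p-n$), but the odd case contains an error. For $p$ odd, the folded sum is $\sum_{n=1}^{(p-1)/2}\binom{p}{n}\left(x^ny^{p-n}+x^{p-n}y^n\right)$, while $n_p=(p+1)/2$, so the claimed right-hand side has one additional summand at $n=(p+1)/2$ that duplicates the $n=(p-1)/2$ term. Consequently the right-hand side \emph{strictly overcounts} the left-hand side, and your parenthetical ``with equality when $p$ is odd'' is false. (Take $p=3$, $x=y=1$: the left side is $8-1-1=6$, the right side is $\sum_{n=1}^{2}\binom{3}{n}\cdot 2 = 12$.) Nothing breaks --- overcounting is fine for a one-sided inequality --- but the write-up should not assert equality, and the phrase ``which matches the claimed right-hand side'' is wrong as written since $(p-1)/2\neq n_p$.

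The genuine gap is the one you yourself flag: the paper applies this lemma with $p=k$ a general moment order (see \eqref{polynomial 1} inside the proof of Lemma~\ref{lemma Q}, where $k>k_*$ need not be an integer), so the non-integer case is needed and cannot be ``relegated to a remark.'' Your sketch for that case (Taylor/binomial series with sign control of $\binom{p}{n}$ past $n=\lceil p\rceil$, plus the $t\leftrightarrow 1/t$ reduction) is the right skeleton, but it is not carried out: the key point one must actually prove is that for $t\in(0,1]$ and $p>1$ the tail $\sum_{n>n_p}\binom{p}{n}t^n$ is dominated by $t^p$ together with terms already appearing on the right, using the eventual sign alternation and monotone decay of $|\binom{p}{n}|t^n$. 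As it stands the argument stops exactly where the work begins; you should either supply that estimate or explicitly cite the source where the real-exponent version is proved (e.g.\ \cite{GambaBobPanf04}).
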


\begin{lemma}[Polynomial inequality II] \label{moment products}
	Let $b+1\leq a\leq \frac{p+1}{2}$. Then for any  $x, y\geq0$, 
	\begin{equation*}
	x^a y^{p-a} + x^{p-a} y^a \leq x^b y^{p-b} + x^{p-b} y^b.
	\end{equation*}	
\end{lemma}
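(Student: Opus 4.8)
The plan is to reduce this homogeneous two-variable inequality to a one-variable statement and then settle it by an explicit factorisation. First I would dispose of the degenerate cases: if $x=0$ or $y=0$ the inequality is checked directly (with the convention $0^0=1$), using that $b+1\le a$ forces $a\ge 1$; equality occurs only in the trivial instance $p=a=1$, $b=0$. So from now on I assume $x,y>0$.

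Since every monomial on both sides has total degree $p$, I would divide through by $y^{p}$ and set $t:=x/y>0$; the claim becomes $t^{a}+t^{p-a}\le t^{b}+t^{p-b}$. The only arithmetic input needed is the chain
\[
b\ \le\ a-1\ \le\ p-a ,
\]
whose first inequality is a rewriting of $b+1\le a$ and whose second is a rewriting of $a\le\tfrac{p+1}{2}$ (indeed $2a\le p+1\iff a-1\le p-a$); in particular $b\le p-a$ and $a-b\ge 1>0$.

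The concluding step is the identity
\[
\bigl(t^{b}+t^{p-b}\bigr)-\bigl(t^{a}+t^{p-a}\bigr)=\bigl(t^{b}-t^{p-a}\bigr)\bigl(1-t^{\,a-b}\bigr),
\]
obtained by grouping $t^{b}-t^{a}=t^{b}\bigl(1-t^{\,a-b}\bigr)$ and $t^{p-b}-t^{p-a}=-t^{p-a}\bigl(1-t^{\,a-b}\bigr)$. Since $a-b>0$, the factor $1-t^{\,a-b}$ is $\le 0$ when $t\ge 1$ and $\ge 0$ when $t\le 1$; since $b\le p-a$, the factor $t^{b}-t^{p-a}$ has precisely the same sign in each regime. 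Hence the product is nonnegative for all $t>0$, which is exactly the desired inequality. I do not anticipate a genuine difficulty here: the argument is essentially two lines once the bookkeeping that converts $a\le\tfrac{p+1}{2}$ into $b\le p-a$ is in place, and the only points needing a moment's care are the degenerate cases $x=0$, $y=0$ and the equality case $t=1$.
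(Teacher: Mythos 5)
Your proof is correct. The paper states this lemma in the appendix without giving a proof, so there is no argument to compare against, but your approach is clean and complete: after reducing to one variable $t=x/y$, the factorization
\[
\bigl(t^{b}+t^{p-b}\bigr)-\bigl(t^{a}+t^{p-a}\bigr)=\bigl(t^{b}-t^{p-a}\bigr)\bigl(1-t^{\,a-b}\bigr)
\]
together with the observation that the hypotheses give $b\le a-1\le p-a$, hence $a-b>0$ and $b\le p-a$, does exactly force both factors to share a sign for every $t>0$. The bookkeeping for $x=0$ or $y=0$ (implicitly using $b\ge 0$, which is the context here) is also handled. This is essentially the standard proof of such a ``majorization''-type monomial inequality, and nothing is missing.
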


\begin{lemma}[Interpolation inequality]
	Let $k=\alpha k_1 + (1-\alpha) k_2 $, $\alpha\in(0,1)$, $0<k_1\leq k \leq k_2$. Then for any $g \in L_{k}^1$
	\begin{equation}\label{interpolation inequality}
	\left\| g \right\|_{L_{k}^1} \leq \left\| g \right\|_{L_{k_1}^1}^{\alpha}  \left\| g \right\|_{L_{k_2}^1}^{1-\alpha}.
	\end{equation}
\end{lemma}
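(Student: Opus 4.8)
The plan is to obtain the estimate as a direct consequence of H\"older's inequality applied to a multiplicative splitting of the Lebesgue weight. First I would record the pointwise factorization that follows from the hypothesis $k = \alpha k_1 + (1-\alpha) k_2$, namely
\begin{equation*}
\langle v, I \rangle^{2k} = \langle v, I \rangle^{2\alpha k_1} \, \langle v, I \rangle^{2(1-\alpha) k_2}, \qquad (v,I) \in \mathbb{R}^{3+},
\end{equation*}
which is valid since the Lebesgue bracket \eqref{brackets} is an honest power function. Writing $|g| = |g|^{\alpha}\,|g|^{1-\alpha}$ and combining, the integrand defining $\|g\|_{L^1_k}$ in \eqref{norm} decomposes as
\begin{equation*}
|g(v,I)| \, \langle v, I \rangle^{2k} = \Big( |g(v,I)| \, \langle v, I \rangle^{2 k_1} \Big)^{\alpha} \Big( |g(v,I)| \, \langle v, I \rangle^{2 k_2} \Big)^{1-\alpha}.
\end{equation*}

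Next I would apply H\"older's inequality on $\mathbb{R}^{3+}$ with the conjugate exponents $p = 1/\alpha$ and $p' = 1/(1-\alpha)$, which are legitimately conjugate because $\alpha \in (0,1)$ and $\alpha + (1-\alpha) = 1$. This produces
\begin{equation*}
\int_{\mathbb{R}^{3+}} |g| \, \langle v, I \rangle^{2k} \, \mathrm{d}I \, \mathrm{d}v \leq \left( \int_{\mathbb{R}^{3+}} |g| \, \langle v, I \rangle^{2k_1} \, \mathrm{d}I \, \mathrm{d}v \right)^{\alpha} \left( \int_{\mathbb{R}^{3+}} |g| \, \langle v, I \rangle^{2k_2} \, \mathrm{d}I \, \mathrm{d}v \right)^{1-\alpha},
\end{equation*}
and identifying the three integrals with $\|g\|_{L^1_k}$, $\|g\|_{L^1_{k_1}}$ and $\|g\|_{L^1_{k_2}}$ respectively gives exactly \eqref{interpolation inequality}.

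There is essentially no obstacle in this argument; the only point deserving a word of care is the finiteness of the right-hand side. If $g \in L^1_{k_2}$ then, by the monotonicity of the weighted norms \eqref{monotonicity of norm} together with $0 \leq k_1 \leq k_2$, one also has $g \in L^1_{k_1}$, so both factors on the right are finite and the inequality is genuinely informative; if one knows only $g \in L^1_k$, the inequality still holds under the usual convention that the right-hand side may be $+\infty$. Since in all the uses within this manuscript the function $g$ lies in the invariant region $\Omega$ of \eqref{inv region}, where the relevant moments are controlled, this subtlety plays no role in the applications.
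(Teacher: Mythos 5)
Your proof is correct and is the standard Hölder argument for log-convexity of weighted $L^1$ norms; the paper states the lemma in the technical appendix without proof, so this is presumably exactly the argument the authors had in mind. The pointwise factorization of $\langle v, I \rangle^{2k}$ and the application of Hölder with exponents $1/\alpha$, $1/(1-\alpha)$ are both applied correctly, and your remark on finiteness via the monotonicity of the norms is a sensible addendum.
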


\section{Existence and Uniqueness Theory for ODE in Banach spaces}\label{Appendix exi and uni}

\begin{theorem}
	\label{Theorem general}
	Let $E:=(E,\left\| \cdot \right\|)$ be a Banach space, $\mathcal{S}$ be a bounded, convex and closed subset of $E$, and $\mathcal{Q}:\mathcal{S}\rightarrow E$ be an operator satisfying the following properties:
	\begin{itemize}
		\item[(a)] H\"{o}lder continuity condition
		\begin{equation*}
		\left\| \mathcal{Q}[u] - \mathcal{Q}[v] \right\| \leq C \left\| u-v \right\|^{\beta}, \ \beta \in (0,1), \ \forall u, v \in \mathcal{S};
		\end{equation*}
		\item[(b)] Sub-tangent condition
		\begin{equation*}
		\lim\limits_{h\rightarrow 0+} \frac{\text{dist}\left(u + h \mathcal{Q}[u], \mathcal{S} \right)}{h} =0, \ \forall u \in \mathcal{S};
		\end{equation*}
		\item[(c)] One-sided Lipschitz condition
		\begin{equation*}  
		\left[ \mathcal{Q}[u] - \mathcal{Q}[v], u - v \right] \leq C \left\| u-v \right\|, \ \forall u, v \in \mathcal{S},
		\end{equation*}
		where 	$\left[\varphi,\phi\right]=\lim_{h\rightarrow 0^-} h^{-1}\left(\left\| \phi + h \varphi \right\| - \left\| \phi \right\| \right)$.
	\end{itemize}
	Then the equation
	\begin{equation*}
	\begin{split}
	\partial_t  u = \mathcal{Q}[u], \ \text{for} \ t\in(0,\infty), \ \text{with initial data}  \
	u(0)= u_0 \ \text{in} \ \mathcal{S},
	\end{split}
	\end{equation*}
	has a unique solution in $C([0,\infty),\mathcal{S})\cap C^1((0,\infty),E)$.
\end{theorem}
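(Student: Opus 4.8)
This is the classical existence and uniqueness theorem for abstract ODEs in a Banach space under a quasi-dissipativity hypothesis, and the plan is to follow the scheme of Martin \cite{Martin}, splitting the argument into the short uniqueness part and the heavier existence part. Throughout, note first that since $\mathcal{S}$ is bounded and $\mathcal{Q}$ is H\"older continuous on $\mathcal{S}$, the operator $\mathcal{Q}$ is bounded on $\mathcal{S}$: $\|\mathcal{Q}[u]\|\le \|\mathcal{Q}[u_0]\|+C(\operatorname{diam}\mathcal{S})^{\beta}=:M<\infty$ for all $u\in\mathcal{S}$. This boundedness will be what keeps the time-stepping below from collapsing.

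\textbf{Uniqueness and continuous dependence.} Suppose $u_1,u_2\in C([0,\infty),\mathcal{S})\cap C^1((0,\infty),E)$ both solve $\partial_t u=\mathcal{Q}[u]$, and set $\phi(t):=\|u_1(t)-u_2(t)\|$, which is continuous on $[0,\infty)$. On $(0,\infty)$ the difference $w:=u_1-u_2$ is $C^1$, and the bracket $[\cdot,\cdot]$ (the left one-sided derivative of the norm) gives, by a direct computation using $w(t-k)=w(t)-kw'(t)+o(k)$, that the left derivative of $\phi$ satisfies $D^-\phi(t)=[\,w'(t),w(t)\,]=[\,\mathcal{Q}[u_1(t)]-\mathcal{Q}[u_2(t)],\,u_1(t)-u_2(t)\,]\le C\,\phi(t)$ by the one-sided Lipschitz condition $(c)$. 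A Gronwall argument valid for continuous functions with a one-sided derivative bound then yields $\phi(t)\le\phi(s)\,e^{C(t-s)}$ for all $0<s\le t$; letting $s\to0^+$ and using continuity up to $t=0$ gives $\phi(t)\le\|u_1(0)-u_2(0)\|\,e^{Ct}$. Equal initial data forces $u_1\equiv u_2$.

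\textbf{Existence.} Fix $T>0$. Using the sub-tangent condition $(b)$ together with continuity of $\mathcal{Q}$, one builds $\varepsilon$-approximate solutions: for $u\in\mathcal{S}$ and $\varepsilon>0$ there are $h=h(u,\varepsilon)>0$ and $z\in\mathcal{S}$ with $\|u+h\mathcal{Q}[u]-z\|\le h\varepsilon$, and iterating this over a partition of $[0,T]$ and invoking a maximality (Zorn) argument produces a piecewise-affine $u_\varepsilon\in C([0,T],E)$ with $u_\varepsilon(0)=u_0$, $u_\varepsilon'(t)=\mathcal{Q}[u_\varepsilon(t_j)]$ on each $(t_j,t_{j+1})$, staying within distance $O(\varepsilon)$ of $\mathcal{S}$, and with $\|u_\varepsilon(t)-u_\varepsilon(t_j)\|\le M h_j$ there; the boundedness of $\mathcal{Q}$ is exactly what rules out accumulation of step points before $T$. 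The H\"older bound in $(a)$ then makes $u_\varepsilon$ a genuine $\varepsilon'$-approximate solution of the equation with $\varepsilon'\to0$ as the mesh shrinks, since $\|\mathcal{Q}[u_\varepsilon(t)]-\mathcal{Q}[u_\varepsilon(t_j)]\|\le C(Mh_j)^{\beta}$. For two such approximations $u_\varepsilon,u_{\varepsilon'}$, the same left-derivative-of-the-norm computation as in uniqueness, now carrying the two $O(\varepsilon)$ defect terms, gives $D^-\|u_\varepsilon(t)-u_{\varepsilon'}(t)\|\le C\|u_\varepsilon(t)-u_{\varepsilon'}(t)\|+(\varepsilon+\varepsilon')$ off the partition points, hence by Gronwall $\sup_{[0,T]}\|u_\varepsilon-u_{\varepsilon'}\|\to0$. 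Therefore $u_\varepsilon\to u$ uniformly on $[0,T]$; closedness of $\mathcal{S}$ gives $u(t)\in\mathcal{S}$; passing to the limit in the integral identity $u_\varepsilon(t)=u_0+\int_0^t\mathcal{Q}[u_\varepsilon(s)]\,ds+(\text{error }\le T\varepsilon)$ and using continuity of $s\mapsto\mathcal{Q}[u(s)]$ shows $u(t)=u_0+\int_0^t\mathcal{Q}[u(s)]\,ds$, so $u\in C^1((0,\infty),E)$ and $\partial_t u=\mathcal{Q}[u]$. Since $T$ was arbitrary the solution is global.

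\textbf{Main obstacle.} The delicate part is the construction and control of the approximate solutions: making the maximality argument genuinely exhaust $[0,T]$ (this is where boundedness of $\mathcal{S}$ and of $\mathcal{Q}$ on $\mathcal{S}$ enter), and bookkeeping the defect terms so that the Gronwall estimate closes with a right-hand side that vanishes as $\varepsilon\to0$. The H\"older exponent $\beta<1$ need not be large — plain continuity would already drive the scheme — but $(a)$ renders all the defect estimates quantitative. Once convergence is established, identifying the limit as a classical $C^1$ solution and checking $u(t)\in\mathcal{S}$ are routine.
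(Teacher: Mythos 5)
Your proposal is correct and is exactly the standard Martin-type argument: uniqueness from the one-sided Lipschitz condition via the left Dini derivative of $t\mapsto\|u_1(t)-u_2(t)\|$ and Gronwall, existence from Euler $\varepsilon$-approximate solutions built with the sub-tangent condition, with the H\"older bound making the defects quantitative and condition $(c)$ closing the Cauchy estimate between approximations. Note that the paper itself offers no proof of this theorem; it simply cites \cite{Alonso-IG-BAMS} (ultimately \cite{Martin}), and those references carry out precisely the scheme you describe. One small simplification worth making in a full write-up: if you place the nodes $z_j$ of the Euler polygon in $\mathcal{S}$ (absorbing the sub-tangency error into the defect of the derivative rather than into the position), convexity of $\mathcal{S}$ keeps the entire piecewise-affine path inside $\mathcal{S}$, so you need not track an $O(\varepsilon)$ distance to $\mathcal{S}$ at all; and the existence of the bracket limit in $(c)$ is guaranteed because $h\mapsto h^{-1}(\|\phi+h\varphi\|-\|\phi\|)$ is nondecreasing by convexity of the norm.
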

The proof of this Theorem on ODE flows on Banach spaces  can be found in \cite{Alonso-IG-BAMS}.
\begin{remark}\label{Lip remark}
	In Section \ref{Section Ex Uni proof}, we will concentrate on $E:=L_{\ho}^1$. Therefore, for one-sided Lipschitz condition, we will use the following inequality,   
	$$
	\left[\varphi,\phi\right]\leq \int_{ {\mathbb{R}^{3+}}}  \varphi(v,I) \, \text{sign}(\phi(v,I))\left\langle v,I \right\rangle^2 \mathrm{d}I \,\mathrm{d}v,
	$$
	as pointed out in \cite{Alonso-IG-BAMS}.
\end{remark}

\medskip
% The data information below will be filled by AIMS editorial staff
%Received ; revised .
\medskip

\end{document}